\title{Zero-Error Communication over Adversarial MACs}
\author{
\IEEEauthorblockN{
Yihan Zhang\IEEEauthorrefmark{1}\IEEEauthorrefmark{2}
}\\
\IEEEauthorblockA{
\IEEEauthorrefmark{1}Faculty of Computer Science, Technion Israel Institute of Technology \\
\IEEEauthorrefmark{2}Institute of Theoretical Computer Science and Communications, The Chinese University of Hong Kong \\
\href{mailto:yihanzhang@cuhk.edu.hk}{yihanzhang@cuhk.edu.hk}, \href{mailto:zephyr.z798@gmail.com}{zephyr.z798@gmail.com}
}
}
\begin{document}
\maketitle

\begin{abstract}
We consider zero-error communication over a two-transmitter deterministic adversarial multiple access channel (MAC) governed by an adversary who has access to the transmissions of both senders (hence called \emph{omniscient}) and aims to maliciously corrupt the communication. 
None of the encoders, jammer and decoder is allowed to randomize using private or public randomness. 
This enforces a combinatorial nature of the problem. 
Our model covers a large family of channels studied in the literature, including all deterministic discrete memoryless noisy or noiseless MACs. 
In this work, given an arbitrary two-transmitter deterministic omniscient adversarial MAC, we characterize when the capacity region 
\begin{enumerate}
	\item has nonempty interior (in particular, is two-dimensional);
	\item consists of two line segments (in particular, has empty interior);
	\item consists of one line segment (in particular, is one-dimensional);
	\item or only contains $ (0,0) $ (in particular, is zero-dimensional). 
\end{enumerate}
This extends a recent result by Wang, Budkuley, Bogdanov and Jaggi (2019) from the point-to-point setting to the multiple access setting. 
Indeed, our converse arguments build upon their generalized Plotkin bound and involve delicate case analysis. 
One of the technical challenges is to take care of both ``joint confusability'' and ``marginal confusability''. 
In particular, the treatment of marginal confusability does \emph{not} follow from the point-to-point results by Wang et al. 
Our achievability results follow from random coding with expurgation. 
\end{abstract}



\section{Introduction}
\label{sec:intro}
The multiple access channel (MAC) model was first (implicitly) considered by Shannon \cite{shannon-1961-twoway}. 
This model is arguably one of the simplest communication models beyond the point-to-point setting. 
The problem concerns information transmission over a three-node network. 
Two\footnote{In this paper, we only consider MACs with two transmitters. Generalizations to more transmitters are left as an open question (see \Cref{itm:open-multiuser-mac} in \Cref{sec:concl-rmk-open-prob}).} independent senders simultaneously send signals to the channel; a single receiver aims to recover both senders' transmitted messages given the channel-distorted signal. 
The goal for the parties in such a communication scenario is to reliably deliver as much information from the senders to the receiver. 
The fundamental limits (i.e., \emph{capacity region}, see \Cref{def:ach-rate-cap-region}) of discrete memoryless MACs under the average error criterion was derived independently by Ahlswede \cite{ahlswede-1973-mac-cap,ahlswede-1974-mac} and Liao \cite{liao-1972-mac-cap-thesis}\footnote{The capacity region given by Ahlswede \cite{ahlswede-1973-mac-cap,ahlswede-1974-mac} and Liao \cite{liao-1972-mac-cap-thesis} is written in terms of the convex hull of the union of multiple regions. An alternative form involving an auxiliary time-sharing variable was given by Slepian and Wolf \cite{slepian-wolf-1973-mac}. A cardinality bound on the alphabet of the auxiliary variable was given in \cite{csiszar-korner-book2011}.}. 
The Gaussian counterpart\footnote{This paper only concerns MACs with finite-sized alphabets and will {not} deal with the Euclidean case.} was solved by Cover \cite{cover-1975-gaussian-mac} and Wyner \cite{wyner-1974-gaussian-mac}. 
MACs are so far the essentially only multiuser channel whose fundamental limits are well-understood in full generality.

In the classical Shannon's setup of the MAC problem, it is assumed that the channel is given by a \emph{fixed} (i.e., time-invariant) law\footnote{We use lowercase boldface letters to denote (scalar) random variables.} $ W_{\bfy|\bfxa,\bfxb} $ that maps a given pair of input symbols\footnote{Throughout this paper, we use superscripts to denote the indices of the transmitter. E.g., $ \xa$ (resp. $\xb$) denotes a symbol transmitted by the first (resp. second) transmitter.} $ (\xa,\xb)\in\cXa\times\cXb $ to an output symbol $ y\in\cY $ with probability $ W_{\bfy|\bfxa,\bfxb}\paren{y\condon\xa,\xb} $. 
Such a channel well models white noise between the senders and the receiver, while it fails to model \emph{adversarial} noise that is potentially injected by a malicious adversary. 
In this paper, we take a coding-theoretic perspective on multiple access. 
A general \emph{omniscient adversarial MAC} model is introduced and studied. 
We assume that the channel is governed by an adversary who has full access to the transmitted signals from both senders (hence called \emph{omniscient}). 
The adversary aims to prevent communication from happening by transmitting a carefully designed noise sequence to the channel. 
We therefore at times also call the adversary the \emph{jammer}. 
None of the encoders, the jammer and the decoder is allowed to randomize. 
To enforce a combinatorial nature of the problem, it is further assumed that the channel obeys a zero-one law, i.e., the distribution $ W_{\bfy|\bfxa,\bfxb,\bfs} $ (where $ \bfs $ denotes the symbol sent by the jammer) only takes values in $ \zo $ and can be realized by a deterministic function $ y = W(\xa,\xb,s) $ (with a slight abuse of notation). 
The main contribution of this paper is a \emph{zero-th} order (see the next paragraph) characterization of the capacity region of an arbitrary omniscient adversarial MAC with \emph{maximum} error probability. 
In fact, since nothing in the system is stochastic, it is not hard to see that maximum error criterion is equivalent to zero error criterion. 
Our results can be appreciated through different lenses, e.g., arbitrarily varying channels, zero-error information theory, coding theory, etc. 
Elaboration on various connections is deferred to \Cref{sec:related-work}. 

Classical Shannon theory and combinatorial coding theory provide systematic ways of studying the \emph{first-order} asymptotics, i.e., capacity, of (stochastic and adversarial respectively) communication channels. 
By first-order we mean the number of bits that can be reliably transmitted through the channel. 
The first-order asymptotics of discrete memoryless channels (DMCs) are well-established in the seminal paper by Shannon \cite{shannon-1948} which laid the foundation of information theory. 
The first-order asymptotics of most multiuser channels remain open, except for MAC as mentioned before and a handful of other special cases. 
On the other hand, in the theory of error-correcting codes which deals with worst-case errors, essentially no capacity is characterized for any nontrivial channel. 
Indeed, even the capacity of adversarial bitflip channels -- one of the simplest nontrivial channels remains a holy grail problem in coding theory. 
This problem is well known to be equivalent to the sphere packing problem in binary Hamming space.
Our work can be viewed as a first step towards pushing the existing wisdom of classical coding theory to the general multiuser setting. 
For one thing, we consider very general channel models, not just the bitflip channel which is the most studied one in coding theory. 
For another thing, we go beyond the point-to-point setting and consider MACs. 
Due to the lack of techniques for characterizing the capacity, this work only aims to characterize the ``shape'' of the capacity region of any given adversarial MAC. 
More specifically, we determine the dimension of the capacity region -- when it has nonempty interior; when it only consists of (one or two) line segment(s); and when it only contains $ {(0,0)} $. 
We call such positivity conditions a characterization of the \emph{zero-th} order asymptotics of the channel. 
See \Cref{sec:our_results} for the formal statements of our results. 
Finally, we remark that there has been a stream of work on high-order (second-/third-/fourth-order) asymptotics of channels \cite{polyanskiy-poor-verdu-2010-finite-bl,tomamichel-tan-2013-third-dmc,tan-tomamichel-2015-third-awgn,scarlett-2014-second-mac,yavas-2020-second-gaussian-mac,kosut-2020-second-mac}.

\begin{remark}
\label{rk:error-criteria}
The capacity region of a (non-adversarial) MAC under average error criterion can be achieved using deterministic encoding and the region is invariant even if stochastic encoding is allowed. 
However, unlike the point-to-point case, under \emph{maximum} error criterion and \emph{deterministic} encoding, the capacity region of a MAC is strictly smaller than that under average error criterion \cite{dueck-1978-max-vs-avg-mac}. 
To the best of our knowledge, the exact capacity region in this case is still open. 
Furthermore, under maximum error criterion, stochastic encoding can achieve the capacity region with average probability of error. 
This shows that randomization at the encoders can boost the capacity under maximum error criterion -- a phenomenon absent in the point-to-point setting. 
\end{remark}

\section{Related work}
\label{sec:related-work}
Our model and results are connected to various facets of information theory and adjacent fields. 
We list non-exhaustively several connections below and {compare}, when proper, our results with existing ones.

\subsection{Arbitrarily varying channels}
\label{sec:related-avc}
Our model of general omniscient adversarial MAC is intimately related to a classical model studied in the literature known as the \emph{arbitrarily varying channel (AVC)}. 
An AVC is a channel with a state $ \bfs $ that does not follow any fixed distribution, i.e., is arbitrarily varying. 
A noticeable difference between the classical AVC model and our model is that the bulk of the literature on AVC deals with channels with an \emph{oblivious} adversary who does not know anything about the transmitted sequence. 
Under average error criterion, this problem is significantly easier (though not trivial) than the omniscient counterpart. 
Indeed, the fundamental limits of point-to-point AVCs \cite{csiszar-narayan-it1988-obliviousavc,csiszar-narayan-1991-gavc} and arbitrarily varying MACs (AVMACs) \cite{ahlswedecai-1999-obli-avmac-no-constr,pereg-steinberg-2019-avmac} (and several other channels which we do not spell out here) are well-understood. 

In fact, an oblivious AVMAC with maximum probability of error is equivalent to our model of omniscient adversarial MAC. 
However, the maximum error criterion is much less studied in the AVC literature. 
Obtaining a tight first-order characterization of the capacity remains an formidable challenge even for very simple channels. 
The main focus of this work is a zero-th order characterization of the capacity region of general omniscient adversarial MACs.
Though we do present nontrivial inner and {outer bounds}, there is no reason to expect any of them to be optimal. 
\Cref{itm:open-error-criterion} in \Cref{sec:concl-rmk-open-prob} contains more discussions and open problems regarding error criterion. 
See also \Cref{sec:comparison-our-peregsteinberg} for an in-depth comparison between our work and \cite{pereg-steinberg-2019-avmac} on AVMACs.

\subsection{Zero-error information theory}
\label{sec:related-zero-error-it}
Since randomization in the encoding/jamming/decoding strategies are ruled out from our model and only deterministic channels are considered, there is no probability anywhere in the system and maximum error criterion is equivalent to zero error criterion. 
For this reason, it is worth mentioning the connections between our work and zero-error information theory -- a combinatorial facet of information theory. 
The basic deviation of zero-error information theory from ordinary Shannon theory is to insist on \emph{zero error} criterion which changes the nature of the problem in a fundamental way.
Despite of years of research, there is essentially no capacity result for any general channel model except for sporadic special channels \cite{lovasz-1979-shannon-cap}.
Usually channels studied in zero-error information theory do not consist of an adversarial noise (a.k.a. an arbitrarily varying state in AVC jargon). 
It turns out that if the adversarial noise in our model is \emph{unconstrained} (i.e., the state vector\footnote{We use underlines to denote vectors of length $n$ -- the number of channel uses. See \Cref{sec:notation} for notational conventions of this paper.} $ \vs $ can take any value in $ \cS^n $), then the channel is equivalent to a non-adversarial channel under zero error criterion. 
On the other hand, the presence of state constraints brings significant effect on the behaviour of the channel. 
Such a phenomenon already shows up in the point-to-point setting \cite{csiszar-narayan-it1988-obliviousavc}. 
Classical zero-error information theory approaches the problem of zero-error communication via the notion of \emph{Shannon capacity} of graphs \cite{shannon-1956-zero-error} -- getting rid of channel probabilities.\footnote{Unfortunately, Shannon capacity is not computable since it is defined as a limit as $n$, the blocklength, goes to infinity. See \Cref{sec:related-nonstoc-it} and \Cref{itm:open-tensorization} in \Cref{sec:concl-rmk-open-prob} for remarks on $n$-letter capacity expressions.}
Recently, the positivity of zero-error capacity of MACs (and several other multiuser channels) was characterized by Devroye \cite{devroye-2016-zero-positive}. 
However, she only dealt with non-adversarial channels, or equivalently, adversarial channels without state constraints. 
Several other general multiuser channels with zero error such as two-way channels \cite{gu-shayevitz-2019-twoway-zeroerror} and relay channels \cite{chen-2014-relay-zero-error-1,chen-2015-relay-zero-error-conf,chen-2017-relay-zero-error-jrnl,asadi-2018-relay-zero-error} were also studied in the literature. 
Many other works on zero-error multiuser channels concentrate around specific channels such as binary adder MAC \cite{austrin-2017-binary-adder-mac}, $ \mathsf{AND}\text{-}\mathsf{OR} $ interference channel \cite{nair-2020-andor}, etc. 
See \Cref{sec:related-specific} for more related work on special MACs. 

\subsection{Kolmogorov complexity}
\label{sec:related-kolmogorov-cplx}
Besides Shannon's notion of graph capacity, Kolmogorov \cite{kolmogorov-1956,tikhomirov-kolmogorov-1993-eps-cap} introduced the $ \eps $-entropy and $ \eps $-capacity (which are the normalized covering and packing number (using balls of radius $ \eps $) of a space) as another non-stochastic approach to zero-error source and channel coding, respectively. 
However, there was no coding theorems companying these notions. 
The results in \cite{wbbj-2019-omni-avc} which we build upon can be cast as packing \emph{general} shapes (not necessarily balls) without overlap in a general space. 
For MACs, the geometric interpretation of packing and covering does not seem to be as obvious/clean as in the point-to-point case.

\subsection{Non-stochastic information theory}
\label{sec:related-nonstoc-it}
Recently, Nair \cite{nair-2011-nonstoc-conf,nair-2013-nonstoc-jrnl} proposed yet another alternative framework towards understanding zero-error communication known as \emph{non-stochastic information theory}. 
He introduced non-stochastic analogs of information measures and proved coding theorems for worst-case error models. 
Extensions to MACs (see \cite{zafzouf-nair-evans-2019-nonstoc-mac} for the two-transmitter case and \cite{zafzouf-nair-2020-nonstoc-mac-multiuser} for the multi-transmitter case), channels with feedback \cite{nair-2012-nonstoc-feedback,saberi-2018-zeroerror-erasure,saberi-2020-zeroerror-feedback-additive}, channels with memory \cite{saberi-2020-zeroerror-state,saberi-2019-zeroerror-memory-eras-symm} and function evaluation \cite{farokhi-nair-2020-nonstoc-fn-eval} are presented in followup works by Nair and his coauthors. 
In most cases, Nair's framework only gives $n$-letter expressions for capacity, similar to the graph-theoretic approach mentioned in \Cref{sec:related-zero-error-it}. 
More recently, Lim--Franceschetti \cite{lim-franceschetti-2017-non-stoc-it} and Rangi--Franceschetti \cite{rangi-franceschetti-2019-non-stoc-it} refined Nair's framework by introducing new non-stochastic information measures to incorporate decoding errors while retaining the worst-case nature of the error model. 
The latter work \cite{rangi-franceschetti-2019-non-stoc-it} also studied the possibility of obtaining single-letter expressions for the capacity of a certain family of channels.

As a comparison, our approach does not even yield $n$-letter capacity expressions. 
However, we can handle general adversarial channels with potentially constrained adversarial noise. 
In \cite{rangi-franceschetti-2019-non-stoc-it}, following Nair's framework, such channels are treated as \emph{nonstationary} channels with \emph{memory} for which no $n$-letter capacity expression was obtained. 
More words on $n$-letter expressions can be found in \Cref{itm:open-tensorization} of \Cref{sec:concl-rmk-open-prob}.

\subsection{Coding theory and generalized Plotkin bound}
\label{sec:related-wbbj}
Since our problem inherently exhibits a combinatorial nature, one can view our contributions as Shannon-theoretic results for a coding-theoretic model. 
We borrow insights and techniques from both information theory and coding theory and try to build a bridge between them in the particular MAC setting. 
At a technical level, the principal tool that we use is inspired by a recent Plotkin-type bound for general point-to-point omniscient adversarial channels \cite{wbbj-2019-omni-avc}. 
Our contribution is to generalize it to the MAC setting and use it, along with delicate case analysis, to characterize the ``dimension'' of the capacity region. 
The results in both \cite{wbbj-2019-omni-avc} and this paper are in turn generalizations of the Plotkin bound in classical coding theory. 
This bound (together with a standard probabilistic construction) pins down the exact threshold of the noise level of a bitflip channel\footnote{A bitflip channel takes a binary sequence as input and arbitrarily flips a fixed fraction of bits.} such that positive rates are achievable (see \Cref{def:ach-rate-cap-region} for the formal definition of achievable rates). 


\subsection{Specific channels}
\label{sec:related-specific}

Our model covers a large family of channels studied in the literature, including the $\OR$ MAC, the collision MAC, the adder MAC \cite{gu-2018-zero-error-mac,austrin-2017-binary-adder-mac}, the disjunctive MAC \cite{d-2019-separable-list-dec-mac}, the multiple access hyperchannel \cite{shchukin-2016-list-dec-mac}, etc. 
Indeed, our model incorporates all deterministic channel models. 
Interested readers are encouraged to refer to the lecture notes \cite{lec-notes-mac} and \cite[Chapter 29, 30]{polyanskiy-wu-2014-lec-notes-it}. 




\section{Overview of our results}
\label{sec:overview-results}

This work initiates a systematic study of memoryless MACs in the presence of an omniscient adversary (who may \emph{not} behave memorylessly) under the maximum probability of error criterion. 
In particular, the main attention of this paper is focused on the capacity threshold. 
In what follows, we summarize the contributions of this paper.

\begin{enumerate}
	\item We introduce in \Cref{sec:basic_def} the model of \emph{omniscient adversarial MACs} which covers a large family of channels of interests. 
	In particular, all component-wise deterministic memoryless channels with finite alphabets fall into our framework. 
	In this work we focus on the maximum probability of error criterion. 
	For technical reasons, we make additional assumptions that are listed in \Cref{sec:additional-technical-assump}.

	\item We introduce in \Cref{sec:conf-set} the notion of \emph{confusability}, both the operational version (\Cref{claim:operational-nonconf}) and the distributional version (\Cref{def:conf-set}) which turn out to be equivalent (\Cref{claim:distributional-nonconf}, \Cref{rk:equiv-operational-distributional}). 
	Specifically, we define the \emph{joint confusability set} and the (first and second) \emph{marginal confusability sets} (for both transmitters separately) to capture the disability to reliably transmit both (for the joint case) or exactly one (for the marginal cases) of the sequences. 
	One can think of the confusability sets as the sets of ``bad'' distributions that (the types\footnote{The type of a (collection of) vector(s) is the empirical distribution/histogram. See \Cref{def:type} for a formal definition.} of) any good code should avoid.
	The significance of the notion of confusability is that it precisely captures all information one needs for understanding the capacity region of any adversarial MAC. 
	In fact, adversarial MACs with the same confusability sets share a common capacity region (\Cref{lem:cap-determined-by-conf-set}), though they may appear different at the first glance.
	Various properties of the confusability sets are presented in \Cref{prop:prop-conf-set}.

	\item Towards understanding capacity thresholds, we find a class of distributions that we call \emph{good} (\Cref{def:good_distr}). 
	Again, they are separately tailored for the joint case and two marginal cases. 
	While being of independent interest on their own, the sets of good distributions are particularly useful in our context of determining the capacity threshold. 
	One should think of these classes of distributions as the \emph{only} types of distributions that one needs to consider for the purpose of achieving positive rates (though in this way one may not be able to achieve the capacity which is anyway unknown given the current techniques). 
	We also define a cone of tensors referred to as \emph{co-good} tensors (\Cref{def:cogood-distr}) and show that the cones of good and co-good tensors are dual to each other (\Cref{thm:duality}), which will be critical to the proofs in the proceeding sections.
	Various properties of good distributions and co-good tensors are presented. 
	We expect these distributions/tensors and the associated duality to be useful elsewhere. 

	\item 
	We completely characterize, for any given omniscient adversarial MAC, the ``shape'' of the capacity region,
	that is, when the capacity region
	\begin{enumerate}
		\item has nonempty interior (in particular, is two-dimensional);
		\item consists of two line segments (in particular, has empty interior);
		\item consists of one line segment (in particular, is one-dimensional);
		\item or only contains $ (0,0) $ (in particular, is zero-dimensional). 
	\end{enumerate}
	The proof comprises of the direct part and the converse part.
	The technically most challenging case is to handle the (non-)achievability of rate pairs both components of which are strictly positive. 
	For the marginal cases, we emphasize that they do \emph{not} follow from the point-to-point results in \cite{wbbj-2019-omni-avc} in a black-box manner.

\end{enumerate}

We then briefly discuss separately our achievability and converse results and the techniques for proving them.
For a more detailed discussion on the proof techniques, see \Cref{sec:overview-techniques}.
\begin{enumerate}
	\item For the achievability part, one could use good non-confusable distributions (whenever they exist) to sample good codes of positive rates (\Cref{thm:achievability}). 
	This follows from the standard random coding argument which in turn is proved using Chernoff-union bounds. 
	We also strengthen the above positivity results by giving \emph{inner bounds} on the capacity region (\Cref{lem:inner_bound_prod_distr}). 
	This follows by carefully expurgating the codes and analyzing the large deviation exponents of the error events using the Sanov's theorem (\Cref{lem:sanov}). 
	The most challenging case is where both transmitters are able to achieve positive rates.

	\item 
	On the other hand, for the converse part, if one cannot construct positive rate good codes using good distributions, then she/he cannot construct them using any other types of distributions (\Cref{thm:converse}). 
	This part is much less obvious and forms the bulk of the technically most challenging portion of this work. 
	As alluded to above, 
	the crux of the proof is to leverage the duality between the cone of good distributions and the cone of co-good tensors defined before and to apply a double counting trick that is reminiscent of the one used in the classical Plotkin bound in coding theory. 
	Technically, to make the trick actually work, we have to preprocess the code by applying a standard constant composition reduction and an equicoupled subcode extraction (using Ramsey's theorems \Cref{lem:hypergraph_ramsey,thm:regular-ramsey}). 
	The hardest case is to show that two transmitters cannot simultaneously achieve positive rates as long as there does not exist a distribution that is \emph{simultaneously} jointly good and (first and second) marginally good. 
\end{enumerate}

\section{Organization of this paper}
\label{sec:org}
The rest of the paper is organized as follows.
Notational conventions of this paper are listed in \Cref{sec:notation}, followed by preliminaries in \Cref{sec:prelim}. 
We formally introduce the omniscient adversarial MAC model in \Cref{sec:basic_def}.
Before proceeding, we first study the special case of binary noisy $ \XOR $ MACs in \Cref{sec:warmup_examples} with proofs deferred to \Cref{app:pf_plotkin_binary_noisy_xor_mac}. 
Then in \Cref{sec:conf-set,sec:good_distr} respectively, we introduce two important notions of (sets of) distributions, viz.: the confusability sets and the sets of good distributions, and prove properties of them. 
Building on the machinery we have developed in the previous sections, the main result  (\Cref{thm:our-results}) of this paper, i.e., a characterization of the ``shape'' of capacity region, is formally stated in \Cref{sec:our_results}. 
Before presenting the detailed proofs, we outline a roadmap with underlying ideas of the proofs in \Cref{sec:overview-techniques}. 
\Cref{sec:achievability} contains a full proof of the achievability part of our main theorem. 
\Cref{sec:conv-pos-pos,sec:conv-pos-zero} prove the ``joint'' case and the ``marginal'' cases of the converse part, respectively. 
We conclude the paper with a list of remarks and open questions in \Cref{sec:concl-rmk-open-prob}. 
A table of frequently used notation can be found in \Cref{app:table-of-notation}.

\section{Notation}
\label{sec:notation}
Sets are denoted by capital letters in calligraphic typeface, e.g., $ \cX,\cS,\cY $, etc. 
All alphabets in this paper are finite sized. 
For a positive integer $ M $, we use $ [M] $ to denote $ \curbrkt{1,\cdots, M} $. 
Let $\cX$ be a finite set. 
For an integer $ 0\le k\le\cardX $, we use $ \binom{\cX}{k} $ to denote $ \curbrkt{\cX'\subseteq\cX\colon \card{\cX'}= k} $. 

Random variables are denoted by lowercase letters in boldface, e.g., $\bfx,\bfs,\bfy $, etc. 
Their realizations are denoted by corresponding lowercase letters in plain typeface, e.g., $x, s, y$, etc. Vectors (random or fixed) of length $n$, where $n$ is the blocklength of the code without further specification, are denoted by lowercase letters with  underlines, e.g., $\vbfx,\vbfs,\vbfy,\vx,\vs,\vy$, etc. 
The $i$-th entry of a vector $\vx\in\cX^n$ (resp. $\vbfx\in\cX^n$) is denoted by $\vx(i)$ (resp. $\vbfx(i) $). 

For vectors and random variables/vectors, we use superscripts to denote the indices of the transmitters, e.g., $ \vxa,\bfxa,\vbfxa $ (resp. $\vxb,\bfxb,\vbfxb$) correspond to the first (resp. second) transmitter.

We use the standard Bachmann--Landau (Big-Oh) notation. 
For two real-valued functions $f(n),g(n)$ of positive integers, we say that $f(n)$ \emph{asymptotically equals} $g(n)$, denoted by $f(n)\asymp g(n)$, if 
$\lim_{n\to\infty}{f(n)}/{g(n)} = 1$.
We write $f(n)\doteq g(n)$ (read $f(n)$ \emph{dot equals} $g(n)$) if
$\lim_{n\to\infty}\paren{\log f(n)}/\paren{\log g(n)} = 1$.
Note that $f(n)\asymp g(n)$ implies $f(n)\doteq g(n)$, but the converse is not true.
For any $\cA\subseteq\cX$, the indicator function of $\cA$ is defined as, for any $x\in\cX$, 
\[\one_{\cA}(x)\coloneqq\begin{cases}1,&x\in \cA\\0,&x\notin \cA\end{cases}.\]
At times, we will slightly abuse notation by saying that $ \indicator{\sfA} $ is $1$ when event $\sfA$ happens and $0$ otherwise. 
Note that $\one_{\cA}(\cdot)=\indicator{\cdot\in\cA}$.
In this paper, all logarithms are to the base 2. 

We use $ \Delta(\cX) $ to denote the probability simplex on $\cX$.
Related notations such as $ \Delta(\cX\times\cY) $ and $ \Delta(\cY|\cX) $ are similarly defined. 
For a distribution $ P_{\bfx,\bfy|\bfu}\in\Delta(\cX\times\cY|\cU) $, we use $ \sqrbrkt{P_{\bfx,\bfy|\bfu}}_{\bfx|\bfu}\in\Delta(\cX|\cU) $ to denote the marginal distribution onto $ \bfx $ given $ \bfu $, i.e., for every $ x\in\cX,u\in\cU $, $ \sqrbrkt{P_{\bfx,\bfy|\bfu}}_{\bfx|\bfu}(x|u) = \sum_{y\in\cY}P_{\bfx,\bfy|\bfu}(x,y|u) $. 
We use $ \Delta^{(n)}(\cX) $ to denote the set of types (i.e., empirical distributions/histograms, see \Cref{def:type} for formal definitions) of length-$n$ vectors over alphabet $ \cX $.
That is, $ \Delta^{(n)}(\cX) $ consists of all distributions $ P_\bfx\in\Delta(\cX) $ that are induced by $ \cX^n $-valued vectors. 
Other notations such as $ \Delta^{(n)}(\cX\times\cY) $ and $ \Delta^{(n)}(\cY|\cX) $ are similarly defined. 
The notation $ \bfx\sim P_\bfx $ (resp. $ \vbfx\sim P_{\vbfx} $) means that the p.m.f. of a random variable (resp. vector) $ \bfx $ (resp. $ \vbfx $) is $ P_\bfx $ (resp. $ P_\vbfx $). 
If $ \bfx $ is uniformly distributed in $ \cX $, then we write $ \bfx\sim\cX $. 
Throughout this paper, we use $ \distinf{\cdot}{\cdot} $ and $ \distone{\cdot}{\cdot} $ to respectively denote the $ \ell^\infty $ and $\ell^1 $ distances between two distributions which are defined as follows
\begin{align}
\distinf{P}{Q} \coloneqq \sum_{x\in\cX}\abs{P(x) - Q(x)}, \quad 
\distone{P}{Q} \coloneqq \max_{x\in\cX}\abs{P(x) - Q(x)}, \notag 
\end{align} 
for any $ P,Q\in\Delta(\cX) $. 
For a distribution $ P\in\Delta(\cX) $ and a subset $ \cA\subseteq\Delta(\cX) $, the distance (w.r.t. some metric $ \dist(\cdot,\cdot) $) between $ P $ and $\cA$ is defined as $ \dist(P,\cA)\coloneqq\inf_{Q\in\cA}\dist(P,Q) $. 
For $ \cB\subseteq\Delta(\cX) $, the distance between $ \cA $ and $ \cB $ is defined as 
$ \dist(\cA,\cB)\coloneqq\inf_{(P,Q)\in\cA\times\cB}\dist(P,Q) $. 
The inner product between $ P $ and $ Q $ is defined as $ \inprod{P}{Q}\coloneqq\sum_{x\in\cX}P(x)Q(x) $. 
The $ \ell^p $-norm of a vector is denoted by $ \norm{p}{\cdot} $. 
Note that $ \distinf{\cdot}{\cdot} = \norminf{\cdot - \cdot} $ and $ \distone{\cdot}{\cdot} = \normone{\cdot - \cdot} $.

\section{Preliminaries}
\label{sec:prelim}
Let $ P_\bfx\in\Delta(\cX) $. 
We always assume $ \supp(P_\bfx) = \cX $. 
Otherwise, we can properly reduce $ \cX $ to $ \cX' $ and again assume $ P_\bfx\in\Delta(\cX'), \supp(P_\bfx) = \cX' $. 
Define the polynomial $\nu(P_\bfx,n) $ as
\begin{align}
\nu(P_\bfx,n) \coloneqq& 
\sqrt{ (2\pi n)^{\card{\cX}} \prod_{x\in\cX} P_\bfx(x) }. \label{eqn:def-poly}
\end{align}
Note that $ \nu(P_\bfx,n) \ne0 $. 

\begin{lemma}
\label{lem:type-class-conc}
If $ \vbfx\sim P_\bfx^\tn $, then 
for any $ \vx $ of type $ P_\bfx $, we have
$\prob{\vbfx = \vx} = 2^{-H(P_\bfx)}$. 
Moreover, 
$ \prob{\tau_\vbfx = P_\bfx}\asymp1/\nu(P_\bfx, n) $. 
\end{lemma}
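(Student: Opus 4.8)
The plan is to establish the two assertions in turn; both are elementary, resting on the definition of a type, a multinomial count, and Stirling's approximation.

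\emph{The per-sequence probability.} Fix any $\vx\in\cX^n$ with $\tau_\vx=P_\bfx$. By definition of the type, each symbol $x\in\cX$ occurs exactly $nP_\bfx(x)$ times in $\vx$; in particular this presupposes $P_\bfx\in\Delta^{(n)}(\cX)$, i.e.\ each $nP_\bfx(x)$ is a nonnegative integer, which we assume throughout. Since the coordinates of $\vbfx\sim P_\bfx^\tn$ are i.i.d., grouping the product over coordinates by symbol gives
\begin{align}
\prob{\vbfx=\vx} = \prod_{i=1}^n P_\bfx(\vx(i)) = \prod_{x\in\cX}P_\bfx(x)^{nP_\bfx(x)} = 2^{n\sum_{x\in\cX}P_\bfx(x)\log P_\bfx(x)} = 2^{-nH(P_\bfx)},
\end{align}
where the standing assumption $\supp(P_\bfx)=\cX$ ensures no factor vanishes and the last step is the definition of $H(P_\bfx)$. (The $n$ in the exponent is per the blocklength convention.)

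\emph{The type-class probability.} The event $\curbrkt{\tau_\vbfx=P_\bfx}$ is the disjoint union of the events $\curbrkt{\vbfx=\vx}$ over all $\vx\in\cX^n$ of type $P_\bfx$, and there are $\binom{n}{(nP_\bfx(x))_{x\in\cX}} = n!/\prod_{x\in\cX}(nP_\bfx(x))!$ such sequences (the number of strings with the prescribed symbol counts). Combining with the first part,
\begin{align}
\prob{\tau_\vbfx=P_\bfx} = \frac{n!}{\prod_{x\in\cX}(nP_\bfx(x))!}\cdot 2^{-nH(P_\bfx)}.
\end{align}
I would then substitute Stirling's estimate $m!\asymp\sqrt{2\pi m}\,(m/e)^m$ into $n!$ and into each $(nP_\bfx(x))!$. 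The factors $e^{-n}$ and $\prod_x e^{-nP_\bfx(x)}=e^{-n}$ cancel; the term $n^n/\prod_x(nP_\bfx(x))^{nP_\bfx(x)}=\prod_x P_\bfx(x)^{-nP_\bfx(x)}=2^{nH(P_\bfx)}$ cancels the explicit $2^{-nH(P_\bfx)}$; and the surviving $\sqrt{2\pi(\cdot)}$ prefactors, together with $\prod_{x\in\cX}P_\bfx(x)$, assemble into the polynomial $\nu(P_\bfx,n)$, yielding $\prob{\tau_\vbfx=P_\bfx}\asymp 1/\nu(P_\bfx,n)$. It is tacit that $n$ ranges over values with $P_\bfx\in\Delta^{(n)}(\cX)$, so that the type class is nonempty.

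There is no genuine obstacle here. The only point requiring care is precision in the Stirling step: to obtain the multiplicative equivalence $\asymp$ (not merely the logarithmic equivalence $\doteq$), one must retain the $\sqrt{2\pi m}$ prefactors exactly and verify that all exponentially growing contributions cancel identically, so that only the $1+o(1)$ errors from Stirling remain.
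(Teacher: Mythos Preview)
The paper states this lemma as a preliminary without proof, so there is no argument in the paper to compare against; your approach is the standard textbook one and is correct in outline.

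One caution on the final Stirling step, which you gloss over: if you carry it out precisely, the prefactor $\sqrt{2\pi n}$ coming from $n!$ in the numerator does \emph{not} cancel against anything. The denominator prefactors contribute $\prod_{x\in\cX}\sqrt{2\pi n P_\bfx(x)}=\nu(P_\bfx,n)$ exactly, so what you actually obtain is
\[
\prob{\tau_\vbfx=P_\bfx}\;\asymp\;\frac{\sqrt{2\pi n}}{\nu(P_\bfx,n)}\;=\;\frac{1}{\sqrt{(2\pi n)^{\card{\cX}-1}\prod_{x\in\cX}P_\bfx(x)}},
\]
which differs from the stated $1/\nu(P_\bfx,n)$ by a factor of $\sqrt{2\pi n}$. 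This is an imprecision in the paper's lemma (equivalently, an off-by-one in the exponent of $2\pi n$ in the definition of $\nu$), not a flaw in your reasoning; it is harmless for every downstream use in the paper, where $\nu$ only ever serves as a generic polynomial-in-$n$ factor. Your parenthetical about the missing $n$ in the exponent of the first claim is likewise correct: the per-sequence probability is $2^{-nH(P_\bfx)}$, and the paper's $2^{-H(P_\bfx)}$ is a typo.
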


\begin{lemma}[Chernoff bound]
\label{lem:chernoff}
Let $ \bfx_1,\cdots,\bfx_N $ be independent $ \curbrkt{0,1} $-valued random variables. 
Let $ \bfx \coloneqq \sum_{i = 1}^N\bfx_i $. 
Then for any $ \sigma\in[0,1] $, 
\begin{align}
\prob{\bfx\ge(1+\delta)\expt{\bfx}} \le& \exp\paren{-\frac{\delta^2}{3}\expt{\bfx}}, \notag \\
\prob{\bfx\le(1 - \delta)\expt{\bfx}}\le& \exp\paren{-\frac{\delta^2}{2}\expt{\bfx}}, \notag \\
\prob{\bfx\notin(1\pm\delta)\expt{\bfx}} \le& 2\exp\paren{-\frac{\delta^2}{3}\expt{\bfx}}. \notag 
\end{align}
\end{lemma}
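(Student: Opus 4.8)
The plan is to run the standard exponential-moment (Bernstein--Chernoff) argument; note first that the ``$\sigma\in[0,1]$'' appearing in the statement should read $\delta\in[0,1]$. Write $\mu\coloneqq\expt{\bfx}=\sum_{i=1}^N p_i$ with $p_i\coloneqq\prob{\bfx_i=1}$. The opening step is the exponential Markov inequality: for every $t>0$,
\[
\prob{\bfx\ge(1+\delta)\mu}=\prob{e^{t\bfx}\ge e^{t(1+\delta)\mu}}\le e^{-t(1+\delta)\mu}\,\expt{e^{t\bfx}},
\]
and, applying the same inequality to $-\bfx$ (using that $z\mapsto e^{-tz}$ is decreasing for $t>0$), $\prob{\bfx\le(1-\delta)\mu}\le e^{t(1-\delta)\mu}\,\expt{e^{-t\bfx}}$ for every $t>0$.

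Next I would bound the moment generating function. By independence $\expt{e^{t\bfx}}=\prod_{i=1}^N\expt{e^{t\bfx_i}}=\prod_{i=1}^N\paren{1+p_i(e^t-1)}$, and since $1+z\le e^z$ for all real $z$ this gives $\expt{e^{t\bfx}}\le\exp\paren{\mu(e^t-1)}$, and likewise $\expt{e^{-t\bfx}}\le\exp\paren{\mu(e^{-t}-1)}$. Substituting into the two bounds above and minimizing the resulting exponent over $t>0$ — the minimizers being $t=\ln(1+\delta)$ and $t=\ln\frac{1}{1-\delta}$ respectively — yields the ``relative-entropy'' forms
\[
\prob{\bfx\ge(1+\delta)\mu}\le\paren{\frac{e^\delta}{(1+\delta)^{1+\delta}}}^{\mu},
\qquad
\prob{\bfx\le(1-\delta)\mu}\le\paren{\frac{e^{-\delta}}{(1-\delta)^{1-\delta}}}^{\mu}.
\]

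To reach the clean bounds in the statement I would then invoke the elementary scalar inequalities, valid for $\delta\in[0,1]$,
\[
(1+\delta)\ln(1+\delta)-\delta\ \ge\ \frac{\delta^2}{3},
\qquad
-\delta-(1-\delta)\ln(1-\delta)\ \le\ -\frac{\delta^2}{2},
\]
which convert the two displayed bounds into $\exp\paren{-\mu\delta^2/3}$ and $\exp\paren{-\mu\delta^2/2}$ respectively; the two-sided estimate then follows from the union bound together with $e^{-\mu\delta^2/2}\le e^{-\mu\delta^2/3}$. The only step that is not purely mechanical is the verification of these two scalar inequalities on $[0,1]$: expanding $\ln(1\pm\delta)$ in a Taylor series, the residual in each case is a power series whose coefficients have a fixed sign on $[0,1]$, and this is precisely where both the constants $1/3,1/2$ and the restriction $\delta\in[0,1]$ originate. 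Everything else is routine, so I do not expect a genuine obstacle here.
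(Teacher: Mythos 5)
The paper states \Cref{lem:chernoff} without proof, treating it as a standard preliminary, so there is no internal proof to compare against; your proposal is the canonical exponential-moment (Chernoff) derivation and is correct in its outline — exponential Markov, bound the MGF via $1+z\le e^z$, optimize over $t$, then convert the Kullback--Leibler-type exponents to quadratic ones via scalar inequalities, finishing with a union bound. You are also right that the ``$\sigma\in[0,1]$'' in the statement is a typo for $\delta\in[0,1]$.

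There is one small slip in the last mechanical step, namely the claim that the Taylor residual in both scalar inequalities ``is a power series whose coefficients have a fixed sign.'' This is true for the lower tail: $f(\delta)\coloneqq(1-\delta)\ln(1-\delta)+\delta-\tfrac{\delta^2}{2}=\sum_{k\ge3}\tfrac{\delta^k}{k(k-1)}$, all coefficients nonnegative. But it is false for the upper tail: since $(1+\delta)\ln(1+\delta)-\delta=\sum_{k\ge2}\tfrac{(-1)^k\delta^k}{k(k-1)}$, subtracting $\tfrac{\delta^2}{3}$ gives
\[
g(\delta)\coloneqq(1+\delta)\ln(1+\delta)-\delta-\frac{\delta^2}{3}=\frac{\delta^2}{6}-\frac{\delta^3}{6}+\frac{\delta^4}{12}-\cdots,
\]
an alternating series, so the ``fixed sign'' reasoning does not apply. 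The inequality $g(\delta)\ge0$ on $[0,1]$ is nonetheless true; one clean way to see it is $g(0)=g'(0)=0$ and $g'(\delta)=\ln(1+\delta)-\tfrac{2\delta}{3}$, which satisfies $g''(\delta)=\tfrac{1}{1+\delta}-\tfrac{2}{3}$ (nonnegative on $[0,\tfrac12]$, nonpositive on $[\tfrac12,1]$) with $g'(\tfrac12)=\ln\tfrac32-\tfrac13>0$ and $g'(1)=\ln2-\tfrac23>0$, so $g'\ge0$ and hence $g\ge0$ on $[0,1]$. With that one step repaired, the argument goes through.
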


\begin{lemma}[Sanov's theorem]
\label{lem:sanov}
Let $ \cQ\subseteq\Delta(\cX) $ be a subset of distributions which equals the closure of its interior. 
Let $ \vbfx\sim P_\bfx^\tn $ for some $ P_\bfx\in\Delta(\cX) $. 
Then 
\begin{align}
\lim_{n\to\infty}\frac{1}{n}\log\prob{\tau_\vbfx\in\cA} =& -\inf_{Q_\bfx\in\cQ} \kl{Q_\bfx}{P_\bfx}, \notag 
\end{align}
where the Kullback--Leibler (KL) divergence $ \kl{\cdot}{\cdot} $ between two distributions is defined in \Cref{def:kl-div}. 
\end{lemma}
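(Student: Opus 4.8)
The statement in question is Sanov's theorem (\Cref{lem:sanov}). The plan is to prove it via the standard method of types, establishing matching exponential upper and lower bounds on $\prob{\tau_\vbfx\in\cQ}$ and then invoking the hypothesis that $\cQ$ equals the closure of its interior to make the two bounds agree in the limit.

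First I would recall the basic type-counting estimates. For any type $Q_\bfx\in\Delta^\tn(\cX)$, the number of length-$n$ sequences of type $Q_\bfx$ is $\binom{n}{nQ_\bfx}\doteq 2^{nH(Q_\bfx)}$, and by \Cref{lem:type-class-conc} each such sequence has probability $2^{-n(H(Q_\bfx)+\kl{Q_\bfx}{P_\bfx})}$ under $P_\bfx^\tn$ (rewriting $2^{-nH(Q_\bfx)}$ with the cross-entropy identity). Hence the probability that $\vbfx$ has type exactly $Q_\bfx$ is $\doteq 2^{-n\kl{Q_\bfx}{P_\bfx}}$, up to a polynomial factor $\nu(Q_\bfx,n)$. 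Summing over the at most $(n+1)^{\card\cX}$ types lying in $\cQ$ gives the upper bound $\prob{\tau_\vbfx\in\cQ}\le (n+1)^{\card\cX}\max_{Q_\bfx\in\cQ\cap\Delta^\tn(\cX)}2^{-n\kl{Q_\bfx}{P_\bfx}}\le (n+1)^{\card\cX}2^{-n\inf_{Q_\bfx\in\cQ}\kl{Q_\bfx}{P_\bfx}}$. Taking $\frac1n\log$ and letting $n\to\infty$ kills the polynomial prefactor and yields $\limsup_n \frac1n\log\prob{\tau_\vbfx\in\cQ}\le -\inf_{Q_\bfx\in\cQ}\kl{Q_\bfx}{P_\bfx}$.

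For the matching lower bound, I would pick any $Q_\bfx^\ast$ in the interior of $\cQ$ and, for each $n$, choose a type $Q_\bfx^{(n)}\in\Delta^\tn(\cX)$ with $\normone{Q_\bfx^{(n)}-Q_\bfx^\ast}\le\card\cX/n$; for $n$ large enough $Q_\bfx^{(n)}\in\cQ$ since $Q_\bfx^\ast$ is interior. Then $\prob{\tau_\vbfx\in\cQ}\ge\prob{\tau_\vbfx=Q_\bfx^{(n)}}\doteq 2^{-n\kl{Q_\bfx^{(n)}}{P_\bfx}}$, and by continuity of the KL divergence (using $\supp(P_\bfx)=\cX$, so the divergence is finite and continuous near $Q_\bfx^\ast$) we get $\liminf_n\frac1n\log\prob{\tau_\vbfx\in\cQ}\ge -\kl{Q_\bfx^\ast}{P_\bfx}$. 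Optimizing over $Q_\bfx^\ast$ in the interior of $\cQ$ gives $\liminf_n\frac1n\log\prob{\tau_\vbfx\in\cQ}\ge -\inf_{Q_\bfx\in\mathrm{int}(\cQ)}\kl{Q_\bfx}{P_\bfx}$. Finally, the hypothesis that $\cQ$ is the closure of its interior, together with continuity of $\kl{\cdot}{P_\bfx}$ on $\Delta(\cX)$, forces $\inf_{Q_\bfx\in\mathrm{int}(\cQ)}\kl{Q_\bfx}{P_\bfx}=\inf_{Q_\bfx\in\cQ}\kl{Q_\bfx}{P_\bfx}$, so the two bounds coincide and the limit exists with the claimed value.

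The only genuinely delicate point — and the step I would expect to be the main obstacle — is the interplay between the discreteness of the type set $\Delta^\tn(\cX)$ and the topology of $\cQ$: one must ensure that for large $n$ there actually exist types inside $\cQ$ close to a chosen interior point, which is exactly where the ``closure of its interior'' assumption is used (a set like a single point or a lower-dimensional face could be missed by the type lattice entirely). Everything else is bookkeeping with the method of types; I would be careful to state the continuity of $\kl{\cdot}{P_\bfx}$ explicitly (valid precisely because $\supp(P_\bfx)=\cX$) since it is invoked in both directions.
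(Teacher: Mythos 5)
The paper does not actually prove this lemma: it is stated as a standard preliminary (Sanov's theorem) without an accompanying proof, and is used as a black box later (in \Cref{lem:inner_bound_prod_distr}). So there is no ``paper's own proof'' to compare against. That said, your argument is the standard and correct method-of-types proof of Sanov's theorem: the upper bound from counting the polynomially many types in $\cQ$ and bounding each type class probability by $2^{-n\kl{Q_\bfx}{P_\bfx}}$ up to polynomial factors, and the lower bound from approximating an interior point of $\cQ$ by a sequence of types and using continuity of $\kl{\cdot}{P_\bfx}$ (which indeed requires $\supp(P_\bfx)=\cX$, an assumption the paper makes globally in \Cref{sec:prelim}). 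You also correctly isolate the role of the ``closure of its interior'' hypothesis: it both guarantees that the type lattice eventually hits $\cQ$ near any interior point and forces $\inf_{\mathrm{int}(\cQ)}\kl{\cdot}{P_\bfx}=\inf_{\cQ}\kl{\cdot}{P_\bfx}$, closing the gap between the $\limsup$ and $\liminf$. You also implicitly fix the typo in the paper's statement, where the event should be $\tau_\vbfx\in\cQ$ rather than $\tau_\vbfx\in\cA$. No gaps.
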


\begin{fact}
\label{lem:type-concac}
Let $ \vx = (\vx^{(1)}, \vx^{(2)})\in\cX^n $ where $ \vx^{(1)}\in\cX^{\alpha n} $ and $ \vx^{(2)}\in\cX^{(1-\alpha)n} $ for some $ \alpha\in[0,1] $.
Then we have $ \tau_{\vx} = \alpha\tau_{\vx^{(1)}} + (1-\alpha)\tau_{\vx^{(2)}} $. 
\end{fact}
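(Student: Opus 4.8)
The final statement is \Cref{lem:type-concac}, a \textbf{Fact} asserting that if $\vx = (\vx^{(1)}, \vx^{(2)})$ is the concatenation of a length-$\alpha n$ vector $\vx^{(1)}$ and a length-$(1-\alpha)n$ vector $\vx^{(2)}$, then $\tau_{\vx} = \alpha\tau_{\vx^{(1)}} + (1-\alpha)\tau_{\vx^{(2)}}$.

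The plan is to unfold the definition of the type (empirical distribution) and compute directly. For each symbol $a \in \cX$, the type $\tau_{\vx}(a)$ is by definition the fraction of coordinates of $\vx$ equal to $a$, i.e., $\tau_{\vx}(a) = \frac{1}{n}\card{\{i \in [n] : \vx(i) = a\}}$. First I would split the index set $[n]$ into the first $\alpha n$ coordinates (which constitute $\vx^{(1)}$) and the last $(1-\alpha)n$ coordinates (which constitute $\vx^{(2)}$). Then the count of coordinates of $\vx$ equal to $a$ is the sum of the count among the first block and the count among the second block: $\card{\{i : \vx(i) = a\}} = \card{\{i : \vx^{(1)}(i) = a\}} + \card{\{i : \vx^{(2)}(i) = a\}}$. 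Dividing by $n$ and multiplying/dividing each term by its block length, the first term becomes $\frac{\alpha n}{n}\cdot\frac{1}{\alpha n}\card{\{i : \vx^{(1)}(i)=a\}} = \alpha\,\tau_{\vx^{(1)}}(a)$ and similarly the second term becomes $(1-\alpha)\,\tau_{\vx^{(2)}}(a)$. Since this holds for every $a \in \cX$, we conclude $\tau_{\vx} = \alpha\tau_{\vx^{(1)}} + (1-\alpha)\tau_{\vx^{(2)}}$ as distributions.

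There is essentially no obstacle here; the only mild point of care is the implicit assumption that $\alpha n$ is an integer (so that the split into blocks of sizes $\alpha n$ and $(1-\alpha)n$ makes sense), which is already built into the statement since $\vx^{(1)} \in \cX^{\alpha n}$ and $\vx^{(2)} \in \cX^{(1-\alpha)n}$ are assumed to be honest vectors. One should note that this is just the statement that empirical distributions are additive under concatenation, weighted by the relative lengths — the discrete analogue of the fact that a mixture of the empirical measures equals the empirical measure of the pooled sample. I would present this as a two-line computation and not belabor it, since the paper has (appropriately) labelled it a \textbf{Fact} rather than a Lemma.
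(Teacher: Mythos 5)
Your proof is correct, and since the paper states this as a \emph{Fact} without supplying any proof, the direct unfolding of the definition of type and splitting the count over the two blocks is exactly the intended (and essentially the only) argument. No discrepancy.
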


\begin{definition}[Net]
\label{def:net}
Let $ (\cX,\dist) $ be a metric space and $ \eta>0 $ be a constant.
A subset $ \cN\subseteq\cX $ is an \emph{$ \eta $-net} if for all $ x\in\cX $, there exists $ x'\in\cN $ such that $ \dist(x,x')\le\eta $. 
\end{definition}

The following lemma can be proved by taking a simple coordinate quantization. 
A proof can be found in, e.g., \cite{zhang-2020-generalized-listdec-itcs}. 
\begin{lemma}[Bound on size of a net]
\label{lem:bound-net}
Let $ \cX $ be a finite alphabet. 
For any constant $ \eta>0 $, there exists an $ \eta $-net of $ (\Delta(\cX),d_{\infty}) $ of size at most $ \ceil{\frac{\cardX}{2\eta}}^{\cardX} \le \paren{\frac{\cardX}{2\eta}+1}^{\cardX} $. 
\end{lemma}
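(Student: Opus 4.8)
Looking at this, the final statement to prove is Lemma~\ref{lem:bound-net}, the bound on the size of an $\eta$-net of $(\Delta(\cX), d_\infty)$.

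\textbf{Proof proposal.} The plan is to construct an explicit $\eta$-net by quantizing each coordinate of a distribution to a finite grid, then count how many grid points lie in (a neighborhood of) the simplex. First I would fix the integer parameter $m \coloneqq \ceil{\frac{\cardX}{2\eta}}$ and consider the grid $\cG \coloneqq \curbrkt{0, \frac{1}{m}, \frac{2}{m}, \ldots, 1}^{\cardX}$, i.e., all vectors in $[0,1]^{\cardX}$ each of whose coordinates is an integer multiple of $1/m$. The candidate net $\cN$ will be the set of all $Q \in \cG$ such that $\dist_\infty(Q, \Delta(\cX)) \le \eta$ (or, slightly more carefully, one can take for each grid point the nearest point of the simplex; either works, and I would pick whichever keeps the arithmetic cleanest). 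The size of $\cN$ is at most $\card{\cG} = (m+1)^{\cardX} \le \paren{\frac{\cardX}{2\eta}+1}^{\cardX}$, which matches the claimed bound; and since $m+1 \le \cardX/(2\eta) + 1$ is what we use, the intermediate bound $\ceil{\frac{\cardX}{2\eta}}^{\cardX}$ would instead come from taking the grid $\curbrkt{\frac{1}{m}, \ldots, 1}$ of size exactly $m$ — I would reconcile the two forms stated by being slightly generous in the counting.

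The covering property is the substantive step. Given an arbitrary $P \in \Delta(\cX)$, for each $x \in \cX$ round $P(x)$ down (or to the nearest multiple of $1/m$) to get $\tilde P(x) \in \curbrkt{0, 1/m, \ldots}$, so that $\abs{P(x) - \tilde P(x)} \le \frac{1}{m} \le 2\eta/\cardX \le \eta$ coordinatewise. This $\tilde P$ need not be a probability vector, but it is a grid point within $\ell^\infty$-distance $\eta$ of $P \in \Delta(\cX)$, hence within distance $\eta$ of the simplex, so $\tilde P \in \cN$ (or the simplex-point associated to it is in $\cN$). Therefore $\dist_\infty(P, \cN) \le \eta$, completing the verification. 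One technical nuisance to handle carefully is that we must ensure the net points themselves lie in $\cX$'s ambient space in the way the definition of net requires — here $\cX$ in the lemma statement plays the role of the metric space $\Delta(\cX)$, so the net must be a subset of $\Delta(\cX)$ itself; this is why I would, in the final writeup, replace each grid point by its nearest simplex point and observe that this at most doubles the distance (triangle inequality, $P \to \tilde P \to$ projection), then rescale $\eta$ accordingly, or — cleaner — normalize $\tilde P$ to $\tilde P / \normone{\tilde P}$ and check this stays within $O(\eta)$; the constant juggling is routine.

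The main obstacle, such as it is, is purely bookkeeping: matching the two displayed upper bounds $\ceil{\frac{\cardX}{2\eta}}^{\cardX}$ and $\paren{\frac{\cardX}{2\eta}+1}^{\cardX}$ with the grid size while simultaneously ensuring the net is genuinely a subset of $\Delta(\cX)$ and genuinely $\eta$-covering (not $2\eta$-covering). There is no conceptual difficulty — this is the standard volumetric/grid argument — so in the paper I would simply cite \cite{zhang-2020-generalized-listdec-itcs} for the details, as the excerpt already indicates, and present only the one-line idea: quantize coordinates to a $1/m$-grid with $m = \ceil{\cardX/(2\eta)}$, bound the count by $(m+1)^{\cardX}$, and use the coordinatewise rounding error $\le 1/m \le \eta$ to conclude the covering property.
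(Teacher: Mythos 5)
Your approach is exactly the coordinate-quantization argument the paper itself points to (the paper does not give a proof; it says "can be proved by taking a simple coordinate quantization" and cites \cite{zhang-2020-generalized-listdec-itcs}), and your outline is sound. One small note on the bookkeeping you flag: to make the triangle-inequality step close (grid point $\to$ nearest simplex point at most doubles the error) you should round to the \emph{nearest} multiple of $1/m$, giving coordinatewise error $\le 1/(2m) \le \eta/\cardX$ so that the doubled error is $\le 2\eta/\cardX \le \eta$ for $\cardX\ge 2$; rounding \emph{down} gives only $1/m$ per coordinate and the doubling would overshoot. With that choice, the net is indeed a subset of $\Delta(\cX)$ of cardinality at most $(m+1)^{\cardX}$ with $m=\ceil{\cardX/(2\eta)}$, as claimed.
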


\begin{fact}
\label{fact:distinf-distone}
For any $ \vx,\vy\in\bR^k $, we have
$\distinf{\vx}{\vy} \le \distone{\vx}{\vy} \le k\cdot\distinf{\vx}{\vy}$. 
\end{fact}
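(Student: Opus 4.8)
This is the standard two-sided comparison between the $\ell^\infty$ and $\ell^1$ norms on $\bR^k$, so the plan is simply to reduce to that and check both bounds. First I would invoke the convention fixed in \Cref{sec:notation}, namely $\distinf{\cdot}{\cdot}=\norminf{\cdot-\cdot}$ (the coordinate-wise maximum distance) and $\distone{\cdot}{\cdot}=\normone{\cdot-\cdot}$ (the coordinate-wise sum distance), and then set $w_i\coloneqq\vx(i)-\vy(i)$ for $i\in[k]$. With this it remains to establish $\max_{i\in[k]}\abs{w_i}\le\sum_{i\in[k]}\abs{w_i}\le k\max_{i\in[k]}\abs{w_i}$.

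For the left inequality I would choose an index $i^\ast\in[k]$ attaining the maximum and note that, because every summand $\abs{w_i}$ is nonnegative, discarding all terms except the $i^\ast$-th can only decrease the sum, giving $\max_{i\in[k]}\abs{w_i}=\abs{w_{i^\ast}}\le\sum_{i\in[k]}\abs{w_i}$. For the right inequality I would bound each $\abs{w_i}$ from above by $\max_{j\in[k]}\abs{w_j}$ and add these $k$ inequalities, yielding $\sum_{i\in[k]}\abs{w_i}\le k\max_{j\in[k]}\abs{w_j}$. Chaining the two displays establishes the claim.

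I do not expect any genuine obstacle: each bound is a single line and no hypothesis beyond $\vx,\vy\in\bR^k$ enters. The one point deserving a moment's care is orienting the chain of inequalities consistently with this paper's (slightly nonstandard-looking) naming of $\distinf{\cdot}{\cdot}$ versus $\distone{\cdot}{\cdot}$; it may also be worth remarking that both inequalities are tight, saturated respectively by a difference vector supported on a single coordinate and by a constant difference vector, although tightness is not needed for the statement.
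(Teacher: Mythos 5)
Your proof is correct and is the standard two-line argument; the paper in fact states this as a Fact with no proof given at all, so there is nothing to compare against. One small remark: you are right to be careful about the naming. The displayed definitions of $\distinf{\cdot}{\cdot}$ and $\distone{\cdot}{\cdot}$ in \Cref{sec:notation} are actually swapped (the sum is attached to $\distinf$ and the max to $\distone$ there), but the subsequent identities $\distinf{\cdot}{\cdot}=\norminf{\cdot-\cdot}$ and $\distone{\cdot}{\cdot}=\normone{\cdot-\cdot}$, as well as every downstream use such as \Cref{eqn:equicoupled-lone}, confirm the intended reading you adopted ($\distinf$ is the $\ell^\infty$ max, $\distone$ is the $\ell^1$ sum). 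Under that reading your chain $\max_i\abs{w_i}\le\sum_i\abs{w_i}\le k\max_i\abs{w_i}$ proves exactly the stated inequality.
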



\begin{definition}[Kullback--Leibler (KL) divergence]
\label{def:kl-div}
Let $ \cX $ be a finite set and let $ P,Q\in\Delta(\cX) $. 
Assume that $ P $ is absolutely continuous w.r.t. $ Q $ (i.e., $ \supp(P)\subseteq\supp(Q) $).
The \emph{Kullback--Leibler (KL) divergence} between $ P $ and $Q$ is defined as $ \kl{P}{Q} \coloneqq\sum_{x\in\cX}P(x)\log\frac{P(x)}{Q(x)} $. 
\end{definition}

\begin{definition}[Types]
\label{def:type}
Let $ \cX $ be a finite set and $ n\in\bZ_{\ge1} $. 
The \emph{type} of a vector $ \vx\in\cX^n $, denoted by $ \tau_\vx\in\Delta(\cX) $, is the empirical distribution/histogram of $\vx$ defined as: for every $ x\in\cX $, $ \tau_\vx(x) = \frac{1}{n}\card{\curbrkt{i\in[n]\colon \vx(i) = x}} $. 
The set of all types of $ \cX^n $-valued vectors is denoted by $ \Delta^{(n)}(\cX) $. 
Let $ \cY $ be another finite set and $ \vy\in\cY^n $. 
The \emph{joint type} $ \tau_{\vx,\vy} $ (and $ \Delta^{(n)}(\cX\times\cY) $ correspondingly) and the \emph{conditional type} $ \tau_{\vx|\vy} $ (and $ \Delta^{(n)}(\cX|\cY) $ correspondingly) are defined in a similar manner. 
Furthermore, these definitions can be extended to tuples of vectors in the canonical way.
The set of vectors of the same type is called a \emph{type class}. 
\end{definition}

\begin{fact}[Types are dense in distributions]
\label{fact:type-dense-in-distr}
Let $ \cX $ be a finite set. 
The set $ \bigcup_{n\in\bZ_{\ge1}}\Delta^{(n)}(\cX) $ of types induced by vectors of all possible lengths is dense in the corresponding set $ \Delta(\cX) $ of distributions. 
\end{fact}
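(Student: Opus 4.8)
The plan is to prove density by an explicit coordinate quantization, in the same spirit as the argument underlying \Cref{lem:bound-net}: every target distribution is matched, up to an error of order $1/n$, by a type with denominator $n$ obtained by rounding the target's coordinates down to integer multiples of $1/n$ and then reassigning the few leftover units of mass. Since $\Delta(\cX)$ sits inside the finite-dimensional space $\bR^{\cardX}$, on which all norms are equivalent, it suffices to carry this out for a single metric, say $\distone{\cdot}{\cdot}$ (equivalently $\distinf{\cdot}{\cdot}$), and density is then the same claim for any of them.

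Concretely, I would fix $P_\bfx\in\Delta(\cX)$ and $\eps>0$ and exhibit a type within $\eps$ of $P_\bfx$. First choose an integer $n>\cardX/\eps$. For each $x\in\cX$ set $m_x\coloneqq\lfloor nP_\bfx(x)\rfloor\in\bZ_{\ge1}\cup\{0\}$, so that $0\le nP_\bfx(x)-m_x<1$ for every $x$ and hence $r\coloneqq n-\sum_{x\in\cX}m_x$ is an integer with $0\le r<\cardX$. Pick any $\cX'\in\binom{\cX}{r}$ and put $m'_x\coloneqq m_x+\one_{\cX'}(x)$; then the $m'_x$ are nonnegative integers with $\sum_{x\in\cX}m'_x=n$, so there is a vector $\vx\in\cX^n$ having exactly $m'_x$ coordinates equal to $x$ for each $x$. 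By construction $\tau_\vx\in\Delta^{(n)}(\cX)$ with $\tau_\vx(x)=m'_x/n$. Since $|nP_\bfx(x)-m'_x|\le1$ for every $x$ (it is $<1$ when $x\notin\cX'$ and $\le1$ when $x\in\cX'$), we get $\distone{\tau_\vx}{P_\bfx}=\max_{x\in\cX}|\tau_\vx(x)-P_\bfx(x)|\le1/n$ and $\distinf{\tau_\vx}{P_\bfx}=\sum_{x\in\cX}|\tau_\vx(x)-P_\bfx(x)|\le\cardX/n<\eps$. As $\eps$ was arbitrary, $\bigcup_{n\in\bZ_{\ge1}}\Delta^{(n)}(\cX)$ meets every neighbourhood of $P_\bfx$, which is the asserted density.

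I do not expect a genuine obstacle: the statement is elementary. The only step deserving a moment's care is the remainder correction — rounding \emph{down} (rather than to the nearest integer) is what guarantees both $r\ge0$ and $r<\cardX$, so the leftover mass can be dispersed one unit at a time, perturbing each coordinate by at most an additional $1/n$ and keeping the total $\ell^1$ error below $\cardX/n$; any finer bookkeeping is unnecessary since density is a purely topological claim.
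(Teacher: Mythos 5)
Your argument is correct and is the standard coordinate-quantization proof; the paper states this as a \emph{Fact} without any written proof (indeed it explicitly attributes the closely related \Cref{lem:bound-net} to "a simple coordinate quantization"), so your construction is precisely the kind of elementary argument the authors took for granted. The one step worth the care you gave it — rounding down rather than to nearest so that the shortfall $r$ is a nonnegative integer strictly less than $\cardX$, then dispersing it one unit at a time — is handled correctly, and your final bounds $\distone{\tau_\vx}{P_\bfx}\le 1/n$ and $\distinf{\tau_\vx}{P_\bfx}\le\cardX/n$ match the paper's (notational) conventions for these two metrics.
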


The number of types of length-$n$ vectors is polynomial in $n$. 

\begin{lemma}[Number of types \cite{csiszar-1998-types}]
\label{lem:size-type-class}
The number of types corresponding to $ \cX^n $-valued vectors equals $ \binom{n - \card{\cX} - 1}{\card{\cX} - 1}\le(n+\card{\cX} - 1)^{\card{\cX} - 1} $. 
\end{lemma}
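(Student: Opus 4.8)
The plan is to count types directly by identifying each type with the histogram of symbol occurrences. Write $ \cX = \curbrkt{x_1,\dots,x_m} $ with $ m = \card{\cX} $. For any $ \vx\in\cX^n $ the type $ \tau_\vx $ is completely determined by the tuple of nonnegative integers $ \paren{n_1,\dots,n_m} $ where $ n_j \coloneqq n\cdot\tau_\vx(x_j) = \card{\curbrkt{i\in[n]\colon \vx(i) = x_j}} $, and these satisfy $ \sum_{j=1}^m n_j = n $. Conversely every such tuple is realized by some length-$n$ vector (for instance the one listing $ n_1 $ copies of $ x_1 $, then $ n_2 $ copies of $ x_2 $, and so on). Hence $ \vx\mapsto\paren{n_1,\dots,n_m} $ descends to a bijection between $ \Delta^{(n)}(\cX) $ and the set of nonnegative integer solutions of $ z_1 + \dots + z_m = n $.

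First I would invoke the standard stars-and-bars count: the number of nonnegative integer solutions of $ z_1 + \dots + z_m = n $ is $ \binom{n + m - 1}{m - 1} $, obtained by distributing $ n $ indistinguishable balls among $ m $ ordered bins, equivalently by choosing the $ m - 1 $ bin separators among the $ n + m - 1 $ available slots. This gives $ \card{\Delta^{(n)}(\cX)} = \binom{n + \card{\cX} - 1}{\card{\cX} - 1} $, which is the exact count asserted in the lemma (the binomial coefficient there should be read as $ \binom{n + \card{\cX} - 1}{\card{\cX} - 1} $).

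For the stated upper bound I would apply the elementary inequality $ \binom{N}{k}\le N^k $ with $ N = n + \card{\cX} - 1 $ and $ k = \card{\cX} - 1 $; alternatively, observe that each of $ n_1,\dots,n_{m-1} $ ranges over the $ n + 1 $ values $ \curbrkt{0,1,\dots,n} $ while $ n_m $ is then forced, which already yields the slightly stronger bound $ (n+1)^{\card{\cX} - 1} $. I do not expect any genuine obstacle here: the argument is purely combinatorial, and the only point requiring care is the bookkeeping of the binomial convention, after which the displayed inequality is immediate.
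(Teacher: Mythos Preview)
Your argument is correct and is the standard stars-and-bars proof; you also correctly flag the typo in the displayed binomial coefficient (it should be $\binom{n+\card{\cX}-1}{\card{\cX}-1}$, consistent with the upper bound). The paper itself gives no proof of this lemma: it is stated with a citation to \cite{csiszar-1998-types} and used as a black box, so there is nothing further to compare.
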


\begin{lemma}[Marginalization does not increase distance]
\label{lem:marg-doesnot-increase-dist}
Let $ P_{\bfa,\bfb},Q_{\bfa,\bfb}\in\Delta(\cA\times\cB) $.
Then
$\distone{\sqrbrkt{P_{\bfa,\bfb}}_{\bfa} }{ \sqrbrkt{Q_{\bfa,\bfb}}_{\bfa}} \le \distone{P_{\bfa,\bfb}}{ Q_{\bfa,\bfb}}$. 
\end{lemma}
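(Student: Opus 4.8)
This is exactly the statement that coordinate marginalization is non-expansive for the $\ell^1$ distance between distributions — equivalently, the data-processing inequality for total variation distance specialized to the deterministic map $(\bfa,\bfb)\mapsto\bfa$. The plan is to prove it directly from the triangle inequality, so the ``proof'' is essentially one display.

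First I would unpack the marginal: for each fixed $a\in\cA$,
\[
\sqrbrkt{P_{\bfa,\bfb}}_{\bfa}(a) - \sqrbrkt{Q_{\bfa,\bfb}}_{\bfa}(a) = \sum_{b\in\cB}\paren{P_{\bfa,\bfb}(a,b) - Q_{\bfa,\bfb}(a,b)}.
\]
Taking absolute values and pushing them inside the sum by the triangle inequality yields $\abs{\sqrbrkt{P_{\bfa,\bfb}}_{\bfa}(a) - \sqrbrkt{Q_{\bfa,\bfb}}_{\bfa}(a)} \le \sum_{b\in\cB}\abs{P_{\bfa,\bfb}(a,b) - Q_{\bfa,\bfb}(a,b)}$. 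Summing this over $a\in\cA$, the left-hand side becomes $\distone{\sqrbrkt{P_{\bfa,\bfb}}_{\bfa}}{\sqrbrkt{Q_{\bfa,\bfb}}_{\bfa}}$ and the right-hand side becomes $\sum_{(a,b)\in\cA\times\cB}\abs{P_{\bfa,\bfb}(a,b) - Q_{\bfa,\bfb}(a,b)} = \distone{P_{\bfa,\bfb}}{Q_{\bfa,\bfb}}$, which is the claimed inequality. Here I use the convention of \Cref{sec:notation} that $\distone{\cdot}{\cdot} = \normone{\cdot - \cdot}$ is the $\ell^1$ distance.

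There is no real obstacle; the only point worth a moment's care is that the bound holds for the $\ell^1$ distance and \emph{not} for the $\ell^\infty$ distance — a single coordinate of $P_{\bfa,\bfb} - Q_{\bfa,\bfb}$ of a given magnitude can contribute that magnitude to many values of $b$ and accumulate in the marginal, so for $\distinf$ the best one gets via \Cref{fact:distinf-distone} is $\distinf{\sqrbrkt{P_{\bfa,\bfb}}_{\bfa}}{\sqrbrkt{Q_{\bfa,\bfb}}_{\bfa}} \le \card{\cB}\cdot\distinf{P_{\bfa,\bfb}}{Q_{\bfa,\bfb}}$, which is both weaker and not what is needed downstream. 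If an analogous statement for conditional distributions or for more than two coordinates is wanted later, the same triangle-inequality argument applies verbatim.
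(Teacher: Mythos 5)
Your proof is correct and is exactly the paper's argument: write the marginal difference at each $a$ as a sum over $b$, apply the triangle inequality inside that sum, and then sum over $a$ to recover the $\ell^1$ distances on both sides. The closing remark — that the bound fails for $\distinf$ and only a $\card{\cB}$-lossy version holds via \Cref{fact:distinf-distone} — is a correct and worthwhile caveat, though the paper does not spell it out.
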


\begin{proof}
The lemma follows from triangle inequality. 
\begin{align}
\distone{\sqrbrkt{P_{\bfa,\bfb}}_{\bfa} }{ \sqrbrkt{Q_{\bfa,\bfb}}_{\bfa}} 
\le& \sum_{a\in\cA} \abs{ \sum_{b\in\cB}P_{\bfa,\bfb}(a,b) - \sum_{b\in\cB}Q_{\bfa,\bfb}(a,b) } 
\le \sum_{(a,b)\in\cA\times\cB} \abs{P_{\bfa,\bfb}(a,b) - Q_{\bfa,\bfb}(a,b)} = \distone{P_{\bfa,\bfb}}{Q_{\bfa,\bfb}}. \qedhere 
\end{align}
\end{proof}

\section{Basic definitions}
\label{sec:basic_def}

\subsection{Channel and coding}
\label{sec:channel-coding}

\begin{definition}[Omniscient adversarial MACs]
\label{def:omni-adv-mac}
An \emph{omniscient adversarial two-user multiple access channel (MAC)} $ \mactwofull $ is comprised of 
\begin{enumerate}
	\item three alphabets $ \cXa,\cXb,\cS,\cY $ for the input sequence from the first user, the input sequence from the second user, the jamming sequence and the output sequence, respectively;
	\item input constraints $ \ipconstra\subseteq\Delta(\cXa) $ and $ \ipconstrb\subseteq\Delta(\cXb) $ for the first and second users, respectively;
	\item state constraints $ \stconstr\subseteq\Delta(\cS) $ for the jammer;
	\item and the adversarial channel transition law $ W_{\bfy|\bfxa,\bfxb,\bfs} $ 
	that is governed by the adversary.
\end{enumerate}
Suppose that the first (resp. second) transmitter wishes to send a message $ \ma\in[M_1] $ (resp. $\mb\in[M_2]$) to the receiver. 
They are allowed to encode\footnote{Importantly, the encoding process must be completed locally by two individual encoders without cooperation. } $ (\ma,\mb) $ into two sequences (called \emph{codewords}) $ \enca(\ma) = \vxa\in\cXa^n $ and $ \encb(\mb) = \vxb\in\cXb^n $ respectively such that $ \tau_{\vxa}\in\ipconstra,\tau_{\vxb}\in\ipconstrb $.
These two codewords are transmitted into the channel. 
Knowing the transmitted $ \vxa,\vxb $ and the codebooks $ (\cCa,\cCb)\in\cXa^{M_1\times n}\times\cXb^{M_2\times n} $ (i.e., the collection of codeword pairs that encode the messages in $ [M_1]\times[M_2] $; see \Cref{def:codes}), 
the adversary injects an adversarial noise (a.k.a. the \emph{state vector} or \emph{jamming vector}) $ \vs\in\cS^n $ such that $ \tau_{\vs}\in\stconstr $. 
The channel acts on the inputs $ \vxa,\vxb,\vs $ and generates an output $ \vbfy $ memorylessly, i.e., for any $ \vy\in\cY^n $, 
\begin{align}
W_{\vbfy|\vbfxa,\vbfxb,\vbfs}\paren{\vy\condon\vxa,\vxb,\vs} = W_{\bfy|\bfxa,\bfxb,\bfs}^\tn\paren{\vy\condon\vxa,\vxb,\vs} = \prod_{j = 1}^n W_{\bfy|\bfxa,\bfxb,\bfs}\paren{\vy(j)\condon\vxa(j),\vxb(j),\vs(j)}
. \notag
\end{align}
Receiving $ \vbfy $, the decoder is required to output an estimate $ \dec(\vbfy) = \paren{\wh\ma,\wh\mb} $ of the transmitted messages $(\ma,\mb)$.
See \Cref{fig:adv-mac} for a system diagram of $\mactwo$. 
\end{definition}

\begin{figure}[htbp]
	\centering
	\includegraphics[width=0.75\textwidth]{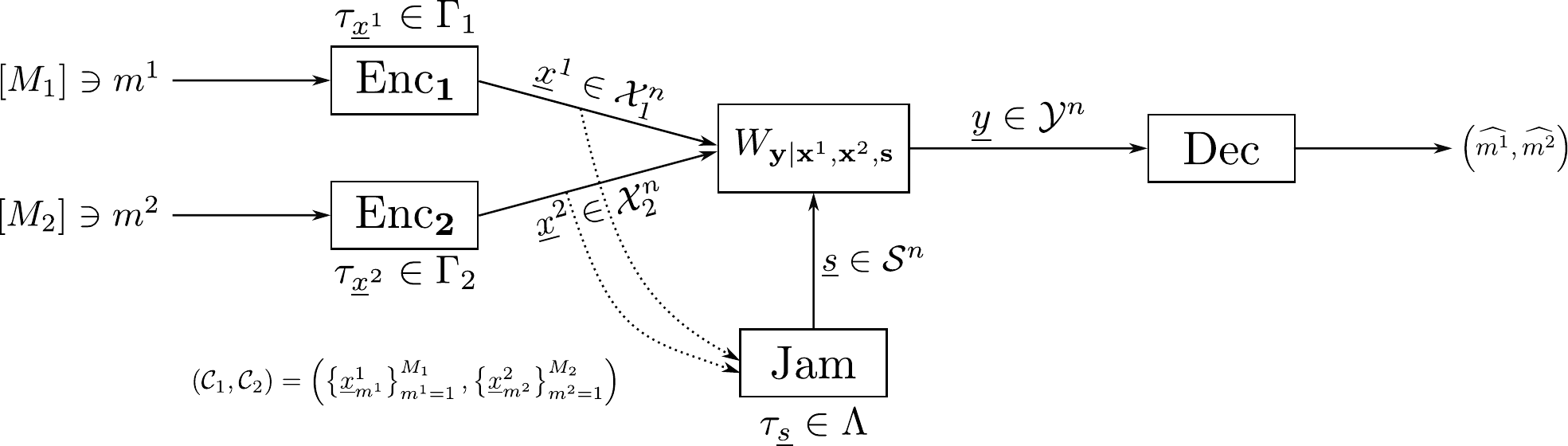}
	\caption{A system diagram of a general two-user omniscient adversarial MAC. }
	\label{fig:adv-mac}
\end{figure}

\begin{remark}
Though the channel from the transmitters to the receiver is memoryless, the state vector $ \vbfs $ is not necessarily generated memorylessly by the jammer given $ \vbfxa,\vbfxb $. 
That is, $ P_{\vbfs|\vbfxa,\vbfxb} $ may not factor. 
Indeed, the adversary can put probability mass one on a single sequence $ \vs $. 
\end{remark}

\begin{definition}[Codes]
\label{def:codes}
A code pair $ (\cCa,\cCb) $ for an omniscient adversarial MAC $ \mactwofull $ consists of 
\begin{enumerate}
	\item two encoders $ \enc_1\colon[M_1]\to\cXa^n $ and $ \enc_2\colon[M_2]\to\cXb^n $ for the first and the second users which map $ \ma\in[M_1] $ and $ \mb\in[M_2] $ to $ \enc_1(\ma) = \vxa_{\ma} $ and $ \enc_2(\mb) = \vxb_{\mb} $ respectively; and
	\item a decoder $ \dec\colon\cY^n\to[M_1]\times[M_2] $ that maps $ \vy $ to $ \dec(\vy) = \paren{\wh \ma,\wh \mb} $. 
\end{enumerate}
We call the images of $ \enc_1 $ and $ \enc_2 $ a \emph{codebook pair} (or simply a \emph{code pair}, overloading the terminology), denoted, with a slight abuse of notation, by $ (\cCa,\cCb)\in\cXa^{M_1\times n}\times\cXb^{M_2\times n} $. 
The length $n$ of each codeword is called the \emph{blocklength}. 
The \emph{rate pair} of $ (\cCa,\cCb) $ is defined as $ R_1 = R(\cCa) \coloneqq \frac{\log M_1}{n\log\card{\cXa}} $ and $ R_2 = R(\cCb) \coloneqq \frac{\log M_2}{n\log\card{\cXb}} $. 

We assume that the code pair $ (\cCa,\cCb) $ is known to $\enca,\encb,\jam$ (see \Cref{def:max-error} below) and is fixed before communication is instantiated.
\end{definition}

\begin{remark}
When we talk about ``a'' code (pair), we always mean an infinite sequence of codes of increasing blocklengths, i.e., $ \curbrkt{\paren{\cCa^{(i)}, \cCb^{(i)}}}_{i\ge1} $ each of blocklength $ n_i $ where $ n_1<n_2<\cdots\in\bZ_{\ge1} $. 
\end{remark}

\begin{definition}[Maximum probability of error]
\label{def:max-error}
A code pair $ (\cCa,\cCb)\in\cXa^{M_1\times n}\times\cXb^{M_2\times n} $ (equipped with encoders $ \enca,\encb $ and a decoder $ \dec $) is said to attain \emph{maximum probability of error $ \eps $} for an omniscient adversarial MAC $$ \mactwofull $$ if 
\begin{align}
& 
\max_{(\ma,\mb)\in[M_1]\times[M_2]}
\max_{\substack{\jam(\enca(\ma),\encb(\mb))\in\cS^n\\\tau_{\jam(\enca(\ma),\encb(\mb))}\in\stconstr}} 
{\probover{\vbfy\sim W_{\bfy|\bfxa,\bfxb,\bfs}^\tn\paren{\cdot\condon\enca\paren{\ma},\encb\paren{\mb},\jam(\enca(\ma),\encb(\mb))}}{\dec\paren{\vbfy} \ne \paren{\ma,\mb}}} \notag \\
=& 
\max_{(\ma,\mb)\in[M_1]\times[M_2]} 
\max_{\substack{\jam(\enca(\ma),\encb(\mb))\in\cS^n\\\tau_{\jam(\enca(\ma),\encb(\mb))}\in\stconstr}} 
\sum_{\vy\in\cY^n:\dec(\vy) \ne (\ma,\mb)} W_{\bfy|\bfxa,\bfxb,\bfs}^\tn\paren{\vy\condon\enca\paren{\ma},\encb\paren{\mb},\jam(\enca(\ma),\encb(\mb))} \notag \\
\le& \eps. \label{eqn:max-prob-error} 
\end{align}
The second maximization is over all legitimate jamming functions $ \jam\colon\cXa^n\times\cXb^n\to\cS^n $ such that $ \tau_{\jam(\enca(\ma),\encb(\mb))} \in\stconstr $.

\end{definition}

\begin{remark}
We emphasize that this paper is focused on the maximum probability of error as defined in \Cref{def:max-error}. 
One can instead place different bounds on the constituent error probabilities \cite{tan-kosut-2013-dispersion-three} 
\begin{align}
& \max_{(\ma,\mb)\in[M_1]\times[M_2]} \max_{\vs:\tau_\vs\in\stconstr} \prob{\curbrkt{\wh\bfma\ne\ma}\cup\curbrkt{\wh\bfmb\ne\mb}}, \notag \\
& \max_{(\ma,\mb)\in[M_1]\times[M_2]} \max_{\vs:\tau_\vs\in\stconstr} \prob{{\wh\bfma\ne\ma}}, \notag \\
& \max_{(\ma,\mb)\in[M_1]\times[M_2]} \max_{\vs:\tau_\vs\in\stconstr} \prob{{\wh\bfmb\ne\mb}}. \notag 
\end{align}
This may create wacky behaviours of the capacity region \cite{zhang-2020-twoway} and is a more challenging question. 
\end{remark}

\begin{definition}[Achievable rate pairs and capacity region]
\label{def:ach-rate-cap-region}
A rate pair $ (R_1,R_2) $ is said to be \emph{achievable} for an omniscient adversarial MAC $ \mactwo $ under the maximum error criterion if there exists a code $ {(\cCa,\cCb)} $ for $\mactwo$ of rates $ R(\cCa)\ge R_1 $ and $ R(\cCb)\ge R_2 $ with $o(1)$ maximum probability of error. 
The closure of all achievable rate pairs is called the \emph{capacity region} of $ \mactwo $. 
\end{definition}

\begin{definition}[Constant composition codes]
\label{def:cc}
A code $ \cC\subseteq\cX^n $ is called $P$-constant composition for some distribution $ P\in\Delta(\cX) $ if all codewords in $\cC$ have type $P$. 
\end{definition}

A simple application of Markov's inequality and \Cref{lem:size-type-class} yields the following reduction from general codes to constant composition codes. 

\begin{lemma}[Constant composition reduction]
\label{lem:cc-reduction}
For any code $\cC\subseteq\cX^n $, there exists a constant composition subcode $ \cC'\subseteq\cC $ of size at least $ |\cC|/(n+\card{\cX} - 1)^{\card{\cX} - 1} $. 
In particular, $ R(\cC') $ is the same as $ R(\cC) $ (asymptotically in $n$). 
\end{lemma}

\Cref{lem:cc-reduction} shows that for the purpose of understanding the capacity (region), it suffices to study constant composition codes. 
Throughout this paper, we focus on constant composition code pairs by fixing two feasible input distributions $ (\ipdistra,\ipdistrb)\in\ipconstra\times\ipconstrb $.

\subsection{Additional technical assumptions}
\label{sec:additional-technical-assump}
For technical reasons, we make further assumptions on the model considered throughout this paper. 
\begin{enumerate}
	\item All alphabets $ \cXa,\cXb,\cS,\cY $ are finite. 
	In particular, our proof will heavily rely on the assumption of the finiteness of $ \cXa $ and $ \cXb $. 
	It is unclear how to extend our results to the large alphabet regime, e.g., the case where $ \card{\cXa},\card{\cXb} $ are increasing in $n$. 
	In fact, we believe that the behaviour of adversarial MACs is considerably different when the alphabet sizes are sufficiently large. 
	See \Cref{itm:open-large-alphabet} in \Cref{sec:concl-rmk-open-prob}. 
	\item In this work we only focus on \emph{state deterministic} channels, i.e., channels for which $ W_{\bfy|\bfxa,\bfxb,\bfs} $ is a zero-one law.
	Alternatively, the channel transition law can be written as a (deterministic) function $ W\colon\cXa\times\cXb\times\cS\to\cY$ such that $ y = W(\xa,\xb,s) $. 
	\item To avoid peculiar behaviours, we assume that $ \ipconstra,\ipconstrb,\stconstr $ are all convex sets. 
	\item We do not assume the availability of common randomness between the encoders and the decoder (while kept secret from the jammer).
	In the AVC literature, the capacity in the presence of shared randomness is known as the \emph{random code capacity} \cite{ahlswede-1978-avc-no-constr,csiszar-narayan-1988-obliavc-constr-random}. 
	\item No party in the system is allowed to use private randomness. 
	That is, the encoding/jamming/decoding functions are all deterministic. 
	In the case of point-to-point omniscient adversarial channels \cite{wbbj-2019-omni-avc}, there are reductions showing that the capacity remains the same under stochastic/deterministic encoding/jamming/decoding. 
	Furthermore, average error criterion is equivalent to maximum error criterion which is further equivalent to zero error criterion when the channel is deterministic. 
	Therefore, the omniscient point-to-point channel problem is combinatorial in nature. 
	However, for our model of omniscient MACs, as alluded to in \Cref{rk:error-criteria}, we expect neither the equivalence between stochastic and deterministic encoding nor the equivalence between average/maximum probability of error. 
	For simplicity, we choose to work with deterministic encoding/jamming/decoding and maximum/zero error criterion in this paper. 
	The average probability of error counterpart is left for future study (see \Cref{itm:open-error-criterion} in \Cref{sec:concl-rmk-open-prob}). 
\end{enumerate}

Under the above assumptions of deterministic encoding/jamming/decoding/channel law and maximum error criterion, the probability in \Cref{eqn:max-prob-error} is either zero or one. 
Therefore, vanishing maximum probability of error implies zero error. 
This enforces a combinatorial nature of the problem in hand. 
Our results serve as a first step towards understanding omniscient adversarial MACs.

\section{Warmup example: binary noisy $ \XOR $ MAC}
\label{sec:warmup_examples}



In this section, we study a warmup example of binary noisy $ \XOR $ MAC defined as follows. 

\begin{definition}[Binary noisy $ \XOR $ MAC]
\label{def:binary-noisy-xor-mac}
A two-user binary noisy $ \XOR $ MAC $ \macxor $ takes as input two binary transmissions $ (\vxa,\vxb)\in\paren{\zon}^2 $ and a binary noise sequence $ \vs\in\zon $ with (relative) Hamming weight at most $ p $ and outputs $ \vy = \vxa\oplus\vxa\oplus\vs $ where the addition is modulo two. 
\end{definition}

The following theorem generalizes the classical Plotkin bound in coding theory to the multiuser setting. 

\begin{theorem}
\label{thm:plotkin_binary_noisy_xor_mac}
If $ p > 1/4 $, then there exists no rate pairs $ (R_1, R_2) $ such that $ R_1>0, R_2>0 $. 
\end{theorem}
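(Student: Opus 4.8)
The plan is to mimic the classical Plotkin bound argument: suppose for contradiction that we have a code pair $(\cCa,\cCb)$ with $M_1,M_2$ both growing exponentially, and derive a contradiction from a double-counting (averaging) argument over pairs of message pairs. First I would invoke the constant composition reduction (\Cref{lem:cc-reduction}) separately to both $\cCa$ and $\cCb$, so that all first-user codewords have a common type $\ipdistra\in\Delta(\zo)$ and all second-user codewords have a common type $\ipdistrb\in\Delta(\zo)$; since the rates are asymptotically unchanged, we still have $M_1,M_2\to\infty$ exponentially. The key fact about $\XOR$ with noise budget $p$ is that two distinct transmitted pairs $(\vxa_{\ma},\vxb_{\mb})$ and $(\vxa_{\ma'},\vxb_{\mb'})$ are confusable whenever the adversary can steer both to a common output, i.e. whenever $\vxa_{\ma}\oplus\vxb_{\mb}$ and $\vxa_{\ma'}\oplus\vxb_{\mb'}$ lie within relative Hamming distance $2p$ of a common sequence — equivalently, whenever the relative Hamming distance between the two sums $\vxa_{\ma}\oplus\vxb_{\mb}$ and $\vxa_{\ma'}\oplus\vxb_{\mb'}$ is at most $2p$. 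Since $p>1/4$, we have $2p>1/2$, which is the regime where the classical Plotkin bound forbids large binary codes.

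The core step is then: consider the $M_1 M_2$ sum-sequences $\vz_{\ma,\mb}\coloneqq\vxa_{\ma}\oplus\vxb_{\mb}\in\zon$. For the code to have vanishing error probability, any two \emph{distinct} message pairs must not be confusable, so for all $(\ma,\mb)\ne(\ma',\mb')$ we need the relative Hamming distance $d(\vz_{\ma,\mb},\vz_{\ma',\mb'})>2p$; but this can only hold if the $\vz_{\ma,\mb}$ are themselves \emph{distinct}, and even then, a set of $N$ binary sequences with pairwise relative distance exceeding $1/2$ has $N\le O(n)$ by the Plotkin bound (double-count $\sum_{i<j}d(\vz_i,\vz_j)$: the average pairwise distance is at most $\tfrac12\cdot\tfrac{N}{N-1}$, so if every pairwise distance exceeds $2p>1/2$ we force $N$ bounded). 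Hence $M_1 M_2 = O(n)$, contradicting $M_1,M_2$ both exponential. I should double-check the claim that distinct $(\ma,\mb)$ force distinct $\vz_{\ma,\mb}$: if $\vxa_{\ma}\oplus\vxb_{\mb}=\vxa_{\ma'}\oplus\vxb_{\mb'}$ with $(\ma,\mb)\ne(\ma',\mb')$ then these two pairs produce the same output under \emph{zero} noise, so they are certainly confusable — this is fine.

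The main obstacle I anticipate is the subtlety of marginal confusability rather than joint confusability. The argument above shows that the decoder cannot even recover the \emph{pair}; but one must be careful that the adversary's budget is genuinely enough to realize the confusion, i.e. that the $\ell^1$/Hamming geometry really gives $2p$ and not $p$ — the factor two comes precisely from the adversary being able to expend up to $pn$ flips "toward" each of the two candidate outputs from a midpoint, which works because $\XOR$ noise is additive and the adversary is omniscient. A secondary subtlety: the Plotkin step needs the $\vz_{\ma,\mb}$ to come from a type class (or at least have controlled composition) for the clean "average distance $\le 1/2$" bound; here $\vz_{\ma,\mb}$ has type governed by $\ipdistra,\ipdistrb$ only in distribution, not exactly, so I would either pass to a further constant-composition subcode of the sums (losing only a polynomial factor via \Cref{lem:size-type-class}) or use the general Plotkin inequality valid for arbitrary binary sets with pairwise distance $>1/2$. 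Either way the conclusion $M_1M_2=\poly(n)$ stands, contradicting positivity of both rates.
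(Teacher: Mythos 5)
Your proposal is correct and takes essentially the same approach as the paper's proof: both are Plotkin double-counting arguments over the $M_1M_2$ sum-sequences $\vxa_{\ma}\oplus\vxb_{\mb}$, lower-bounding the sum of pairwise Hamming distances via the zero-error condition (each pairwise distance exceeds $2pn$) and upper-bounding it via a per-column argument, giving $M_1M_2$ bounded by a constant. The only presentational difference is that you invoke the classical point-to-point Plotkin bound as a black box on the sum-sequences, whereas the paper carries out the column computation explicitly in terms of the per-column densities $\alpha_j,\beta_j$ of $\cCa,\cCb$ and maximizes a degree-four polynomial; your constant-composition reduction at the start is superfluous (as you yourself note, the column argument needs no composition hypothesis), and the resulting bound on $M_1M_2$ is a constant rather than $O(n)$, which only strengthens your contradiction.
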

\begin{proof}
See \Cref{app:pf_plotkin_binary_noisy_xor_mac}. 
\end{proof}

\section{Confusability sets and their properties}
\label{sec:conf-set}

In this section, we introduce one of the core definitions of this paper: the \emph{confusability sets} associated to an adversarial MAC. 
They are the sets of \emph{bad} distributions that any good code should avoid. 
As the name suggests, they precisely characterize the ``confusability'' of a given channel. 
In fact, they determine the capacity region of the channel and therefore are arguably the most important statistics associated to the channel. 
Some properties of confusability sets are proved. 

We first present an obvious-looking claim which relates the the zero error criterion with \emph{operational non-confusability}.

\begin{claim}[Equivalence between zero error and operational non-confusability]
\label{claim:operational-nonconf}
Let $ \mactwofull $ be a two-user omniscient adversarial MAC. 
A code pair $ (\cCa,\cCb)\in\cXa^{M_1\times n}\times\cXb^{M_2\times n} $ attains zero error for $\mactwo$ if and only if all of the following conditions (which we call \emph{operational non-confusability} conditions) are satisfied:
\begin{enumerate}
	\item \label[cond]{cond:zero-error-joint}
	for all $ 1\le i_1\ne i_2\le M_1 $ and $ 1\le j_1\ne j_2\le M_2 $, there do not exist $ \vsa,\vsb\in\cS^n $ with $ \tau_{\vsa},\tau_{\vsb}\in\stconstr $ such that $ W(\vxa_{i_1}, \vxb_{j_1}, \vsa) = W(\vxa_{i_2}, \vxb_{j_2}, \vsb) $; 
	in this case we say that $ (\vxa_{i_1}, \vxb_{j_1}) $ and $ (\vxa_{i_2}, \vxb_{j_2}) $ are \emph{non-confusable};

	\item \label[cond]{cond:zero-error-marg1}
	for all $ 1\le i_1\ne i_2\le M_1 $ and $ 1\le j\le M_2 $, there do not exist $ \vsa,\vsb\in\cS^n $ with $ \tau_{\vsa},\tau_{\vsb}\in\stconstr $ such that $ W(\vxa_{i_1}, \vxb_j,\vsa) = W(\vxa_{i_2}, \vxb_j,\vsb) $; 
	in this case we say that $ (\vxa_{i_1}, \vxb_j) $ and $ (\vxa_{i_2},\vxb_j) $ are \emph{non-confusable};

	\item \label[cond]{cond:zero-error-marg2}
	for all $ 1\le i\le M_1 $ and $ 1\le j_1\ne j_2\le M_2 $, there do not exist $ \vsa,\vsb\in\cS^n $ with $ \tau_{\vsa},\tau_{\vsb}\in\stconstr $ such that $ W(\vxa_i,\vxb_{j_1},\vsa) = W(\vxa_i,\vxb_{j_2},\vsb) $; 
	in this case we say that $ (\vxa_i,\vxb_{j_1}) $ and $ (\vxa_i,\vxb_{j_2}) $ are \emph{non-confusable}. 
\end{enumerate}
\end{claim}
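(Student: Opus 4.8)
The statement to prove, \Cref{claim:operational-nonconf}, asserts that a code pair attains zero error for the omniscient adversarial MAC if and only if three operational non-confusability conditions hold — one ``joint'' condition plus two ``marginal'' conditions. This is essentially an unpacking of the definition of maximum/zero error probability (\Cref{def:max-error}) once we recall that everything in the system is deterministic. The plan is to argue both directions by contraposition: a zero-error code is exactly one for which no adversarial strategy can force the decoder into an ambiguous situation, and ``ambiguous'' is precisely the existence of colliding channel outputs of one of the three types listed.

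\medskip

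\noindent\textbf{Forward direction (zero error $\Rightarrow$ non-confusability).} I would argue the contrapositive. Suppose one of the three conditions fails; I will exhibit a jamming strategy and message pair producing error probability $1$, contradicting zero error. Say \Cref{cond:zero-error-joint} fails: there are $i_1\ne i_2$, $j_1\ne j_2$, and $\vsa,\vsb\in\cS^n$ with $\tau_{\vsa},\tau_{\vsb}\in\stconstr$ and $W(\vxa_{i_1},\vxb_{j_1},\vsa)=W(\vxa_{i_2},\vxb_{j_2},\vsb)=:\vy$. If the true message pair is $(i_1,j_1)$ and the adversary plays $\jam(\vxa_{i_1},\vxb_{j_1})=\vsa$, the receiver sees $\vy$; if instead the true pair is $(i_2,j_2)$ and the adversary plays $\jam(\vxa_{i_2},\vxb_{j_2})=\vsb$, the receiver again sees $\vy$. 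Since $\dec$ is a fixed deterministic function, $\dec(\vy)$ cannot equal both $(i_1,j_1)$ and $(i_2,j_2)$, so at least one of these two (message pair, jamming) scenarios yields error probability $1$; the adversary picks that one, contradicting the $o(1)$ (hence zero) maximum error. The failures of \Cref{cond:zero-error-marg1,cond:zero-error-marg2} are handled identically, the only difference being that one of the two coordinates of the message pair is held fixed — this still forces $\dec(\vy)$ to disagree with one of the two candidate pairs because they differ in the other coordinate.

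\medskip

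\noindent\textbf{Converse direction (non-confusability $\Rightarrow$ zero error).} Again by contraposition: suppose the code does not attain zero error. Then by \Cref{eqn:max-prob-error} (and the remark that under deterministic encoding/jamming/decoding the error probability is $0$ or $1$), there exist a message pair $(\ma,\mb)$ and a legitimate jamming function $\jam$ with $\tau_{\jam(\vxa_\ma,\vxb_\mb)}\in\stconstr$ such that, writing $\vy=W(\vxa_\ma,\vxb_\mb,\jam(\vxa_\ma,\vxb_\mb))$, we have $\dec(\vy)=(\wh\ma,\wh\mb)\ne(\ma,\mb)$. Now compare the ``true'' transmission $(\vxa_\ma,\vxb_\mb)$ with the ``decoded'' pair $(\vxa_{\wh\ma},\vxb_{\wh\mb})$: they differ in at least one coordinate. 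If they differ in both coordinates ($\wh\ma\ne\ma$ and $\wh\mb\ne\mb$), then taking $\vsa=\jam(\vxa_\ma,\vxb_\mb)$ and any $\vsb\in\cS^n$ with $\tau_{\vsb}\in\stconstr$ and $W(\vxa_{\wh\ma},\vxb_{\wh\mb},\vsb)=\vy$ — such a $\vsb$ exists, e.g. one can simply take a state sequence realizing the fact that $\vy$ lies in the image... here I need to be slightly careful, so the cleanest route is: the decoder maps $\vy$ to $(\wh\ma,\wh\mb)$, meaning that for \emph{correctness} we would need $\vy$ unreachable from $(\vxa_{\wh\ma},\vxb_{\wh\mb})$; but in fact what I actually need is just that $\vy$ \emph{is} reachable from $(\vxa_\ma,\vxb_\mb)$ (which it is, via $\jam$) and the decoder errs. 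So the right reformulation of the converse is: the failure of zero error means some legitimate $\vsa$ sends $(\vxa_\ma,\vxb_\mb)$ to a $\vy$ that the decoder does \emph{not} decode to $(\ma,\mb)$. I then split on which coordinates $\dec(\vy)$ differs from $(\ma,\mb)$ in, and in each case I will show the corresponding operational non-confusability condition must fail by finding the partner jamming vector $\vsb$. The subtlety — and the main obstacle — is precisely this: the three operational conditions are phrased symmetrically as ``there do not exist $\vsa,\vsb$ with $W(\cdot,\vsa)=W(\cdot,\vsb)$'', but an error event as defined via $\dec$ is a priori only a one-sided statement (the decoder outputs the wrong thing). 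Bridging the gap requires observing that the decoder is a function of $\vy$ alone, so if $\dec(\vy)=(\wh\ma,\wh\mb)$ then no matter how $\vy$ is produced the decoder's output is the same; hence $\vy$ being reachable from both $(\vxa_\ma,\vxb_\mb)$ and (if it is) $(\vxa_{\wh\ma},\vxb_{\wh\mb})$ is exactly the confusability witnessed. If $\vy$ happens \emph{not} to be reachable from $(\vxa_{\wh\ma},\vxb_{\wh\mb})$ under any legitimate state, then one instead uses the converse-of-the-converse packaging: a zero-error \emph{decoder exists} iff the operational conditions hold, and one constructs the decoder explicitly by mapping each reachable $\vy$ to the (unique, by non-confusability) message pair that can produce it. I would in fact structure the whole proof around that cleaner equivalence: \emph{the operational conditions hold $\iff$ the sets of outputs reachable (under legitimate states) from distinct ``competing'' input pairs are disjoint $\iff$ there is a decoder with zero error}, with the first ``$\iff$'' being a restatement and the second being the standard ``decode to the unique preimage'' argument.

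\medskip

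\noindent\textbf{Where the work concentrates.} Steps like ``pick $\vsa=\jam(\ldots)$ and invoke the deterministic-output remark'' are routine; the only place demanding care is making sure the three conditions — one joint, two marginal — are \emph{jointly} necessary and sufficient, i.e., that there is no fourth type of collision (e.g. $(\vxa_{i_1},\vxb_{j_1})$ confusable with $(\vxa_{i_1},\vxb_{j_1})$ itself, which is vacuous, or diagonal cases) that the decoder must also avoid. I expect to dispatch this by a short case analysis on the pattern of (in)equality between the two coordinates of the true and decoded message pairs: the pattern (differ, differ) invokes \Cref{cond:zero-error-joint}; (differ, same) invokes \Cref{cond:zero-error-marg1}; (same, differ) invokes \Cref{cond:zero-error-marg2}; and (same, same) is not an error. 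This exhausts all possibilities, closing the argument. Since the excerpt itself flags this as an ``obvious-looking claim,'' I would keep the write-up to the contrapositive-in-both-directions skeleton above, spelling out the case split and the ``decoder as unique-preimage map'' construction, and leaving the deterministic-output observation as the one-line justification it deserves.
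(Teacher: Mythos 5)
Your proposal is correct and takes essentially the same route as the paper: both directions by contraposition, with the case split on which of the two message coordinates differs driving the appeal to \Cref{cond:zero-error-joint}, \Cref{cond:zero-error-marg1} or \Cref{cond:zero-error-marg2}. The extra care you take in the direction ``non-confusability $\Rightarrow$ zero error'' is warranted --- the paper's sketch implicitly treats the decoded pair as also reachable from $\vy$ under some admissible state, which an arbitrary decoder does not guarantee --- and your repackaging via the unique-preimage decoder (equivalently, disjointness of the legitimately reachable output sets over competing codeword pairs) is the right way to make that step airtight.
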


\begin{proof}
Intuitively, a violation of the zero error criterion must be the case where a received vector $\vy$ can be explained by (at least) two distinct pairs of codewords via admissible jamming vectors. 
In this case, the decoder is confused by (at least) two candidate pairs of codewords and is forced to make a decoding error with nonzero probability. 
Formally, the claim follows from the following simple arguments. 

We first prove the contrapositive of the direct part. 
If $ (\cCa,\cCb) $ has nonzero error, then there must exist a pair of codewords $ (\vxa,\vxb)\in(\cCa,\cCb) $ which leads to a decoding error. 
In particular, at least one of $ \vxa $ and $ \vxb $ cannot be correctly decoded. 
Then at least one of \Cref{cond:zero-error-joint,cond:zero-error-marg1,cond:zero-error-marg2} must be satisfied. 
Indeed, 
\begin{enumerate}
	\item \label[cond]{cond:nonzero-joint}
	\Cref{cond:zero-error-joint} corresponds to the case where neither $\vxa$ nor $\vxb$ can be correctly decoded. 
	More specifically, there must exist another pair of codewords $ \wt\vxa\ne\vxa $ and $ \wt\vxb\ne\vxb $ such that $ W(\vxa,\vxb,\vs) = W(\wt\vxa,\wt\vxb,\wt\vs) $ for some $ \vs,\wt\vs\in\cS^n $ with $ \tau_{\vs},\tau_{\wt\vs}\in\stconstr $. 
	In this case, the decoder could not decide to output $ (\vxa,\vxb) $ or $ (\wt\vxa,\wt\vxb) $. 
	
	\item \label[cond]{cond:nonzero-marg1}
	\Cref{cond:zero-error-marg1} corresponds to the case where $ \vxa $ is confusable with another codeword.
	More specifically, there must exist another codeword $ \wt\vxa\ne\vxa $ such that $ W(\vxa,\vxb,\vs) = W(\wt\vxa,\vxb,\wt\vs) $ for some $ \vs,\wt\vs\in\cS^n $ with $ \tau_{\vs},\tau_{\wt\vs}\in\stconstr $. 
	In this case, the decoder could not decide to output $ (\vxa,\vxb) $ or $ (\wt\vxa,\vxb) $. 

	\item \label[cond]{cond:nonzero-marg2}
	\Cref{cond:zero-error-marg2} corresponds to the case where $\vxb$ is confusable with another codeword.
	More specifically, there must exist another codeword $\wt\vxb\ne\vxb$ such that $ W(\vxa,\vxb,\vs) = W(\vxa,\wt\vxb,\wt\vs) $ for some $ \vs,\wt\vs\in\cS^n $ with $ \tau_{\vs},\tau_{\wt\vs}\in\stconstr $. 
	In this case, the decoder could not decide to output $ (\vxa,\vxb) $ or $ (\vxa,\wt\vxb) $. 
\end{enumerate}

The converse part is straightforward. 
If a code pair $ (\cCa,\cCb) $ attains zero error, then none of \Cref{cond:zero-error-joint,cond:zero-error-marg1,cond:zero-error-marg2} is satisfied. 
Otherwise, (at least) one of \Cref{cond:nonzero-joint,cond:nonzero-marg1,cond:nonzero-marg2} above holds which results in a decoding error, violating the zero-error assumption. 
\end{proof}

\begin{claim}[Permutation invariance of operational (non-)confusability]
\label{claim:per-inv}
If two pairs of codewords $ (\vxa,\vxb) $ and $ (\wt\vxa,\wt\vxb) $ (resp. $(\wt\vxa,\vxb)$ or $ (\vxa,\wt\vxb) $) are confusable/non-confusable (in the sense of \Cref{claim:operational-nonconf}), then any other pairs $ (\vxa_*,\vxb_*) $ and $ (\wt\vxa_*,\wt\vxb_*) $ (resp. $ (\wt\vxa_*,\vxb_*) $ or $(\vxa_*,\wt\vxb_*)$) of the same joint type $ \tau_{\vxa_*,\wt\vxa_*,\vxb_*,\wt\vxb_*} = \tau_{\vxa,\wt\vxa,\vxb,\wt\vxb} $ (resp. $\tau_{\vxa_*,\wt\vxa_*,\vxb_*} = \tau_{\vxa,\wt\vxa,\vxb} $ or $ \tau_{\vxa_*,\vxb_*,\wt\vxb_*} = \tau_{\vxa,\vxb,\wt\vxb} $) are also confusable/non-confusable. 
\end{claim}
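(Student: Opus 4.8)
The plan is to prove \Cref{claim:per-inv} directly from the definition of operational (non-)confusability in \Cref{claim:operational-nonconf}, exploiting the permutation-symmetry of the channel law $W(\cdot,\cdot,\cdot)$, which acts coordinate-wise. I will treat the joint case in detail; the two marginal cases are entirely analogous and I will indicate only the modification. Recall that two codeword pairs $(\vxa,\vxb)$ and $(\wt\vxa,\wt\vxb)$ are confusable iff there exist state vectors $\vsa,\vsb\in\cS^n$ with $\tau_{\vsa},\tau_{\vsb}\in\stconstr$ such that $W(\vxa,\vxb,\vsa) = W(\wt\vxa,\wt\vxb,\vsb)$. Non-confusability is the negation of this. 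Hence it suffices to show that confusability is invariant under the stated joint-type-preserving operation, and the non-confusability statement follows by taking the contrapositive.

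First I would observe that if $(\vxa_*,\wt\vxa_*,\vxb_*,\wt\vxb_*)$ has the same joint type as $(\vxa,\wt\vxa,\vxb,\wt\vxb)$, then there is a permutation $\pi$ of the coordinate set $[n]$ that maps the latter $4$-tuple of vectors to the former, applied coordinate-wise and simultaneously to all four vectors; this is the standard fact that two tuples of vectors share a joint type iff one is a coordinate permutation of the other. Next, suppose $(\vxa,\vxb)$ and $(\wt\vxa,\wt\vxb)$ are confusable, witnessed by $\vsa,\vsb$ with admissible types and $W(\vxa,\vxb,\vsa) = W(\wt\vxa,\wt\vxb,\vsb)$. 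Define $\vsa_* \coloneqq \pi(\vsa)$ and $\vsb_* \coloneqq \pi(\vsb)$, i.e. apply the same coordinate permutation. Then $\tau_{\vsa_*} = \tau_{\vsa}\in\stconstr$ and $\tau_{\vsb_*} = \tau_{\vsb}\in\stconstr$ since a permutation preserves the type of a single vector. Because $W$ acts independently on each coordinate, applying $\pi$ commutes with applying $W$: the $j$-th coordinate of $W(\vxa_*,\vxb_*,\vsa_*)$ is $W\bigl(\vxa_*(j),\vxb_*(j),\vsa_*(j)\bigr) = W\bigl(\vxa(\pi^{-1}(j)),\vxb(\pi^{-1}(j)),\vsa(\pi^{-1}(j))\bigr)$, which is the $\pi^{-1}(j)$-th coordinate of $W(\vxa,\vxb,\vsa)$; and similarly on the other side. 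Hence $W(\vxa_*,\vxb_*,\vsa_*) = \pi\bigl(W(\vxa,\vxb,\vsa)\bigr) = \pi\bigl(W(\wt\vxa,\wt\vxb,\vsb)\bigr) = W(\wt\vxa_*,\wt\vxb_*,\vsb_*)$, so $(\vxa_*,\vxb_*)$ and $(\wt\vxa_*,\wt\vxb_*)$ are confusable, as desired.

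For the first marginal case, the relevant $3$-tuple is $(\vxa,\wt\vxa,\vxb)$ and confusability of $(\vxa,\vxb)$ with $(\wt\vxa,\vxb)$ is witnessed by $W(\vxa,\vxb,\vsa) = W(\wt\vxa,\vxb,\vsb)$; the same permutation argument applies verbatim with the common second codeword $\vxb$ mapped to $\vxb_*$ by $\pi$. The second marginal case is symmetric, with the roles of the two users exchanged. Taking contrapositives in each case yields the non-confusability statements. I do not anticipate a genuine obstacle here: the only thing to be careful about is that the coordinate permutation must be applied \emph{simultaneously} to the codewords \emph{and} the witnessing state vectors, and that $W$'s coordinate-wise (memoryless) structure is exactly what makes permutation commute with channel application — both facts are immediate from \Cref{def:omni-adv-mac}. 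I would keep the proof short, stating the coordinate-permutation characterization of equal joint types, then the one-line commutation computation, and then noting the two marginal cases are identical mutatis mutandis.
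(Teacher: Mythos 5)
Your proof is correct and takes essentially the same route as the paper's: both identify a coordinate permutation relating the two tuples of equal joint type, apply it simultaneously to the witnessing state vectors (whose types, and hence admissibility, are preserved), and use the memoryless/coordinate-wise structure of $W$ to show that applying $\pi$ commutes with applying the channel, from which (non-)confusability transfers. The only cosmetic difference is your explicit appeal to the contrapositive for the non-confusability direction, which the paper handles by simply noting permutations are bijective.
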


\begin{proof}
Since the channel is component-wise and memoryless, the confusability conditions (\Cref{cond:zero-error-joint,cond:zero-error-marg1,cond:zero-error-marg2} in \Cref{claim:operational-nonconf}) are invariant under coordinate permutations. 
That is, $ (\vxa,\vxb) $ is confusable with $ (\wt\vxa, \wt\vxb) $ (resp. $ (\wt\vxa, \vxb) $ or $ (\vxa,\wt\vxb) $) if and only if $ (\pi(\vxa),\pi(\vxb)) $ is confusable with $ (\pi(\wt\vxa), \pi(\wt\vxb)) $ (resp. $ (\pi(\wt\vxa), \pi(\vxb)) $ or $ (\pi(\vxa),\pi(\wt\vxb)) $) for any $ \pi\in S_n $. 
Here for a vector $ \vv = (\vv(1),\cdots,\vv(n))\in\cV^n $, we use the notation $ \pi(\vv)\coloneqq(\vv(\pi(1)),\cdots,\vv(\pi(n))) $. 
Indeed, one simply takes $ \pi(\vs),\pi(\wt\vs) $ of type $ \tau_{\pi(\vs)} = \tau_{\vs}\in\stconstr $ and $ \tau_{\pi(\wt\vs)} = \tau_{\wt\vs}\in\stconstr $. 
Then for any $j\in[n] $, 
\begin{align}
W(\pi(\vxa), \pi(\vxb), \pi(\vs))(j) 
=& W(\pi(\vxa)(j), \pi(\vxb)(j), \pi(\vs)(j)) \label{eqn:component-wise} \\
=& W(\vxa(\pi(j)), \vxb(\pi(j)), \vs(\pi(j))) \notag \\
=& W(\vxa,\vxb,\vs)(\pi(j)) \notag \\
=& \pi(W(\vxa,\vxb,\vs))(j). \notag 
\end{align}
\Cref{eqn:component-wise} is because the channel acts on the inputs component-wise. 
That is, $ W(\pi(\vxa), \pi(\vxb), \pi(\vs)) = \pi(W(\vxa,\vxb,\vs)) $. 
Similarly, $ W(\pi(\wt\vxa), \pi(\wt\vxb), \pi(\wt\vs)) = \pi(W(\wt\vxa,\wt\vxb,\wt\vs)) $ (resp. $ W(\pi(\wt\vxa), \pi(\vxb), \pi(\wt\vs)) = \pi(W(\wt\vxa,\vxb,\wt\vs)) $ or $ W(\pi(\vxa), \pi(\wt\vxb), \pi(\wt\vs)) = \pi(W(\vxa,\wt\vxb,\wt\vs)) $). 
Since $ W(\vxa,\vxb,\vs) = W(\wt\vxa,\wt\vxb,\wt\vs) $ (resp. $W(\vxa,\vxb,\vs) = W(\wt\vxa,\vxb,\wt\vs)$ or $W(\vxa,\vxb,\vs) = W(\vxa,\wt\vxb,\wt\vs)$) and $\pi$ is bijective, we have $ W(\pi(\vxa), \pi(\vxb), \pi(\vs)) = W(\pi(\wt\vxa), \pi(\wt\vxb), \pi(\wt\vs)) $ (resp. $W(\pi(\vxa), \pi(\vxb), \pi(\vs)) = W(\pi(\wt\vxa), \pi(\vxb), \pi(\wt\vs))$ or $W(\pi(\vxa), \pi(\vxb), \pi(\vs)) = W(\pi(\vxa), \pi(\wt\vxb), \pi(\wt\vs))$). 

Finally, permutation invariance of confusability follows from the observation that all vectors of the same type can be obtained by properly permuting the coordinates. 
Since permutations are bijections, non-confusability is also invariant under coordinate permutation. 
\end{proof}

We are ready to give the definition of confusability sets. 
Before doing so, we first define \emph{self-couplings} as distributions with prescribed marginals in accordance with the use of constant composition code pairs. 

\begin{definition}[Self-couplings]
\label{def:self-coupling}
\begin{align}
\cJab \coloneqq& \curbrkt{
	\distraabb \in \Delta(\cXa^2\times\cXb^2) \colon
	\begin{array}{l}
	\sqrbrkt{\distraabb}_{\bfxa_1} = \sqrbrkt{\distraabb}_{\bfxa_2} = \ipdistra, \\
	\sqrbrkt{\distraabb}_{\bfxb_1} = \sqrbrkt{\distraabb}_{\bfxb_2} = \ipdistrb 
	\end{array}
}, \notag \\
\cJa \coloneqq& \curbrkt{
	\distraab \in \Delta(\cXa^2\times\cXb) \colon 
	\sqrbrkt{\distraab}_{\bfxa_1} = \sqrbrkt{\distraab}_{\bfxa_2} = \ipdistra, 
	\sqrbrkt{\distraab}_{\bfxb} = \ipdistrb 
}, \notag \\
\cJb \coloneqq& \curbrkt{
	\distrabb \in \Delta(\cXa\times\cXb^2) \colon 
	\sqrbrkt{\distrabb}_{\bfxa} = \ipdistra ,
	\sqrbrkt{\distrabb}_{\bfxb_1} = \sqrbrkt{\distrabb}_{\bfxb_2} = \ipdistrb 
}. \notag 
\end{align}
\end{definition}

The previous two claims (\Cref{claim:operational-nonconf}, \Cref{claim:per-inv}) motivate us to make the following definition of \emph{confusability sets}.
One should think of the conditions in the definition below as the distributional version of operational confusability in \Cref{claim:operational-nonconf}.

\begin{definition}[Confusability sets]
\label{def:conf-set}
Let $ \mactwofull $ be a 2-user adversarial MAC.
Let $ \ipdistra\in\Delta(\cXa) $ and $ \ipdistrb\in\Delta(\cXb) $. 
The \emph{joint confusability set} $ \cKab $, the \emph{first marginal confusability set} $ \cKa $ and the \emph{second marginal confusability set} $ \cKb $ of $ \mactwo $ w.r.t. input distributions $ \ipdistra $ and $ \ipdistrb $ are defined as follows:
\begin{align}
\cKab\coloneqq& \curbrkt{ \begin{array}{rl}
& \distraabb \in\cJab \colon \\
\exists&  P_{\bfxa_1,\bfxa_2,\bfxb_1,\bfxb_2, \bfs_1, \bfs_2, \bfy}\in\Delta\paren{\cXa^2\times \cXb^2\times \cS^2\times\cY}  \suchthat \\
& \sqrbrkt{ P_{\bfxa_1,\bfxa_2,\bfxb_1,\bfxb_2, \bfs_1, \bfs_2, \bfy} }_{\bfxa_1,\bfxa_2,\bfxb_1,\bfxb_2} = \distraabb ; \\
\forall&  \paren{\xa_1,\xa_2,\xb_1,\xb_2,s_1,s_2,y}\in\cXa^2\times\cXb^2\times\cS^2\times\cY, \\
& P_{\bfxa_1,\bfxa_2,\bfxb_1,\bfxb_2, \bfs_1, \bfs_2, \bfy} \paren{\xa_1,\xa_2,\xb_1,\xb_2,s_1,s_2,y} \\
=& \distraabb \paren{\xa_1,\xa_2,\xb_1,\xb_2} 
 P_{\bfs_1,\bfs_2|\bfxa_1, \bfxa_2, \bfxb_1, \bfxb_2} \paren{s_1,s_2\condon \xa_1, \xa_2, \xb_1, \xb_2} 
 W_{\bfy|\bfxa,\bfxb,\bfs}\paren{y\condon \xa_1,\xb_1,s_1} \\
=& \distraabb \paren{\xa_1,\xa_2,\xb_1,\xb_2} 
 P_{\bfs_1,\bfs_2|\bfxa_1, \bfxa_2, \bfxb_1, \bfxb_2} \paren{s_1,s_2\condon \xa_1, \xa_2, \xb_1, \xb_2} 
 W_{\bfy|\bfxa,\bfxb,\bfs}\paren{y\condon \xa_2,\xb_2,s_2} 
\end{array} }, \notag \\
\cKa \coloneqq& \curbrkt{ \begin{array}{rl}
& P_{\bfxa_1, \bfxa_2, \bfxb}\in\cJa\colon \\
\exists&  P_{\bfxa_1,\bfxa_2,\bfxb, \bfs_1,\bfs_2, \bfy}\in\Delta\paren{\cXa^2\times \cXb\times \cS^2 \times\cY} \suchthat \\
& \sqrbrkt{ P_{\bfxa_1,\bfxa_2,\bfxb, \bfs_1,\bfs_2, \bfy} }_{\bfxa_1,\bfxa_2,\bfxb} = P_{\bfxa_1, \bfxa_2, \bfxb}; \\
\forall&  \paren{\xa_1,\xa_2,\xb,s_1,s_2,y} \in \cXa^2\times\cXb\times\cS^2\times\cY, \\
& P_{\bfxa_1,\bfxa_2,\bfxb, \bfs_1, \bfs_2, \bfy} \paren{\xa_1,\xa_2,\xb,s_1,s_2,y} \\
=& P_{\bfxa_1, \bfxa_2, \bfxb}\paren{\xa_1,\xa_2,\xb} 
 P_{\bfs_1,\bfs_2|\bfxa_1, \bfxa_2, \bfxb} \paren{s_1,s_2\condon \xa_1, \xa_2, \xb} 
 W_{\bfy|\bfxa,\bfxb,\bfs}\paren{y\condon \xa_1,\xb,s_1} \\
=& P_{\bfxa_1, \bfxa_2, \bfxb}(\xa_1,\xa_2,\xb) 
 P_{\bfs_1,\bfs_2|\bfxa_1, \bfxa_2, \bfxb} \paren{s_1,s_2\condon \xa_1, \xa_2, \xb} 
 W_{\bfy|\bfxa,\bfxb,\bfs}\paren{y\condon \xa_2,\xb,s_2} 
\end{array} }, \notag \\
\cKb\coloneqq& \curbrkt{ \begin{array}{rl}
& P_{\bfxa, \bfxb_1, \bfxb_2}\in\cJb \colon \\
\exists&  P_{\bfxa,\bfxb_1,\bfxb_2, \bfs_1, \bfs_2, \bfy}\in\Delta\paren{\cXa\times \cXb^2\times \cS^2\times\cY} \suchthat \\
& \sqrbrkt{ P_{\bfxa,\bfxb_1,\bfxb_2, \bfs_1, \bfs_2, \bfy} }_{\bfxa,\bfxb_1,\bfxb_2} = P_{\bfxa, \bfxb_1, \bfxb_2} ;\\
\forall&  \paren{\xa, \xb_1,\xb_2,s_1,s_2,y} \in \cXa\times\cXb^2\times\cS^2\times\cY, \\
& P_{\bfxa,\bfxb_1,\bfxb_2, \bfs_1, \bfs_2, \bfy} \paren{\xa,\xb_1,\xb_2,s_1,s_2,y} \\
=& P_{\bfxa, \bfxb_1, \bfxb_2}\paren{\xa,\xb_1,\xb_2} 
 P_{\bfs_1,\bfs_2|\bfxa, \bfxb_1, \bfxb_2} \paren{s_1,s_2\condon \xa, \xb_1, \xb_2} 
 W_{\bfy|\bfxa,\bfxb,\bfs}\paren{y\condon \xa,\xb_1,s_1} \\
=& P_{\bfxa, \bfxb_1, \bfxb_2}\paren{\xa,\xb_1,\xb_2} 
 P_{\bfs_1,\bfs_2|\bfxa, \bfxb_1, \bfxb_2} \paren{s_1,s_2\condon \xa, \xb_1, \xb_2} 
 W_{\bfy|\bfxa,\bfxb,\bfs}\paren{y\condon \xa,\xb_2,s_2} 
\end{array} }. \notag 
\end{align}
\end{definition}

One should think of confusability sets as the sets of \emph{bad} distributions/types that any (sequence of) good codes should avoid. 
Indeed, one has the following claim. 

\begin{claim}
\label{claim:distributional-nonconf}
Let $ \mactwofull $ be a 2-user adversarial MAC and let $ \paren{\ipdistra, \ipdistrb}\in \ipconstra\times\ipconstrb $ be a pair of feasible input distributions. 
Let $ \curbrkt{\paren{ \cC_{1,i}, \cC_{2,i}}}_i\subseteq\cXa^{n_i}\times\cXb^{n_i} $ be a sequence of pairs of $ \ipdistra $- and $ \ipdistrb $-constant composition codes of increasing blocklengths $ n_i $'s. 
Then $ \curbrkt{\paren{ \cC_{1,i},\cC_{2,i}}}_i $ achieves zero error for $ \mactwo $ if an only if for every $ i $, there is no $ \paren{\vxa_1,\vxb_1}, \paren{\vxa_2,\vxb_2} \in \cC_{1,i}\times\cC_{2,i} $ and $ \vxa\in \cC_{1,i} $, $ \vxb\in\cC_{2,i} $,  such that at least one of the following happens:  $ \tau_{\vxa_1, \vxa_2, \vxb_1 \vxb_2}\in \cKab $, $ \tau_{\vxa_1,\vxa_2,\vxb} \in \cKa $, $ \tau_{\vxa,\vxb_1,\vxb_2} \in \cKb $. 
\end{claim}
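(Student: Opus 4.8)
The plan is to combine \Cref{claim:operational-nonconf} (equivalence between zero error and operational non-confusability) with \Cref{claim:per-inv} (permutation invariance) and the definition of confusability sets (\Cref{def:conf-set}), the key point being that for constant composition codes the operational confusability of a pair of codeword tuples depends only on their joint type, and that this joint type lies in the corresponding confusability set if and only if the tuples are operationally confusable.

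First I would fix $i$ and work at a single blocklength $n = n_i$, so the statement reduces to: a pair of $\ipdistra$- and $\ipdistrb$-constant composition codes $(\cCa, \cCb)$ achieves zero error iff no codeword tuple drawn from them has joint type in $\cKab$, $\cKa$, or $\cKb$ (with the appropriate index patterns). By \Cref{claim:operational-nonconf}, zero error is equivalent to the conjunction of the three operational non-confusability conditions. So it suffices to show, for each of the three cases, the equivalence between ``$(\vxa_{i_1},\vxb_{j_1})$ and $(\vxa_{i_2},\vxb_{j_2})$ are operationally confusable'' (i.e.\ $\exists\, \vsa,\vsb$ with types in $\stconstr$ and $W(\vxa_{i_1},\vxb_{j_1},\vsa) = W(\vxa_{i_2},\vxb_{j_2},\vsb)$) and ``$\tau_{\vxa_{i_1},\vxa_{i_2},\vxb_{j_1},\vxb_{j_2}} \in \cKab$'', and similarly for the two marginal cases.

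The heart of the argument is this single-type equivalence, which I would prove in both directions. For the ``operational $\Rightarrow$ distributional'' direction: given confusing $\vsa,\vsb$ and a common output $\vy = W(\vxa_{i_1},\vxb_{j_1},\vsa) = W(\vxa_{i_2},\vxb_{j_2},\vsb)$, form the joint type $\tau_{\vxa_{i_1},\vxa_{i_2},\vxb_{j_1},\vxb_{j_2},\vsa,\vsb,\vy}\in\Delta(\cXa^2\times\cXb^2\times\cS^2\times\cY)$; since the channel is memoryless and deterministic, this type automatically satisfies the two factorization/output-consistency equalities in the definition of $\cKab$ (reading off $P_{\bfs_1,\bfs_2|\cdots}$ as the induced conditional type), and its marginals on the $\bfs_1$ and $\bfs_2$ coordinates are $\tau_{\vsa},\tau_{\vsb}\in\stconstr$ — here I would note the definition of $\cKab$ as written does not explicitly carry the state-constraint requirement on the marginals of $\bfs_1,\bfs_2$, so I would either invoke the convention (presumably stated in the surrounding text) that members of $\cKab$ implicitly respect $\stconstr$, or restate it; this is a bookkeeping point to flag. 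For the converse ``distributional $\Rightarrow$ operational'' direction, I would use \Cref{fact:type-dense-in-distr} / the constant-composition setup together with \Cref{claim:per-inv}: if the joint type of the codewords lies in $\cKab$, the witnessing distribution $P_{\bfxa_1,\bfxa_2,\bfxb_1,\bfxb_2,\bfs_1,\bfs_2,\bfy}$ — restricted to a type, or approximated by one and then handled via a limiting/net argument — produces vectors $\vsa,\vsb,\vy$ of the right conditional type relative to the given codewords; permuting coordinates (permutation invariance) lets one realize these against the actual codewords $\vxa_{i_1},\vxa_{i_2},\vxb_{j_1},\vxb_{j_2}$, giving genuine confusing state vectors with types in $\stconstr$. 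The two marginal cases ($\cKa$ with a shared $\vxb$, $\cKb$ with a shared $\vxa$) are proved by the identical argument with one input coordinate collapsed.

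The main obstacle I anticipate is the rationality/realizability issue in the converse direction: a distribution in $\cKab$ need not be a type realizable at the given blocklength $n$, and even the conditional marginal $P_{\bfs_1,\bfs_2|\bfxa_1,\bfxa_2,\bfxb_1,\bfxb_2}$ need not be consistent with the exact joint type of the fixed codewords. Since, however, the claim's hypothesis is that the codewords' \emph{joint type} (an exact type at blocklength $n$) lies in $\cKab$, I expect this is handled cleanly: the defining conditions of $\cKab$ are closed and the witnessing $P$ can be taken to be a conditional type given that exact joint type (because the channel equations are linear constraints with a rational solution whenever a real solution exists over the same rational marginal), so one can extract integer counts and build $\vsa,\vsb$ directly, then finish with \Cref{claim:per-inv}. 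I would also double-check the indexing: in \Cref{def:conf-set} the tuple $(\bfxa_1,\bfxa_2,\bfxb_1,\bfxb_2)$ ranges over all of $\cJab$ without requiring $\bfxa_1\ne\bfxa_2$ etc., whereas \Cref{claim:operational-nonconf} insists $i_1\ne i_2$, $j_1\ne j_2$; so in the claim's statement the tuples $(\vxa_1,\vxb_1),(\vxa_2,\vxb_2)$ are implicitly distinct pairs (and for the marginal cases, distinct in the relevant coordinate), and I would make that explicit so the reduction to \Cref{claim:operational-nonconf} is exact. Everything else is routine once the single-type equivalence is in place.
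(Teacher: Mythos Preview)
Your proposal is correct and follows essentially the same route as the paper: invoke \Cref{claim:operational-nonconf} to reduce zero error to the three operational non-confusability conditions, use \Cref{claim:per-inv} to pass from vectors to types, and then match the resulting type-level conditions against \Cref{def:conf-set} via \Cref{fact:type-dense-in-distr}. The paper's proof is in fact terser than yours---it simply rewrites the three conditions of \Cref{claim:operational-nonconf} in terms of joint types (with the witnessing $\vsa,\vsb,\vy$ absorbed into a conditional type) and then invokes density of types in distributions---whereas you are more explicit about the two directions of the single-type equivalence and flag the rationality/realizability issue and the missing state-constraint marginal in \Cref{def:conf-set}, both of which the paper glosses over.
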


\begin{proof}
\Cref{claim:per-inv} implies that the non-confusability properties (\Cref{cond:zero-error-joint,cond:zero-error-marg1,cond:zero-error-marg2} in \Cref{claim:operational-nonconf}) depend only on the \emph{type} of vectors rather than the order of coordinates. 
We can therefore quotient out type classes (\Cref{def:type}) and work with types instead of vectors.\footnote{Formally, let $\sim_{\mathrm{perm}} $ be a relation on vectors defined as $ \vv\sim_{\mathrm{perm}}\vv' $ iff there is $\pi\in S_n $ such that $ \vv' = \pi(\vv) $. It is easy to check that $ \sim_{\mathrm{perm}} $ is an equivalence relation. As \Cref{claim:per-inv} suggests, the confusability property is a \emph{class invariant} under $\sim_{\mathrm{perm}}$, i.e., it is invariant in each equivalence class by $\sim_{\mathrm{perm}}$. For the purpose of studying confusability, one can without loss of generality focus on equivalence classes (i.e., types) rather than vectors. } 
The above conditions are equivalent to 
\begin{enumerate}
	\item for all $ 1\le i_1\ne i_2\le |\cCa| $ and $ 1\le j_1\ne j_2\le |\cCb| $, there do not exist $ \vsa,\vsb\in\cS^n $ with $ \tau_{\vsa},\tau_{\vsb}\in\stconstr $ and $ \vy\in\cY^n $ such that 
	\begin{align}
	& \tau_{\vxa_{i_1}, \vxb_{j_1}, \vxa_{i_2}, \vxb_{j_2}, \vsa,\vsb,\vy}(\xa_1,\xb_1,\xa_2,\xb_2,s_1,s_2,y) \notag \\
	=& \tau_{\vxa_{i_1}, \vxb_{j_1}, \vxa_{i_2}, \vxb_{j_2}} (\xa_1,\xb_1,\xa_2,\xb_2) 
	\tau_{\vsa,\vsb|\vxa_{i_1}, \vxb_{j_1}, \vxa_{i_2}, \vxb_{j_2}}(s_1,s_2|\xa_1,\xb_1,\xa_2,\xb_2)
	W_{\bfy|\bfxa,\bfxb,\bfs}(y|\xa_1,\xb_1,s_1) \notag \\
	=& \tau_{\vxa_{i_1}, \vxb_{j_1}, \vxa_{i_2}, \vxb_{j_2}} (\xa_1,\xb_1,\xa_2,\xb_2) 
	\tau_{\vsa,\vsb|\vxa_{i_1}, \vxb_{j_1}, \vxa_{i_2}, \vxb_{j_2}}(s_1,s_2|\xa_1,\xb_1,\xa_2,\xb_2)
	W_{\bfy|\bfxa,\bfxb,\bfs}(y|\xa_2,\xb_2,s_2) \notag
	\end{align}
	for all $ (\xa_1,\xa_2,\xb_1,\xb_2,s_1,s_2,y)\in\cXa^2\times\cXb^2\times\cS^2\times\cY $;

	\item for all $ 1\le i_1\ne i_2\le |\cCa| $ and $ 1\le j\le |\cCb| $, there do not exist $ \vsa,\vsb\in\cS^n $ with $ \tau_{\vsa},\tau_{\vsb}\in\stconstr $ and $ \vy\in\cY^n $ such that 
	\begin{align}
	& \tau_{\vxa_{i_1}, \vxa_{i_2}, \vxb_{j}, \vsa,\vsb,\vy}(\xa_1,\xa_2,\xb,s_1,s_2,y) \notag \\
	=& \tau_{\vxa_{i_1}, \vxa_{i_2}, \vxb_{j}} (\xa_1,\xa_2,\xb) 
	\tau_{\vsa,\vsb|\vxa_{i_1},\vxa_{i_2}, \vxb_{j}}(s_1,s_2|\xa_1,\xa_2,\xb)
	W_{\bfy|\bfxa,\bfxb,\bfs}(y|\xa_1,\xb,s_1) \notag \\
	=& \tau_{\vxa_{i_1}, \vxa_{i_2}, \vxb_{j}} (\xa_1,\xa_2,\xb) 
	\tau_{\vsa,\vsb|\vxa_{i_1},\vxa_{i_2}, \vxb_{j}}(s_1,s_2|\xa_1,\xa_2,\xb)
	W_{\bfy|\bfxa,\bfxb,\bfs}(y|\xa_2,\xb,s_2) \notag
	\end{align}
	for all $ (\xa_1,\xa_2,\xb,s_1,s_2,y)\in\cXa^2\times\cXb\times\cS^2\times\cY $;

	\item for all $ 1\le i\le |\cCa| $ and $ 1\le j_1\ne j_2\le |\cCb| $, there do not exist $ \vsa,\vsb\in\cS^n $ with $ \tau_{\vsa},\tau_{\vsb}\in\stconstr $ and $ \vy\in\cY^n $ such that 
	\begin{align}
	& \tau_{\vxa_{i}, \vxb_{j_1}, \vxb_{j_2}, \vsa,\vsb,\vy}(\xa,\xb_1,\xb_2,s_1,s_2,y) \notag \\
	=& \tau_{\vxa_{i}, \vxb_{j_1}, \vxb_{j_2}} (\xa,\xb_1,\xb_2) 
	\tau_{\vsa,\vsb|\vxa_{i}, \vxb_{j_1}, \vxb_{j_2}}(s_1,s_2|\xa,\xb_1,\xb_2)
	W_{\bfy|\bfxa,\bfxb,\bfs}(y|\xa,\xb_1,s_1) \notag \\
	=& \tau_{\vxa_{i}, \vxb_{j_1}, \vxb_{j_2}} (\xa,\xb_1,\xb_2) 
	\tau_{\vsa,\vsb|\vxa_{i}, \vxb_{j_1}, \vxb_{j_2}}(s_1,s_2|\xa,\xb_1,\xb_2)
	W_{\bfy|\bfxa,\bfxb,\bfs}(y|\xa,\xb_2,s_2) \notag
	\end{align}
	for all $ (\xa,\xb_1,\xb_2,s_1,s_2,y)\in\cXa\times\cXb^2\times\cS^2\times\cY $. 
\end{enumerate}
We now get that $ (\cCa,\cCb)\in\cXa^n\times\cXb^n $ attains zero error for $\mactwo$ if and only if the above conditions hold.
Since these conditions should be satisfied for every $n$, by \Cref{fact:type-dense-in-distr}, we pass from types to distributions. 
According to \Cref{def:conf-set}, we finally get that an infinite sequence of codes $ \curbrkt{\paren{\cCa^{(n)}, \cCb^{(n)}}}_{n\ge1} $ attains zero error for $\mactwo$ if and only if for every $n$, 
\begin{enumerate}
	\item for all $ 1\le i_1\ne i_2\le |\cCa^{(n)}| $ and $ 1\le j_1\ne j_2\le |\cCb^{(n)}| $, $ \tau_{\vxa_{i_1}, \vxa_{i_2}, \vxb_{j_1}, \vxb_{j_2}}\notin\cKab $;
	\item for all $ 1\le i_1\ne i_2\le |\cCa^{(n)}| $ and $ 1\le j\le |\cCb^{(n)}| $, $ \tau_{\vxa_{i_1}, \vxa_{i_2}, \vxb_j}\notin\cKa $;
	\item for all $ 1\le i\le |\cCa^{(n)}| $ and $ 1\le j_1\ne j_2\le |\cCb^{(n)}| $, $ \tau_{\vxa_i,\vxb_{j_1}, \vxb_{j_2}}\notin\cKb $. 
\end{enumerate}
This finishes the proof. 
\end{proof}

\begin{remark}
\label{rk:equiv-operational-distributional}
\Cref{claim:operational-nonconf} and \Cref{claim:distributional-nonconf} actually imply that operational confusability and distributional confusability are equivalent, both of which are characterizations of zero error. 
\end{remark}

\begin{remark}
Using operational confusability, one can instead define the confusability sets in terms of types rather than distributions. 
\begin{align}
\cK_{1,2}^{(n)}(\ipdistra,\ipdistrb)\coloneqq&\curbrkt{\tau_{\vxa_1,\vxa_2,\vxb_1,\vxb_2}\in\cJab: 
\begin{array}{c}
(\vxa_1,\vxa_2,\vxb_1,\vxb_2)\in(\cXa^n)^2\times(\cXb^n)^2 \\
(\vxa_1,\vxb_1)\text{ and }(\vxa_2,\vxb_2)\text{ satisfy \Cref{cond:nonzero-joint} in the proof of \Cref{claim:operational-nonconf}}
\end{array}
}, \notag \\
\cK_{1}^{(n)}(\ipdistra,\ipdistrb)\coloneqq&\curbrkt{\tau_{\vxa_1,\vxa_2,\vxb}\in\cJa: 
\begin{array}{c}
(\vxa_1,\vxa_2,\vxb)\in(\cXa^n)^2\times\cXb^n \\
(\vxa_1,\vxb)\text{ and }(\vxa_2,\vxb)\text{ satisfy \Cref{cond:nonzero-marg1} in the proof of \Cref{claim:operational-nonconf}}
\end{array}
} \notag \\
\cK_{2}^{(n)}(\ipdistra,\ipdistrb)\coloneqq&\curbrkt{\tau_{\vxa,\vxb_1,\vxb_2}\in\cJb: 
\begin{array}{c}
(\vxa,\vxb_1,\vxb_2)\in\cXa^n\times(\cXb^n)^2 \\
(\vxa,\vxb_1)\text{ and }(\vxa,\vxb_2)\text{ satisfy \Cref{cond:nonzero-marg2} in the proof of \Cref{claim:operational-nonconf}}
\end{array}
}. \notag 
\end{align}
By \Cref{fact:type-dense-in-distr} and \Cref{rk:equiv-operational-distributional}, the above definition is (almost) the same as \Cref{def:conf-set}. 
Indeed, 
\begin{align}
\cKab =& \cl\paren{\bigcup_{n = 1}^\infty \cK_{1,2}^{(n)}(\ipdistra,\ipdistrb)}, \notag \\
\cKa =& \cl\paren{\bigcup_{n = 1}^\infty \cK_{1}^{(n)}(\ipdistra,\ipdistrb)}, \notag \\
\cKb =& \cl\paren{\bigcup_{n = 1}^\infty \cK_{2}^{(n)}(\ipdistra,\ipdistrb)}, \notag 
\end{align}
where $ \cl(\cdot) $ denotes the closure of a set. 
We stick with the distribution version of the definition rather than type version. 
\end{remark}

\begin{proposition}
\label{prop:prop-conf-set}
Fix any $ (\ipdistra,\ipdistrb)\in\ipconstra\times\ipconstrb $. 
The confusability sets enjoy the following properties.
\begin{enumerate}

	\item \label[prop]{itm:conf-set-prop-nontrivial}
	\emph{Nontriviality.}
	Any distributions $ P_{\bfxa,\bfxa,\bfxb,\bfxb}\in\cJab $, $ P_{\bfxa,\bfxa,\bfxb}\in\cJa $ and $ P_{\bfxa,\bfxb,\bfxb}\in\cJb $ are in $ \cKab,\cKa $ and $ \cKb $, respectively. 
	
	\item \label[prop]{itm:conf-set-prop-transp-inv}
	\emph{Transpositional invariance.} 
	If $ \distraabb $ is in $ \cKab $, then $ P_{\bfxa_2,\bfxa_1,\bfxb_2,\bfxb_1} $ is also in $ \cKa $;
	if $ \distraab $ is in $ \cKa $, then $ P_{\bfxa_2,\bfxa_1,\bfxb} $ is also in $ \cKa $;
	if $ \distrabb $ is in $ \cKb $, then $ P_{\bfxa,\bfxb_2,\bfxb_1} $ is also in $ \cKb $. 
	
	\item \label[prop]{itm:conf-set-prop-conv}
	\emph{Convexity.}
	All of $ \cKab,\cKa,\cKb $ are convex. 
\end{enumerate}
\end{proposition}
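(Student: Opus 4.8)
The plan is to route all three parts through one ``linearized'' description of the confusability sets. Let $ \wh P $ denote a generic element of $ \Delta\paren{\cXa^2\times\cXb^2\times\cS^2\times\cY} $, and recall that state-determinism means $ W_{\bfy|\bfxa,\bfxb,\bfs}\paren{y\condon\xa,\xb,s} = \one_{\{y = W(\xa,\xb,s)\}} $. I would first observe that, after substituting $ \distraabb = \sqrbrkt{\wh P}_{\bfxa_1,\bfxa_2,\bfxb_1,\bfxb_2} $ (forced by the marginal requirement in \Cref{def:conf-set}) and noting that the conditional $ P_{\bfs_1,\bfs_2|\bfxa_1,\bfxa_2,\bfxb_1,\bfxb_2} $ appearing there is necessarily the one induced by $ \wh P $, the product $ \distraabb\,P_{\bfs_1,\bfs_2|\bfxa_1,\bfxa_2,\bfxb_1,\bfxb_2} $ collapses to the marginal $ \sqrbrkt{\wh P}_{\bfxa_1,\bfxa_2,\bfxb_1,\bfxb_2,\bfs_1,\bfs_2} $. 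Hence the two ``product-form'' identities defining $ \cKab $ become \emph{affine} constraints on $ \wh P $: namely, $ \wh P $ is supported on tuples $ \paren{\xa_1,\xa_2,\xb_1,\xb_2,s_1,s_2,y} $ with $ y = W(\xa_1,\xb_1,s_1) = W(\xa_2,\xb_2,s_2) $, and there the value of $ \wh P $ is just that of its $ \paren{\bfxa_1,\bfxa_2,\bfxb_1,\bfxb_2,\bfs_1,\bfs_2} $-marginal. Writing $ \wh\cK_{1,2} $ for the set of $ \wh P $ obeying the simplex constraints together with these affine conditions, one gets $ \cKab = \curbrkt{\sqrbrkt{\wh P}_{\bfxa_1,\bfxa_2,\bfxb_1,\bfxb_2}\colon\wh P\in\wh\cK_{1,2}} $, and likewise $ \cKa,\cKb $ are linear images of analogous sets $ \wh\cK_1,\wh\cK_2 $ obtained by dropping one copy of the relevant input variable throughout. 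The point needing a little care is exactly this reformulation: checking that eliminating $ \distraabb $ genuinely renders the two identities affine in $ \wh P $; everything after that is routine.

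Given this, \Cref{itm:conf-set-prop-conv} (convexity) is immediate: $ \wh\cK_{1,2} $ is the intersection of the probability simplex with finitely many affine subspaces, hence convex, and $ \cKab $, being its image under the linear marginalization map $ \wh P\mapsto\sqrbrkt{\wh P}_{\bfxa_1,\bfxa_2,\bfxb_1,\bfxb_2} $, is convex as well; the identical argument applies to $ \cKa $ and $ \cKb $. (If \Cref{def:conf-set} is read as additionally imposing $ \sqrbrkt{\wh P}_{\bfs_1},\sqrbrkt{\wh P}_{\bfs_2}\in\stconstr $, this is again a convex constraint since $ \stconstr $ is convex, and nothing changes.)

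For \Cref{itm:conf-set-prop-nontrivial} (nontriviality) I would first check the statement is non-vacuous: the ``diagonal'' distribution $ \distraabb\paren{\xa_1,\xa_2,\xb_1,\xb_2} = \ipdistra(\xa_1)\,\ipdistrb(\xb_1)\,\one_{\{\xa_2 = \xa_1\}}\,\one_{\{\xb_2 = \xb_1\}} $, and more generally any $ P_{\bfxa,\bfxa,\bfxb,\bfxb} $ supported on $ \{\xa_1 = \xa_2,\ \xb_1 = \xb_2\} $, does lie in $ \cJab $. For any such $ \distraabb $, fix an arbitrary $ s_0\in\cS $ and set $ \wh P\paren{\xa_1,\xa_2,\xb_1,\xb_2,s_1,s_2,y}\coloneqq\distraabb\paren{\xa_1,\xa_2,\xb_1,\xb_2}\,\one_{\{s_1 = s_2 = s_0\}}\,\one_{\{y = W(\xa_1,\xb_1,s_0)\}} $; on its support one has $ \xa_1 = \xa_2 $, $ \xb_1 = \xb_2 $ and $ s_1 = s_2 $, so $ W(\xa_1,\xb_1,s_1) = W(\xa_2,\xb_2,s_2) = y $, whence $ \wh P\in\wh\cK_{1,2} $ and $ \distraabb\in\cKab $. (Under a state constraint one replaces $ \one_{\{s_1 = s_2 = s_0\}} $ by $ Q(s_1)\,\one_{\{s_2 = s_1\}} $ for any $ Q\in\stconstr $.) The cases $ \cJa\subseteq\cKa $ and $ \cJb\subseteq\cKb $, restricted to the corresponding diagonal distributions $ P_{\bfxa,\bfxa,\bfxb} $ and $ P_{\bfxa,\bfxb,\bfxb} $, are handled identically.

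Finally, for \Cref{itm:conf-set-prop-transp-inv} (transpositional invariance) — where I read the first clause with the evident correction ``$ P_{\bfxa_2,\bfxa_1,\bfxb_2,\bfxb_1}\in\cKab $'' rather than ``$ \cKa $'' — suppose $ \wh P\in\wh\cK_{1,2} $ witnesses $ \distraabb\in\cKab $ and define $ \wh P'\paren{\xa_1,\xa_2,\xb_1,\xb_2,s_1,s_2,y}\coloneqq\wh P\paren{\xa_2,\xa_1,\xb_2,\xb_1,s_2,s_1,y} $. Its $ \paren{\bfxa_1,\bfxa_2,\bfxb_1,\bfxb_2} $-marginal is $ P_{\bfxa_2,\bfxa_1,\bfxb_2,\bfxb_1} $, and the relabeling $ 1\leftrightarrow2 $ (on both the input copies and the two state copies) merely swaps the two affine conditions of the first paragraph with each other; since they form a symmetric pair, $ \wh P' $ still satisfies them, so $ \wh P'\in\wh\cK_{1,2} $ and $ P_{\bfxa_2,\bfxa_1,\bfxb_2,\bfxb_1}\in\cKab $. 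The marginal versions are the same with the single swap $ \bfxa_1\leftrightarrow\bfxa_2 $ (resp. $ \bfxb_1\leftrightarrow\bfxb_2 $) accompanied by $ \bfs_1\leftrightarrow\bfs_2 $. Overall I expect no genuine obstacle here; the only slightly delicate point is the first-paragraph reformulation, and once it is in place convexity, nontriviality and transpositional invariance all fall out by inspection.
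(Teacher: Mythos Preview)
Your proof is correct and takes a genuinely different route from the paper. The paper argues all three items through the \emph{operational} picture established in \Cref{claim:operational-nonconf,claim:per-inv}: nontriviality because $(\vxa,\vxb)$ is trivially confusable with itself via $\vs=\wt\vs$; transpositional invariance by swapping the two jamming vectors; and convexity by a concatenation argument (if two short pairs are confusable then so is their concatenation, invoking \Cref{lem:type-concac} and the convexity of $\stconstr$). In contrast, you stay entirely on the distributional side, observing that once the conditional $P_{\bfs_1,\bfs_2|\bfxa_1,\bfxa_2,\bfxb_1,\bfxb_2}$ is absorbed into the witness $\wh P$, the defining constraints of $\cKab$ become affine in $\wh P$, so $\cKab$ is the linear image of a polytope and all three properties follow by direct manipulation of $\wh P$.

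Your approach is arguably cleaner for these particular statements: it avoids the round-trip through codewords and permutations, and convexity drops out for free rather than via a concatenation construction. The paper's approach, on the other hand, keeps the operational intuition front and center and reuses machinery (\Cref{claim:per-inv}, the type calculus) that is needed elsewhere anyway. One minor point: your argument implicitly assumes $\stconstr\ne\emptyset$ (to pick $Q\in\stconstr$ for nontriviality), which the paper also needs and tacitly assumes. Your flagging of the evident typo (``$\cKa$'' should be ``$\cKab$'' in the first clause of \Cref{itm:conf-set-prop-transp-inv}) is correct.
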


\begin{proof}
By \Cref{rk:equiv-operational-distributional}, it is convenient to prove the properties via operational confusability. 

To prove the first property, one simply observes that a pair of codewords $ (\vxa,\vxb) $ is apparently confusable with itself. 
In \Cref{cond:zero-error-joint} (of \Cref{claim:operational-nonconf}), one takes $ \vs = \wt\vs $. 

To prove the second property, one notes that if $ (\vxa,\vxb) $ is confusable with $ (\wt\vxa,\wt\vxb) $ (resp. $ (\wt\vxa,\vxb) $ or $ (\vxa,\wt\vxb) $), then $ (\wt\vxa,\wt\vxb) $ (resp. $ (\wt\vxa,\vxb) $ or $ (\vxa,\wt\vxb) $) is also confusable with $ (\vxa,\vxb) $. 
In the conditions of \Cref{claim:operational-nonconf}, one interchanges the corresponding $\vs$ and $ \wt\vs $. 

To prove the third property, we note that for any $ \alpha\in[0,1] $, if $ (\vec x^1_1,\vec x^2_1)\in\cXa^{\alpha n}\times\cXb^{\alpha n} $ and $ (\vec x^1_2,\vec x^2_2)\in\cXa^{\alpha n}\times\cXb^{\alpha n} $ are confusable (via $ \vec s_1\in\cS^{\alpha n} $ and $\vec s_2\in\cS^{\alpha n} $), $ (\vec x^1_3,\vec x^2_3)\in\cXa^{(1-\alpha)n}\times\cXb^{(1-\alpha)n} $ and $ (\vec x^1_4,\vec x^2_4)\in\cXa^{(1-\alpha)n}\times\cXb^{(1-\alpha)n} $ are also confusable (via $ \vec s_3\in\cS^{(1-\alpha) n}$ and $\vec s_4\in\cS^{(1-\alpha) n} $), then $ ((\vec x^1_1,\vec x^1_3), (\vec x^2_1,\vec x^2_3))\in\cXa^n\times\cXb^n $ and $ ((\vec x^1_2,\vec x^1_4), (\vec x^2_2,\vec x^2_4))\in\cXa^n\times\cXb^n $ are confusable (via $ (\vec s_1,\vec s_3)\in\cS^n $ and $ (\vec s_2,\vec s_4)\in\cS^n $). 
Here for two vectors $ \vec v_1\in\cV^{n_1} $ and $ \vec v_2\in\cV^{n_2} $, we use the notation $ (\vec v_1,\vec v_2)\in\cV^{n_1+n_2} $ to denote the concatenation of $ \vec v_1 $ and $ \vec v_2 $. 
Therefore, by \Cref{lem:type-concac}, if $ \distraabb\in\cKab $ and $ P_{\wt{\bfxa_1},\wt{\bfxa_2},\wt{\bfxb_1},\wt{\bfxb_2}}\in\cKab $ then $ \alpha\distraabb + (1-\alpha)P_{\wt{\bfxa_1},\wt{\bfxa_2},\wt{\bfxb_1},\wt{\bfxb_2}}\in\cKab $ for any $ \alpha\in[0,1] $. 
\end{proof}

\begin{remark}
\label{rk:conf-relation}
If we define the relation $ \sim_{\mathrm{conf}} $ on the set of feasible input sequences as $ (\vxa,\vxb)\sim_{\mathrm{conf}}(\wt\vxa,\wt\vxb) $ (resp. $ (\vxa,\vxb)\sim_{\mathrm{conf}}(\wt\vxa,\vxb) $ or $ (\vxa,\vxb)\sim_{\mathrm{conf}}(\vxa,\wt\vxb) $) iff $ \tau_{\vxa,\wt\vxa,\vxb,\wt\vxb}\in\cKa $ (resp. $ \tau_{\vxa,\wt\vxa,\vxb}\in\cKa $ or $ \tau_{\vxa,\vxb,\wt\vxb}\in\cKb $), then \Cref{prop:prop-conf-set} implies that $ \sim_{\mathrm{conf}} $ is reflective and symmetric. 
However, $ \sim_{\mathrm{conf}} $ is not necessarily transitive. 
Therefore, it is not in general an equivalence relation. 
\end{remark}

\begin{claim}
\label{lem:cap-determined-by-conf-set}
Channels with the same confusability sets have the same capacity region. 
\end{claim}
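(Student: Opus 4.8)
The plan is to show that the capacity region of an omniscient adversarial MAC depends on the channel only through the three confusability sets $\cKab,\cKa,\cKb$ (together with the input constraints $\ipconstra,\ipconstrb$, which are the same for the two channels being compared). The key observation is that \Cref{claim:distributional-nonconf} gives an exact combinatorial criterion for a sequence of constant composition code pairs to achieve zero error, and that criterion is phrased \emph{purely} in terms of membership of joint types in $\cKab,\cKa,\cKb$ — the channel law $W$ does not appear anywhere else in the criterion.

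First I would fix two channels $\mactwo$ and $\mactwo'$ sharing the same alphabets $\cXa,\cXb$, the same input constraints $\ipconstra,\ipconstrb$, and, for \emph{every} feasible pair $(\ipdistra,\ipdistrb)\in\ipconstra\times\ipconstrb$, the same confusability sets $\cKab,\cKa,\cKb$. By \Cref{lem:cc-reduction} (constant composition reduction), it suffices to compare the sets of rate pairs achievable by constant composition code pairs, since passing to a constant composition subcode preserves rates asymptotically. So let $(R_1,R_2)$ be achievable for $\mactwo$; there is a sequence of code pairs of vanishing maximum error, and since the channel and everything else is deterministic, vanishing maximum error is equivalent to zero error (as noted in \Cref{sec:additional-technical-assump}). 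After the constant composition reduction we may assume the codes are $\ipdistra$- and $\ipdistrb$-constant composition for some fixed feasible $(\ipdistra,\ipdistrb)$ (passing to a subsequence so the composition is eventually fixed, using that $\Delta^{(n)}(\cXa)$ and $\Delta^{(n)}(\cXb)$ are finite and dense in the compact sets $\ipconstra,\ipconstrb$). Now invoke \Cref{claim:distributional-nonconf}: the zero-error property of this code sequence for $\mactwo$ is \emph{equivalent} to the statement that no joint type $\tau_{\vxa_1,\vxa_2,\vxb_1,\vxb_2}$, $\tau_{\vxa_1,\vxa_2,\vxb}$, $\tau_{\vxa,\vxb_1,\vxb_2}$ arising from the code lands in $\cKab(\ipdistra,\ipdistrb)$, $\cKa(\ipdistra,\ipdistrb)$, $\cKb(\ipdistra,\ipdistrb)$, respectively. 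Since these three sets are identical for $\mactwo'$ by hypothesis, \Cref{claim:distributional-nonconf} applied in the other direction shows the very same code sequence achieves zero error for $\mactwo'$. Hence $(R_1,R_2)$ is achievable for $\mactwo'$, and by symmetry the achievable rate regions coincide; taking closures (\Cref{def:ach-rate-cap-region}) gives equal capacity regions.

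The only mild subtlety — and the place I would be most careful — is the bookkeeping around the constant composition reduction and the choice of input distribution: \Cref{claim:distributional-nonconf} is stated for a \emph{fixed} pair $(\ipdistra,\ipdistrb)$, so I need to make sure that when I extract constant composition subcodes from a generic achieving sequence, I can pass to a subsequence along which the pair of compositions converges (in fact is eventually constant, if one works with a common refinement, or converges to some $(\ipdistra,\ipdistrb)\in\ipconstra\times\ipconstrb$ by compactness) and that the hypothesis ``same confusability sets'' is understood to hold for all feasible $(\ipdistra,\ipdistrb)$ — which is the natural reading of the claim, since the confusability sets are defined for each input-distribution pair. With that reading in place, everything else is a direct two-way application of \Cref{claim:distributional-nonconf}, and no new estimates are needed. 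I would phrase the argument symmetrically so that "achievable for $\mactwo$ $\iff$ achievable for $\mactwo'$" falls out at once, then close by taking closures.
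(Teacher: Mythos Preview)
Your proposal is correct and follows essentially the same approach as the paper: both reduce the claim to a direct application of \Cref{claim:distributional-nonconf}, noting that the zero-error criterion there depends only on the confusability sets and not on the channel law. The paper's proof is terser (it simply says any zero-error code for $\mactwo$ is zero-error for $\mactwo'$), while you spell out the constant composition reduction and the subsequence argument more carefully, but the core idea is identical.
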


\begin{proof}
Let $ \mactwo $ and $ \mactwo' $ be two adversarial MACs with the same input constraints $ \ipconstra,\ipconstrb $ and the same confusability sets $ \cKab,\cKa,\cKb $ for all $ (\ipdistra,\ipdistrb)\in\ipconstra\times\ipconstrb $. 
Note that $\mactwo$ and $ \mactwo' $ may have different state/output alphabets and channel laws. 
By \Cref{claim:distributional-nonconf}, any code $ (\cCa,\cCb) $ that attains zero error for $ \mactwo $ also attains zero error for $ \mactwo' $. 
Therefore, any achievable rate pair $ (R_1,R_2) $ for $ \mactwo $ is also achievable for $ \mactwo' $. 
\end{proof}

\section{The sets of good distributions and their properties}
\label{sec:good_distr}

The geometry of various sets of distributions/tensors is depicted in \Cref{fig:geom-of-sets}. 

\begin{figure}[htbp]
	\centering
	\includegraphics[width=0.4\textwidth]{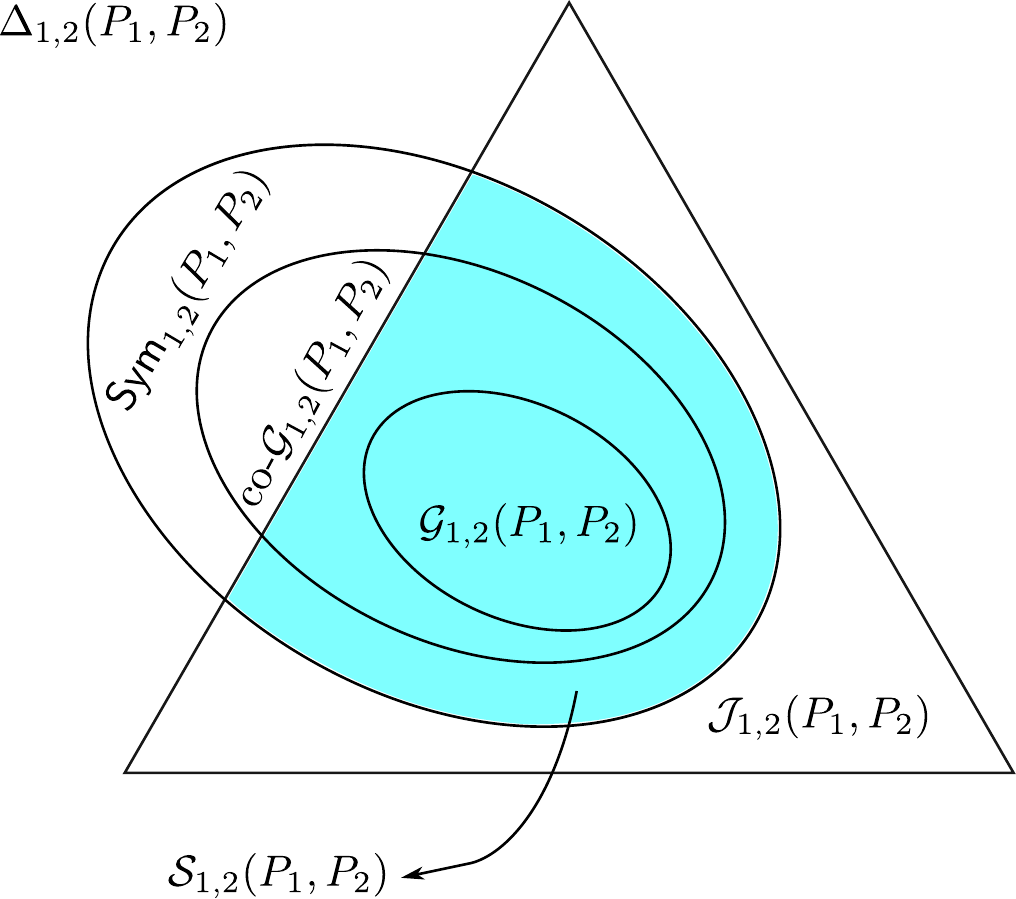}
	\caption{The geometry of various sets of distributions/tensors. We only draw sets of joint distributions/tensors. The geometry of the corresponding marginal distributions/tensors is similar. The ambient space is $ \Dab $ which is defined in \Cref{def:gen-self-coupling}. The set $ \cJab $ of self-couplings is defined in \Cref{def:self-coupling}. The set $ \symab $ of symmetric tensors is defined in \Cref{def:sym_tensor}. Inside $ \symab $, there is a pair of dual cones, viz.: $ \goodab $ (\Cref{def:good_distr}) and $ \cogoodab $ (\Cref{def:cogood-distr}). The {\color{cyan}blue} region denotes the set $ \cSab $ of symmetric distributions (\Cref{def:sym_distr}) which is the intersection of $ \symab $ and $ \cJab $. }
	\label{fig:geom-of-sets}
\end{figure}

\begin{definition}[Generalized self-couplings]
\label{def:gen-self-coupling}
\begin{align}
\Dab \coloneqq& \curbrkt{\taabb\in\bR^{\card{\cXa}^2\times\card{\cXb}^2}: \normone{\taabb} = 1,
\begin{array}{l}
\sqrbrkt{\taabb}_{\bfxa_1} = \sqrbrkt{\taabb}_{\bfxa_2} = \ipdistra, \\
\sqrbrkt{\taabb}_{\bfxb_1} = \sqrbrkt{\taabb}_{\bfxb_2} = \ipdistrb
\end{array}}, \notag \\
\Da \coloneqq& \curbrkt{
	\taab\in\bR^{\card{\cXa}^2\times\card{\cXb}}:
	\normtwo{\taab} = 1,
	\sqrbrkt{\taab}_{\bfxa_1} = \sqrbrkt{\taab}_{\bfxa_2} = \ipdistra, 
	\sqrbrkt{\taab}_{\bfxb} = \ipdistrb
} \notag \\
\Db \coloneqq& \curbrkt{
	\tabb\in\bR^{\card{\cXa}\times\card{\cXb}^2}:
	\normtwo{\tabb} = 1,
	\sqrbrkt{\tabb}_{\bfxa} = \ipdistra, 
	\sqrbrkt{\tabb}_{\bfxb_1} = \sqrbrkt{\tabb}_{\bfxb_2} = \ipdistrb
}. \notag 
\end{align}
\end{definition}

\begin{remark}
For a general tensor (not necessarily a distribution) $ T_{\bfa,\bfb}\in\bR^{\card{\cA}\times\card{\cB}} $, the marginalization of $ T_{\bfa,\bfb} $ onto the first variable $ \bfa $ is defined as $ \sqrbrkt{T_{\bfa,\bfb}}_{\bfa}(a) \coloneqq \sum_{b\in\cB}\abs{T_{\bfa,\bfb}(a,b)} $ for any $ a\in\cA $. 
\end{remark}

\begin{remark}
For the convenience of discussion, the above sets should be thought of as generalizations of distributions (\Cref{def:self-coupling}). 
\end{remark}

\begin{definition}[Symmetric tensors]
\label{def:sym_tensor}
\begin{align}
\symab\coloneqq& \curbrkt{\taabb\in\Dab :
\taabb = T_{\bfxa_2, \bfxa_1, \bfxb_2, \bfxb_1} = T_{\bfxa_2, \bfxa_1, \bfxb_1, \bfxb_2} = T_{\bfxa_1, \bfxa_2, \bfxb_2, \bfxb_1}
}, \notag \\
\syma\coloneqq& \curbrkt{
	\taab\in\Da:
	\taab = T_{\bfxa_2,\bfxa_1,\bfxb}
}, \notag \\
\symb\coloneqq& \curbrkt{
	\tabb\in\Db:
	\tabb = T_{\bfxa,\bfxb_2,\bfxb_1}
}. \notag 
\end{align}
\end{definition}

\begin{definition}[Symmetric distributions]
\label{def:sym_distr}
\begin{align}
\cSab \coloneqq& \cJab\cap\symab, \notag \\ 
\cSa \coloneqq& \cJa\cap\syma, \notag \\ 
\cSb \coloneqq& \cJb\cap\symb. \notag
\notag
\end{align}
\end{definition}

\begin{definition}[Good distributions]
\label{def:good_distr}
Let $ (\ipdistra, \ipdistrb) \in \ipconstra\times\ipconstrb $. 
The set of \emph{jointly good distributions} $ \goodab $, the set of \emph{first marginally good distributions} $ \gooda $ and the set of \emph{second marginally good distributions} $ \goodb $ w.r.t. $ \ipdistra $ and $ \ipdistrb $ are defined as follows:
\begin{align}
\goodab \coloneqq& \curbrkt{
	\distraabb \in \cJab  \colon 
	\begin{array}{l}
	\exists k\in\bZ_{\ge1}, \curbrkt{\lambda_i}_{i = 1}^k \subseteq [0,1] ,\curbrkt{P_{1, i}}_{i = 1}^k \subseteq \Delta(\cXa), \curbrkt{P_{2, i}}_{i = 1}^k \subseteq \Delta(\cXb),   \suchthat \\
	\displaystyle \sum_{i = 1}^k\lambda_i = 1, \distraabb = \sum_{i = 1}^k \lambda_iP_{1,i}^{\ot2} \ot P_{2, i}^{\ot2}
	\end{array}
}, \notag \\
\gooda \coloneqq& \curbrkt{
	\distraab \in \cJa  \colon 
	\begin{array}{l}
	\exists k\in\bZ_{\ge1}, \curbrkt{\lambda_i}_{i = 1}^k \subseteq [0,1] ,\curbrkt{P_{1, i}}_{i = 1}^k \subseteq \Delta(\cXa), \curbrkt{P_{2, i}}_{i = 1}^k \subseteq \Delta(\cXb),   \suchthat \\
	\displaystyle \sum_{i = 1}^k\lambda_i = 1, \distraab = \sum_{i = 1}^k \lambda_i P_{1, i}^{\ot2} \ot P_{2, i}
	\end{array}
}, \notag \\
\goodb \coloneqq& \curbrkt{
	\distrabb \in \cJb  \colon 
	\begin{array}{l}
	\exists k\in\bZ_{\ge1}, \curbrkt{\lambda_i}_{i = 1}^k \subseteq [0,1] ,\curbrkt{P_{1, i}}_{i = 1}^k \subseteq \Delta(\cXa), \curbrkt{P_{2, i}}_{i = 1}^k \subseteq \Delta(\cXb),   \suchthat \\
	\displaystyle \sum_{i = 1}^k\lambda_i = 1, \distrabb = \sum_{i = 1}^k \lambda_i P_{1, i} \ot P_{2, i}^{\ot2}
	\end{array}
}. \notag 
\end{align}
In addition, we define the set of \emph{simultaneously good} distributions $\good$ w.r.t. $ \ipdistra $ and $ \ipdistrb $ as 
\begin{align}
\good \coloneqq& \curbrkt{
	\begin{array}{rl}
	\distraabb\in\goodab\setminus\cKab:& \\
	\sqrbrkt{\distraabb}_{\bfxa_1,\bfxa_2,\bfxb_1} =& \sqrbrkt{\distraabb}_{\bfxa_1,\bfxa_2,\bfxb_2} \in\gooda\setminus\cKa \\
	\sqrbrkt{\distraabb}_{\bfxa_1,\bfxb_1,\bfxb_2} =& \sqrbrkt{\distraabb}_{\bfxa_2,\bfxb_1,\bfxb_2} \in\goodb\setminus\cKb
	\end{array}
}. \notag 
\end{align}
\end{definition}

\begin{proposition}[Properties of good distributions]
\label{prop:properties_good_distr}
The sets $ \gooda,\goodb $ and $ \goodab $ enjoy the following properties. 
\begin{enumerate}
	\item 
	Good distributions are symmetric. 
	\begin{align}
	\goodab\subset\cSab,\quad
	\gooda\subset\cSa,\quad
	\goodb\subset\cSb. \notag 
	\end{align}

	\item 
	For any $ \distraabb \in \goodab $,
	\begin{align}
	\sqrbrkt{\distraabb}_{\bfxa_1, \bfxa_2, \bfxb_1} =& \sqrbrkt{\distraabb}_{\bfxa_1, \bfxa_2, \bfxb_2}, \quad
	\sqrbrkt{\distraabb}_{\bfxa_1, \bfxb_1, \bfxb_2} = \sqrbrkt{\distraabb}_{\bfxa_2, \bfxb_1, \bfxb_2}. \notag
	\end{align}
	\item The sets $ \gooda $ and $ \goodb $ are projections of the set $ \goodab $.
	\begin{align}
	\gooda =& \curbrkt{\sqrbrkt{\distraabb}_{\bfxa_1,\bfxa_2,\bfxb_1}:\distraabb\in\goodab}, \notag \\
	\goodb =& \curbrkt{\sqrbrkt{\distraabb}_{\bfxa_1,\bfxb_1,\bfxb_2}:\distraabb\in\goodab}. \notag
	\end{align}
\end{enumerate}
\end{proposition}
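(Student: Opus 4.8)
The plan is to prove the three items in the natural order, deriving each from the explicit convex-combination structure in \Cref{def:good_distr}. For item (1), I would start from an arbitrary $\distraabb \in \goodab$, written as $\sum_{i=1}^k \lambda_i P_{1,i}^{\ot 2} \ot P_{2,i}^{\ot 2}$. Each summand $P_{1,i}^{\ot 2}\ot P_{2,i}^{\ot 2}$ is, by construction, invariant under swapping the two $\bfxa$-coordinates and independently under swapping the two $\bfxb$-coordinates, since a product of the form $P_{1,i}(\xa_1)P_{1,i}(\xa_2)P_{2,i}(\xb_1)P_{2,i}(\xb_2)$ is symmetric in $(\xa_1,\xa_2)$ and in $(\xb_1,\xb_2)$ separately. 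Hence it equals each of $T_{\bfxa_2,\bfxa_1,\bfxb_2,\bfxb_1}$, $T_{\bfxa_2,\bfxa_1,\bfxb_1,\bfxb_2}$, $T_{\bfxa_1,\bfxa_2,\bfxb_2,\bfxb_1}$; since these permutation actions are linear, the convex combination inherits the same invariances, so $\distraabb\in\symab$. Combined with $\distraabb\in\cJab$ (which holds by definition of $\goodab$), \Cref{def:sym_distr} gives $\distraabb\in\cSab$. The strictness of the inclusion ($\subset$ rather than $\subseteq$) can be argued by exhibiting a symmetric self-coupling that is not a mixture of symmetric product tensors, or noted to follow because a generic correlated symmetric distribution need not decompose as a nonnegative combination of rank-one symmetric pieces; the marginal cases $\gooda\subset\cSa$, $\goodb\subset\cSb$ are identical with the obvious notational changes.

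For item (2), I would again use the decomposition. Marginalizing $P_{1,i}^{\ot 2}\ot P_{2,i}^{\ot 2}$ onto $(\bfxa_1,\bfxa_2,\bfxb_1)$ sums out $\bfxb_2$ and yields $P_{1,i}^{\ot 2}\ot P_{2,i}$ (because $\sum_{\xb_2}P_{2,i}(\xb_2)=1$); marginalizing onto $(\bfxa_1,\bfxa_2,\bfxb_2)$ yields the same tensor, just with the surviving $\bfxb$-index relabeled. Since marginalization is linear, $\sqrbrkt{\distraabb}_{\bfxa_1,\bfxa_2,\bfxb_1} = \sum_i \lambda_i P_{1,i}^{\ot 2}\ot P_{2,i} = \sqrbrkt{\distraabb}_{\bfxa_1,\bfxa_2,\bfxb_2}$, and symmetrically for the $\bfxb$-pair marginal by summing out one of the $\bfxa$-coordinates. (Alternatively, item (2) follows instantly from item (1): full symmetry under coordinate swaps forces the two specified partial marginals to agree.)

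For item (3), I would prove the two set equalities by mutual inclusion. For $\gooda \supseteq \{\sqrbrkt{\distraabb}_{\bfxa_1,\bfxa_2,\bfxb_1} : \distraabb\in\goodab\}$: the computation in item (2) shows this projection equals $\sum_i \lambda_i P_{1,i}^{\ot 2}\ot P_{2,i}$, which is exactly the form required by \Cref{def:good_distr} for membership in $\gooda$ — one only needs to check the marginal constraints defining $\cJa$, which hold because $\distraabb\in\cJab$ forces $\sqrbrkt{\distraabb}_{\bfxa_1} = \sqrbrkt{\distraabb}_{\bfxa_2} = \ipdistra$ and $\sqrbrkt{\distraabb}_{\bfxb_1} = \ipdistrb$, and these are preserved under marginalizing out $\bfxb_2$. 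For the reverse inclusion $\gooda\subseteq\{\sqrbrkt{\distraabb}_{\bfxa_1,\bfxa_2,\bfxb_1}:\distraabb\in\goodab\}$: given $\distraab = \sum_i\lambda_i P_{1,i}^{\ot 2}\ot P_{2,i}\in\gooda$, lift it to $\distraabb \coloneqq \sum_i \lambda_i P_{1,i}^{\ot 2}\ot P_{2,i}^{\ot 2}$, which lies in $\goodab$ by definition and projects back to $\distraab$. The case of $\goodb$ is symmetric.

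The routine parts are the linearity arguments and the bookkeeping of which constraints survive marginalization. The one genuinely delicate point is the \emph{strictness} of the inclusions in item (1): one must produce an explicit symmetric self-coupling outside $\goodab$ (and its marginal analogues), i.e., a symmetric joint distribution with matching single-coordinate marginals that admits no representation as a nonnegative mixture of symmetric product tensors $P_1^{\ot 2}\ot P_2^{\ot 2}$. I expect this is where the only real work lies; a correlated symmetric distribution on $\cXa^2\times\cXb^2$ with, say, perfect correlation between $\xa_1$ and $\xa_2$ and between $\xb_1$ and $\xb_2$ (an "extreme" self-coupling) should serve, since forcing such correlation through a mixture of products collapses each $P_{1,i},P_{2,i}$ to a point mass and cannot reproduce a non-degenerate $\ipdistra,\ipdistrb$ unless the coupling is already of the required rank-one-mixture form — and a generic correlation pattern is not.
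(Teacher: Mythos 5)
Your proofs of items (2) and (3) are correct, and your proof of the inclusions in item (1) is also correct: each pure product term $P_{1,i}^{\ot 2}\ot P_{2,i}^{\ot 2}$ is invariant under the three coordinate swaps defining $\symab$, and since those conditions are linear equalities the convex combination inherits them. Both directions of the set equality in item (3) are argued exactly as they should be, including the lift $\distraab\mapsto\sum_i\lambda_i P_{1,i}^{\ot2}\ot P_{2,i}^{\ot2}$ and the check that the marginal constraints defining $\cJa$ resp.\ $\cJab$ are preserved. The paper does not spell out a proof for this proposition, so there is no internal argument to compare against, but what you have is the natural (and essentially forced) argument.

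Your treatment of the one point you flag as delicate, however, contains a real error. You claim that a symmetric self-coupling with perfect correlation between $\xa_1,\xa_2$ and between $\xb_1,\xb_2$ should lie outside $\goodab$ because the decomposition would force each $P_{1,i},P_{2,i}$ to be a point mass. That forcing is correct, but the conclusion is not: \emph{any} such ``diagonal'' coupling \emph{is} a mixture of the resulting Dirac products. Concretely, if
$\distraabb(\xa_1,\xa_2,\xb_1,\xb_2)=\mu(\xa_1,\xb_1)\,\indicator{\xa_1=\xa_2}\,\indicator{\xb_1=\xb_2}$
for an arbitrary coupling $\mu\in\Delta(\cXa\times\cXb)$ with marginals $\ipdistra,\ipdistrb$, then
$\distraabb=\sum_{(a,b)}\mu(a,b)\,\delta_a^{\ot2}\ot\delta_b^{\ot2}\in\goodab$.
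So perfect correlation in both pairs does \emph{not} take you outside $\goodab$, no matter how correlated or anti-correlated $\mu$ is. If strictness of the inclusion is actually intended (the paper's $\subset$ may well just mean $\subseteq$, as in its remark that $\goodab\subset\cogoodab$), a correct witness needs a symmetric distribution whose support pattern cannot be reproduced by any rank-one symmetric pieces. For instance, on $\cXa=\cXb=\{0,1\}$ with $\ipdistra=\ipdistrb$ uniform, take
$\distraabb(\xa_1,\xa_2,\xb_1,\xb_2)=\tfrac14\,\indicator{\xa_1=\xa_2}\,\indicator{\xb_1\ne\xb_2}$.
This lies in $\cSab$, but if it equaled $\sum_i\lambda_iP_{1,i}^{\ot2}\ot P_{2,i}^{\ot2}$ then nonnegativity of the summands and vanishing of $\distraabb$ on the $\xb$-diagonal would force $P_{1,i}(\xa)^2P_{2,i}(\xb)^2=0$ for all $\xa,\xb$ and every $i$ with $\lambda_i>0$, which is impossible for probability distributions. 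So the remainder of the proof is fine, but this specific example should be replaced if you wish to argue strictness.
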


\begin{remark}
Though the good sets $ \goodab,\gooda,\goodb $ are \emph{consistent under projections} (the third property of \Cref{prop:properties_good_distr}), the confusability sets $ \cKab,\cKa,\cKb $ are not. 
Operationally, this is because $ (\vxa_{i_1}, \vxa_{j_1}) $ (or $ (\vxa_{i_1}, \vxb_{j_2}) $) and $ (\vxa_{i_2}, \vxb_{j_1}) $ (or $ (\vxa_{i_2}, \vxb_{j_2}) $) are not necessarily confusable even if $ (\vxa_{i_1}, \vxb_{j_1}) $ and $ (\vxa_{i_2}, \vxb_{j_2}) $ are (for $ i_1\ne i_2 $ and $ j_1\ne j_2 $). 
Therefore, even the second property of \Cref{prop:properties_good_distr} is guaranteed to hold for $ \distraabb\in\cKab $, let alone the third one. 
\end{remark}

\begin{definition}[Co-good tensors]
\label{def:cogood-distr}
\begin{align}
\cogoodab \coloneqq& \curbrkt{
	\distraabb \in \symab \colon 
	\forall P_{\bfxa} \in \Delta(\cXa),
	\forall P_{\bfxb} \in \Delta(\cXb), 
	\inprod{P_{\bfxa}^{\ot2} \ot P_{\bfxb}^{\ot2}}{\distraabb} \ge 0
}, \notag \\
\cogooda \coloneqq& \curbrkt{
	\distraab \in \syma \colon 
	\forall P_{\bfxa} \in \Delta(\cXa),
	\forall P_{\bfxb} \in \Delta(\cXb), 
	\inprod{P_{\bfxa}^{\ot2} \ot P_{\bfxb}}{\distraab} \ge 0 
}, \notag \\
\cogoodb \coloneqq& \curbrkt{
	\distrabb \in \symb \colon 
	\forall P_{\bfxa} \in \Delta(\cXa),
	\forall P_{\bfxb} \in \Delta(\cXb), 
	\inprod{P_{\bfxa} \ot P_{\bfxb}^{\ot2}}{\distrabb} \ge 0 
}. \notag 
\end{align}
\end{definition}

\begin{remark}
\label{rk:cogood-not-distr}
Note that co-good tensors are not necessarily distributions. 
They may have negative entries. 
\end{remark}

\begin{remark}
\label{rk:cogood-subset-good}
It follows from definition that the sets of good distributions are subsets of the corresponding co-good distributions, i.e., 
\begin{align}
\goodab\subset\cogoodab,\quad\gooda\subset\cogooda,\quad\goodb\subset\cogoodb. \notag 
\end{align} 
\end{remark}

\begin{definition}[Dual cone]
\label{def:dual-cone}
The \emph{dual cone} $ \cB^* $ of a cone $ \cB $ in a Hilbert space $ \cH $ is defined as 
$\cB^* \coloneqq \curbrkt{b'\in\cH:\forall b\in\cB,\;\inprod{b}{b'}\ge0}$. 
\end{definition}

\begin{theorem}[Duality]
\label{thm:duality}
The sets $ \goodab $, $ \gooda $ and $ \goodb $ are all closed convex pointed cones with non-empty interior.
Furthermore, the following duality relations hold.
In $ \symab $, $ \goodab $ and $ \cogoodab $ are dual cones of each other.
In $ \syma $, $ \gooda $ and $ \cogooda $ are dual cones of each other.
In $ \symb $, $ \goodb $ and $ \cogoodb $ are dual cones of each other. 
\end{theorem}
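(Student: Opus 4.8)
The plan is to prove the structural claims for $\goodab$ — closed, convex, pointed cone with non-empty interior — and then read off the duality from the bipolar theorem; the two marginal statements follow by the same argument with the generating family $\{P_1^{\ot2}\ot P_2^{\ot2}\}$ replaced by $\{P_1^{\ot2}\ot P_2\}$ for $\gooda$ and by $\{P_1\ot P_2^{\ot2}\}$ for $\goodb$, and I will not repeat them. Throughout, ``cone'' is taken in the homogenized sense: one passes from $\goodab\subseteq\cJab$ (and likewise from $\cogoodab$) to the cone it generates inside the linear span of the symmetric self-couplings equipped with the scale-invariant version of the marginal constraints of \Cref{def:self-coupling} — call this span $\mathcal{H}$. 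Setting up $\mathcal{H}$ cleanly, i.e.\ reconciling the absolute-value marginalization and the $\ell^1$-normalization built into $\Dab$ and $\symab$ with the linear structure that duality needs, is part of the bookkeeping.

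Write $\mathcal{G}:=\{P_1^{\ot2}\ot P_2^{\ot2}:P_1\in\Delta(\cXa),\,P_2\in\Delta(\cXb)\}$, so that (the homogenization of) $\goodab$ equals $\mathrm{cone}(\mathrm{conv}(\mathcal{G}))$ intersected with the linear marginal constraints. Convexity of $\goodab$ is immediate from the definition (concatenate two representations $\sum_i\lambda_iP_{1,i}^{\ot2}\ot P_{2,i}^{\ot2}$). For closedness: $\mathcal{G}$ is compact, being the continuous image of $\Delta(\cXa)\times\Delta(\cXb)$, and $0\notin\mathrm{conv}(\mathcal{G})$ because every element of $\mathcal{G}$, hence of $\mathrm{conv}(\mathcal{G})$, is a probability tensor of $\ell^1$-norm $1$; the conic hull of a compact set whose convex hull avoids the origin is closed (every point is a conic combination of at most $\dim\mathcal{H}+1$ elements of $\mathrm{conv}(\mathcal{G})$ by Carath\'eodory, and one extracts convergent subsequences of the bounded coefficients and of the generators), and intersecting with the closed polyhedral cone of marginal constraints preserves closedness. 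Pointedness holds because every element of $\mathcal{G}$ is entrywise non-negative, so $\goodab$ lies in the non-negative orthant, which contains no line.

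The main obstacle will be non-emptiness of the interior of $\goodab$ in $\mathcal{H}$. It suffices to show that $\mathcal{G}$ linearly spans $\mathcal{H}$, for then any strictly positive convex combination of a spanning subfamily of $\mathcal{G}$ is a relative-interior point of $\goodab$. The span claim rests on the polarization identity $uw^\top+wu^\top=\frac{1}{2}\bigl((u+w)^{\ot2}-(u-w)^{\ot2}\bigr)$, which shows that the pure squares $\{v^{\ot2}:v\in\bR^{\cXa}\}$ linearly span the symmetric matrices, hence $\{P_1^{\ot2}\ot P_2^{\ot2}\}$ spans $\mathrm{Sym}^2(\bR^{\cXa})\ot\mathrm{Sym}^2(\bR^{\cXb})$ as $P_1,P_2$ range over all of $\bR^{\cXa}\times\bR^{\cXb}$; restricting $P_1,P_2$ to the full-dimensional simplices does not shrink the span, since $(P_1,P_2)\mapsto P_1^{\ot2}\ot P_2^{\ot2}$ is polynomial, so a linear functional annihilating its image over a relatively open subset of $\Delta(\cXa)\times\Delta(\cXb)$ must annihilate it everywhere (differentiate at an interior point, or invoke identity of polynomials). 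One then checks that this span, cut down by the symmetry constraints and the homogeneous marginal constraints, is exactly $\mathcal{H}$ — the step where the correct choice of $\mathcal{H}$ is genuinely used.

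Finally, the duality. By \Cref{def:cogood-distr}, $\cogoodab=\{T\in\symab:\inprod{G}{T}\ge0\ \forall G\in\mathcal{G}\}$; since $\inprod{\cdot}{T}$ is linear and $\goodab$ is the (homogenized) conic hull of $\mathcal{G}$, this set equals $\{T:\inprod{S}{T}\ge0\ \forall S\in\goodab\}$, i.e.\ $\cogoodab=\goodab^{*}$ with the dual cone taken inside $\mathcal{H}$ (cf.\ \Cref{def:dual-cone}). Because $\goodab$ is a closed convex cone, the bipolar theorem yields $\cogoodab^{*}=\goodab^{**}=\goodab$, so the two are dual cones of each other. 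The standard finite-dimensional duality dictionary then transfers all structural properties: a closed convex cone is pointed iff its dual has non-empty interior and has non-empty interior iff its dual is pointed, so $\cogoodab$ is likewise a closed convex pointed cone with non-empty interior. The two marginal pairs $(\gooda,\cogooda)$ and $(\goodb,\cogoodb)$ are handled identically.
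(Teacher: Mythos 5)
Your proof takes the same central route as the paper's: both establish $\cogoodab=\goodab^*$ by observing that $\goodab$ is generated by the product self-couplings $\mathcal{G}=\{P_1^{\ot2}\ot P_2^{\ot2}\}$ and expanding any representation $\sum_i\lambda_iP_{1,i}^{\ot2}\ot P_{2,i}^{\ot2}$, then invoke the bipolar theorem for the reverse identification, then read off pointedness and non-empty interior from the finite-dimensional cone-duality dictionary. The difference is in how the structural prerequisites are discharged. The paper justifies closedness of $\goodab$ by appealing to the fact that dual cones are closed — but that remark applies directly to $\cogoodab=\goodab^*$, whereas recognizing $\goodab$ as $\cogoodab^*$ already presupposes $\goodab$ is closed (that is exactly what the bipolar theorem needs), so the argument as written is slightly circular — and it disposes of the non-empty-interior claim with ``one can easily find distributions in the interior.'' You close both gaps directly: closedness from compactness of $\mathcal{G}$ plus a Carath\'eodory/subsequence argument (equivalently, $\goodab=\conv(\mathcal{G})\cap\cJab$ is the intersection of a compact convex set with a closed set, hence compact), and non-empty interior from the polarization identity $uw^\top+wu^\top=\tfrac12((u+w)^{\ot2}-(u-w)^{\ot2})$ showing that pure squares span the symmetric tensors, combined with a polynomial-identity argument to pass from all of $\bR^{\cXa}\times\bR^{\cXb}$ to the simplices. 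You also make explicit the homogenization that the paper leaves implicit: $\goodab$ and $\cogoodab$ as defined sit on the $\ell^1$-normalized slice, so ``cone'' and ``dual cone'' must refer to the cones they generate inside the appropriate linear span $\mathcal{H}$, and the marginal constraints must be scale-invariantized accordingly. In short: same key observation and same duality mechanism, but your version makes the closedness and interior steps self-contained and removes the potential circularity in the paper's closedness remark.
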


\begin{proof}

We first prove the duality relations.
Intuitively, the duality follows since the extremal rays of $ \goodab $ (or $\gooda$, $ \goodb $ respectively) are distributions of the form $ P_{\bfxa}^{\ot2}\ot P_{\bfxb}^{\ot2} $ (or $P_{\bfxa}^{\ot2}\ot P_{\bfxb}$, $P_{\bfxa}\ot P_{\bfxb}^{\ot2}$ respectively). 
Indeed, it follows from \Cref{def:good_distr} that
\begin{align}
\goodab =& \conv\curbrkt{P_{\bfxa}^{\ot2}\ot P_{\bfxb}^{\ot2}:P_{\bfxa}\in\Delta(\cXa),P_{\bfxb}\in\Delta(\cXb)}\cap\cJab, \notag \\
\gooda =& \conv\curbrkt{P_{\bfxa}^{\ot2}\ot P_{\bfxb}:P_{\bfxa}\in\Delta(\cXa),P_{\bfxb}\in\Delta(\cXb)}\cap\cJa, \notag \\
\goodb =& \conv\curbrkt{P_{\bfxa}\ot P_{\bfxb}^{\ot2}:P_{\bfxa}\in\Delta(\cXa),P_{\bfxb}\in\Delta(\cXb)}\cap\cJb, \notag 
\end{align}
where $ \conv\curbrkt{\cdot} $ denotes the convex hull of a set. 
Therefore, one can replace $ \distraabb\in\goodab $ (or $ \distraab $, $ \distrabb $ respectively) in the definition of $ \goodab^* $ (or $ \gooda^* $, $ \goodb^* $ respectively) below with $ P_{\bfxa}^{\ot2}\ot P_{\bfxb}^{\ot2} $ (or $ P_{\bfxa}^{\ot2}\ot P_{\bfxb} $, $ P_{\bfxa}\ot P_{\bfxb}^{\ot2} $ respectively). 
\begin{align}
\goodab^* =& \curbrkt{Q_{\bfxa_1,\bfxa_2,\bfxb_1,\bfxb_2}\in\symab:\forall \distraabb\in\goodab,\;\inprod{\distraabb}{Q_{\bfxa_1,\bfxa_2,\bfxb_1,\bfxb_2}}\ge0}, \notag \\
\gooda^* =& \curbrkt{Q_{\bfxa_1,\bfxa_2,\bfxb}\in\syma:\forall \distraab\in\gooda,\;\inprod{\distraab}{Q_{\bfxa_1,\bfxa_2,\bfxb}}\ge0}, \notag \\
\goodb^* =& \curbrkt{Q_{\bfxa,\bfxb_1,\bfxb_2}\in\symb:\forall\distrabb\in\goodb,\;\inprod{\distrabb}{Q_{\bfxa,\bfxb_1,\bfxb_2}}\ge0}. \notag 
\end{align}
After the replacement, we get exactly $ \goodab $ (or $ \gooda $, $ \goodb $, respectively). 

To formalize this intuition, we prove two-sided set inclusions for $ \cogoodab $ and $ \cogooda $. 
The proof for $ \cogoodb $ is the same as that for $ \cogooda $ up to change of notation.

We first prove $ \cogoodab = \goodab^* $. 
\begin{enumerate}
	\item[$\subseteq$.] 
	Let $ Q_{\bfxa_1,\bfxa_2,\bfxb_1,\bfxb_2}\in\cogoodab $. 
	Let $ \distraabb = \sum_{i = 1}^k\lambda_i P_{1,i}^{\ot2}\ot P_{2,i}^{\ot2}\in\goodab $. 
	By \Cref{def:cogood-distr}, we have $ \inprod{Q_{\bfxa_1,\bfxa_2,\bfxb_1,\bfxb_2}}{P_{1,i}^{\ot2}\ot P_{2,i}^{\ot2}}\ge0 $ for all $ i\in[k] $. 
	Therefore, $ \inprod{\distraabb}{Q_{\bfxa_1,\bfxa_2,\bfxb_1,\bfxb_2}}\ge0 $, which means $ Q_{\bfxa_1,\bfxa_2,\bfxb_1,\bfxb_2}\in\goodab^* $. 
	This proves $ \cogoodab\subseteq\goodab^* $.

	\item[$\supseteq$.]
	Let $ Q_{\bfxa_1,\bfxa_2,\bfxb_1,\bfxb_2}\in\goodab^* $. 
	By \Cref{def:dual-cone}, for any $ P_{\bfxa}\in\Delta(\cXa) $ and $ P_{\bfxb}\in\Delta(\cXb) $, we have $ \inprod{Q_{\bfxa_1,\bfxa_2,\bfxb_1,\bfxb_2}}{P_{\bfxa}^{\ot2}\ot P_{\bfxb}^{\ot2}}\ge0 $ since $ P_{\bfxa}^{\ot2}\ot P_{\bfxb}^{\ot2}\in\goodab $. 
	Therefore, $ Q_{\bfxa_1,\bfxa_2,\bfxb_1,\bfxb_2}\in\cogoodab $ and $ \goodab^*\subseteq\cogoodab $. 
\end{enumerate}

We then prove $ \cogooda = \gooda^* $ in the same way.
\begin{enumerate}
	\item[$\subseteq$.]
	Let $ Q_{\bfxa_1,\bfxa_2,\bfxb}\in\cogoodab $. 
	Let $ \distraab = \sum_{i = 1}^k\lambda_i P_{1,i}^{\ot2}\ot P_{2,i}\in\goodab $.
	By \Cref{def:cogood-distr}, for each $ i\in[k] $, we have $ \inprod{Q_{\bfxa_1,\bfxa_2,\bfxb}}{P_{1,i}^{\ot2}\ot P_{2,i}}\ge0 $.
	Therefore, $ \inprod{\distraab}{Q_{\bfxa_1,\bfxa_2,\bfxb}}\ge0 $, which means $ Q_{\bfxa_1,\bfxa_2,\bfxb}\in\gooda^* $. 
	This proves $ \cogooda\subseteq\gooda^* $. 
	\item[$\supseteq$.]
	Let $ Q_{\bfxa_1,\bfxa_2,\bfxb}\in\gooda^* $. 
	By \Cref{def:dual-cone}, for any $ P_{\bfxa}\in\Delta(\cXa) $ and $ P_{\bfxb}\in\Delta(\cXb) $, we have $ \inprod{Q_{\bfxa_1,\bfxa_2,\bfxb}}{P_{\bfxa}^{\ot2}\ot P_{\bfxb}}\ge0 $ since $ P_{\bfxa}^{\ot2}\ot P_{\bfxb}\in\gooda $. 
	Therefore, $ Q_{\bfxa_1,\bfxa_2,\bfxb}\in\cogooda $ and $ \gooda^*\subseteq\cogoodab $. 
\end{enumerate}
This finishes the proof for duality.

The claimed convexity and conic property of $\cogoodab$, $\cogooda$ and $\cogoodb$ follow directly from \Cref{def:cogood-distr}.
The closedness of $ \goodab$, $\gooda$ and $\goodb $ follows from the fact that the dual cone of any convex cone is closed. 
One can easily find distributions that are in the interior of the cones under consideration. 
The pointedness of $ \goodab$, $\gooda$ and $\goodb $ follows from nonnegativity of the entries of their elements. 
Finally, the pointedness of $ \cogoodab$, $\cogooda$ and $\cogoodb $ follows from the fact that the dual cone of any convex cone with nonempty interior is pointed. 
\end{proof}

\section{A characterization of the shape of capacity region}
\label{sec:our_results}


\begin{theorem}
\label{thm:our-results}
Fix a pair of input distributions $ (\ipdistra, \ipdistrb) \in \ipconstra\times\ipconstrb $. 

\begin{enumerate}
	\item\label[case]{itm:result-1} If $ \good\ne\emptyset $, then the capacity region contains rate pairs $ (R_1,R_2) $ such that $ R_1>0,R_2>0 $ or $ R_1>0,R_2=0 $ or $ R_1=0,R_2>0 $ or $ R_1=0,R_2=0 $.
	\item\label[case]{itm:result-2} If $\good=\emptyset$, $ \gooda\setminus\cKa\ne\emptyset $ and $ \goodb\setminus\cKb\ne\emptyset $, then the capacity region only contains rate pairs $ (R_1,R_2) $ such that $ R_1>0,R_2=0 $ or $ R_1=0,R_2>0 $ or $ R_1=0,R_2=0 $.
	\item\label[case]{itm:result-3} If $ \gooda\setminus\cKa\ne\emptyset $ and $ \gooda\setminus\cKb=\emptyset $, then the capacity region only contains rate pairs $ (R_1,R_2) $ such that $ R_1>0,R_2=0 $ or $ R_1=0,R_2=0 $.
	\item\label[case]{itm:result-4} If $ \gooda\setminus\cKa=\emptyset $ and $ \gooda\setminus\cKb\ne\emptyset $, then the capacity region only contains rate pairs $ (R_1,R_2) $ such that $ R_1=0,R_2>0 $ or $ R_1=0,R_2=0 $.
	\item\label[case]{itm:result-5} If $ \gooda\setminus\cKa=\emptyset $ and $ \gooda\setminus\cKb=\emptyset $, then the capacity region only contains $ (0,0) $.
\end{enumerate}
\end{theorem}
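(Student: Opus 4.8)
The plan is to prove the direct (achievability) and converse parts separately and to assemble the five cases from four building blocks, using throughout that $(0,0)$ is achievable with $M_1=M_2=1$ and that the capacity region is downward closed (pass to a subcode). The blocks are: \textbf{(A1)} if $\good\ne\emptyset$ then some $(R_1,R_2)$ with $R_1,R_2>0$ is achievable; \textbf{(A2)} if $\gooda\setminus\cKa\ne\emptyset$ then some $(R_1,0)$ with $R_1>0$ is achievable, and symmetrically $\goodb\setminus\cKb\ne\emptyset$ gives some $(0,R_2)$ with $R_2>0$; \textbf{(C1)} if $\good=\emptyset$ then no $(R_1,R_2)$ with $R_1,R_2>0$ is achievable; \textbf{(C2)} if $\gooda\setminus\cKa=\emptyset$ then every achievable pair has $R_1=0$, and symmetrically for the second user. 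Granting these, Case~1 follows from (A1) plus downward closure; Case~2 from (A2) in both directions together with (C1); Cases~3 and 4 from (A2) in the nonempty direction and (C2) in the empty direction (which also forces $\good=\emptyset$, so (C1) is subsumed); Case~5 from (C2) in both directions. (In Cases~3--5 we read $\goodb\setminus\cKb$ wherever $\gooda\setminus\cKb$ appears, the only reading consistent with the tensor shapes.)

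For the achievability blocks I would use random coding with expurgation, as summarised in \Cref{thm:achievability} and refined in \Cref{lem:inner_bound_prod_distr}. Given $\distraabb\in\good$, write it via \Cref{def:good_distr} as $\distraabb=\sum_{i=1}^k\lambda_iP_{1,i}^{\ot2}\ot P_{2,i}^{\ot2}$, split the $n$ coordinates into blocks of lengths $\lambda_i n$, and in block $i$ draw each codeword of transmitter~$1$ (resp.\ $2$) uniformly from the $P_{1,i}$-type class (resp.\ $P_{2,i}$-type class), independently across codewords and blocks. By \Cref{lem:type-class-conc}, \Cref{lem:chernoff} and \Cref{lem:type-concac}, for any fixed distinct index pairs the joint type of the corresponding quadruple, and its $(\bfxa_1,\bfxa_2,\bfxb_1)$- and $(\bfxa_1,\bfxb_1,\bfxb_2)$-marginals, concentrate around $\distraabb$ and the corresponding marginals of $\distraabb$, all of which lie outside the closed sets $\cKab$, $\cKa$, $\cKb$ by the definition of $\good$; a union bound over message tuples followed by a Markov/expurgation step then yields, via \Cref{claim:distributional-nonconf}, a zero-error code of positive rates, and replacing the union bound by Sanov's theorem (\Cref{lem:sanov}) produces the sharper inner bound. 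For (A2) the idle transmitter keeps a single codeword (so its rate is $0$) whose block-$i$ segment is a fixed word of type $P_{2,i}$; then \Cref{cond:zero-error-joint} and \Cref{cond:zero-error-marg2} of \Cref{claim:operational-nonconf} are vacuous and only the first-marginal condition, governed by $\gooda\setminus\cKa$, must be enforced by expurgation.

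For the converse the main instrument is a generalised Plotkin bound via double counting, powered by the good/co-good duality of \Cref{thm:duality}. From a zero-error code of positive rate, apply the constant-composition reduction \Cref{lem:cc-reduction} to both codebooks and then the Ramsey theorems \Cref{lem:hypergraph_ramsey,thm:regular-ramsey} to extract subcodes on which every admissible quadruple $(\vxa_{i_1},\vxa_{i_2},\vxb_{j_1},\vxb_{j_2})$, $i_1\ne i_2$, $j_1\ne j_2$, has one common joint type $\distraabb$, and every admissible triple has one common type $\distraab$ or $\distrabb$; the extracted subcodes still have sizes tending to infinity (though no longer exponentially, which is harmless), and by \Cref{claim:distributional-nonconf} their common types avoid $\cKab$, $\cKa$, $\cKb$. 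Averaging the joint type over all index quadruples gives the identity
\begin{align}
\frac{1}{M_1^2M_2^2}\sum_{i_1,i_2,j_1,j_2}\tau_{\vxa_{i_1},\vxa_{i_2},\vxb_{j_1},\vxb_{j_2}}=\frac{1}{n}\sum_{t=1}^{n}\paren{P^{(1)}_t}^{\ot2}\ot\paren{P^{(2)}_t}^{\ot2}, \notag
\end{align}
where $P^{(r)}_t$ is the empirical distribution of the $t$-th column of transmitter~$r$'s codebook; pairing both sides with an arbitrary co-good tensor $Q\in\cogoodab$ (\Cref{def:cogood-distr}) makes the right side nonnegative, so isolating the off-diagonal quadruples on the left and letting the code sizes grow yields $\inprod{\distraabb}{Q}\ge0$ for all such $Q$, whence $\distraabb\in\goodab$ by \Cref{thm:duality}, and likewise $\distraab\in\gooda$, $\distrabb\in\goodb$ from the analogous identities in which the idle transmitter contributes a single fixed column vector. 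Since Ramsey also forces the relevant marginals of $\distraabb$ to coincide, $\distraabb\in\good$, contradicting $\good=\emptyset$ and proving (C1); running the same argument with a single fixed codeword for the idle transmitter exhibits an element of $\gooda\setminus\cKa$ (resp.\ $\goodb\setminus\cKb$), proving (C2) by contraposition.

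I expect the converse to be the main obstacle, for two reasons. First, turning the double-counting limit into a genuine contradiction requires carefully controlling how non-confusability of finite-blocklength codes passes to the limiting distribution --- i.e.\ bookkeeping around the closure/interior structure of the confusability sets and the net lemma \Cref{lem:bound-net} --- which is precisely why the constant-composition reduction, the Ramsey extraction and the good/co-good duality are all needed, and why the case analysis is delicate. Second, as emphasised in the introduction, the marginal converse (C2) does not reduce in a black-box way to the point-to-point result of \cite{wbbj-2019-omni-avc}: fixing one transmitter's codeword still leaves the other transmitter's marginal confusability entangled with the channel and with that fixed codeword in a way that has no point-to-point analogue, so it must be argued on its own terms.
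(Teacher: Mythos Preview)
Your overall architecture matches the paper's: achievability via coded time-sharing and Chernoff/Sanov plus expurgation, and the converse via Ramsey extraction followed by the good/co-good duality and a Plotkin-type double count. The achievability blocks (A1)--(A2) are essentially what the paper does in \Cref{sec:achievability}, and your reading of how the five cases assemble from (A1), (A2), (C1), (C2) is correct.

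There is, however, a genuine gap in your converse sketch. After the Ramsey extraction you obtain a common joint type $\distraabb$ for ordered quadruples $i_1<i_2$, $j_1<j_2$, and your double count yields $\inprod{\distraabb}{Q}\ge 0$ for every $Q\in\cogoodab$. Because every $Q\in\cogoodab$ lies in $\symab$, this inequality is really a statement about the \emph{symmetrization} $\oldistraabb$ of $\distraabb$: one has $\inprod{\distraabb}{Q}=\inprod{\oldistraabb}{Q}$, and the duality of \Cref{thm:duality} then gives only $\oldistraabb\in\goodab$, not $\distraabb\in\goodab$. To reach a contradiction with $\good=\emptyset$ you must also place $\oldistraabb$ outside $\cKab$ (and its marginals outside $\cKa,\cKb$). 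You know the \emph{types} $\tau_{\vxa_{i_1},\vxa_{i_2},\vxb_{j_1},\vxb_{j_2}}$ avoid $\cKab$, hence $\distraabb\notin\cKab$; but the four permuted copies of $\distraabb$ all lie outside the convex, permutation-invariant set $\cKab$, and convexity runs the wrong way to force their average $\oldistraabb$ outside as well. The missing step is to show that $\distraabb$ is \emph{approximately symmetric}, so that $\oldistraabb$ is close to $\distraabb$ and therefore also avoids $\cKab$. The paper accomplishes this in \Cref{sec:converse_asymm_case} via \komlos's theorem (\Cref{lem:komlos}): if the Ramsey-extracted $\distraabb$ were noticeably asymmetric, the subcode sizes would already be bounded by a constant. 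Only after reducing to the symmetric case does the double count go through. Your sketch omits this ingredient entirely, and without it the step ``whence $\distraabb\in\goodab$'' and the subsequent ``$\distraabb\in\good$'' do not follow.

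A secondary point: for (C1) the paper further splits into the subcases $\goodab\setminus\cKab=\emptyset$ and $\goodab\setminus\cKab\ne\emptyset$ (\Cref{sec:symm-g12-minus-k12-empty,sec:symm-g12-minus-k12-nonempty}); in the latter, the contradiction comes not from $\oldistraabb\notin\goodab$ but from chasing the marginals of $\oldistraabb$ into $\gooda\setminus\cKa$ and $\goodb\setminus\cKb$ and invoking $\good=\emptyset$. Your one-line ``Ramsey also forces the relevant marginals of $\distraabb$ to coincide'' gestures at this, but the actual argument needs the marginal non-confusability of the code and again the approximate symmetry of $\distraabb$, so the \komlos\ step is load-bearing here too.
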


\begin{table}[htbp]
\centering
\begin{tabular}{ccccc}
\hline
Cases & $ \good\ne\emptyset $ & $ \gooda\setminus\cKa\ne\emptyset $ & $ \goodb\setminus\cKb\ne\emptyset $ & Capacity region \\ \hline
Case (1) & {\color{red}$\checkmark$} & {\color{blue}$\checkmark$} & {\color{blue}$\checkmark$} & $ (+,+),(+,0),(0,+),(0,0) $ \\
Case (2) & $\times$ & $\checkmark$ & $\checkmark$ & $(+,0),(0,+),(0,0)$ \\
Case (3) & {\color{blue}$\times$} & $\checkmark$ & {\color{red}$\times$} & $(+,0),(0,0)$ \\
Case (4) & {\color{blue}$\times$} & {\color{red}$\times$} & $\checkmark$ & $(0,+),(0,0)$ \\
Case (5) & {\color{blue}$\times$} & {\color{red}$\times$} & {\color{red}$\times$} & $(0,0)$ \\
\hline
\end{tabular}
\caption{A characterization of the \emph{shape} of the capacity region of any omniscient adversarial two-user MAC. Note that the condition $ \good\ne\emptyset $ implies both $ \gooda\setminus\cKa\ne\emptyset $ and $ \goodb\setminus\cKb\ne\emptyset $. Indeed, the former condition is strictly stronger. In each case, we highlight the conditions in colors in such a way that {\color{red}red} conditions imply {\color{blue}blue} conditions.
Note that the table above covers all possible cases.}
\label{tab:our-results}
\end{table}

The proof of the above characterization is comprised of two parts: achievability (\Cref{thm:achievability}) and converse (\Cref{thm:converse}).

\begin{theorem*}[Achievability, restatement of \Cref{thm:achievability}]
Fix input distributions $ (\ipdistra, \ipdistrb)\in\ipconstra\times\ipconstrb $. 
\begin{enumerate}
	\item If $ \good \ne\emptyset $, then there exist achievable rate pairs $ (R_1, R_2) $ such that $ R_1>0,R_2>0 $.
	\item If $ \gooda\setminus\cKa\ne\emptyset $, then there exist achievable rate pairs $ (R_1, 0) $ such that $ R_1>0 $.
	\item If $ \goodb\setminus\cKb\ne\emptyset $, then there exist achievable rate pairs $ (0,R_2) $ such that $ R_2>0 $.
\end{enumerate}
\end{theorem*}

Various achievability results are proved in \Cref{sec:achievability}. 
Firstly, in \Cref{lem:achievability-prod}, we prove the existence of positive rates using \emph{product} distributions. 
Next, in \Cref{thm:achievability}, we refine this result using \emph{mixtures} of product distributions, i.e., good distributions (\Cref{def:good_distr}). 
Finally, in \Cref{lem:inner_bound_prod_distr} we present \emph{inner bounds} on the capacity region using product distributions.

\begin{theorem}[Converse]
\label{thm:converse}
Fix a pair of input distributions $ (\ipdistra, \ipdistrb) \in \ipconstra\times\ipconstrb $. 

\begin{enumerate}
	\item\label[case]{itm:conv-pos-pos} If $ \good = \emptyset $, then there does not exist achievable rate pair $ (R_1, R_2) $ such that $ R_1>0,R_2>0 $. 
	\item\label[case]{itm:conv-pos-zero} If $ \gooda\setminus\cKa=\emptyset $, then there does not exist achievable rate pair $ (R_1, R_2) $ such that $ R_1>0 $.
	\item\label[case]{itm:conv-zero-pos} If $ \goodb\setminus\cKb=\emptyset $, then there does not exist achievable rate pair $ (R_1, R_2) $ such that $ R_2>0 $.
\end{enumerate}
\end{theorem}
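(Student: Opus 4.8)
\emph{Proof plan.}
All three parts assert that a hypothesis on the good/confusability sets forces every zero-error code of the relevant user(s) to have subexponential size, hence the corresponding rate(s) to vanish. I would prove them by a common two-step recipe: first \emph{preprocess} an assumed good code into an (approximately) equicoupled form, then run a \emph{Plotkin-type double count} against a co-good tensor furnished by the duality of \Cref{thm:duality}. By \Cref{lem:cc-reduction} we may assume throughout that codewords are constant-composition with the fixed types $\ipdistra,\ipdistrb$ (at the cost of a polynomial factor in size), and by \Cref{claim:distributional-nonconf} the zero-error hypothesis is precisely that the relevant joint types of tuples of codewords avoid $\cKab,\cKa,\cKb$.

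\emph{The marginal cases \Cref{itm:conv-pos-zero} and \Cref{itm:conv-zero-pos}} are mirror images, so consider the first. Suppose a rate pair with $R_1>0$ is achievable; then there is a zero-error pair $(\cCa,\cCb)$ with $|\cCa|$ exponential in $n$. Fix an arbitrary ``side word'' $\vxb_0\in\cCb$ and put $b_k:=\vxb_0(k)$. Colour each unordered pair $\{\vxa_{i_1},\vxa_{i_2}\}\subseteq\cCa$ by the nearest point, in a fixed finite $\eta$-net of $\Delta(\cXa^2\times\cXb)$, to the triple type $\tau_{\vxa_{i_1},\vxa_{i_2},\vxb_0}$; since the palette is constant-sized, the finite (graph) Ramsey theorem \Cref{thm:regular-ramsey} extracts $\cCa'\subseteq\cCa$ with $|\cCa'|\to\infty$ on which all these triple types lie within $\eta$ of one symmetric $V\in\cSa$. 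Zero error gives $\tau_{\vxa_{i_1},\vxa_{i_2},\vxb_0}\notin\cKa$, so (using $\gooda\subseteq\cKa$) these types, hence $V$ up to $\eta$, avoid $\gooda$; granting $V\notin\gooda$ (immediate when $V$ stays away from $\partial\cKa$; the boundary case is the delicate point flagged below), the bipolar theorem together with the identity $\cogooda=\gooda^{*}$ in $\syma$ from \Cref{thm:duality} yields a co-good tensor $T\in\cogooda$ with $\inprod{V}{T}<0$. Writing $Q_k\in\Delta(\cXa)$ for the empirical distribution of the $k$-th column of $\cCa'$ and invoking the defining inequality of $\cogooda$ with $P_{\bfxa}=Q_k$, $P_{\bfxb}=\delta_{b_k}$,
\begin{align}
0\le\frac1n\sum_{k=1}^{n}\inprod{Q_k^{\ot2}\ot\delta_{b_k}}{T}&=\frac{1}{|\cCa'|^{2}}\sum_{i,j}\inprod{\tau_{\vxa_i,\vxa_j,\vxb_0}}{T}\notag\\
&=\frac{|\cCa'|-1}{|\cCa'|}\inprod{V}{T}+\frac{1}{|\cCa'|^{2}}\sum_{i}\inprod{\tau_{\vxa_i,\vxa_i,\vxb_0}}{T}+O(\eta). \notag
\end{align}
Since $\inprod{V}{T}<0$ while the remaining two terms are $O(1/|\cCa'|)+O(\eta)$, this forces $|\cCa'|$ to be bounded by a constant once $\eta$ is small, contradicting $|\cCa'|\to\infty$. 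I emphasise that this is \emph{not} the point-to-point bound of \cite{wbbj-2019-omni-avc} applied verbatim: $\cKa$ lives over $\cXa^{2}\times\cXb$, and the fixed side word $\vxb_0$ — which may vary arbitrarily across coordinates — must be threaded through both the Ramsey step and the double count; this is exactly the ``marginal confusability'' phenomenon advertised in the abstract. The case \Cref{itm:conv-zero-pos} is symmetric, with $\cKb,\goodb,\cogoodb$ in place of $\cKa,\gooda,\cogooda$.

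\emph{The joint case \Cref{itm:conv-pos-pos}} is the heaviest. Assume $\good=\emptyset$ and, for contradiction, that a rate pair with $R_1,R_2>0$ is achievable, so $\cCa,\cCb$ are both exponential. After constant-composition-reducing both, a bipartite/\emph{hypergraph} Ramsey argument (\Cref{lem:hypergraph_ramsey}), again colouring by a fixed finite net so the palette stays constant-sized, yields subcodes $\cCa',\cCb'$ — both of size tending to infinity — on which the quadruple type $\tau_{\vxa_{i_1},\vxa_{i_2},\vxb_{j_1},\vxb_{j_2}}$ (for $i_1\ne i_2$, $j_1\ne j_2$) is within $\eta$ of one symmetric $\Pi\in\cSab$, and the two marginal triple types are within $\eta$ of the corresponding projections $\Pi_1\in\cSa$, $\Pi_2\in\cSb$ of $\Pi$; zero error forces $\Pi\notin\cKab$, $\Pi_1\notin\cKa$, $\Pi_2\notin\cKb$. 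On the other hand the column-wise average $G:=\frac1n\sum_k (Q^{1}_k)^{\ot2}\ot(Q^{2}_k)^{\ot2}$ lies in $\goodab$, has its two marginals in $\gooda,\goodb$ by \Cref{prop:properties_good_distr}, and — since a uniformly random quadruple of codeword-indices from $\cCa'\times\cCb'$ has all four distinct with probability $1-o(1)$ — is $(\eta+o(1))$-close to $\Pi$, with the analogous statements for the marginals. But ``$\good=\emptyset$'' says precisely that no $G\in\goodab$ can simultaneously have $G\notin\cKab$, first marginal $\notin\cKa$, and second marginal $\notin\cKb$; hence at least one of $G\in\cKab$, first marginal of $G$ in $\cKa$, second marginal of $G$ in $\cKb$ must hold. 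In each of these three sub-cases the duality of \Cref{thm:duality} (used in $\symab$, $\syma$, or $\symb$) supplies a co-good tensor separating the corresponding structured type ($\Pi$, $\Pi_1$, or $\Pi_2$) from the corresponding good cone, and the same double count — now over ordered quadruples of codeword-indices, with the various ``diagonal'' terms ($i_1=i_2$, or $j_1=j_2$, or both) controlled via the transpositional and nontriviality properties of \Cref{prop:prop-conf-set} together with the marginal zero-error conditions — forces $|\cCa'|\cdot|\cCb'|$ (or $|\cCa'|$, or $|\cCb'|$) to be bounded, the desired contradiction.

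\emph{Main obstacle.} The delicate point — and what I expect to be the crux — is making the Plotkin bound \emph{uniform} along the code sequence: the separating margin $|\inprod{V}{T}|$ is essentially $\dist(V,\gooda)$, which could shrink to $0$ as the structured type $V$ approaches the boundary of $\gooda$ (equivalently, since $\gooda\subseteq\cKa$, of $\cKa$). Showing that the good cone cannot ``hug'' the boundary of the confusability set — or bypassing this by an extra case split or a better choice of side word, and in the joint case tracking which of the three confusability conditions binds and matching each flavour of collided-index quadruple type to the right invariance of $\cKab,\cKa,\cKb$ — is exactly the sort of careful case analysis the paper announces. A more routine but essential ingredient is that the equicoupled extraction must retain a code of growing size while only a \emph{constant} number of Ramsey colours is affordable; this is why one colours by a fixed net of the type simplex rather than by the (super-constantly many) types themselves.
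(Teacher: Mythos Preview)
Your marginal-case sketch is essentially the paper's argument (\Cref{sec:conv-pos-zero}), but you assert without justification that the equicoupled limit $V$ lies in $\cSa$. Ramsey only pins the \emph{ordered} types $\tau_{\vxa_{i_1},\vxa_{i_2},\vxb_0}$ for $i_1<i_2$ near some $V$; the reversed types sit near the transpose, and your double count over all ordered pairs silently uses $V=V^{\top}$. The paper closes this with K\'oml\'os's theorem (\Cref{lem:komlos}): if the asymmetry exceeds $\alpha$ the subcode is already $O(\alpha^{-2})$, otherwise one symmetrizes to $\oldistraab$ and proceeds. The same omission recurs when you declare $\Pi\in\cSab$; the paper devotes all of \Cref{sec:converse_asymm_case} (eight sub-cases) to this.

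The real gap is the joint-case logic. You split on which confusability set catches the column average $G$, then invoke duality to get $T\in\cogoodab$ (or $\cogooda,\cogoodb$) with $\inprod{\Pi}{T}<0$ (or $\inprod{\Pi_1}{T}<0$, $\inprod{\Pi_2}{T}<0$). But duality separates points \emph{outside the good cone}, not outside the confusability set: from, say, $\mathrm{marg}_1(G)\in\cKa$ you cannot conclude $\Pi_1\notin\gooda$. Concretely, $\good=\emptyset$ is perfectly compatible with $\goodab\setminus\cKab\ne\emptyset$ \emph{and} $\gooda\setminus\cKa\ne\emptyset$ \emph{and} $\goodb\setminus\cKb\ne\emptyset$ (this is exactly \Cref{itm:result-2} of \Cref{thm:our-results}); in that regime you may well have $\Pi\in\goodab$, $\Pi_1\in\gooda$, $\Pi_2\in\goodb$ simultaneously, and no separating $T$ exists in any of the three ambient spaces --- your double count never starts.

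The paper's fix is to split on the \emph{sets}, not on $G$. If $\goodab\subseteq\cKab$ (with the strict separation \Cref{eqn:kab-gab-separation}), then $\Pi\notin\cKab$ genuinely forces $\Pi$ bounded away from $\goodab$, and one double count in $\symab$ bounds $\max\{M_1',M_2'\}$ (\Cref{sec:symm-g12-minus-k12-empty}). If instead $\goodab\setminus\cKab\ne\emptyset$, the contrapositive of that same double count says that for large subcodes $\oldistraabb$ must be \emph{close to} $\goodab$; combined with $\oldistraabb$ close to $\cJab\setminus\cKab$ (zero error), one may treat $\oldistraabb\in\goodab\setminus\cKab$. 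The paper then introduces the projected sets $\wtgooda,\wtgoodb$ (\Cref{eqn:def-wtgooda,eqn:wtgoodb-alter-def}) and runs a second nested split: either $\wtgooda\subseteq\cKa$, contradicting $\oldistraab$ close to $\cJa\setminus\cKa$; or $\wtgooda\setminus\cKa\ne\emptyset$, in which case $\good=\emptyset$ forces $\wtgoodb\subseteq\cKb$, contradicting $\oldistrabb$ close to $\cJb\setminus\cKb$ (\Cref{sec:symm-g12-minus-k12-nonempty}). No further double count is run in this branch --- the contradiction is direct. Your trichotomy on $G$ conflates ``which confusability set absorbs the good witness $G$'' with ``which good cone excludes the structured type $\Pi$''; these are unrelated, and it is the latter that drives the Plotkin argument.
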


\begin{proof}
\Cref{itm:result-1} is proved in \Cref{sec:conv-pos-pos}.
\Cref{itm:result-2,itm:result-3} are proved in \Cref{sec:conv-pos-zero}.
\end{proof}

\begin{observation}
\label{obs:ach-rect}
For an omniscient two-user adversarial MAC, for $ i=1,2 $, if a rate $ R_i>0 $ is achievable for transmitter $i$, then any rate $ 0\le R_i'\le R_i $ is also achievable for transmitter $i$. 
\end{observation}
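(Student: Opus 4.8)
The plan is to establish the stronger fact that the capacity region is coordinatewise monotone — if $(R_1,R_2)$ is achievable then so is $(R_1',R_2)$ for every $0\le R_1'\le R_1$, and symmetrically in the second coordinate — by a routine subcode argument: throwing away codewords of one transmitter can only lower that transmitter's rate and can only help reliability, so it cannot hurt the other transmitter's rate.

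I would fix the first coordinate (the second is symmetric). By \Cref{def:ach-rate-cap-region}, achievability of $(R_1,R_2)$ furnishes code pairs $(\cCa^{(i)},\cCb^{(i)})$ of blocklengths $n_i\to\infty$ with decoders $\dec^{(i)}\colon\cY^{n_i}\to[M_1^{(i)}]\times[M_2^{(i)}]$, rates $R(\cCa^{(i)})\ge R_1$, $R(\cCb^{(i)})\ge R_2$, and maximum error $\eps_i=o(1)$. Given $0\le R_1'\le R_1$, put $\wt M_1^{(i)}\coloneqq\lceil 2^{n_i R_1'\log\card{\cXa}}\rceil$; since $R(\cCa^{(i)})\ge R_1$ gives $M_1^{(i)}\ge 2^{n_i R_1\log\card{\cXa}}\ge 2^{n_i R_1'\log\card{\cXa}}$ and $M_1^{(i)}\in\bZ_{\ge1}$, we have $\wt M_1^{(i)}\le M_1^{(i)}$. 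Let $\wt\cCa^{(i)}$ be the subcode encoding any $\wt M_1^{(i)}$ of the $M_1^{(i)}$ messages, keep $\cCb^{(i)}$ and its encoder as is, and let $\wt\dec^{(i)}$ be $\dec^{(i)}$ outputting its value when it lands in $[\wt M_1^{(i)}]\times[M_2^{(i)}]$ and a fixed dummy message otherwise. For each surviving message pair $(\ma,\mb)\in[\wt M_1^{(i)}]\times[M_2^{(i)}]$ the error event for $\wt\dec^{(i)}$ is contained in that for $\dec^{(i)}$ (if $\dec^{(i)}(\vy)\in[\wt M_1^{(i)}]\times[M_2^{(i)}]$ the two decoders agree; otherwise $\dec^{(i)}(\vy)\ne(\ma,\mb)$ already), so, summing the channel law over $\vy$, the per-message error of $(\wt\cCa^{(i)},\cCb^{(i)})$ is at most that of $(\cCa^{(i)},\cCb^{(i)})$; maximizing over the smaller message set and over admissible jamming vectors, the maximum error of $(\wt\cCa^{(i)},\cCb^{(i)})$ is at most $\eps_i=o(1)$. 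Since $R(\wt\cCa^{(i)})=\frac{\log\wt M_1^{(i)}}{n_i\log\card{\cXa}}\ge R_1'$ and $R(\cCb^{(i)})\ge R_2$, this proves $(R_1',R_2)$ achievable; the case $R_1'=0$ is just $\wt M_1^{(i)}=1$.

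There is no genuine obstacle here; the only thing to handle with a little care is the decoder bookkeeping — restricting the codebook must not accidentally \emph{increase} the error of the surviving messages, which is precisely what the containment of error events above guarantees. As a sanity check one can also argue via zero error: by the assumptions in \Cref{sec:additional-technical-assump} everything in the system is deterministic, so $\eps_i\to0$ forces $\eps_i=0$ for all large $i$, and then the claim is immediate from \Cref{claim:operational-nonconf}, whose operational non-confusability conditions are universally quantified over pairs of codewords and hence inherited by every subcode.
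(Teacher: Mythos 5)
Your proof is correct, and it is exactly the argument that the paper's ``Observation'' label silently presupposes: the paper states \Cref{obs:ach-rect} without proof, treating it as self-evident, and the subcode argument you give is the canonical justification. Two small notes. First, your bound $M_1^{(i)}\ge\lceil 2^{n_iR_1'\log\card{\cXa}}\rceil=\wt M_1^{(i)}$ uses the integrality of $M_1^{(i)}$ correctly, and the containment of error events is right — the parenthetical in your writeup states the contrapositive of what you want (``if $\dec^{(i)}(\vy)=(\ma,\mb)$ then $\wt\dec^{(i)}(\vy)=(\ma,\mb)$'' is the cleaner phrasing), but the logic is sound. Second, the ``sanity check'' via zero error and \Cref{claim:operational-nonconf} is actually the shorter and more in-spirit proof given the paper's setting (all maps deterministic, so vanishing max error is zero error, and the universally quantified non-confusability conditions are trivially inherited by any subcode); either route works, and your main argument has the virtue of not relying on the determinism assumption.
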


By \Cref{obs:ach-rect}, if the capacity region contains a rate pair $ (R_1,R_2) $ where $ R_1>0,R_2>0 $, then the rate pairs $ (R_1,0) $ and $ (0,R_2) $ are also in the capacity region. 

\subsection{A remark on nonconvexity of capacity region}
\label{rk:nonconvex}
As suggested by \Cref{thm:our-results}, the capacity region of an adversarial MAC can be \emph{nonconvex}.
E.g., if a MAC satisfies the conditions in \Cref{itm:result-2} of \Cref{thm:our-results}, then the capacity region only consists of two perpendicular line segments and is therefore nonconvex. 
However, the capacity region cannot be an arbitrary nonconvex region. 
Indeed, \Cref{obs:ach-rect} implies that if a rate pair $ (R_1,R_2) $ with $ R_1>0,R_2>0 $ is achievable, then all rate pairs in the (closed) rectangle with vertices $ (0,0),(R_1,0),(0,R_2),(R_1,R_2) $ are also achievable. 

For AVMACs (i.e., the \emph{oblivious} adversarial MACs), the nonconvexity of the capacity region was noted by Gubner--Hughes \cite{gubner-hughes-1995-nonconvex-avmac} and Pereg--Steinberg \cite{pereg-steinberg-2019-avmac} via the example of an (oblivious) erasure MAC. 
As a side note, for AVMACs equipped with common randomness, the capacity region may or may not be convex, depending on how the common randomness is instantiated. 
If each encoder shares an \emph{independent} secret key with the decoder, then the corresponding capacity region, known as the \emph{divided-randomness} capacity region, is not necessarily convex \cite{gubner-hughes-1995-nonconvex-avmac}. 
On the other hand, if all of two encoders and the decoder share the \emph{same} key, then the corresponding capacity region, known as the \emph{random code} capacity region, is always convex \cite{pereg-steinberg-2019-avmac}. 
In our work, we do not equip any party with shared randomness. 
See \cite{pereg-steinberg-2019-avmac} for a more detailed discussion on the nonconvexity of the capacity region of AVMACs. 

\subsection{Comparison of our results with \cite{pereg-steinberg-2019-avmac} on (oblivious) AVMACs}
\label{sec:comparison-our-peregsteinberg}

We compare below our results with the parallel results by Pereg and Steinberg on \emph{oblivious} AVMACs. 
For simplicity, we only compare the characterizations of \emph{positivity} of capacities. 
Specifically, an oblivious AVMAC is a general adversarial MAC with input and state constraints and an oblivious adversary who does \emph{not} know the transmitted sequences from any of the encoders. 
As many other results in the AVC literature, their characterization involves the oblivious analog of confusability known as \emph{symmetrizability}. 
Proper notions of \emph{first marginal symmetrizability}, \emph{second marginal symmetrizability} and \emph{joint symmetrizability} (denoted in their notation by \emph{symmetrizability-$\cXa|\cXb$}, \emph{symmetrizability-$\cXb|\cXa$} and \emph{symmetrizability-$\cXa\times\cXb$} respectively) were introduced and were shown to characterize the capacity positivity. 
See \Cref{tab:pereg-steinberg-results} below. 

\begin{table}[htbp]
\centering
\begin{tabular}{ccccc}
\hline
Cases & non-joint symmetrizability & non-first marginal symmetrizability & non-second marginal symmetrizability & Capacity region \\ \hline
Case (1) & {$\checkmark$} & {$\checkmark$} & {$\checkmark$} & $ (+,+),(+,0),(0,+),(0,0) $ \\
Case (2) & {$\checkmark$} & $\checkmark$ & {$\times$} & $(+,0),(0,0)$ \\
Case (3) & {$\checkmark$} & {$\times$} & $\checkmark$ & $(0,+),(0,0)$ \\
Case (4) & {$\times$} & {$?$} & $?$ & $(0,0)$ \\
Case (5) & {$\checkmark$} & {$\times$} & $\times$ & $(0,0)$ \\
\hline
\end{tabular}
\caption{Results in \cite{pereg-steinberg-2019-avmac} on capacity positivity of oblivious AVMACs. In the table, ``$\checkmark$'' (resp. ``$\times$'') means the corresponding non-symmetrizability condition is satisfied (resp. unsatisfied). Question marks ``$?$'' mean either satisfied or unsatisfied, regardlessly. As noted, non-joint symmetrizability is a necessary condition for any positive achievable rate.}
\label{tab:pereg-steinberg-results}
\end{table}

Intuitively, one should think of symmetrizability as the oblivious analog of confusability defined in \Cref{sec:conf-set}. 
However, in the AVMAC setting, due to the ``independence'' between the jammer and the encoders, the formal definition of symmetrizability does not appear to be a straightforward adjustment of \Cref{def:conf-set}. 
As a result, the characterization of positivity in \cite{pereg-steinberg-2019-avmac} does not exactly parallel ours. 
An informal analogy between the symmetrizability of Pereg and Steinberg's and the confusability of ours is as follows.
Non-first (resp. -second) marginal symmetrizability corresponds to $ \gooda\setminus\cKa \ne\emptyset $ (resp. $ \goodb\setminus\cKb \ne\emptyset $).
Non-joint symmetrizability corresponds to $ \goodab\setminus\cKab \ne\emptyset $. 
However, one gets \emph{wrong} results (for \Cref{itm:result-1,itm:result-2} in particular) if she/he verbatim translates the oblivious results to the omniscient setting using the aforementioned informal correspondence. 

In the AVMAC setting, non-joint symmetrizability is a necessary condition for the existence of $ R_1>0 $ or $ R_2>0 $. 
As a consequence, there does not exist situation where $ R_1>0 $ or $ R_2>0 $ can be achieved separately yet not simultaneously (\Cref{itm:result-2} in \Cref{thm:our-results}). 

In the omniscient setting, the condition that determines the possibility of $ (R_1,R_2) $ with $ R_1>0,R_2>0 $ is in terms of $ \good $ rather than $ \goodab\setminus\cKab $. 
Communication at positive rates for both encoders simultaneously may not be possible even if $ \goodab\setminus\cKab\ne\emptyset $. 
It is possible only if there is a \emph{single} good distribution (as per \Cref{def:good_distr}) that is simultaneously non-jointly symmetrizable and non-marginally symmetrizable (for both transmitters).

\section{Overview of proof techniques}
\label{sec:overview-techniques}
In this section we overview the proof techniques for establishing \Cref{thm:our-results}. 
Since there are cases where both/exactly one/none of the transmitters can achieve positive rates, we have to divide the analysis into several cases. 
Nevertheless, the proofs for different cases share roughly the same structure. 
In what follows, we briefly introduce the ideas behind the achievability part and the converse part separately. 

\subsection{Proof techniques for achievability}
\label{sec:tech-ach}
To show positive achievable rates under the conditions of \Cref{thm:achievability}, we use the standard method of random coding with expurgation. 
The conditions in \Cref{thm:achievability} can be intuitively interpreted as the existence of \emph{good} distributions (according to \Cref{def:good_distr}) that are not \emph{bad} (according to \Cref{def:conf-set}). 

If one is able to find a \emph{product} distribution (which is always good by definition) that is outside the confusability sets, then one can simply sample positive rate codes whose entries are i.i.d. according to the distribution. 
By concentration of measure, the joint type of any codeword tuple is tightly concentrated around the product distribution. 
In particular, any joint type is outside the confusability sets with high probability. 
Now by large deviation principle, if the code rates are sufficiently small, a union bound over all codeword tuples allows us to conclude that no joint type is confusable and hence the whole code pair attains zero error with high probability.  
This gives \Cref{lem:achievability-prod}. 

\Cref{lem:achievability-prod} can be strengthened in the following two ways. 

Firstly, even if product distributions are confusable, if one can find \emph{mixtures} of product distributions that are outside the confusability sets, then positive rates are still achievable. 
Here the additional idea is \emph{time-sharing}. 
Recall that a good distribution is a convex combination of product distributions.\footnote{Note that importantly, the components of such a convex combination do not have to satisfy the input constraints. This is why it is possible to find mixtures of product distributions that are non-confusable even if all \emph{feasible} product distributions are confusable. See \Cref{rk:coded-time-sharing}. } 
The coefficients of the convex combination can be regarded as giving a time-sharing sequence. 
We then sample random codes in the following way. 
All codewords are chopped up into chunks of lengths proportional to the convex combination coefficients.
Entries of all codewords in a particular chunk are i.i.d. according the corresponding component distribution of the convex combination. 
Effectively it is as if we convexly concatenate multiple codebooks of shorter lengths sampled from different product distributions. 
Again by a Chernoff-union argument, all joint types are tightly concentrated around the mixture distribution provided that the rates are sufficiently small. 
Since the mixture distribution itself is outside the confusability sets, the code pair attains zero error with high probability.
This gives \Cref{thm:achievability}. 
Such a code construction is known as \emph{coded time-sharing} (see \Cref{rk:coded-time-sharing}).

Secondly, by carefully analyzing the large deviation exponent, one can in fact obtain \emph{inner bounds} on the capacity region. 
To this end, one could not simply set the rates to be sufficiently small so as to admit a union bound. 
A standard trick is to \emph{remove} (a.k.a. \emph{expurgate}) one codeword from each confusable pair. 
Using Sanov's theorem (\Cref{lem:sanov}), one can get the exact exponent of the probability of sampling a confusable pair. 
One can then set the rates so as to guarantee that the (expected) number of expurgated codewords is at most, say, half of the code size. 
This ensures that the expurgation process does not hurt the rate. 
This gives \Cref{lem:inner_bound_prod_distr}. 
We remark that if one wishes to achieve a rate pair with two positive rates, then the above argument requires one to expurgate codewords that contribute to (at least one of) jointly confusable pairs, first marginally confusable pairs or second marginally confusable pairs. 
We believe that such an expurgation strategy is pessimistic and higher rates may be obtained using more clever expurgation strategies. 
See \Cref{itm:open-better-inner-bounds} in \Cref{sec:concl-rmk-open-prob}. 

\subsection{Proof techniques for converse}
\label{sec:tech-conv}
The converse part is considerably more involved. 
At a high level, it is inspired by the classical Plotkin bound in coding theory and follows a similar structure as \cite{wbbj-2019-omni-avc}. 
However, due to the multiuser nature of the channel, the case analysis is more delicate. 

The basic proof strategy is comprised of the following components. 
Given any code pair $ (\cCa,\cCb) $ that attains zero error, we would like to show that they have zero rate(s) once the conditions in \Cref{thm:converse} are satisfied. 
To this end, we follow the steps below. 
\begin{enumerate}
	\item First, we extract a subcode pair $ (\cCa',\cCb') $ which has nontrivial sizes and is ``equicoupled''. 
	More specifically, for one thing, the code sizes are mildly large in the sense that $ |\cC_i'|\xrightarrow{|\cC_i|\to\infty}\infty $ for $ i = 1,2 $. 
	In fact $ |\cC_i'| = f(|\cC_i|) $ where $ f(\cdot) $ is the inverse Ramsey number which grows extremely slow. 
	However, this is enough for our purposes since it will be ultimately proved that $ \max\curbrkt{|\cCa'|,|\cC'|}\le C $ for some \emph{constant} $ C>0 $ independent of $n$. 
	Then $ \max\curbrkt{|\cCa|,|\cCb|}\le f^{-1}(C) $ which is a huge constant. 
	However, this is already more than sufficient to imply zero rates. 
	For another (more important) thing, the subcode pair we obtained is highly structured in the sense that the joint type of any codeword tuple from the subcode pair is approximately the same (hence the subcodes are at times called \emph{equicoupled} in this paper). 
	This follows from Ramsey's theorem (\Cref{lem:hypergraph_ramsey}). 
	At the cost of losing rates (which is actually fine), we localize some highly regular structures into a tiny subcode pair.

	\item We then focus on the subcode pair. 
	It is unclear whether or not the distribution that all joint types are concentrated around is symmetric (as per \Cref{def:sym_distr}). 
	However, viewing the codebook as a sequence of random variables, we can show (in \Cref{sec:converse_asymm_case}) that the size of the equicoupled subcode must be small if the distribution is asymmetric. 
	This, after some preprocessing of the sequence of random variables, follows from a classical theorem by \komlos (\Cref{lem:komlos}). 
	
	\item Now we assume that the equicoupled subcode is equipped with a symmetric distribution. 
	Since we started with a code pair of zero error, all joint types are outside the confusability sets. 
	Hence by the equicoupledness property, the associated distribution is outside the confusability sets as well. 
	By the assumptions of \Cref{thm:converse}, this distribution cannot be good (as per \Cref{def:good_distr}) since the sets of good distributions are assumed to be subsets of the confusability sets. 
	By the duality (\Cref{thm:duality}) between the sets of good and ``co-good'' tensors (\Cref{def:cogood-distr}), we can find a \emph{witness} (which itself is a co-good tensor) of the non-goodness of the distribution. 
	This finally allows us to apply a Plotkin-type double counting trick. 
	Specifically, we upper and lower bound the following crucial quantity (\Cref{eqn:double_count_object}): the average inner product between the witness and the joint types in the subcodes. 
	Careful calculations give us upper and lower bounds on this quantity. 
	Contrasting these bounds further gives us an upper bound on the code sizes as promised. 
\end{enumerate}

Similar argument can be adapted to the marginal case where exactly one transmitter suffers from zero capacity.

\section{Achievability}
\label{sec:achievability}


We need the following lemma which concentrates the size of the constant composition component of a random code. 
The proof follows from the Chernoff bound (\Cref{lem:chernoff}) and can be found in, e.g., \cite{zhang-2020-generalized-listdec-itcs}. 

\begin{lemma}
\label{lem:const-comp-conc}
Let $ \cC\subseteq\cX^n $ be a random code that consists of codewords $ \vbfx_1,\cdots,\vbfx_M $ i.i.d. according to $ P_\bfx^\tn $ for some $ P_\bfx\in\Delta(\cX) $. 
Let $ \cC'\subseteq\cC $ be the $ P_\bfx $-constant composition subcode of $\cC$.
Then
\begin{align}
\prob{\card{\cC'}\notin(1\pm1/2) \frac M {\nu(P_\bfx,n)}} \le& 2\exp\paren{-\frac{M}{12\nu(P_\bfx,n)}}. \notag 
\end{align}
\end{lemma}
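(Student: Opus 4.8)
The statement is a one-line consequence of the Chernoff bound once we identify $\card{\cC'}$ as a sum of independent Bernoulli random variables. The plan is to write $\card{\cC'} = \sum_{i=1}^M \bfb_i$ where $\bfb_i \coloneqq \indicator{\tau_{\vbfx_i} = P_\bfx}$, observe that $\bfb_1,\dots,\bfb_M$ are i.i.d. $\{0,1\}$-valued since the codewords $\vbfx_1,\dots,\vbfx_M$ are i.i.d., compute the common mean via \Cref{lem:type-class-conc}, and then invoke the two-sided Chernoff bound (\Cref{lem:chernoff}) with deviation parameter $\delta = 1/2$.

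\textbf{Carrying it out.} Each $\vbfx_i \sim P_\bfx^{\tn}$, so by \Cref{lem:type-class-conc} we have $\prob{\bfb_i = 1} = \prob{\tau_{\vbfx_i} = P_\bfx} \asymp 1/\nu(P_\bfx,n)$, and hence
\begin{align}
\expt{\card{\cC'}} = \sum_{i=1}^M \expt{\bfb_i} = M\cdot\prob{\tau_{\vbfx_1} = P_\bfx} \asymp \frac{M}{\nu(P_\bfx,n)}. \notag
\end{align}
Applying the last inequality of \Cref{lem:chernoff} to $\bfx = \card{\cC'} = \sum_i \bfb_i$ with $\delta = 1/2$ gives
\begin{align}
\prob{\card{\cC'}\notin (1\pm 1/2)\expt{\card{\cC'}}} \le 2\exp\paren{-\frac{(1/2)^2}{3}\expt{\card{\cC'}}} = 2\exp\paren{-\frac{\expt{\card{\cC'}}}{12}}, \notag
\end{align}
and substituting $\expt{\card{\cC'}} \asymp M/\nu(P_\bfx,n)$ yields the claimed bound.

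\textbf{The only point requiring care.} The lemma is stated with the clean quantity $M/\nu(P_\bfx,n)$ in place of the true mean $M\cdot\prob{\tau_{\vbfx_1}=P_\bfx}$, which \Cref{lem:type-class-conc} only pins down up to a $(1+o(1))$ factor. This is harmless: for $n$ large enough the true mean lies within an arbitrarily small multiplicative window of $M/\nu(P_\bfx,n)$, so the event $\{\card{\cC'}\notin(1\pm 1/2)M/\nu(P_\bfx,n)\}$ is contained in a comparable two-sided deviation event for $\card{\cC'}$ around its actual mean, and the exponent changes only by a $(1+o(1))$ factor that can be absorbed. Thus there is no real obstacle here — the entire content is the bookkeeping identification $\card{\cC'} = \sum_i \bfb_i$ plus a direct application of the already-stated Chernoff and type-class concentration lemmas.
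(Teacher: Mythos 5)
Your proof is correct and is precisely the argument the paper indicates: the paper does not spell out a proof for this lemma, stating only that it "follows from the Chernoff bound (\Cref{lem:chernoff}) and can be found in, e.g., [the cited reference]," and your reconstruction — writing $\card{\cC'}$ as a sum of i.i.d. indicator random variables, evaluating the mean via \Cref{lem:type-class-conc}, and applying the two-sided Chernoff bound with $\delta = 1/2$ — is exactly that. Your closing remark is also the right observation: since \Cref{lem:type-class-conc} only gives $\expt{\card{\cC'}} \asymp M/\nu(P_\bfx,n)$ rather than an exact equality, the stated constants are correct only up to a $(1+o(1))$ multiplicative factor in the exponent, which is harmless for every downstream use in the paper.
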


\subsection{Positive achievable rates via product distributions}
\label{sec:positive_rate_prod_distr}

\begin{lemma}[Positive achievable rates via product distributions]
\label{lem:achievability-prod}
Let $ (\ipdistra, \ipdistrb)\in\ipconstra\times\ipconstrb $. 
\begin{enumerate}
	\item\label[case]{itm:prod-pos-pos} If $ \ipdistra^{\ot2}\ot\ipdistrb^{\ot2} \notin\cKab $, $\ipdistra^{\ot2}\ot\ipdistrb \notin\cKa$ and $\ipdistra\ot\ipdistrb^{\ot2} \notin\cKb$, then there exist achievable rate pairs $ (R_1,R_2) $ such that $ R_1>0,R_2>0 $. 
	\item\label[case]{itm:prod-pos-zero} If $ \ipdistra^{\ot2}\ot\ipdistrb\notin\cKa $, then there exist achievable rate pairs $ (R_1,R_2) $ such that $ R_1>0,R_2=0 $. 
	\item\label[case]{itm:prod-zero-pos} If $ \ipdistra\ot\ipdistrb^{\ot2}\notin\cKa $, then there exist achievable rate pairs $ (R_1,R_2) $ such that $ R_1=0,R_2>0 $. 
\end{enumerate}
\end{lemma}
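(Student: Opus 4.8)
The plan is to prove all three items by the standard random-coding-with-expurgation argument, exploiting the fact that the confusability sets are closed (by \Cref{rk:equiv-operational-distributional}) and that a product distribution controls the concentration of all relevant joint types. I will focus on \Cref{itm:prod-pos-pos} since \Cref{itm:prod-pos-zero} and \Cref{itm:prod-zero-pos} are the degenerate (one-transmitter) specializations. First I would fix the gap: since $\ipdistra^{\ot2}\ot\ipdistrb^{\ot2}\notin\cKab$, $\ipdistra^{\ot2}\ot\ipdistrb\notin\cKa$, $\ipdistra\ot\ipdistrb^{\ot2}\notin\cKb$, and each $\cK$ is closed, there is a constant $\delta>0$ such that the $\delta$-balls (in $\distinf{\cdot}{\cdot}$, say) around these three product distributions are also disjoint from the respective confusability sets. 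I would then sample $\cC_1$ as $M_1$ codewords drawn i.i.d.\ from $(\ipdistra)^{\ot n}$ and $\cC_2$ as $M_2$ codewords drawn i.i.d.\ from $(\ipdistrb)^{\ot n}$, independently, with $M_1 = 2^{R_1 n \log|\cXa|}$, $M_2 = 2^{R_2 n\log|\cXb|}$ for small constants $R_1,R_2>0$ to be chosen.

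Next I would run the constant-composition reduction. By \Cref{lem:const-comp-conc}, with probability $1 - 2\exp(-\Omega(M_i/\nu(\cdot,n)))$ the $\ipdistra$-constant-composition subcode of $\cC_1$ has size $\ge M_1/(2\nu(\ipdistra,n))$ and similarly for $\cC_2$; since $\nu$ is only polynomial in $n$, passing to these subcodes costs nothing in rate. Then, for any fixed $4$-tuple $(\vbfxa_{i_1},\vbfxa_{i_2},\vbfxb_{j_1},\vbfxb_{j_2})$ with $i_1\ne i_2$, $j_1\ne j_2$, the joint type $\tau_{\vbfxa_{i_1},\vbfxa_{i_2},\vbfxb_{j_1},\vbfxb_{j_2}}$ has expectation exactly $\ipdistra^{\ot2}\ot\ipdistrb^{\ot2}$, and by the Chernoff bound (\Cref{lem:chernoff}) applied coordinate-wise, $\prob{\distinf{\tau}{\ipdistra^{\ot2}\ot\ipdistrb^{\ot2}}>\delta}\le 2^{-cn}$ for some $c=c(\delta)>0$; analogous statements hold for the marginal triples. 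There are at most $M_1^2 M_2^2$ such tuples, so if $2(R_1\log|\cXa|+R_2\log|\cXb|)<c$ I would union-bound to conclude that, with positive probability, \emph{every} joint type of every relevant tuple lies within $\delta$ of the corresponding product distribution, hence outside $\cKab,\cKa,\cKb$. By \Cref{claim:distributional-nonconf} this code pair attains zero error, giving achievable $(R_1,R_2)$ with both coordinates strictly positive.

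The only genuinely delicate point — and the main obstacle worth care — is bookkeeping the union bound over \emph{all three} families of tuples simultaneously (joint, first-marginal, second-marginal), since each imposes its own constraint on the rate pair, and one must check that the three constraints are jointly satisfiable by some $(R_1,R_2)$ with $R_1,R_2>0$; this works because all three bounds are linear in $(R_1,R_2)$ with positive slack at the origin, so a sufficiently small positive $(R_1,R_2)$ satisfies all of them. For \Cref{itm:prod-pos-zero} one takes $\cC_2$ to be a single codeword (so $R_2=0$), draws $\cC_1$ i.i.d.\ from $(\ipdistra)^{\ot n}$, and only the first-marginal condition $\ipdistra^{\ot2}\ot\ipdistrb\notin\cKa$ is needed; the joint and second-marginal conditions are vacuous because with $|\cC_2|=1$ there are no pairs $j_1\ne j_2$ and the requirement on $(\vxa_{i_1},\vxb)$ vs.\ $(\vxa_{i_2},\vxb)$ is exactly the first-marginal one. \Cref{itm:prod-zero-pos} is symmetric, with the roles of the two transmitters exchanged (note the statement says $\cKa$ but clearly means $\cKb$). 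I would also remark that choosing $R_1,R_2$ small enough for the naive union bound is exactly the place where the sharper \Cref{lem:inner_bound_prod_distr} later improves things via expurgation and Sanov's theorem, but for mere positivity the union bound suffices.
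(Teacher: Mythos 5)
Your proposal is correct and follows essentially the same route as the paper's proof: i.i.d.\ sampling from the product distributions, constant-composition reduction via \Cref{lem:const-comp-conc}, a Chernoff concentration bound on the relevant joint/marginal types around their product expectations, and a union bound over all tuples, with rates chosen small enough to make the union bound vanish; the marginal cases are handled by fixing a single codeword for the silent transmitter, exactly as in the paper. (You are also right that the condition in \Cref{itm:prod-zero-pos} should read $\cKb$ rather than $\cKa$; note, though, that despite your opening sentence, this lemma needs no expurgation — as you yourself observe later, expurgation only enters in \Cref{lem:inner_bound_prod_distr}.)
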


\begin{proof}[Proof of \Cref{itm:prod-pos-pos} in \Cref{lem:achievability-prod}]
	Assume that both $ \ipdistra $ and $ \ipdistrb $ have no zero atoms. 
	Sample a random code pair $ \paren{\cCa,\cCb}\subseteq\cXa^n\times\cXb^n $ of sizes $ (M_1, M_2) $, where $ \cC_i $ consists of codewords $ \vbfx^i_1,\cdots,\vbfx^i_{M_i} $ i.i.d. according to $ P_i^\tn $ ($i = 1,2$). 
	Note that for any $ 1\le i_1 < i_2\le M_1 $ and $ 1\le j_1 < j_2\le M_2 $, 
	\begin{align}
	\expt{\tau_{\vbfxa_{i_1}, \vbfxa_{i_2}, \vbfxb_{j_1}, \vbfxb_{j_2}}} =& \ipdistra^{\ot2}\ot\ipdistrb^{\ot2}. \label{eqn:expt-type-12} 
	\end{align}
	To see this, for any $ (\xa_1,\xa_2,\xb_1,\xb_2)\in\cXa^2\times\cXb^2 $, 
	\begin{align}
	\expt{\tau_{\vbfxa_{i_1}, \vbfxa_{i_2}, \vbfxb_{j_1}, \vbfxb_{j_2}}}(\xa_1,\xa_2,\xb_1,\xb_2) =& \frac{1}{n}\sum_{k = 1}^n\expt{\indicator{\vbfxa_{i_1}(k) = \xa_1,\vbfxa_{i_2}(k) = \xa_2,\vbfxb_{j_1}(k) = \xb_1,\vbfxb_{j_2}(k) = \xb_2}} \notag \\
	=& \frac{1}{n}\sum_{k = 1}^n \expt{\indicator{\vbfxa_{i_1}(k) = \xa_1}}\expt{\indicator{\vbfxa_{i_1}(k) = \xa_2}}\expt{\indicator{\vbfxb_{j_1}(k) = \xb_1}}\expt{\indicator{\vbfxb_{j_2}(k) = \xb_2}} \label{eqn:expt-type-indep} \\
	=& \frac{1}{n}\sum_{k=1}^n\prob{\vbfxa_{i_1}(k) = \xa_1}\prob{\vbfxa_{i_2}(k) = \xa_2}\prob{\vbfxb_{j_1}(k) = \xb_1}\prob{\vbfxb_{j_2}(k) = \xb_2} \notag \\
	=& \ipdistra(\xa_1)\ipdistra(\xa_2)\ipdistrb(\xb_1)\ipdistrb(\xb_2), \label{eqn:expt-type-identical}
	\end{align}
	where \Cref{eqn:expt-type-indep} follows since each codeword is sampled independent; 
	\Cref{eqn:expt-type-identical} follows since each component is identically distributed. 
	Similarly,
	\begin{align}
	\expt{\tau_{\vbfxa_{i_1}, \vbfxa_{i_2}, \vbfxb_{j_1}}} =& \ipdistra^{\ot2}\ot\ipdistrb, \quad
	\expt{\tau_{\vbfxa_{i_1}, \vbfxb_{j_1}, \vbfxb_{j_2}}} = \ipdistra\ot\ipdistrb^{\ot2}. \notag 
	\end{align}
	Let $ \cC_i' $ be the $ P_i $-constant composition subcode of $ \cC_i $ ($i = 1,2$). 
	By \Cref{lem:const-comp-conc}, for $ i = 1,2 $,
	\begin{align}
	\prob{\card{\cC_i'}\notin(1\pm1/2) \frac {M_i} {\nu(P_i,n)}} \le& 2\exp\paren{-\frac{M_i}{12\nu(P_i,n)}}. \label{eqn:positive-product-bound1} 
	\end{align}

	Let 
	\begin{equation}
	\begin{aligned}
	\rho_{1,2} \coloneqq& \distinf{\ipdistra^{\ot2}\ot\ipdistrb^{\ot2}}{\cKab} ,  \\
	\rho_1 \coloneqq& \distinf{\ipdistra^{\ot2}\ot\ipdistrb}{\cKa} ,  \\
	\rho_2 \coloneqq& \distinf{\ipdistra\ot\ipdistrb^{\ot2}}{\cKb} ,  \\
	\eps \coloneqq& \frac{1}{2}\min\curbrkt{\rho_{1,2}, \rho_1, \rho_2}. 
	\end{aligned}
	\label{eqn:prod-def-rho}
	\end{equation}
	By the assumptions of \Cref{itm:prod-pos-pos}, all the above quantities are \emph{strictly} positive.
	Since $ \eps<\rho_{1,2} $, 
	for any $ 1\le i_1<i_2\le M_1 $ and $ 1\le j_1<j_2\le M_2 $, 
	\begin{align}
	& \prob{\tau_{\vbfxa_{i_1}, \vbfxa_{i_2}, \vbfxb_{j_1}, \vbfxb_{j_2}} \in\cKab} \notag \\
	\le& \prob{\distinf{\tau_{\vbfxa_{i_1}, \vbfxa_{i_2}, \vbfxb_{j_1}, \vbfxb_{j_2}}}{\ipdistra^{\ot2}\ot\ipdistrb^{\ot2}}\ge \eps} \notag \\
	=& \prob{\exists (\xa_1,\xa_2,\xb_1,\xb_2)\in\cXa^2\times\cXb^2,\;\abs{\tau_{\vbfxa_{i_1}, \vbfxa_{i_2}, \vbfxb_{j_1}, \vbfxb_{j_2}}(\xa_1,\xa_2,\xb_1,\xb_2) - \ipdistra(\xa_1)\ipdistra(\xa_2)\ipdistrb(\xb_1)\ipdistrb(\xb_2)}\ge\eps} \notag \\
	\le& \sum_{(\xa_1,\xa_2,\xb_1,\xb_2)\in\cXa^2\times\cXb^2} \prob{\abs{\sum_{k = 1}^n\indicator{\vbfxa_{i_1}(k) = \xa_1, \vbfxa_{i_2}(k) = \xa_2, \vbfxb_{j_1}(k) = \xb_1, \vbfxb_{j_2}(k) = \xb_2} - n\ipdistra(\xa_1)\ipdistra(\xa_2)\ipdistrb(\xb_1)\ipdistrb(\xb_2)} \ge n\eps} \notag \\
	=& \sum_{(\xa_1,\xa_2,\xb_1,\xb_2)\in\cXa^2\times\cXb^2} \prob{\sum_{k = 1}^n\indicator{\vbfxa_{i_1}(k) = \xa_1, \vbfxa_{i_2}(k) = \xa_2, \vbfxb_{j_1}(k) = \xb_1, \vbfxb_{j_2}(k) = \xb_2} \notin \paren{1\pm\frac{n\eps}{\mu}}\mu} \label{eqn:define-mu} \\
	\le&\sum_{(\xa_1,\xa_2,\xb_1,\xb_2)\in\cXa^2\times\cXb^2} 2\exp\paren{-\frac{1}{3}\paren{\frac{n\eps}{\mu}}^2\mu} \label{eqn:pos-pos-chernoff} \\
	=& \sum_{(\xa_1,\xa_2,\xb_1,\xb_2)\in\cXa^2\times\cXb^2} 2\exp\paren{-\frac{n\eps^2}{3\ipdistra(\xa_1)\ipdistra(\xa_2)\ipdistrb(\xb_1)\ipdistrb(\xb_2)}} \label{eqn:pos-pos-type} \\
	\le& \card{\cXa}^2\card{\cXb}^2 \cdot2\exp\paren{-\frac{n\eps^2}{3}}. \label{eqn:pos-pos-one}
	\end{align}
	In \Cref{eqn:define-mu}, we define 
	\begin{align}
	\mu = \mu(\xa_1,\xa_2,\xb_1,\xb_2) \coloneqq \expt{\tau_{\vbfxa_{i_1}, \vbfxa_{i_2}, \vbfxb_{j_1}, \vbfxb_{j_2}}}(\xa_1,\xa_2,\xb_1,\xb_2) = \ipdistra(\xa_1)\ipdistra(\xa_2)\ipdistrb(\xb_1)\ipdistrb(\xb_2)>0. \notag 
	\end{align}
	\Cref{eqn:pos-pos-chernoff} is by \Cref{lem:chernoff}. 
	In \Cref{eqn:pos-pos-type}, we used \Cref{eqn:expt-type-12}. 
	In \Cref{eqn:pos-pos-one}, we used the trivial bound: for $i = 1,2$, $ P_i(x)\le1 $ for $ x\in\cX_i $. 

	We only need to consider ordered pairs $ i_1<i_2 $ and $ j_1<j_2 $, since by the \Cref{itm:conf-set-prop-transp-inv} of \Cref{prop:prop-conf-set}, if $ \tau_{\vxa_{i_1}, \vxa_{i_2}, \vxb_{j_1}, \vxb_{j_2}}\in\cKab $ then $ \tau_{\vxa_{i_2}, \vxa_{i_1}, \vxb_{j_2}, \vxb_{j_1}}\in\cKab $.
	By union bound,
	\begin{align}
	&\prob{\exists ((i_1,i_2),(j_1,j_2))\in\binom{[|\cCa'|]}{2}\times\binom{[|\cCb'|]}{2},\;\tau_{\vbfxa_{i_1}, \vbfxa_{i_2}, \vbfxb_{j_1}, \vbfxb_{j_2}} \in\cKab}\notag \\
	\le& \binom{M_1}{2}\binom{M_2}{2}\cdot\card{\cXa}^2\card{\cXb}^2 \cdot2\exp\paren{-\frac{n\eps^2}{3}} \notag \\
	\le& \exp\paren{n\paren{2R_1\ln\card{\cXa} + 2R_2\ln\card{\cXb} - \eps^2/3 + o(1)}}. \label{eqn:positive-product-bound2} 
	\end{align}

	Similar Chernoff-union argument yields
	\begin{align}
	\prob{\exists ((i_1, i_2), j)\in\binom{[|\cCa'|]}{2}\times[|\cCb'|],\;{\tau_{\vbfxa_{i_1}, \vbfxa_{i_2}, \vbfxb_j}}\in\cKa} \le& \exp\paren{n\paren{2R_1\ln\card{\cXa} + R_2\ln\card{\cXb} - \eps^2/3 + o(1)}}, \label{eqn:positive-product-bound3}  \\
	\prob{\exists (i, (j_1, j_2))\in[|\cCa'|]\times\binom{[|\cCb'|]}{2},\;{\tau_{\vbfxa_{i}, \vbfxa_{j_1}, \vbfxb_{j_2}}}\in\cKb} \le& \exp\paren{n\paren{R_1\ln\card{\cXa} + 2R_2\ln\card{\cXb} - \eps^2/3 + o(1)}}. \label{eqn:positive-product-bound4} 
	\end{align}
	It suffices to take $ (R_1, R_2) $ such that 
	$2R_1\ln\card{\cXa} + 2R_2\ln\card{\cXb} - \eps^2/3 <0$.
	For instance, one can take $ R_1 = \frac{\eps^2}{24\ln\card{\cXa}} $ and $ R_2 = \frac{\eps^2}{24\ln\card{\cXb}} $. 
	Then \Cref{eqn:positive-product-bound2,eqn:positive-product-bound3,eqn:positive-product-bound4} are all $ \exp\paren{-\Omega(n)} $. 
	Finally, combining \Cref{eqn:positive-product-bound1,eqn:positive-product-bound2,eqn:positive-product-bound3,eqn:positive-product-bound4}, we get that with probability  $ 1 - \exp(-\Omega(n)) $, $ (\cCa',\cCb') $ is a good code pair of rates $ R(\cCa')\asymp R_1>0 $ and $ R(\cCb')\asymp R_2>0 $. 
\end{proof}

\begin{proof}[Proof of \Cref{itm:prod-pos-zero,itm:prod-zero-pos} in \Cref{lem:achievability-prod}]
We only prove \Cref{itm:prod-pos-zero} and \Cref{itm:prod-zero-pos} follows similarly once the roles of user one and user two are interchanged. 

Suppose $ \ipdistra^{\ot2}\ot\ipdistrb \notin\cKa $. 
We construct a codebook pair $ (\cCa,\cCb) $ as follows. 
The codebook $ \cCb $ consists of only one (arbitrary) codeword $ \vxb\in\cXb^n $ of type $ \ipdistrb $. 
Apparently $ R(\cCb)\to0 $ as $ n\to0 $. 
Indeed, user two cannot even transmit a single bit reliably through the channel. 
The codebook $ \cCa \in\cXa^{M\times n} $ consists of $M$ codewords $ \vbfxa_1,\cdots,\vbfxa_M $ i.i.d. according to $ \ipdistra^\tn $. 
Note that for all $ 1\le i_1<i_2\le M $, $ \expt{\tau_{\vbfxa_{i_1}, \vbfxa_{i_2}, \vxb}} = \ipdistra^{\ot2}\ot\ipdistrb $. 
Indeed, for any $ (\xa_1,\xa_2,\xb)\in\cXa^2\times\cXb $,
\begin{align}
\expt{\tau_{\vbfxa_{i_1}, \vbfxa_{i_2}, \vxb}}(\xa_1,\xa_2,\xb) =& \frac{1}{n}\sum_{k = 1}^n\expt{\indicator{\vbfxa_{i_1}(k) = \xa_1,\vbfxa_{i_2}(k) = \xa_2,\vxb(k) = \xb}} \notag \\
=& \frac{1}{n}\sum_{k = 1}^n\expt{\indicator{\vbfxa_{i_1}(k) = \xa_1}}\expt{\indicator{\vbfxa_{i_2}(k) = \xa_2}}\indicator{\vxb(k) = \xb} \notag \\
=& \prob{\vbfxa(1) = \xa_1}\prob{\vbfxa(1) = \xa_2}\frac{1}{n}\sum_{k = 1}^n\indicator{\vxb(k) = \xb} \notag \\
=& \ipdistra(\xa_1)\ipdistra(\xa_2)\tau_{\vxb}(\xb) \notag \\
=& \paren{\ipdistra^{\ot2}\ot\ipdistrb}(\xa_1,\xa_2,\xb). \notag 
\end{align}
By \Cref{lem:const-comp-conc}, \Cref{eqn:positive-product-bound1} holds for the $ \ipdistra $-constant composition subcode of $ \cCa $, denoted by $ \cCa' $. 
Therefore, $ \cCa' $ has asymptotically the same rate as $ R(\cCa) $.

We define the gap $ \rho_1>0 $ between $ \ipdistra^{\ot2}\ot\ipdistrb $ and $ \cKa $ in the same way as in \Cref{eqn:prod-def-rho}.
Let $ \eps \coloneqq \rho_1/2 $.
Similar Chernoff-union-type argument as before yields
\begin{align}
\prob{\exists(i_1,i_2)\in\binom{[|\cCa'|]}{2},\;\tau_{\vbfxa_{i_1}, \vbfxa_{i_2}, \vxb} \in \cKa} \le& \binom{M}{2}\cdot \card{\cXa}^2\cdot2\exp\paren{-\frac{n\eps^2}{3}} \notag \\
\le& \exp\paren{n\paren{2R_1\ln\card{\cXa} - \eps^2/3 + o(1)}}. \label{eqn:prod-pos-zero-bound} 
\end{align}
Taking $ R_1 = \frac{\eps^2}{12\card{\cXa}} $, we get that with probability $ 1-2^{-\Omega(n)} $, the codebook pair $ (\cCa',\cCb) $ constructed above is good. 
\end{proof}

\subsection{Positive achievable rates via mixtures of product distributions}
\label{sec:positive_rate_mixture}
\begin{lemma}[Positive achievable rates via mixtures product distributions]
\label{thm:achievability}
Fix input distributions $ (\ipdistra, \ipdistrb)\in\ipconstra\times\ipconstrb $. 

\begin{enumerate}
	\item\label[case]{itm:ach-1} If $ \good \ne\emptyset $, then there exist achievable rate pairs $ (R_1, R_2) $ such that $ R_1>0,R_2>0 $.
	\item\label[case]{itm:ach-2} If $ \gooda\setminus\cKa\ne\emptyset $, then there exist achievable rate pairs $ (R_1, 0) $ such that $ R_1>0 $.
	\item\label[case]{itm:ach-3} If $ \goodb\setminus\cKb\ne\emptyset $, then there exist achievable rate pairs $ (0,R_2) $ such that $ R_2>0 $.
\end{enumerate}
\end{lemma}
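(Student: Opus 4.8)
The plan is to prove all three items at once by random coding with \emph{coded time-sharing}, upgrading \Cref{lem:achievability-prod} from a single product distribution to a mixture. I describe \Cref{itm:ach-1} in detail; \Cref{itm:ach-2,itm:ach-3} are degenerate versions. Fix any $\distraabb\in\good$ and, using \Cref{def:good_distr}, write $\distraabb=\sum_{i=1}^k\lambda_iP_{1,i}^{\ot2}\ot P_{2,i}^{\ot2}$ with $\sum_i\lambda_i=1$; by the definition of $\good$ we additionally have $\distraabb\notin\cKab$, the marginal $[\distraabb]_{\bfxa_1,\bfxa_2,\bfxb_1}=\sum_i\lambda_iP_{1,i}^{\ot2}\ot P_{2,i}\in\gooda\setminus\cKa$, and $[\distraabb]_{\bfxa_1,\bfxb_1,\bfxb_2}=\sum_i\lambda_iP_{1,i}\ot P_{2,i}^{\ot2}\in\goodb\setminus\cKb$. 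After a harmless $o(1)$ perturbation of the $\lambda_i$ so that each $\lambda_in$ is an integer, split $[n]$ into consecutive blocks $B_1,\dots,B_k$ with $|B_i|=\lambda_in$. Sample a random code pair $(\cCa,\cCb)$ of sizes $(M_1,M_2)$ in which, for every codeword and every block $B_i$, the coordinates indexed by $B_i$ are drawn i.i.d.\ from $P_{1,i}$ (for $\cCa$) resp.\ $P_{2,i}$ (for $\cCb$), independently across codewords and blocks. Note that the $P_{1,i},P_{2,i}$ themselves need not be feasible: only their $\lambda$-averages $\sum_i\lambda_iP_{1,i}$ and $\sum_i\lambda_iP_{2,i}$, which are the per-codeword compositions in expectation, need to lie in $\ipconstra,\ipconstrb$, and they do — this is exactly the gain from time-sharing (cf.\ the footnote on coded time-sharing in \Cref{def:good_distr}).

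The core estimate is a uniform joint-type concentration. For any $i_1<i_2$ and $j_1<j_2$, \Cref{lem:type-concac} gives $\tau_{\vbfxa_{i_1},\vbfxa_{i_2},\vbfxb_{j_1},\vbfxb_{j_2}}=\sum_i\lambda_i\tau^{(i)}$, where $\tau^{(i)}$ is the joint type of the four codewords restricted to $B_i$; the product computation from the proof of \Cref{lem:achievability-prod} gives $\expt{\tau^{(i)}}=P_{1,i}^{\ot2}\ot P_{2,i}^{\ot2}$, so $\expt{\tau_{\vbfxa_{i_1},\vbfxa_{i_2},\vbfxb_{j_1},\vbfxb_{j_2}}}=\distraabb$. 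Inside each block every symbol count is a sum of $\lambda_in$ i.i.d.\ indicators, so \Cref{lem:chernoff} followed by a union bound over the $k$ blocks and the $\card{\cXa}^2\card{\cXb}^2$ symbols yields $\prob{\distinf{\tau_{\vbfxa_{i_1},\vbfxa_{i_2},\vbfxb_{j_1},\vbfxb_{j_2}}}{\distraabb}\ge\eps}\le\exp(-\Omega(n))$ for any fixed $\eps>0$ (the rate depending on $\eps$ and $\min_i\lambda_i$). Take $\eps$ to be a third of the minimum of $\distinf{\distraabb}{\cKab}$, $\distinf{[\distraabb]_{\bfxa_1,\bfxa_2,\bfxb_1}}{\cKa}$ and $\distinf{[\distraabb]_{\bfxa_1,\bfxb_1,\bfxb_2}}{\cKb}$, all strictly positive because these distributions lie outside the respective closed confusability sets. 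On the event that the four-way type is $\eps$-close to $\distraabb$, its triple-marginals $\tau_{\vbfxa_{i_1},\vbfxa_{i_2},\vbfxb_{j_1}}$ and $\tau_{\vbfxa_{i_1},\vbfxb_{j_1},\vbfxb_{j_2}}$ stay out of $\cKa$ and $\cKb$ by \Cref{lem:marg-doesnot-increase-dist} (passing between the $\ell^1$ and sup metrics via \Cref{fact:distinf-distone}), so none of the confusability conditions of \Cref{claim:distributional-nonconf} is triggered by this tuple.

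A union bound over all $\binom{M_1}{2}\binom{M_2}{2}$ ordered tuples — which, by \Cref{itm:conf-set-prop-transp-inv} of \Cref{prop:prop-conf-set} and the fact that for $M_1,M_2\ge2$ every relevant triple is a marginal of some four-tuple, covers all cases — shows that, provided $2R_1\ln\card{\cXa}+2R_2\ln\card{\cXb}$ is smaller than the ($n$-independent, positive) concentration exponent, the code pair attains zero error by \Cref{claim:distributional-nonconf} except with probability $\exp(-\Omega(n))$; in particular $R_1,R_2$ may be taken to be positive constants. A final constant-composition trim (\Cref{lem:cc-reduction}, together with \Cref{lem:const-comp-conc} to concentrate each codeword's overall composition onto the feasible $\sum_i\lambda_iP_{1,i}$ resp.\ $\sum_i\lambda_iP_{2,i}$) leaves feasible constant-composition codes of the same positive rates, proving \Cref{itm:ach-1}.

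For \Cref{itm:ach-2}, repeat the construction with a fixed $Q\in\gooda\setminus\cKa$ written as $Q=\sum_i\lambda_iP_{1,i}^{\ot2}\ot P_{2,i}$: let $\cCb$ consist of the single codeword $\vxb$ obtained by concatenating blocks of exact composition $P_{2,i}$ (so $R_2=0$ and $\tau_{\vxb}=\sum_i\lambda_iP_{2,i}$ is feasible), and sample $\cCa$ as above. With only one second-transmitter codeword, neither joint nor second-marginal confusability can arise, so only $\tau_{\vbfxa_{i_1},\vbfxa_{i_2},\vxb}\notin\cKa$ need be enforced — which the same concentration argument does, now around $Q$; this is \Cref{itm:prod-pos-zero} of \Cref{lem:achievability-prod} with a mixture in place of a product. \Cref{itm:ach-3} is the mirror image. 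The only real work beyond \Cref{lem:achievability-prod} is organizational — forcing the block lengths and compositions to be integral and reconciling coded time-sharing with the final constant-composition requirement — since the probabilistic content is just \Cref{lem:achievability-prod} applied block by block and glued with \Cref{lem:type-concac}; there is no genuinely new obstacle here, which is precisely why it is the converse, and not this lemma, that carries the difficulty.
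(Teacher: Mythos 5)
Your proof is correct and takes essentially the same route as the paper's: coded time-sharing on blocks of length $\lambda_\ell n$ using a decomposition of a distribution in $\good$ (resp.\ $\gooda\setminus\cKa$, $\goodb\setminus\cKb$), a Chernoff--union concentration of joint types around the mixture, and a final constant-composition trim. The only cosmetic difference is that you Chernoff-bound the four-way type alone and deduce the triple-type constraints by marginalization via \Cref{lem:marg-doesnot-increase-dist} and \Cref{fact:distinf-distone}, whereas the paper Chernoff-bounds the four-way type and both triple types separately; this is a harmless simplification (the alphabet-dependent constant it introduces is easily absorbed into the choice of $\eps$), and your block-wise trim to the feasible average composition $\sum_i\lambda_iP_{1,i}=\ipdistra$ matches the paper's block-wise constant-composition reduction in substance.
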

\begin{proof}[Proof of \Cref{itm:ach-1}]
	By the condition in \Cref{itm:ach-1}, we are able to find a distribution $ \distraabb\in\good $. 
	Suppose $ \distraabb = \sum_{{\ell} = 1}^k \lambda_{\ell} P_{1,{\ell}}^{\ot2}\ot P_{2,{\ell}}^{\ot2} $ for some $ k\in\bZ_{\ge1} $, $ \curbrkt{\lambda_{\ell}}_{{\ell} = 1}^k\subset(0,1] $ with $ \sum_{{\ell} = 1}^k\lambda_{\ell} = 1 $ and distributions $ \curbrkt{P_{1,{\ell}}}_{{\ell} = 1}^k\subset\Delta(\cXa),\curbrkt{P_{2,{\ell}}}_{{\ell} = 1}^k\subset\Delta(\cXb) $. 
	It simultaneously holds that 
	\begin{equation}
	\begin{aligned}
	\distraabb\in& \goodab\setminus\cKab, \\
	\distraab \coloneqq \sum_{\ell = 1}^k\lambda_\ell P_{1,\ell}^{\ot2}\ot P_{2,\ell} \in& \gooda\setminus\cKa, \\
	\distrabb \coloneqq \sum_{\ell = 1}^k\lambda_\ell P_{1,\ell}\ot P_{2,\ell}^{\ot2} \in& \goodb\setminus\cKb. 
	\end{aligned}
	\label{eqn:exist-goodab-distr}
	\end{equation}
	See \Cref{fig:geom-ach-subfig} for the geometry of the aforementioned distributions. 

	Partition $ [n] $ into $k$ subsets $ \cI_1,\cdots,\cI_k $ such that $ |\cI_{\ell}| = \lambda_{\ell}n $ (${\ell}\in[k]$). 
	Now sample a codebook pair $ (\cCa,\cCb)\subseteq\cXa^n\times\cXb^n $ of sizes $ (M_1, M_2) $ in the following way. 
	For $ i = 1,2 $, $ {\ell}\in[k] $, the entries of each codeword of $ \cC_i $ that are in $ \cI_{\ell} $ are i.i.d. according to $ P_{i,{\ell}} $. 
	See \Cref{fig:time-sharing-subfig} for a pictorial explanation of the code construction.

	The proof is similar to that of \Cref{lem:achievability-prod} and the geometry of various distributions is depicted in \Cref{fig:geom-ach-dist-subfig}.
	\begin{figure}[htbp]
	 	\centering
	 	\begin{subfigure}[t]{0.9\linewidth}
	 		\centering
	 		\includegraphics[width=0.65\textwidth]{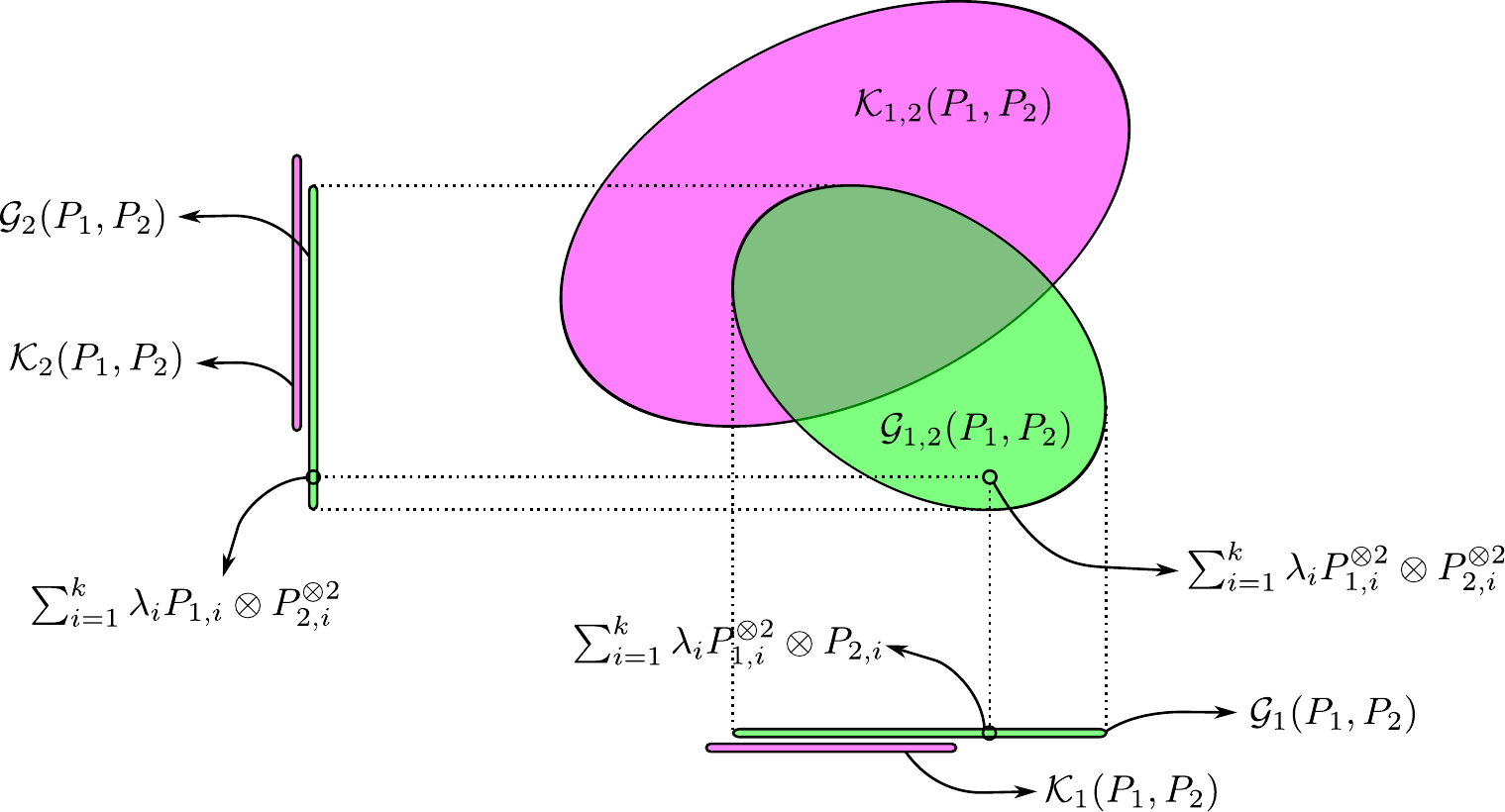}
	 		\caption{By the assumption $ \good \ne \emptyset $, there exists a distribution $ \sum_{i = 1}^k\lambda_iP_{1,i}^{\ot2}\ot P_{2,i}^{\ot2}\notin\cKab $ such that $ \sum_{i = 1}^k\lambda_iP_{1,i}^{\ot2}\ot P_{2,i}\notin\cKa $ and $ \sum_{i = 1}^k\lambda_iP_{1,i}\ot P_{2,i}^{\ot2}\notin\cKb $ (see \Cref{eqn:exist-goodab-distr}). }
	 		\label{fig:geom-ach-subfig}
	 	\end{subfigure}
	 	\\ \vspace{10pt}
	 	\begin{subfigure}[t]{0.9\linewidth}
	 		\centering
	 		\includegraphics[width=0.65\textwidth]{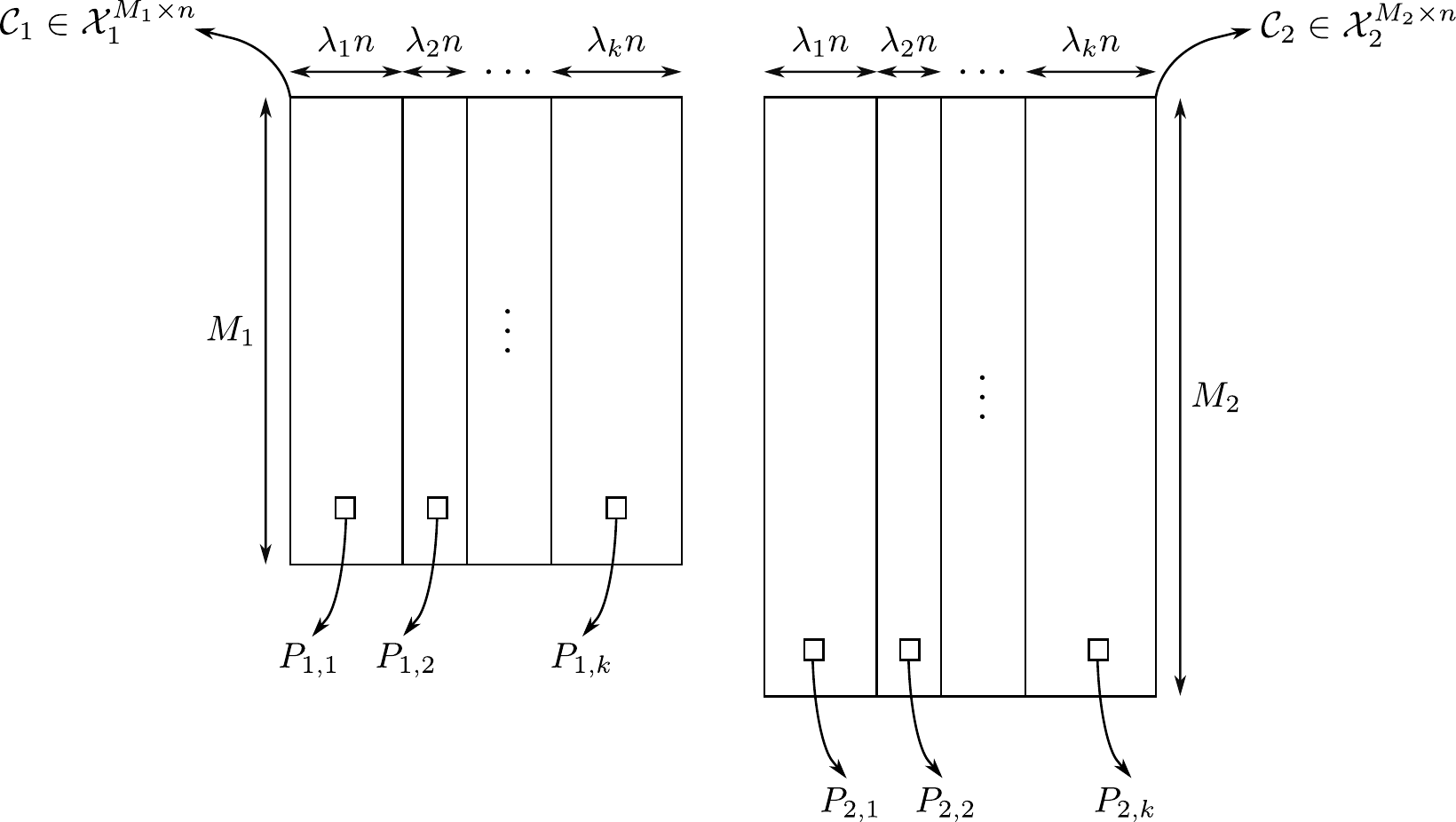}
	 		\caption{A pictorial explanation of our code construction from $ \sum_{i = 1}^k\lambda_iP_{1,i}^{\ot2}\ot P_{2,i}^{\ot2} $. The construction can be viewed as an application of coded time-sharing where the time-sharing sequence is given by the convex combination coefficients $ \curbrkt{\lambda_i}_{i = 1}^k $. For any fixed value $ \ell\in[k] $ of the time-sharing variable, each symbol of $ \cC_i $ is i.i.d. according to $ P_{\ell,i} $. }
	 		\label{fig:time-sharing-subfig}
	 	\end{subfigure}
	 	\\ \vspace{10pt}
	 	\begin{subfigure}[t]{0.9\linewidth}
	 		\centering
	 		\includegraphics[width=0.65\textwidth]{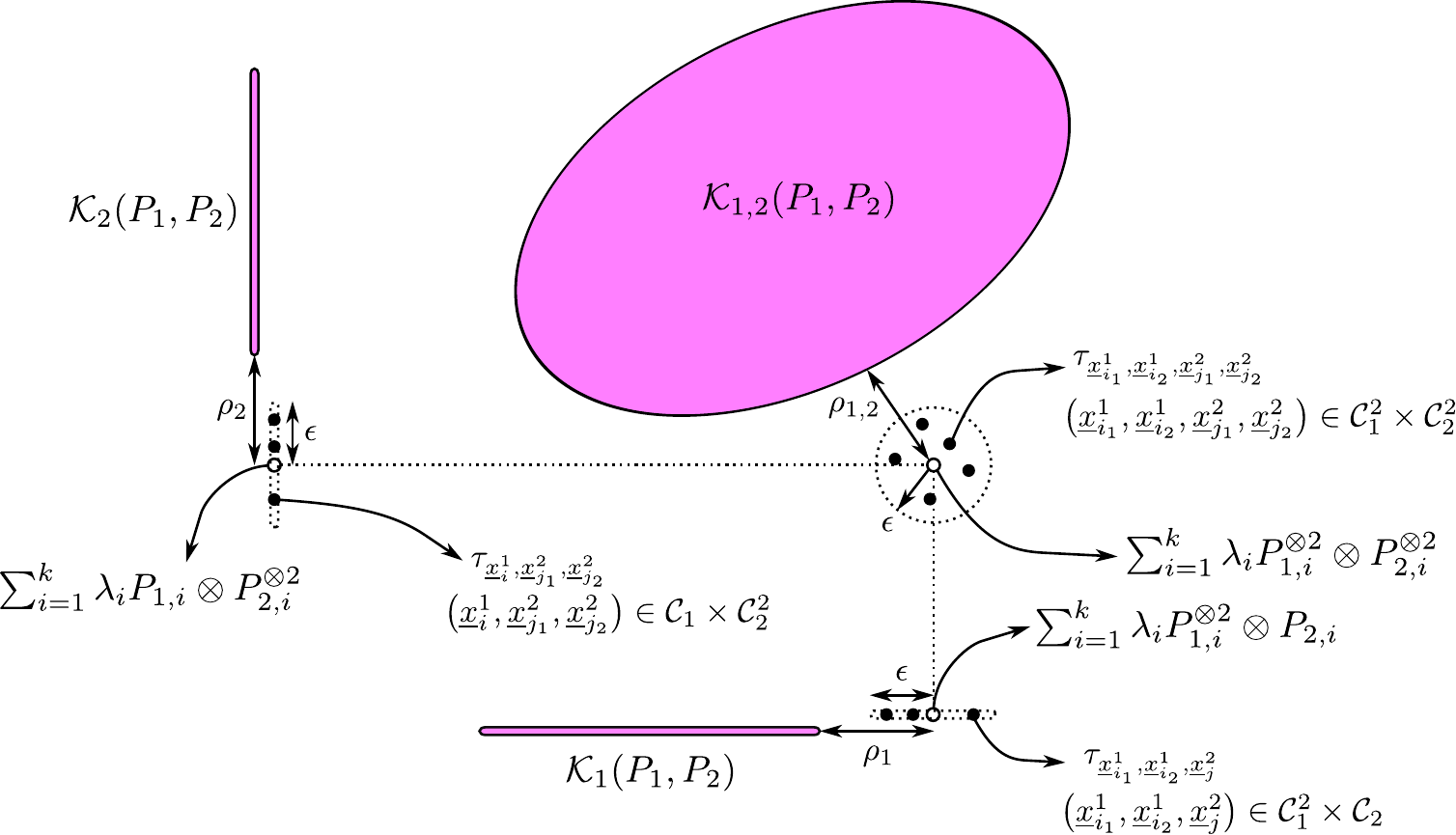}
	 		\caption{By the assumption that $ \sum_{i = 1}^k\lambda_iP_{1,i}^{\ot2}\ot P_{2,i}^{\ot2} $ is $ \rho_{1,2} $-far from $ \cKab $, $ \sum_{i = 1}^k\lambda_iP_{1,i}^{\ot2}\ot P_{2,i} $ is $ \rho_1 $-far from $\cKa$ and $ \sum_{i = 1}^k\lambda_iP_{1,i}\ot P_{2,i}^{\ot2} $ is $ \rho_2 $-far from $ \cKb $, one can show via a Chernoff-union-type argument that all joint types of $ (\cCa,\cCb) $ are $ \eps $-far from the confusability sets and hence $ (\cCa,\cCb) $ attains positive rates and zero error. The gap factors $ \rho_{1,2},\rho_1,\rho_2 $ and $ \eps $ are defined in \Cref{eqn:define-rho-mix}. }
	 		\label{fig:geom-ach-dist-subfig}
	 	\end{subfigure}
	 	\caption{Illustration of the proof of \Cref{itm:ach-1} of \Cref{thm:achievability}. Under the assumption $ \good \ne \emptyset $, the goal is to show the existence of zero-error code pairs $ (\cCa,\cCb) $ of positive rates. }
	 	\label{fig:geom-ach-fig}
	 \end{figure} 
	We can apply similar Chernoff-union argument to the $\ell$-th punctured codes of $ (\cCa,\cCb) $ for each $ \ell\in[k] $ and then take a union bound over $ {\ell} $. 
	Here by the $\ell$-th punctured codes we mean the codes obtained by restricting codewords to $ \cI_\ell $.
	We use $ \vbfx_{i,{\ell}}^{1}\in\cXa^{\lambda_{\ell}n} $ and $ \vbfx_{j,{\ell}}^2\in\cXb^{\lambda_{\ell}n} $ to denote respectively the subsequences of $ \vbfxa_i $ and $ \vbfxb_j $ whose components are in $ \cI_{\ell} $.
	Note that for any  $ 1\le i_1<i_2\le M_1 $ and $ 1\le j_1<j_2\le M_2 $, by \Cref{lem:type-concac},
	\begin{align}
	\expt{\tau_{\vbfxa_{i_1}, \vbfxa_{i_2}, \vbfxb_{j_1}, \vbfxb_{j_2}}} =& \sum_{\ell = 1}^k \lambda_\ell \expt{\tau_{\vbfxa_{i_1,\ell}, \vbfxa_{i_2,\ell}, \vbfxb_{j_1,\ell}, \vbfxb_{j_2,\ell}}} = \sum_{\ell = 1}^k\lambda_\ell P_{1,\ell}^{\ot2}\ot P_{2,\ell}^{\ot2} = \distraabb , \notag \\
	\expt{\tau_{\vbfxa_{i_1}, \vbfxa_{i_2}, \vbfxb_{j_1}}} =& \sum_{\ell = 1}^k \lambda_\ell \expt{\tau_{\vbfxa_{i_1,\ell}, \vbfxa_{i_2,\ell}, \vbfxb_{j_1,\ell}}} = \sum_{\ell = 1}^k\lambda_\ell P_{1,\ell}^{\ot2}\ot P_{2,\ell} = \distraab, \notag \\
	\expt{\tau_{\vbfxa_{i_1}, \vbfxb_{j_1}, \vbfxb_{j_2}}} =& \sum_{\ell = 1}^k \lambda_\ell \expt{\tau_{\vbfxa_{i_1,\ell}, \vbfxb_{j_1,\ell}, \vbfxb_{j_2,\ell}}} = \sum_{\ell = 1}^k\lambda_\ell P_{1,\ell}\ot P_{2,\ell}^{\ot2} = \distrabb. \notag 
	\end{align}

	Let $ \cC_i' $ be the subcode of $ \cC_i $ such that all codewords in $ \cC_i' $ restricted to $ \cI_{\ell} $ are $ P_{i,\ell} $-constant composition ($ i = 1,2,\ell \in[k] $). 
	The size of $ \cC_i' $ can be concentrated similarly as before. 
	\begin{align}
	\expt{|\cC_i'|} =& \sum_{j =1}^{M_i}\prob{\forall \ell\in[k],\;\tau_{\vbfx^i_{j,\ell}} = P_{i,\ell}} 
	= \sum_{j = 1}^{M_i}\prod_{\ell = 1}^k \prob{\tau_{\vbfx^i_{j,\ell}} = P_{i,\ell}} 
	\asymp M_i \prod_{\ell = 1}^k\nu(P_{i,\ell}, \lambda_\ell n)^{-1}. \notag 
	\end{align}
	By \Cref{lem:chernoff},
	\begin{align}
	\prob{|\cC_i'|\notin(1\pm1/2)\expt{|\cC_i'|}} \le& 2\exp\paren{-\frac{M_i}{12\prod_{\ell = 1}^k\nu(P_{i,\ell}, \lambda_\ell n)}}. \label{eqn:const-cc-mix} 
	\end{align}

	Let 
	\begin{equation}
	\begin{aligned}
	\rho_{1,2} \coloneqq& \distinf{\distraabb}{\cKab} > 0,  \\
	\rho_1 \coloneqq& \distinf{\distraab}{\cKa} > 0,  \\
	\rho_2 \coloneqq& \distinf{\distrabb}{\cKb} > 0,  \\
	\eps \coloneqq& \frac{1}{2}\min\curbrkt{\rho_{1,2}, \rho_1, \rho_2}>0. 
	\end{aligned}
	\label{eqn:define-rho-mix}
	\end{equation}
	For any $ 1\le i_1<i_2\le M_1 $ and $ 1\le j_1<j_2\le M_2 $, 
	\begin{align}
	 \prob{\tau_{\vbfxa_{i_1}, \vbfxa_{i_2}, \vbfxb_{j_1}, \vbfxb_{j_2}} \in\cKab} 
	\le& \prob{\exists \ell\in[k],\; \distinf{\tau_{\vbfxa_{i_1,\ell}, \vbfxa_{i_2,\ell}, \vbfxb_{j_1,\ell}, \vbfxb_{j_2,\ell}}}{P_{1,\ell}^{\ot2}\ot P_{2,\ell}^{\ot2}} \ge \eps} \label[ineq]{eqn:reason} \\
	\le& k \cdot \card{\cXa}^2\card{\cXb}^2 \cdot2\exp\paren{-\frac{n\eps^2}{3}}. \label{eqn:mix-case1-joint} 
	\end{align}
	\Cref{eqn:reason} follows since $ \distinf{\tau_{\vbfxa_{i_1,\ell}, \vbfxa_{i_2,\ell}, \vbfxb_{j_1,\ell}, \vbfxb_{j_2,\ell}}}{P_{1,\ell}^{\ot2}\ot P_{2,\ell}^{\ot2}} < \eps $ for all $ \ell\in[k] $ implies 
	\begin{align}
	& \distinf{\tau_{\vbfxa_{i_1}, \vbfxa_{i_2}, \vbfxb_{j_1}, \vbfxb_{j_2}}}{\distraabb} \notag \\
	=& \max_{(\xa_1,\xa_2,\xb_1,\xb_2)\in\cXa^2\times\cXb^2}\abs{\sum_{\ell = 1}^k\lambda_\ell \tau_{\vbfxa_{i_1,\ell}, \vbfxa_{i_2,\ell}, \vbfxb_{j_1,\ell}, \vbfxb_{j_2,\ell}}(\xa_1,\xa_2,\xb_1,\xb_2) - \sum_{\ell = 1}^k\lambda_\ell P_{1,\ell}^{\ot2}\ot P_{2,\ell}^{\ot2}(\xa_1,\xa_2,\xb_1,\xb_2) } \notag \\
	\le& \sum_{\ell = 1}^k\lambda_\ell \max_{(\xa_1,\xa_2,\xb_1,\xb_2)\in\cXa^2\times\cXb^2} \abs{\tau_{\vbfxa_{i_1,\ell}, \vbfxa_{i_2,\ell}, \vbfxb_{j_1,\ell}, \vbfxb_{j_2,\ell}}(\xa_1,\xa_2,\xb_1,\xb_2) - P_{1,\ell}^{\ot2}\ot P_{2,\ell}^{\ot2}(\xa_1,\xa_2,\xb_1,\xb_2)} \notag \\
	=& \sum_{\ell = 1}^k\lambda_\ell \distinf{\tau_{\vbfxa_{i_1,\ell}, \vbfxa_{i_2,\ell}, \vbfxb_{j_1,\ell}, \vbfxb_{j_2,\ell}}}{P_{1,\ell}^{\ot2}\ot P_{2,\ell}^{\ot2}} 
	< \eps < \rho_{1,2}, \notag 
	\end{align}
	which in turn implies $ \tau_{\vbfxa_{i_1}, \vbfxa_{i_2}, \vbfxb_{j_1}, \vbfxb_{j_2}} \notin\cKab $. 
	In \Cref{eqn:mix-case1-joint}, we took a union bound over $ \ell\in[k] $ where $ k = \cO(1) $. 

	Similarly, we have
	\begin{align}
	\prob{\tau_{\vbfxa_{i_1}, \vbfxa_{i_2}, \vbfxb_{j}} \in\cKa} 
	\le& k \cdot \card{\cXa}^2\card{\cXb} \cdot2\exp\paren{-\frac{n\eps^2}{3}}, \label{eqn:mix-case1-marg1}
	\end{align}
	for all $ 1\le i_1<i_2\le M_1 $ and $ 1\le j\le M_2 $; 
	and 
	\begin{align}
	\prob{\tau_{\vbfxa_{i}, \vbfxb_{j_1}, \vbfxb_{j_2}} \in\cKb} 
	\le& k \cdot \card{\cXa}\card{\cXb}^2 \cdot2\exp\paren{-\frac{n\eps^2}{3}}, \label{eqn:mix-case1-marg2}
	\end{align}
	for all $ 1\le i\le M_1 $ and $ 1\le j_1<j_2\le M_2 $. 
	Taking further union bounds on \Cref{eqn:mix-case1-joint,eqn:mix-case1-marg1,eqn:mix-case1-marg2} over $ ((i_1,i_2), (j_1,j_2)) $, $ ((i_1,i_2), j) $ and $ (i,(j_1,j_2)) $ respectively ensures that \Cref{eqn:positive-product-bound2,eqn:positive-product-bound3,eqn:positive-product-bound4} still hold. 
	The rest of the proof remains the same and we get a good code pair $ (\cCa',\cCb') $ of rate $ R(\cCa')>0,R(\cCb')>0 $. 
\end{proof}

\begin{proof}[Proof of \Cref{itm:ach-2,itm:ach-3}]
We only prove \Cref{itm:ach-2} since \Cref{itm:ach-3} is the same once the roles of the first and second users are swapped.

Suppose $ \distraab\in\gooda\setminus\cKa $ has a decomposition
$ \distraab = \sum_{\ell = 1}^k\lambda_\ell P_{1,\ell}^{\ot2}\ot P_{2,\ell} $ for some $ k\in\bZ_{\ge1} $, $ \curbrkt{\lambda_\ell}_{\ell = 1}^k\subset(0,1] $ with $ \sum_{\ell = 1}^k\lambda_\ell = 1 $ and $ \curbrkt{P_{1,\ell}}_{\ell = 1}^k\subset\Delta(\cXa^2) $, $ \curbrkt{P_{2,\ell}}_{\ell = 1}^k\subset\Delta(\cXb^2) $. 

Partition $ [n] $ into $k$ subsets $ \cI_1,\cdots,\cI_k $ such that $ |\cI_{\ell}| = \lambda_{\ell}n $ (${\ell}\in[k]$). 
Construct a codebook pair $ (\cCa,\cCb) $ as follows. 
The second codebook $ \cCb $ only consists of one (arbitrary) codeword $ \vxb\in\cXb^n $ satisfying the following property. 
Let $ \vxb_{\ell}\in\cXb^{\lambda_\ell n} $ denote the subsequence of $ \vxb $ restricted to $ \cI_\ell $. 
For each $ \ell\in[k] $, $ \tau_{\vxb_\ell} = P_{2,\ell} $. 
The first codebook $ \cCa\in\cXa^{M\times n} $ consists of $M$ codewords $ \vbfxa_1,\cdots,\vbfxa_M $, where for each $ i\in [M] $ and $ \ell\in[k] $, $ \vbfxa_{i,\ell}\iid P_{1,\ell}^{\ot(\lambda_\ell n)} $. 
Note that for all $ 1\le i_1<i_2\le M $, 
$\expt{\tau_{\vbfxa_{i_1}, \vbfxa_{i_2}, \vxb}} = \distraab$. 
Let $ \cCa' $ be the subcode of $ \cCa $ whose codewords restricted to $ \cI_\ell $ are all $ P_{1,\ell} $-constant composition ($ \ell\in[k] $). 
For $ \cCa' $, \Cref{eqn:const-cc-mix} still holds. 
Therefore, $ R(\cCa')\asymp R(\cCa) $ ($ n\to\infty $). 
We define $\rho_1 $ in the same way as in \Cref{eqn:define-rho-mix}.
Let $ \eps\coloneqq\rho_1/2 $. 
Since 
\begin{align}
\distinf{\tau_{\vbfxa_{i_1}, \vbfxa_{i_2}, \vxb}}{\distraab} \le& \sum_{\ell = 1}^k\lambda_\ell\distinf{\tau_{\vbfxa_{i_1,\ell}, \vbfxa_{i_2,\ell}, \vxb_{\ell}}}{P_{1,\ell}^{\ot2}\ot P_{2,\ell}}, \notag 
\end{align}
a Chernoff-union bound gives
\begin{align}
\prob{\tau_{\vbfxa_{i_1}, \vbfxa_{i_2}, \vxb}\in\cKa}\le& 
\prob{\distinf{\tau_{\vbfxa_{i_1}, \vbfxa_{i_2}, \vxb}}{\distraab}\ge\eps} \notag \\
\le& \prob{\exists \ell\in[k],\;\distinf{\tau_{\vbfxa_{i_1,\ell}, \vbfxa_{i_2,\ell}, \vxb_{\ell}}}{P_{1,\ell}^{\ot2}\ot P_{2,\ell}}\ge\eps} \notag \\
\le& k\cdot\card{\cXa}^2\cdot2\exp\paren{-\frac{n\eps^2}{3}}. \notag
\end{align}
Since $ k $ is a constant independent of $n$, a union bound over $ (i_1,i_2)\in\binom{[|\cCa'|]}{2} $ gives \Cref{eqn:prod-pos-zero-bound}. 
Under a proper choice of $ R_1>0 $, we get that $ (\cCa',\cCb) $ is a good codebook pair with probability at least $ 1 - 2^{-\Omega(n)} $. 
\end{proof}

\begin{remark}
\label{rk:coded-time-sharing}
In the above proof of \Cref{thm:achievability}, the partition $ \curbrkt{\cI_\ell}_{\ell = 1}^k $ can be thought of as a \emph{time-sharing} sequence $ \vu\in[k]^n $ of type $ P_\bfu $ given by the coefficients $ \curbrkt{\lambda_i}_{i = 1}^k $ of the convex combination. 
That is, $ P_\bfu(u) = \lambda_u $ for any $ u\in[k] $. 
This particular type of time-sharing scheme is known as the \emph{coded} time-sharing in the literature \cite{pereg-steinberg-2019-avmac}. 
As explained in \cite[Remark 6]{pereg-steinberg-2019-avmac}, the classical \emph{operational} time-sharing in network information theory does not work for (oblivious) arbitrarily varying channels with constraints. 
This is because the adversary can concentrate his power on coordinates in a single $ \cI_\ell $. 
This effectively increases the noise level in $ \cI_\ell $ significantly and the $\ell$-th component codebook in the time-sharing is not necessarily resilient to this effective level of noise. 
The above argument also applies to the omniscient adversarial channel model. 
More discussions on the ``non-tensorization'' of good codes for adversarial channels and its implications to single-letterization of capacity expressions can be found in \Cref{itm:open-tensorization} of \Cref{sec:concl-rmk-open-prob}. 
These phenomena suggest that the capacity region of adversarial channels does not have to be convex in general (see \Cref{rk:nonconvex}). 

Furthermore, we emphasize the following point in the above achievability proof. 
Each component $ P_{1,\ell} $ and $ P_{2,\ell} $ of the convex combinations is not necessarily non-confusable, i.e., $ P_{1,\ell}^{\ot2}\ot P_{2,\ell}^{\ot2} $, $ P_{1,\ell}^{\ot2}\ot P_{2,\ell} $ or $ P_{1,\ell}\ot P_{2,\ell}^{\ot2} $ may be confusable. 
Nonetheless, it is only desired that their convex combinations are non-confusable. 
\end{remark}

\begin{remark}
\label{rk:one-codeword}
In the above proof of \Cref{itm:ach-2,itm:ach-3}, the transmitter with zero capacity cannot even reliably transmit a single bit through the MAC since the codebook contains only one codeword. 
Such achievability proofs go through as long as there exist non-marginally confusable distributions. 
In contrast, in the AVMAC setting \cite{pereg-steinberg-2019-avmac}, besides non-marginal symmetrizability, non-joint symmetrizability is a necessary condition for achieving any positive rate even individually instead of jointly. 
More discussions on the differences between our results and Pereg--Steinberg's \cite{pereg-steinberg-2019-avmac} can be found in \Cref{sec:comparison-our-peregsteinberg}. 
\end{remark}

\subsection{Inner bounds via product distributions}
\label{sec:inner_bound_prod_distr}
\begin{lemma}[Inner bounds via product distributions]
\label{lem:inner_bound_prod_distr}
Fix input distributions $ (\ipdistra, \ipdistrb)\in\ipconstra\times\ipconstrb $. 
\begin{enumerate}
	\item\label[case]{itm:rate-prod-pos-pos} If $ \prodab\notin\cKab $, $ \ipdistra\notin\cKa $ and $ \ipdistrb\notin\cKb $, then rate pairs $ (R_1, R_2)\in\bR_{\ge0}^2 $ satisfying 
	\begin{equation}
	\begin{aligned}
	R_1 \le& D(\ipdistra,\ipdistrb) - \wh D(\ipdistra,\ipdistrb)  \\
	R_2 \le& D(\ipdistra,\ipdistrb) - \wh D(\ipdistra,\ipdistrb) \\
	R_1+R_2\le& \wh D(\ipdistra,\ipdistrb) 
	\end{aligned}
	\label{eqn:achieve-product-12}
	\end{equation}
	are achievable, where 
	\begin{align}
	D(\ipdistra,\ipdistrb)\coloneqq& \min_{ \distraabb\in \cKab } \kl{ \distraabb}{ \ipdistra^{\ot2}\ot \ipdistrb^{\ot2} } \notag \\
	\wh D(\ipdistra,\ipdistrb) \coloneqq& \min\curbrkt{
	\min_{P_{\bfxa_1,\bfxa_2,\bfxb} \in \cKa} \kl{P_{\bfxa_1,\bfxa_2,\bfxb} }{ \ipdistra^{\ot2}\ot \ipdistrb },
	\min_{P_{\bfxa,\bfxb_1,\bfxb_2} \in \cKb} \kl{ P_{\bfxa,\bfxb_1,\bfxb_2} }{ \ipdistra\ot \ipdistrb^{\ot2} } }. \notag 
	\end{align}

	\item\label[case]{itm:rate-prod-pos-zero} If $ \ipdistra^{\ot2}\ot \ipdistrb^{\ot2} \notin\cKab $, $ \ipdistra^{\ot2} \notin \cKa $ and $\ipdistrb^{\ot2}\in \cKb $, then rate pairs $ (R_1, 0) $ satisfying 
	\begin{align}
	0\le R_1 \le \min_{P_{\bfxa_1, \bfxa_2, \bfxb} \in \cKa} \kl{P_{\bfxa_1, \bfxa_2, \bfxb}}{\ipdistra^{\ot2}\ot \ipdistrb} \label{eqn:rate-prod-pos-zero} 
	\end{align}
	are achievable.

	\item\label[case]{itm:rate-prod-zero-pos} If $ \ipdistra^{\ot2} \ot\ipdistrb^{\ot2} \notin\cKab $, $ \ipdistra^{\ot2}\in\cKa $ and $ \ipdistrb^{\ot2}\notin\cKb $, then rate pairs $ (0, R_2) $ satisfying 
	\begin{align}
	0 \le R_2 \le \min_{P_{\bfxa,  \bfxb_1, \bfxb_2} \in\cKb }\kl{P_{\bfxa,  \bfxb_1, \bfxb_2}}{\ipdistra\ot\ipdistrb^{\ot2}} \label{eqn:rate-prod-zero-pos} 
	\end{align}
	are achievable. 
\end{enumerate}
\end{lemma}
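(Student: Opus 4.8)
The plan is to prove all three parts by random coding with expurgation, pinning down the exponential decay of the relevant error events via Sanov's theorem (\Cref{lem:sanov}); only its upper-tail direction, i.e.\ the method-of-types estimate $\prob{\tau_{\vbfx}\in\cK}\doteq 2^{-n\min_{Q\in\cK}\kl{Q}{P_\bfx}}$ for a closed set $\cK$, is needed. In every part I would sample codewords i.i.d.\ from the prescribed product distributions and then pass to the constant-composition subcodes using \Cref{lem:const-comp-conc}: this forces the codeword types to be exactly $\ipdistra,\ipdistrb$ (so the input constraints and the self-coupling structure underlying \Cref{def:conf-set} hold) at only a sub-exponential cost in size. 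Since the capacity region is the closure of the achievable set, it suffices to achieve every rate pair strictly inside the claimed region.

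For \Cref{itm:rate-prod-pos-pos}, sample $\cCa=\curbrkt{\vbfxa_1,\dots,\vbfxa_{M_1}}$ i.i.d.\ $\ipdistra^{\ot n}$ and $\cCb=\curbrkt{\vbfxb_1,\dots,\vbfxb_{M_2}}$ i.i.d.\ $\ipdistrb^{\ot n}$. By \Cref{claim:distributional-nonconf} a zero-error code is exactly one avoiding three families of confusable configurations: jointly confusable $4$-tuples $(\vbfxa_{i_1},\vbfxa_{i_2},\vbfxb_{j_1},\vbfxb_{j_2})$ with $i_1\neq i_2$, $j_1\neq j_2$; first-marginally confusable triples $(\vbfxa_{i_1},\vbfxa_{i_2},\vbfxb_j)$ with $i_1\neq i_2$; and second-marginally confusable triples $(\vbfxa_i,\vbfxb_{j_1},\vbfxb_{j_2})$ with $j_1\neq j_2$. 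Because the coordinates of each such tuple are i.i.d., its joint type concentrates around $\ipdistra^{\ot 2}\ot\ipdistrb^{\ot 2}$ (resp.\ $\ipdistra^{\ot 2}\ot\ipdistrb$, $\ipdistra\ot\ipdistrb^{\ot 2}$), so by Sanov the probability of being confusable is $\doteq 2^{-nD(\ipdistra,\ipdistrb)}$ in the joint case and $\doteq 2^{-n\wh D_1}$, $\doteq 2^{-n\wh D_2}$ in the two marginal cases, where $\wh D_1,\wh D_2$ are the two divergence minima defining $\wh D(\ipdistra,\ipdistrb)$; these exponents are strictly positive by the hypotheses $\prodab\notin\cKab$, $\ipdistra^{\ot2}\ot\ipdistrb\notin\cKa$, $\ipdistra\ot\ipdistrb^{\ot2}\notin\cKb$. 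A union bound then bounds the expected number of jointly bad $4$-tuples by $\doteq M_1^2M_2^2\,2^{-nD}$, of first-marginally bad (unordered) pairs of $\cCa$-indices by $\doteq M_1^2M_2\,2^{-n\wh D_1}$, and of second-marginally bad pairs of $\cCb$-indices by $\doteq M_1M_2^2\,2^{-n\wh D_2}$.

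Now expurgate in three passes: delete one $\cCa$-codeword from each first-marginally bad pair, then one $\cCb$-codeword from each second-marginally bad pair, then one $\cCa$-codeword from each surviving jointly bad $4$-tuple; the later passes only shrink the families handled earlier, so the result is confusability-free. By Markov's inequality, with positive probability the number of deletions from $\cCa$ is at most the first plus the third expected count, and from $\cCb$ at most the second; these are $\le M_1/2$ and $\le M_2/2$ as soon as $R_1+R_2\le\wh D$ and $R_1+2R_2\le D$ hold with a little slack (routing the joint deletions to $\cCb$ instead would ask for $2R_1+R_2\le D$). Since the stated region $\curbrkt{R_1\le D-\wh D,\ R_2\le D-\wh D,\ R_1+R_2\le\wh D}$ is elementarily equivalent to $\curbrkt{2R_1+R_2\le D,\ R_1+2R_2\le D,\ R_1+R_2\le\wh D}$, every interior rate pair satisfies what is needed; together with the size-concentration of \Cref{lem:const-comp-conc} this yields $(\cCa',\cCb')$ of rates asymptotically $(R_1,R_2)$ with zero error. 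Parts \Cref{itm:rate-prod-pos-zero} and \Cref{itm:rate-prod-zero-pos} are the degenerate specializations: for \Cref{itm:rate-prod-pos-zero} place a single codeword in $\cCb$, which makes the joint and second-marginal families vacuous (both need two distinct $\cCb$-indices), leaving only the first-marginal expurgation and hence the lone constraint $R_1\le\min_{P_{\bfxa_1,\bfxa_2,\bfxb}\in\cKa}\kl{P_{\bfxa_1,\bfxa_2,\bfxb}}{\ipdistra^{\ot2}\ot\ipdistrb}$ of \Cref{eqn:rate-prod-pos-zero}; \Cref{itm:rate-prod-zero-pos} is the mirror image, and the extra hypotheses in these two parts only certify the one-sided regime and are not used in the construction.

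The step I expect to be the main obstacle is the simultaneous expurgation in \Cref{itm:rate-prod-pos-pos}: unlike the point-to-point case one must kill jointly confusable configurations and both marginal ones with a single coordinated round of deletions while keeping both $M_1$ and $M_2$ essentially unchanged, and then verify that the resulting deletion budgets assemble into exactly the stated polytope. The remaining ingredients — concentration of the codeword types, the Sanov exponent computation, and the Chernoff/Markov tail estimates — are routine.
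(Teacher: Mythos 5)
Your proposal matches the paper's proof in every essential respect: random coding from the product distributions, Sanov's theorem to pin down the exponents $D$ and $\wh D$, a first-moment/Markov count of bad configurations, expurgation, and the constant-composition reduction. The only substantive variant is your expurgation bookkeeping: for jointly confusable $4$-tuples you delete from $\cCa$ only, whereas the paper deletes one codeword from each of $\cCa$ and $\cCb$; this gives you the constraint set $\{R_1+2R_2\le D,\ R_1+R_2\le \wh D\}$ rather than the paper's $\{R_1+2R_2\le D,\ 2R_1+R_2\le D,\ R_1+R_2\le \wh D\}$, and both contain the region in \Cref{eqn:achieve-product-12}, so the lemma follows either way.

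One small inaccuracy worth correcting: the region $\{R_1\le D-\wh D,\ R_2\le D-\wh D,\ R_1+R_2\le \wh D\}$ is not \emph{equivalent} to $\{2R_1+R_2\le D,\ R_1+2R_2\le D,\ R_1+R_2\le \wh D\}$ --- for example with $D=2,\ \wh D=1.5$ the point $(1,0)$ lies in the latter but not the former. Only the inclusion $\subseteq$ holds (from $R_1+2R_2=(R_1+R_2)+R_2\le \wh D+(D-\wh D)=D$ and symmetrically), and that one-way inclusion is all the argument needs; the paper invokes exactly this inclusion implicitly rather than claiming equality.
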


\begin{corollary}[Inner bounds on capacity region]
\label{cor:in-bd-prod}
Let $ \mactwofull $ be a two-user omniscient adversarial MAC. 
The capacity region of $ \mactwo $ contains as a subset the following region
\begin{align}
\bigcup_{\substack{(\ipdistra,\ipdistrb)\in\ipconstra\times\ipconstrb \\ \text{conditions in \Cref{itm:rate-prod-pos-pos} are satisfied}}}\curbrkt{(R_1,R_2):(R_1,R_2)\text{ satisfies }\Cref{eqn:achieve-product-12}} \notag \\
\cup\bigcup_{\substack{(\ipdistra,\ipdistrb)\in\ipconstra\times\ipconstrb \\ \text{conditions in \Cref{itm:rate-prod-pos-zero} are satisfied}}}\curbrkt{(R_1,0):R_1\text{ satisfies \Cref{eqn:rate-prod-pos-zero}}} \notag \\
\cup\bigcup_{\substack{(\ipdistra,\ipdistrb)\in\ipconstra\times\ipconstrb \\ \text{conditions in \Cref{itm:rate-prod-zero-pos} are satisfied}}}\curbrkt{(0,R_2):R_2\text{ satisfies \Cref{eqn:rate-prod-zero-pos}}}. \notag 
\end{align}
\end{corollary}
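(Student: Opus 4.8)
The plan is to derive the corollary as an immediate consequence of Lemma~\ref{lem:inner_bound_prod_distr} together with the definition of the capacity region (Definition~\ref{def:ach-rate-cap-region}). Recall that the capacity region is the closure of the set of all achievable rate pairs; hence it suffices to show that every rate pair appearing in the displayed union is achievable, and then invoke monotonicity of closure.

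First I would take an arbitrary rate pair $(R_1,R_2)$ in the first sub-union. By the way that sub-union is indexed, there exists a feasible pair $(\ipdistra,\ipdistrb)\in\ipconstra\times\ipconstrb$ for which the hypotheses of Case~\ref{itm:rate-prod-pos-pos} of Lemma~\ref{lem:inner_bound_prod_distr} are met, i.e.\ $\prodab\notin\cKab$, $\ipdistra\notin\cKa$ and $\ipdistrb\notin\cKb$, and for which $(R_1,R_2)$ satisfies the system~\eqref{eqn:achieve-product-12}; Case~\ref{itm:rate-prod-pos-pos} of the lemma then asserts that $(R_1,R_2)$ is achievable. The identical reasoning, applied to Case~\ref{itm:rate-prod-pos-zero} (resp.\ Case~\ref{itm:rate-prod-zero-pos}), shows that every pair $(R_1,0)$ in the second sub-union (resp.\ every pair $(0,R_2)$ in the third sub-union) is achievable.

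Therefore the union of the three families is a subset of the set of achievable rate pairs, and since the capacity region is the closure of that set, it contains this union, which is exactly what the corollary claims. I do not anticipate any genuine obstacle here: the only points requiring care are (i) reading off, for each of the three index sets, the matching hypothesis pattern among the three cases of Lemma~\ref{lem:inner_bound_prod_distr}, and (ii) observing that no closure subtlety arises, since achievability of each individual rate pair already places it in the (closed) capacity region, so the inclusion holds without any additional limiting argument.
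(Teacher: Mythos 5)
Your argument is correct and is precisely the immediate-consequence reasoning the paper itself relies on: the corollary carries no separate proof in the text because it is exactly the observation that each of the three cases of Lemma~\ref{lem:inner_bound_prod_distr} certifies achievability of the corresponding rate pairs for any admissible $(\ipdistra,\ipdistrb)$, and the capacity region, being the closure of the achievable set, contains their union.
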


\begin{proof}[Proof of \Cref{itm:rate-prod-pos-pos}]
	Sample a random code pair $ \paren{\cCa,\cCb}\subseteq\cXa^n\times\cXb^n $ of sizes $ (M_1, M_2) $, where $ \cC_i $ consists of codewords $ \vbfx^i_1,\cdots,\vbfx^i_{M_i} $ i.i.d. according to $ P_i^\tn $ ($i = 1,2$). 
	By \Cref{lem:type-class-conc}, the the expected number of codewords in $ \cC_i $ of type $ P_i $ is asymptotically $ M_i/\nu(P_i, n) $.
	For any $ 1\le i_1<i_2\le M_1 $ and $ 1\le j_1<j_2\le M_2 $, by \Cref{lem:sanov}, 
	\begin{align}
	\prob{\tau_{\vbfxa_{i_1}, \vbfxa_{i_2}, \vbfxb_{j_1}, \vbfxb_{j_2}}\in\cKab} \doteq& \sup_{P_{\bfxa_1,\bfxa_2,\bfxb_1,\bfxb_2}\in\cKab} 2^{-n\kl{P_{\bfxa_1,\bfxa_2,\bfxb_1,\bfxb_2}}{\ipdistra^{\ot2}\ot\ipdistrb^{\ot2}}}, \notag \\
	\prob{\tau_{\vbfxa_{i_1}, \vbfxa_{i_2}, \vbfxb_{j_1}}\in\cKa} \doteq& \sup_{P_{\bfxa_1,\bfxa_2,\bfxb}\in\cKa} 2^{-n\kl{P_{\bfxa_1,\bfxa_2,\bfxb}}{\ipdistra^{\ot2}\ot\ipdistrb}}, \notag \\
	\prob{\tau_{\vbfxa_{i_1}, \vbfxb_{j_1}, \vbfxb_{j_2}}\in\cKb} \doteq& \sup_{P_{\bfxa,\bfxb_1,\bfxb_2}\in\cKb} 2^{-n\kl{P_{\bfxa,\bfxb_1,\bfxb_2}}{\ipdistra\ot\ipdistrb^{\ot2}}}. \notag 
	\end{align}
	Hence the expected number of confusable tuples $ (\vbfxa_{i_1}, \vbfxa_{i_2}, \vbfxb_{j_1}, \vbfxb_{j_2}) $, $ (\vbfxa_{i_1}, \vbfxa_{i_2}, \vbfxb_j) $ and $ (\vbfxa_i, \vbfxb_{j_1}, \vbfxb_{j_2}) $ is respectively
	\begin{align}
	\binom{M_1}{2}\binom{M_2}{2} 2^{-n\inf \kl{P_{\bfxa_1,\bfxa_2,\bfxb_1,\bfxb_2}}{\ipdistra^{\ot2}\ot\ipdistrb^{\ot2}}} \le& M_1^2M_2^22^{-n\inf \kl{P_{\bfxa_1,\bfxa_2,\bfxb_1,\bfxb_2}}{\ipdistra^{\ot2}\ot\ipdistrb^{\ot2}}}, \notag \\
	\binom{M_1}{2}M_22^{-n \inf \kl{P_{\bfxa_1,\bfxa_2,\bfxb}}{\ipdistra^{\ot2}\ot\ipdistrb}} \le& M_1^2M_2 2^{-n \inf \kl{P_{\bfxa_1,\bfxa_2,\bfxb}}{\ipdistra^{\ot2}\ot\ipdistrb}}, \notag \\
	M_1\binom{M_2}{2}2^{-n\inf \kl{P_{\bfxa,\bfxb_1,\bfxb_2}}{\ipdistra\ot\ipdistrb^{\ot2}}} \le& M_1M_2^22^{-n\inf \kl{P_{\bfxa,\bfxb_1,\bfxb_2}}{\ipdistra\ot\ipdistrb^{\ot2}}} . \notag
	\end{align}
	Pick $ M_1,M_2 $ such that 
	\begin{align}
	M_1^2M_2^22^{-n\inf \kl{P_{\bfxa_1,\bfxa_2,\bfxb_1,\bfxb_2}}{\ipdistra^{\ot2}\ot\ipdistrb^{\ot2}}} \le& \min\curbrkt{ \frac{M_1}{3\nu(\ipdistra, n)}, \frac{M_2}{3\nu(\ipdistrb, n)} }, \notag \\
	M_1^2M_2 2^{-n \inf \kl{P_{\bfxa_1,\bfxa_2,\bfxb}}{\ipdistra^{\ot2}\ot\ipdistrb}} \le& \frac{M_1}{3\nu(\ipdistra, n)} \notag \\
	M_1M_2^22^{-n\inf \kl{P_{\bfxa,\bfxb_1,\bfxb_2}}{\ipdistra\ot\ipdistrb^{\ot2}}} \le& \frac{M_2}{3\nu(\ipdistrb, n)}. \notag 
	\end{align}
	This can be satisfied if
	\begin{align}
	2R_1 + 2R_2 - \inf\kl{P_{\bfxa_1,\bfxa_2,\bfxb_1,\bfxb_2}}{\ipdistra^{\ot2}\ot\ipdistrb^{\ot2}} \le& \min\curbrkt{ R_1 , R_2 } - o(1), \notag \\
	2R_1 + R_2 - \inf \kl{P_{\bfxa_1,\bfxa_2,\bfxb}}{\ipdistra^{\ot2}\ot\ipdistrb} \le& R_1 - o(1), \notag \\
	R_1 + 2R_2 - \inf \kl{P_{\bfxa,\bfxb_1,\bfxb_2}}{\ipdistra\ot\ipdistrb^{\ot2}} \le& R_2 - o(1), \notag 
	\end{align}
	i.e.,
	\begin{align}
	R_1+2R_2\le&\inf\kl{P_{\bfxa_1,\bfxa_2,\bfxb_1,\bfxb_2}}{\ipdistra^{\ot2}\ot\ipdistrb^{\ot2}} -o(1), \notag \\
	2R_1+R_2\le&\inf\kl{P_{\bfxa_1,\bfxa_2,\bfxb_1,\bfxb_2}}{\ipdistra^{\ot2}\ot\ipdistrb^{\ot2}} -o(1), \notag \\
	R_1+R_2\le&\min\curbrkt{\inf \kl{P_{\bfxa_1,\bfxa_2,\bfxb}}{\ipdistra^{\ot2}\ot\ipdistrb} -o(1), \inf \kl{P_{\bfxa,\bfxb_1,\bfxb_2}}{\ipdistra\ot\ipdistrb^{\ot2}} -o(1) }. \notag 
	\end{align}
	That is, it suffices to take $ (R_1, R_2) $ satisfying \Cref{eqn:achieve-product-12} (as $ n\to\infty $). 

	Now, we remove all codewords from $ \cCa $ and $ \cCb $ whose types are not $ \ipdistra $ and $ \ipdistrb $ respectively.
	For all $ 1\le i_1<i_2\le M_1 $ and $ 1\le j_1<j_2\le M_2 $, we also remove 
	\begin{enumerate}
		\item one of $ (\vbfxa_{i_1}, \vbfxa_{j_2}) $ from $ \cCa $ and one of $ (\vbfxb_{j_1}, \vbfxb_{j_2}) $ from $ \cCb $ if $ \tau_{\vbfxa_{i_1}, \vbfxa_{i_2}, \vbfxb_{j_1}, \vbfxb_{j_2}} \in\cKab $;
		\item one of $ (\vbfxa_{i_1}, \vbfxa_{i_2}) $ from $ \cCa $ if $ \tau_{\vbfxa_{i_1}, \vbfxa_{i_2}, \vbfxb_{j_1}}\in\cKa $;
		\item one of $ (\vbfxb_{j_1}, \vbfxb_{j_2}) $ from $ \cCb $ if $ \tau_{\vbfxa_{i_1}, \vbfxb_{j_1}, \vbfxb_{j_2}}\in\cKb $.
	\end{enumerate}
	After the removal, $ \paren{\cCa,\cCb} $ becomes a good code pair. 
	In total, the expected number of codewords we removed from $ \cC_i $ is at most 
	\begin{align}
	M_i - \frac{M_i}{\nu(P_i,n)} + \frac{M_i}{3\nu(P_i,n)} + \frac{M_i}{3\nu(P_i,n)} = M_i - \frac{M_i}{3\nu(P_i,n)} \notag 
	\end{align}
	for $ i = 1,2 $. 
	Therefore, $(R_1,R_2) $ is preserved after the removal. 
	Noting that we have exhibited the existence of code pairs that attain zero error for $ \mactwo $ with desired rates, we finish the proof.
\end{proof}

\begin{proof}[Proof of \Cref{itm:rate-prod-pos-zero,itm:rate-prod-zero-pos}]
We only prove \Cref{itm:rate-prod-pos-zero}. 
\Cref{itm:rate-prod-zero-pos} will follow verbatim. 
Let $ \vxb\in\cXb^n $ be an arbitrary codeword of type $ \ipdistrb $. 
The codebook $ \cCb $ only consists of $ \vxb $. 
The codebook $ \cCa $ consists of $ M $ codewords $ \vbfxa_1,\cdots,\vbfxa_M $ i.i.d. according to $ \ipdistra^\tn $. 
Again, the expected number of codewords in $ \cCa $ of type $ \ipdistra $ is asymptotically $ M/\nu(\ipdistra,n) $.  
By \Cref{lem:sanov}, for any $ 1\le i_1<i_2\le M $, 
\begin{align}
\prob{\tau_{\vbfxa_{i_1}, \vbfxa_{i_2}, \vxb} \in\cKa} \doteq& \sup_{\distraab\in\cKa}2^{-n\kl{\distraab}{\ipdistra^{\ot2}\ot\ipdistrb}}. \notag 
\end{align}
Hence the expected number of confusable tuples $ (\vbfxa_{i_1},\vbfxa_{i_2},\vxb) $ is 
\begin{align}
\binom{M}{2}2^{-n\inf\kl{\distraab}{\ipdistra^{\ot2}\ot\ipdistrb}} \le& M^22^{-n\inf\kl{\distraab}{\ipdistra^{\ot2}\ot\ipdistrb}}. \notag 
\end{align}
Pick $M$ such that
\begin{align}
M^22^{-n\inf\kl{\distraab}{\ipdistra^{\ot2}\ot\ipdistrb}} \le& \frac{M}{2\nu(\ipdistra,n)}. \notag 
\end{align}
It suffices to take
\begin{align}
2R_1 - \inf\kl{\distraab}{\ipdistra^{\ot2}\ot\ipdistrb} \le& R_1 - o(1), \notag 
\end{align}
i.e., $ R_1 $ asymptotically satisfies \Cref{eqn:rate-prod-pos-zero}. 

We then remove all codewords from $ \cCa $ which have type different from $ \ipdistra $. 
We also remove $ \vbfxa_{i_1} $ if $ \tau_{\vbfxa_{i_1},\vbfxa_{i_2},\vxb}\in\cKa $ for some $ i_1<i_2\le M $. 
After removal we get a constant composition codebook pair that attains zero error.
The expected number of codewords we removed from $ \cCa $ is at most $ M - M/\nu(\ipdistra,n) + M/2\nu(\ipdistra,n) = M - M/2\nu(\ipdistra,n) $. 
Therefore, the removal does not (asymptotically) change the rate. 
This finishes the proof. 
\end{proof}

\begin{remark}
In \Cref{lem:inner_bound_prod_distr}, we did not obtain a pentagon region defined by three mutual information terms as is commonly seen in problems regarding MACs. 
It is perhaps due to our crude expurgation strategy. 
We believe that our inner bounds can be improved by employing more careful expurgation strategies (see \Cref{itm:open-better-inner-bounds} in \Cref{sec:concl-rmk-open-prob}). 
\end{remark}




\section{Converse, \Cref{itm:conv-pos-pos} in \Cref{thm:converse}}
\label{sec:conv-pos-pos}



In this section, we assume that $ \good = \emptyset $. 
Let $ (\cCa,\cCb)\subseteq\cXa^n\times\cXb^n $ be any good codebook pair. 
Without loss of rate, we assume that $ \cCa $ is $\ipdistra$-constant composition and $ \cCb $ is $ \ipdistrb $-constant composition. 
Our goal is to show that $ R(\cCa) $ and $ R(\cCb) $ cannot be simultaneously positive. 
In fact, we will show that at least one of $ M_1\coloneqq|\cCa| $ and $ M_2\coloneqq|\cCb| $ is bounded from above by a \emph{constant} (independent of $n$). 

\subsection{Subcode pair extraction}
\label{sec:subcodepair_extraction_ramsey}

\begin{definition}[Bipartite, uniform, complete hypergraphs]
\label{def:bipartite_unif_complete_hypergraph}
A hypergraph $ \cH = (\cV, \cE) $ is called \emph{$ (N_1 ,N_2) $-bipartite} if it is bipartite with $ \cV = \cV_1 \sqcup \cV_2 $ where $|\cV_1| = N_1 $ and $ |\cV_2| = N_2 $.
It is called \emph{$ ( k_1 , k_2) $-uniform} if every hyperedge contains $ k_1 $ vertices in $ \cV_1 $ and $ k_2 $ vertices in $ \cV_2 $. 
It is called \emph{complete} if every $  k_1  $-tuple of vertices in $ \cV_1 $ and every $ k_2 $-tuple of vertices in $ \cV_2 $ are connected. 
\end{definition}

\begin{theorem}[Bipartite hypergraph Ramsey's theorem \cite{bipartite-ramsey}]
\label{lem:hypergraph_ramsey}
Let $ N_1, N_2, D $ be integers that are at least 2. 
There exist  constants $ K_1 = K_2(N_1, N_2, D) $ and $ K_2 = K_2(N_1, N_2, D) $ such that for every $ (M_1, M_2) $-bipartite $(2,2)$-uniform complete hypergraph $ \cH = ((\cV_1, \cV_2), \cE ) $ such that $ |\cV_1| = M_1 \ge K_1 $ and $ |\cV_2| = M_2 \ge K_2 $, for every  $D$-coloring of $ \cE $, there must exist $ \cV_1' \subseteq \cV_1 $ and $ \cV_2' \subseteq \cV_2 $ such that $ |\cV_1'| \ge N_1, |\cV_2'| \ge N_2 $ and all hyperedges crossing $ \cV_1' $ and $ \cV_2' $ have the same color. 
\end{theorem}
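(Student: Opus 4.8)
The plan is to derive this bipartite hypergraph statement from two successive applications of the classical finite Ramsey theorem for ordinary ($2$-uniform) graphs, the only subtlety being the order in which the two vertex classes are processed. Write $R_D(N)$ for the smallest integer such that every $D$-coloring of the edges of a complete graph on $R_D(N)$ vertices contains a monochromatic clique on $N$ vertices; this exists and is finite by the classical Ramsey theorem.

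First I would fix the constants. Set $K_2 \coloneqq R_D(N_2)$ and $K_1 \coloneqq R_{D'}(N_1)$ where $D' \coloneqq D^{\binom{K_2}{2}}$; both are finite and depend only on $N_1, N_2, D$, exactly as the statement demands. Now let $\cH = ((\cV_1,\cV_2),\cE)$ with $|\cV_1| = M_1 \ge K_1$, $|\cV_2| = M_2 \ge K_2$, and let $c \colon \binom{\cV_1}{2}\times\binom{\cV_2}{2}\to[D]$ be the given $D$-coloring of the hyperedges. Fix once and for all an arbitrary subset $\cU\subseteq\cV_2$ with $|\cU| = K_2$. Since $|\cU|$ is a bounded constant, the set of functions $\binom{\cU}{2}\to[D]$ has the bounded cardinality $D'$.

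Next I would define an auxiliary $D'$-coloring $c^{\ast}$ of the edge set $\binom{\cV_1}{2}$ by letting $c^{\ast}(\{u_1,u_2\})$ be the function $\{v_1,v_2\}\mapsto c(\{u_1,u_2\},\{v_1,v_2\})$ on $\binom{\cU}{2}$, viewed as an element of $[D]^{\binom{\cU}{2}}$. Because $M_1 \ge K_1 = R_{D'}(N_1)$, Ramsey's theorem produces $\cV_1'\subseteq\cV_1$ with $|\cV_1'| \ge N_1$ on which $c^{\ast}$ is constant; that is, all pairs in $\binom{\cV_1'}{2}$ induce one and the same coloring $\chi\colon\binom{\cU}{2}\to[D]$. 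Applying Ramsey's theorem once more to $\chi$ — legitimate since $|\cU| = K_2 = R_D(N_2)$ — we obtain $\cV_2'\subseteq\cU\subseteq\cV_2$ with $|\cV_2'| \ge N_2$ on which $\chi$ is constant, say with value $d$. Then for every hyperedge crossing $\cV_1'$ and $\cV_2'$, i.e., every $(\{u_1,u_2\},\{v_1,v_2\})$ with $u_1,u_2\in\cV_1'$ distinct and $v_1,v_2\in\cV_2'$ distinct, one has $c(\{u_1,u_2\},\{v_1,v_2\}) = \chi(\{v_1,v_2\}) = d$, so all such hyperedges share the color $d$, which completes the argument.

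The step I expect to be the main obstacle — indeed the only point demanding care — is the ordering of the two reductions: one must truncate $\cV_2$ down to the \emph{fixed} size $K_2$ \emph{before} building the auxiliary coloring of $\binom{\cV_1}{2}$. Otherwise the number $D'$ of colors in that auxiliary coloring, and hence the threshold on $M_1$, would depend on $M_2$ rather than on $N_1,N_2,D$ alone, which is precisely what the uniform statement forbids. Beyond this, the proof is just a composition of Ramsey numbers and uses no combinatorial ingredient other than the classical theorem; the same template extends, by induction on the number of parts and by replacing edges with $k$-subsets, to $(k_1,k_2)$-uniform hypergraphs and to more than two vertex classes.
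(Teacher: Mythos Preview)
Your proof is correct and follows the standard two-step Ramsey reduction; the key observation that $\cV_2$ must be truncated to a fixed size before defining the auxiliary coloring is exactly right, and the argument goes through cleanly.

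Note, however, that the paper does not prove this theorem at all: it is stated with a citation to \cite{bipartite-ramsey} and used as a black box in the subcode extraction lemma that follows. So there is no comparison to make with the paper's own argument --- you have supplied a self-contained proof where the paper chose to quote the literature.
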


\begin{lemma}[Subcode pair extraction]
\label{lem:subcodepair_extraction}
For any code pair $ \paren{\cCa, \cCb} = \paren{ \curbrkt{\vxa_{k}}_{k = 1}^{M_1}, \curbrkt{\vxb_{\ell}}_{\ell = 1}^{M_2} } $ of sizes $ M_1 $ and $ M_2 $, respectively, there exists a subcode pair $ \paren{\cCa', \cCb'} = \paren{ \curbrkt{\vxa_{i}}_{i = 1}^{M_1'}, \curbrkt{\vxb_{j}}_{j = 1}^{M_2'} } $ of sizes $ M_1' \ge f_1(|\cXa|,|\cXb|, \eta, M_1,M_2)\xrightarrow{M_1\to\infty}\infty $ and $ M_2' \ge f_2(|\cXa|,|\cXb|, \eta, M_1,M_2)\xrightarrow{M_2\to\infty}\infty $, respectively, and there exists a distribution $ \distraabb \in \cJab $ such that, for all $ 1 \le i_1 < i_2 \le M_1' $ and $ 1 \le j_1 < j_2 \le M_2' $, it holds that $ \distinf{\tau_{ \vxa_{i_1}, \vxa_{i_2}, \vxb_{j_1}, \vxb_{j_2} }}{ \distraabb } \le \eta $. 
\end{lemma}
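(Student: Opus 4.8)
The plan is to realize the code pair as an edge-coloured complete bipartite hypergraph and to apply the bipartite hypergraph Ramsey theorem (\Cref{lem:hypergraph_ramsey}); the one point that needs thought beforehand is that the number of possible joint types of a $4$-tuple $(\vxa_{i_1},\vxa_{i_2},\vxb_{j_1},\vxb_{j_2})$ is $\mathrm{poly}(n)$, whereas Ramsey requires a \emph{constant} palette, so I first quantize via a net.

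Concretely, I would proceed as follows. Recall that, as assumed throughout this section, $\cCa$ and $\cCb$ are constant composition of types $\ipdistra$ and $\ipdistrb$, so every joint type $\tau_{\vxa_{i_1},\vxa_{i_2},\vxb_{j_1},\vxb_{j_2}}$ (with $i_1<i_2$, $j_1<j_2$) automatically lies in $\cJab$. Fix an $(\eta/2)$-net $\cN$ of $(\Delta(\cXa^2\times\cXb^2),d_\infty)$; by \Cref{lem:bound-net} its cardinality $D\coloneqq\card{\cN}$ is a constant depending only on $\card{\cXa},\card{\cXb},\eta$ (in particular independent of $n,M_1,M_2$), and we may assume $D\ge2$. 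Build the $(M_1,M_2)$-bipartite, $(2,2)$-uniform, complete hypergraph whose vertex classes are $\cCa$ and $\cCb$ (after fixing an arbitrary labelling of each codebook); identify each hyperedge with the $4$-tuple $(\vxa_{i_1},\vxa_{i_2},\vxb_{j_1},\vxb_{j_2})$ obtained by listing its two $\cCa$-vertices in increasing index order and likewise its two $\cCb$-vertices, and colour it by a point of $\cN$ at $d_\infty$-distance $\le\eta/2$ from $\tau_{\vxa_{i_1},\vxa_{i_2},\vxb_{j_1},\vxb_{j_2}}$ (one exists since $\cN$ is a net), breaking ties in a fixed arbitrary way.

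Next I would invoke \Cref{lem:hypergraph_ramsey} with target sizes $N_1=N_2=N\ge2$: it supplies thresholds $K_1(N,N,D)$ and $K_2(N,N,D)$ such that if $M_1\ge K_1$ and $M_2\ge K_2$, then some $\cV_1'\subseteq\cCa$, $\cV_2'\subseteq\cCb$ with $\card{\cV_1'}\ge N$, $\card{\cV_2'}\ge N$ have all crossing hyperedges of a single colour $Q\in\cN$. Accordingly I would define $f_1(\card{\cXa},\card{\cXb},\eta,M_1,M_2)$ and $f_2(\cdot)$ to be, in essence, the inverses of these Ramsey numbers --- e.g.\ the largest $N\ge2$ with $M_1\ge K_1(N,N,D)$ and $M_2\ge K_2(N,N,D)$ --- which is all that is needed, since no quantitative rate is claimed, only divergence of $f_1,f_2$ as $M_1,M_2\to\infty$. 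Relabelling $\cV_1'=\{\vxa_1,\dots,\vxa_{M_1'}\}$ and $\cV_2'=\{\vxb_1,\dots,\vxb_{M_2'}\}$ (so $M_1',M_2'\ge N\ge2$) and setting $\distraabb\coloneqq\tau_{\vxa_1,\vxa_2,\vxb_1,\vxb_2}$, this $\distraabb$ lies in $\cJab$ by constant composition, and it is the joint type of a crossing hyperedge, hence is within $\eta/2$ of $Q$. Consequently, for all $1\le i_1<i_2\le M_1'$ and $1\le j_1<j_2\le M_2'$ the triangle inequality gives $\distinf{\tau_{\vxa_{i_1},\vxa_{i_2},\vxb_{j_1},\vxb_{j_2}}}{\distraabb}\le\distinf{\tau_{\vxa_{i_1},\vxa_{i_2},\vxb_{j_1},\vxb_{j_2}}}{Q}+\distinf{Q}{\distraabb}\le\eta/2+\eta/2=\eta$, which is exactly the claim.

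I do not expect a genuine obstacle here --- this is the ``preprocessing'' lemma. The steps that demand care are: (i) the net quantization, without which Ramsey is inapplicable because $D$ must be a fixed constant; (ii) using the \emph{bipartite}, $(2,2)$-uniform Ramsey theorem rather than the ordinary hypergraph one, reflecting that joint confusability constrains a \emph{pair} of codewords on \emph{each} side; and (iii) taking an $(\eta/2)$-net (not an $\eta$-net), so that after one triangle-inequality step $\distraabb$ can be chosen to be a genuine joint type --- hence an element of $\cJab$ --- at $d_\infty$-distance $\le\eta$ from all the others. The bookkeeping that makes $f_1,f_2$ explicit functions of $(M_1,M_2)$ (with constants $\card{\cXa},\card{\cXb},\eta$) that diverge is immediate from the statement of \Cref{lem:hypergraph_ramsey}.
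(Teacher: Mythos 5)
Your proof is correct and follows the same net-quantization-plus-Ramsey strategy as the paper: build the $(M_1,M_2)$-bipartite $(2,2)$-uniform complete hypergraph on the two codebooks, colour hyperedges by nearby net points, and invoke \Cref{lem:hypergraph_ramsey} to extract a large monochromatic bipartite subhypergraph. The only deviation is cosmetic: the paper takes an $\eta$-net of $\cJab$ and lets $\distraabb$ be the monochromatic colour itself, whereas you take an $\eta/2$-net of the ambient simplex and then set $\distraabb$ to be a witnessing joint type, recovering $\distraabb\in\cJab$ via constant composition plus one triangle-inequality step --- a slightly more careful way of guaranteeing $\distraabb\in\cJab$, since \Cref{lem:bound-net} as stated only nets the full simplex rather than the subset $\cJab$.
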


\begin{proof}
To apply \Cref{lem:hypergraph_ramsey}, we build an $ (M_1,M_2) $-bipartite $ (2,2) $-uniform complete hypergraph $ \cH $. 
The left and right vertex sets of $ \cH $ are the codewords in $ \cCa $ and the codewords in $ \cCb $ respectively. 
Every pair of codewords $ (\vxa_{i_1}, \vxa_{i_2})\in\binom{\cCa}{2} $ (where $ 1\le i_1<i_2<M_1 $) in the left vertex set is connected to all pairs of codewords $ (\vxb_{j_1}, \vxb_{j_2})\in\binom{\cCb}{2} $ (for all $ 1\le j_1<j_2\le M_2 $) in the right vertex set. 

We now color all hyperedges of $ \cH $ using distributions in $ \cJab $. 
To this end, 
we first take an $\eta$-net $\cN$ of $ \cJab $ with respect to $ d_{\infty} $.
By \Cref{lem:bound-net}, $ D\coloneqq\card{\cN} $ can be made no larger than $ \paren{\frac{\card{\cXa}^2\times\card{\cXb}^2}{2\eta} + 1}^{\card{\cXa}^2\times\card{\cXb}^2} $. 
The hyperedges in $ \cH $ are colored in the following way.
If an hyperedge $ ((\vxa_{i_1}, \vxa_{i_2}), (\vxb_{j_1}, \vxb_{j_2})) $ (where $ 1\le i_1<i_2<M_1 $ and $ 1\le j_1<j_2\le M_2 $) satisfies $ \distinf{\tau_{\vxa_{i_1}, \vxa_{i_2}, \vxb_{j_1}, \vxb_{j_2}}}{\distraabb}\le\eta $ for some $ \distraabb\in\cN $, then we color this hyperedge by $ \distraabb $. 
Note that by the covering property of $ \cN $, such a distribution must exist. 

By \Cref{lem:hypergraph_ramsey}, 
there exist subcodes $ (\cCa',\cCb') $ of $ (\cCa,\cCb) $ satisfying 
\begin{enumerate}
	\item $ M_1'\coloneqq|\cCa'|\ge N_1,M_2'\coloneqq|\cCb'|\ge N_2 $ for $ N_1 = N_1(M_1,M_2,D),N_2 = N_2(M_1,M_2,D) $ with $ N_1\xrightarrow{M_1\to\infty}\infty,N_2\xrightarrow{M_2\to\infty}\infty $;
	\item all hyperedges between $ \cCa' $ and $ \cCb' $ are monochromatic.
\end{enumerate}
In other words, according to the way we colored the hyperedges, there is a distribution $ \distraabb\in\cJab $ such that for all $ 1\le i_1<i_2\le M_1' $ and $ 1\le j_1<j_2\le M_2' $, we have $ \distinf{\tau_{\vxa_{i_1}, \vxa_{i_2}, \vxb_{j_1}, \vxb_{j_2}}}{\distraabb}\le\eta $. 
This completes the proof.
\end{proof}

In what follows, we will prove that the ``equicoupled'' subcode pair $ (\cCa',\cCb') $ must have at least one zero rate. 
We do so by treating separately the case where $ \distraabb $ is (almost) symmetric and the case where it is (significantly) asymmetric. 
We will actually show that\footnote{Hereafter we use the simplified notation $ M_1' = f(M_1) $ and $ M_2' = f(M_2) $ (where $ f(\cdot) $ is an increasing function) to emphasize the respective dependence of $ |\cCa'| $ and $ |\cCb'| $ on $|\cCa|$ and $ \cCb $, ignoring the dependence on other parameters. Indeed, noting $ M_1,M_2\ge1 $ and treating $ \card{\cXa},\card{\cXb},\eta $ as constants, one can take $ f(\cdot) = \min\curbrkt{f_1(\card{\cXa},\card{\cXb},\eta;\cdot,1), f_2(\card{\cXa},\card{\cXb},\eta;1,\cdot)} $ where $ f_1 $ and $ f_2 $ are from \Cref{lem:subcodepair_extraction}. } $ M_1' = f(M_1) \le C_1 $ or $ M_2' = f(M_2) \le C_2 $ for some constants (independent of $n$) $ C_1>0 $ and $ C_2>0 $. 
Since $f(\cdot)$ is a (slowly) increasing function, this implies that the original code pair $ (\cCa,\cCb) $ has sizes $ M_1\le f^{-1}(C_1) $ and $ M_2\le f^{-1}(C_2) $ which are still constants (though enormous). 
This is a stronger statement than that $ (\cCa,\cCb) $ have at least one zero rate.

\subsection{Asymmetric case}
\label{sec:converse_asymm_case}

\begin{definition}[Asymmetry and approximate symmetry]
\label{def:asymm}
The \emph{$ \curbrkt{1,2} $-asymmetry}, the \emph{$ \curbrkt{1} $-asymmetry}, the \emph{$ \curbrkt{2} $-asymmetry} and the \emph{asymmetry} of a distribution $ \distraabb \in\Delta(\cXa^2\times\cXb^2) $ is respectively defined as
\begin{align}
\asymmab(\distraabb) \coloneqq& \max_{(\xa_1, \xa_2)\in{\cXa}^{2}} \max_{(\xb_1, \xb_2)\in{\cXb}^{2}} \abs{ \distraabb(\xa_1, \xa_2, \xb_1, \xb_2) - \distraabb(\xa_2, \xa_1, \xb_2, \xb_1) }, \notag \\
\asymma(\distraabb) \coloneqq& \max_{(\xa_1, \xa_2)\in{\cXa}^{2}} \max_{(\xb_1, \xb_2)\in{\cXb}^{2}} \abs{ \distraabb(\xa_1, \xa_2, \xb_1, \xb_2) - \distraabb(\xa_2, \xa_1, \xb_1, \xb_2) }, \notag \\
\asymmb(\distraabb) \coloneqq& \max_{(\xa_1, \xa_2)\in{\cXa}^{2}} \max_{(\xb_1, \xb_2)\in{\cXb}^{2}} \abs{ \distraabb(\xa_1, \xa_2, \xb_1, \xb_2) - \distraabb(\xa_1, \xa_2, \xb_2, \xb_1) }, \notag \\
\asymm(\distraabb) \coloneqq& \max\curbrkt{\asymmab(\distraabb), \asymma(\distraabb), \asymmb(\distraabb)}. \notag 
\end{align}
A distribution $ \distraabb $ is called \emph{$ \alpha $-symmetric} if 
$\asymm(\distraabb) \le \alpha$. 
\end{definition}

\begin{remark}
By definition, a self-coupling $ \distraabb\in\cJab $ is in $ \cSab $ if and only if $ \asymm(\distraabb) = 0 $.
\end{remark}

According to \Cref{def:asymm}, the asymmetry of $\distraabb$ that was extracted in \Cref{lem:subcodepair_extraction} can be divided into eight different cases as shown in \Cref{tab:asymm} below. 
Case (1) in \Cref{tab:asymm} corresponds to the case where $ \distraabb $ is $\alpha$-symmetric. 
This case will be treated in \Cref{sec:converse_symm_case}. 
Other cases correspond to when $ \distraabb $ is asymmetric with asymmetry larger than $\alpha$. 
They will be treated in \Cref{sec:asymm-reduce-p2p,sec:asymm-78,sec:asymm-234}.

\begin{table}[htbp]
\centering
\begin{tabular}{ccccc}
\hline
Cases & $ \asymmab(\distraabb) \stackrel{?}{\le} \alpha $ & $ \asymma(\distraabb)\stackrel{?}{\le} \alpha $ & $ \asymmb(\distraabb)\stackrel{?}{\le} \alpha $ & Section \\ \hline 
Case (1) & $\le$ & $\le$ & $\le$ & \Cref{sec:converse_symm_case} \\
Case (2) & $>$ & $\le$ & $\le$ & \Cref{sec:asymm-234} \\
Case (3) & $\le$ & $>$ & $\le$ & \Cref{sec:asymm-234} \\
Case (4) & $\le$ & $\le$ & $>$ & \Cref{sec:asymm-234} \\
Case (5) & $>$ & $>$ & $\le$ & \Cref{sec:asymm-reduce-p2p} \\
Case (6) & $>$ & $\le$ & $>$ & \Cref{sec:asymm-reduce-p2p} \\
Case (7) & $\le$ & $>$ & $>$ & \Cref{sec:asymm-78} \\
Case (8) & $>$ & $>$ & $>$ & \Cref{sec:asymm-78} \\
\hline
\end{tabular}
\caption{The asymmetric case can be divided into several sub-cases.}
\label{tab:asymm}
\end{table}

For the asymmetric cases (Cases (5)-(8) in \Cref{tab:asymm}), we prove the following lemma. 

\begin{lemma}
\label{lem:joint-asymm}
If a code pair $ (\cCa',\cCb')\in\cXa^{M_1'\times n}\times\cXb^{M_2'\times n} $ satisfies that there exists a distribution $ \distraabb\in\cJab $ such that
\begin{enumerate}
	\item $ \cC_i $ is $ P_i $-constant composition for $ i = 1,2 $;
	\item for all $ 1\le i_1<i_2\le M_1' $ and $ 1\le j_1<j_2\le M_2' $, $ \distinf{\tau_{\vxa_{i_1},\vxa_{i_2},\vxb_{j_1},\vxb_{j_2}}}{\distraabb}\le\eta $;
	\item $ \asymm(\distraabb)\ge\alpha $,
\end{enumerate}
then at least one of $ M_1' $ and $ M_2' $ is at most a constant $ C(\alpha,\eta)>0 $. 
\end{lemma}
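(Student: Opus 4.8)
The plan is to view the $M_1'$ codewords of $\cCa'$ and the $M_2'$ codewords of $\cCb'$ as two sequences of $\cXa$- and $\cXb$-valued random variables (say, by picking a uniformly random coordinate $\bfj\in[n]$ and reading off the $\bfj$-th symbols), so that for any quadruple of distinct indices the joint law of the four sampled symbols is exactly $\tau_{\vxa_{i_1},\vxa_{i_2},\vxb_{j_1},\vxb_{j_2}}$, which by hypothesis is within $\eta$ of the single distribution $\distraabb$ in $\ell^\infty$. The key structural fact I want to exploit is that this ``equicoupled'' property says the sequence of random variables $(\vbfxa_1,\dots,\vbfxa_{M_1'})$ together with $(\vbfxb_1,\dots,\vbfxb_{M_2'})$ is \emph{approximately exchangeable} in a bipartite sense: permuting the $\cCa'$-indices and separately permuting the $\cCb'$-indices leaves all pairwise-in-each-block, cross-block joint types essentially unchanged. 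The assumption $\asymm(\distraabb)\ge\alpha$ then says that the \emph{limiting} coupled distribution is genuinely asymmetric under swapping $(\bfxa_1,\bfxb_1)\leftrightarrow(\bfxa_2,\bfxb_2)$ (or one of the two one-sided swaps). First I would reduce to the truly asymmetric sub-cases (Cases (5)--(8) of Table~\ref{tab:asymm}); in each, at least one of $\asymmab,\asymma,\asymmb$ exceeds $\alpha$.

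The core tool I would invoke is the Koml\'os-type theorem mentioned in the overview (\Cref{lem:komlos}, ``after some preprocessing of the sequence of random variables''): a long sequence of random variables whose pairwise joint distributions are all close to a fixed \emph{asymmetric} coupling cannot be too long, because approximate exchangeability forces the pairwise joint law to be approximately symmetric. Concretely, I would argue as follows. Consider the case $\asymma(\distraabb)\ge\alpha$ (asymmetry in the $\cXa$-block). Fix any codeword $\vxb_{j}\in\cCb'$; then $(\vbfxa_1,\dots,\vbfxa_{M_1'})$ conditioned-or-correlated with $\vxb_j$ form a sequence such that every pair $(\vbfxa_{i_1},\vbfxa_{i_2},\vxb_j)$ has joint type within $O(\eta)$ of $[\distraabb]_{\bfxa_1,\bfxa_2,\bfxb}$ (using that marginalization does not increase $\ell^\infty$ distance, \Cref{lem:marg-doesnot-increase-dist} and \Cref{fact:distinf-distone}). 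But swapping $i_1\leftrightarrow i_2$ obviously leaves $\tau_{\vxa_{i_1},\vxa_{i_2},\vxb_j}$ transposed, so the type must be within $2\eta$ of its own $(\bfxa_1\leftrightarrow\bfxa_2)$-transpose; hence $\distraabb$ is within $O(\eta)$ of being $\{1\}$-symmetric, i.e.\ $\asymma(\distraabb)\le O(\eta)$. If $\eta$ is chosen small relative to $\alpha$ this already contradicts $\asymma(\distraabb)\ge\alpha$ \emph{unless} $M_1'$ is bounded --- the subtlety is that the ``within $2\eta$ of its transpose'' step only applies when there exist at least two indices to swap, so the genuine content is: if $M_1'\ge 2$ then $\asymma(\distraabb)\le 2\eta<\alpha$, a contradiction, forcing $M_1'\le 1$. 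Analogously, $\asymmb(\distraabb)\ge\alpha$ forces $M_2'\le 1$, and $\asymmab(\distraabb)\ge\alpha$ forces $\min\{M_1',M_2'\}\le 1$: if both $M_1'\ge2$ and $M_2'\ge2$, pick distinct $i_1,i_2$ and distinct $j_1,j_2$, swap \emph{both} pairs simultaneously to transpose $\tau_{\vxa_{i_1},\vxa_{i_2},\vxb_{j_1},\vxb_{j_2}}$ under $(\bfxa_1,\bfxb_1)\leftrightarrow(\bfxa_2,\bfxb_2)$, and conclude $\asymmab(\distraabb)\le 2\eta<\alpha$. In Cases (5)--(8), at least one of these three bounds is triggered, so $\min\{M_1',M_2'\}\le 1\le C(\alpha,\eta)$, which is what we want.

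I should be careful about one point: the above slick argument actually shows the bound $\min\{M_1',M_2'\}\le 1$ whenever $\eta<\alpha/2$, which is even stronger than a general constant $C(\alpha,\eta)$; the reason the lemma is phrased with a constant $C(\alpha,\eta)$ is presumably to accommodate a regime where $\eta$ is not assumed small relative to $\alpha$, and there the Koml\'os theorem gives a genuine (possibly large) constant bound rather than the trivial bound $1$. So the honest plan is: (i) if $2\eta<\alpha$, use the elementary transposition argument above to get $\min\{M_1',M_2'\}\le 1$; (ii) in general, appeal to \Cref{lem:komlos} after standardizing the sequence, which yields $\min\{M_1',M_2'\}\le C(\alpha,\eta)$. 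The main obstacle I anticipate is the ``preprocessing'' needed to put the sequence of random variables into the exact hypothesis form required by Koml\'os's theorem (e.g.\ symmetrizing, or passing to a suitable linear functional of the symbols so that asymmetry of the coupling translates into non-vanishing of a correlation quantity that Koml\'os controls), and in tracking how the $\eta$-slack and the block structure ($\cCa'$ vs.\ $\cCb'$ playing asymmetric roles in Cases (2)--(4) vs.\ (5)--(8)) propagate through that reduction. The case split of Table~\ref{tab:asymm} is exactly so that each reduction only has to handle one ``direction'' of asymmetry at a time.
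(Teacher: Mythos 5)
Your proposal correctly identifies \Cref{lem:komlos} as the essential tool, and correctly identifies that the case split in \Cref{tab:asymm} isolates one direction of asymmetry at a time, but the ``slick'' elementary transposition argument that you propose as a shortcut when $2\eta<\alpha$ is flawed, and the flaw is precisely the point that makes \Cref{lem:komlos} a nontrivial theorem rather than a trivial observation. The equicoupled hypothesis only controls $\distinf{\tau_{\vxa_{i_1},\vxa_{i_2},\vxb_{j_1},\vxb_{j_2}}}{\distraabb}$ for the \emph{ordered} index pairs $i_1<i_2$ and $j_1<j_2$ (this is what Ramsey's theorem gives in \Cref{lem:subcodepair_extraction}, where each unordered pair contributes one hyperedge with a canonical orientation). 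When you ``swap $i_1\leftrightarrow i_2$'' you obtain $\tau_{\vxa_{i_2},\vxa_{i_1},\vxb_j}$, which is the $(\bfxa_1\leftrightarrow\bfxa_2)$-transpose of a type you control --- but since $i_2>i_1$, the hypothesis does \emph{not} tell you that this transposed type is also $\eta$-close to $\distraabb$. All you can conclude is that it is $\eta$-close to the transpose $P_{\bfxa_2,\bfxa_1,\bfxb}$ of $\distraabb$, and whether that transpose is close to $\distraabb$ itself is exactly the approximate symmetry you are trying to establish. The argument is circular, and the same objection applies to the simultaneous-swap argument you propose for Cases (2)--(4). \Cref{lem:komlos} (Koml\'os's ``strange pigeon-hole principle'') is precisely the nontrivial fact that one-sided control of all ordered-pair laws forces approximate symmetry, at the cost of an error term $6/\sqrt{M}$ that vanishes only as the sequence grows --- which is why the lemma's conclusion is $M_1'\le C(\alpha,\eta)$ rather than $M_1'\le 1$.

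The paper's proof works as follows. For Cases (5)--(8), where $\asymma(\distraabb)>\alpha$ or $\asymmb(\distraabb)>\alpha$, fix any ordered pair $(j_1,j_2)$ (or $(i_1,i_2)$), bundle $\bfz^2\coloneqq(\bfxb_1,\bfxb_2)$ into a single auxiliary variable, set $\bfw_i\coloneqq(\bfxa_i,\bfz^2)$, and apply \Cref{lem:komlos} to the sequence $\bfv_i=(\bfchi_i,\bfzeta^2)$, $i\in[M_1']$; the auxiliary lemma \Cref{lem:asymmw12} transfers $\asymma(\distraabb)>\alpha$ to $\asymm(P_{\bfw_1,\bfw_2})>\alpha$, giving $M_1'<36/(\alpha-4\sqrt\eta-2\eta)^2$. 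For Cases (2)--(4), where only $\asymmab$ is large, the paper does \emph{not} bound a code size at this stage: instead \Cref{lem:asymm-reduction} (triangle inequality $\asymmab\le\asymma+\asymmb$, etc.) shows $\distraabb$ is $2\alpha$-symmetric, and the analysis is deferred to the symmetric-case double-counting argument of \Cref{sec:converse_symm_case}. So the lemma as stated is really a description of the asymmetric branch of a larger case analysis, and Cases (2)--(4) fall through to a different mechanism, not to an $M_i'\le C$ bound. Your proposal treats them as if they yielded a direct size bound via the (circular) transposition trick; they do not.
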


\begin{proof}
The proof is divided into several cases.
As we shall see in \Cref{sec:asymm-reduce-p2p,sec:asymm-78}, only Cases (5)-(8) in \Cref{tab:asymm} are asymmetric cases. 
Cases (2)-(4), handled in \Cref{sec:asymm-234}, can be reduced to the symmetric case (Case (1)). 
The symmetric Case (1) will be treated in the next section (\Cref{sec:converse_symm_case}). 
\end{proof}

The following lemma will be crucial in the proceeding subsections.
\begin{theorem}[\cite{komlos-1990-strange-pigeon-hole}]
\label{lem:komlos}
Let $ \bfv_1,\cdots,\bfv_M $ be a sequence of random variables over a common finite alphabet $\cW$. 
If there exist a distribution $ P_{\bfw_1,\bfw_2}\in\Delta(\cW^2) $ and a constant $ \eta\ge0 $ such that $ \norminf{P_{\bfv_i, \bfv_j} - P_{\bfw_1, \bfw_2}} \le \eta $ for all $ 1\le i<j\le M $, then 
$\asymm(P_{\bfw_1,\bfw_2}) \le {6}/{\sqrt{M}} + 4\sqrt{\eta} + 2\eta$, where
\begin{align}
\asymm(P_{\bfw_1, \bfw_2}) \coloneqq& \max_{(w_1, w_2)\in\cW\times\cW} \abs{P_{\bfw_1, \bfw_2}(w_1,w_2 ) - P_{\bfw_1, \bfw_2}(w_2, w_1 )}. \notag 
\end{align}
\end{theorem}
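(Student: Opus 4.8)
The plan is to run a Plotkin-type double-counting argument — in the same spirit as the converses elsewhere in this paper — organized around a \emph{running split point} of the index set $\{1,\dots,M\}$. First I would reduce to a single pair of letters: since $\asymm(P_{\bfw_1,\bfw_2})$ is the maximum of $\abs{P_{\bfw_1,\bfw_2}(a,b)-P_{\bfw_1,\bfw_2}(b,a)}$ over $(a,b)\in\cW\times\cW$ and the diagonal contributes $0$, it suffices to fix $a\neq b$ and bound $\abs{p-q}$, where $p\coloneqq P_{\bfw_1,\bfw_2}(a,b)$, $q\coloneqq P_{\bfw_1,\bfw_2}(b,a)$ and $r\coloneqq P_{\bfw_1,\bfw_2}(a,a)$. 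Put $X_k\coloneqq\indicator{\bfv_k=a}$, $Y_k\coloneqq\indicator{\bfv_k=b}$, $N_1\coloneqq\sum_kX_k$, $N_2\coloneqq\sum_kY_k$, and write $N_\ell^{\le t}$ for the partial count over $\{1,\dots,t\}$. The hypothesis translates to: $\expt{X_iX_j},\expt{X_iY_j},\expt{Y_iX_j},\expt{Y_iY_j}$ lie within $\eta$ of $r,p,q,P_{\bfw_1,\bfw_2}(b,b)$ respectively for all $i<j$; and, since $a\neq b$, $X_iX_j\le X_i$ pointwise.

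The quantity I would double count is $R\coloneqq\sum_{t=0}^{M}\expt{N_1^{\le t}N_2-N_2^{\le t}N_1}$. Evaluating it directly: $N_1^{\le t}N_2-N_2^{\le t}N_1=\sum_{i\le t<j}(X_iY_j-Y_iX_j)$, so summing the indicator $\indicator{i\le t<j}$ over $t$ collapses $R$ to $\sum_{i<j}(j-i)\expt{X_iY_j-Y_iX_j}$; together with the identity $\sum_{i<j}(j-i)=\binom{M+1}{3}$ and the hypothesis, this pins $R$ to within $2\eta\binom{M+1}{3}$ of $(p-q)\binom{M+1}{3}$. Evaluating it the other way: $\sum_{t}N_1^{\le t}=\sum_k(M-k+1)X_k$, and after subtracting the mean coefficient $\tfrac{M+1}{2}$ the mean parts of the two terms of $R$ cancel, leaving $R=\expt{N_2P-N_1Q}$ with the position-centered counts $P\coloneqq\sum_k\paren{\tfrac{M+1}{2}-k}X_k$ and $Q\coloneqq\sum_k\paren{\tfrac{M+1}{2}-k}Y_k$.

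To close the loop I would bound $\abs{R}$ by Cauchy--Schwarz, $\abs{R}\le\sqrt{\expt{N_2^2}\expt{P^2}}+\sqrt{\expt{N_1^2}\expt{Q^2}}$, use the trivial $\expt{N_\ell^2}\le M^2$, and estimate $\expt{P^2}=\sum_{i,j}\paren{\tfrac{M+1}{2}-i}\paren{\tfrac{M+1}{2}-j}\expt{X_iX_j}$ by writing $\expt{X_iX_j}=r+O(\eta)$ off the diagonal: since the coefficients $\tfrac{M+1}{2}-i$ sum to zero the constant $r$ annihilates the entire off-diagonal, so $\expt{P^2}=\sum_i\paren{\tfrac{M+1}{2}-i}^2\paren{\expt{X_i}-r}+O\!\paren{\eta\big(\sum_i\abs{\tfrac{M+1}{2}-i}\big)^2}\le\tfrac{1+\eta M}{2}\binom{M+1}{3}$, using $\expt{X_i}-r\le1$, $\sum_i\paren{\tfrac{M+1}{2}-i}^2=\tfrac12\binom{M+1}{3}$ and $\sum_i\abs{\tfrac{M+1}{2}-i}\le M^2/4$; symmetrically for $\expt{Q^2}$. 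Combining the two evaluations of $R$ and using $\binom{M+1}{3}\ge M^3/8$ for $M\ge2$ and $\sqrt{x+y}\le\sqrt{x}+\sqrt{y}$ gives $\abs{p-q}\le\abs{R}/\binom{M+1}{3}+2\eta\le 4/\sqrt{M}+4\sqrt{\eta}+2\eta$, which is even a bit stronger than the claimed bound (the case $M=1$ is vacuous).

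The part I expect to be the real obstacle is picking the right object to double count. Anything built purely from $N_1,N_2$ — or from any pair-statistic symmetric in the two indices — sees only $p+q$ and carries no information about $p-q$; one genuinely needs a functional that is \emph{antisymmetric} under reversing the order on $\{1,\dots,M\}$, which is precisely what the running split point $t$ supplies. Everything after that is bookkeeping, with one elementary but essential input, $X_iX_j\le X_i$ (equivalently $r\le\expt{X_i}+\eta$), which is what prevents the $\expt{P^2}$ estimate from blowing up and, incidentally, makes the whole bound dimension-free (no dependence on $\abs{\cW}$).
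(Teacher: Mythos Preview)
The paper does not prove \Cref{lem:komlos}; it is quoted from Koml\'os \cite{komlos-1990-strange-pigeon-hole} and used as a black box in \Cref{sec:converse_asymm_case}, so there is no in-paper argument to compare against. Your proof is correct and self-contained: the two evaluations of $R$ agree, the identity $\sum_i(\tfrac{M+1}{2}-i)=0$ indeed kills the constant $r$ in the off-diagonal of $\expt{P^2}$, the combinatorial identities $\sum_i(\tfrac{M+1}{2}-i)^2=\tfrac12\binom{M+1}{3}$ and $\sum_i\abs{\tfrac{M+1}{2}-i}\le M^2/4$ check out, and the final arithmetic with $\binom{M+1}{3}\ge M^3/8$ for $M\ge 2$ yields $4/\sqrt{M}+4\sqrt{\eta}+2\eta$, slightly sharper than the quoted constant.

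One small quibble with your commentary (not the proof): the inequality $X_iX_j\le X_i$ that you flag as ``essential'' is not actually used in your upper bound on $\expt{P^2}$. All you invoke there is $\expt{X_i}-r\le 1$, which follows trivially from $\expt{X_i}\le 1$ and $r\ge 0$. The pointwise bound $X_iX_j\le X_i$ would give $\expt{X_i}\ge r-\eta$, i.e., a \emph{lower} bound on $\expt{X_i}-r$, which would only be relevant if you needed a lower bound on $\expt{P^2}$; your argument does not.
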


\subsubsection{Cases (5) \& (6) in \Cref{tab:asymm}}
\label{sec:asymm-reduce-p2p}
We only prove Case (5) since Case (6) is the same up to change of notation. 
We will show that $ M_1'\coloneqq|\cCa'| $ is at most a constant. 

We identify $ \distraabb $ with $ P_{\bfxa_1, \bfxa_2, \bfz^2} $ where $ \bfz^2 = (\bfxb_1, \bfxb_2) $ is a random variable over $ \cZ_2 \coloneqq \cXb^2 $. 
Immediately, $ \alpha<\asymma(\distraabb) = \asymma(P_{\bfxa_1, \bfxa_2, \bfz^2}) $ where $\asymma(\distraabb)$ is naturally defined as
\begin{align}
\asymma(P_{\bfxa_1, \bfxa_2, \bfz^2}) \coloneqq& \max_{(\xa_1, \xa_2)\in\cXa^2} \max_{z^2\in\cZ_2} \abs{P_{\bfxa_1, \bfxa_2, \bfz^2}(\xa_1, \xa_2, z^2) - P_{\bfxa_1, \bfxa_2, \bfz^2}(\xa_2, \xa_1, z^2)}. \notag 
\end{align} 
We then have the following simple lemma. 
\begin{lemma}
\label{lem:asymmw12}
If a distribution $ P_{\bfxa_1, \bfxa_2, \bfz^2} $ satisfies $ \asymma(P_{\bfxa_1, \bfxa_2, \bfz^2})>\alpha $, then 
$\asymm(P_{\bfw_1, \bfw_2}) > \alpha$, 
where $ \bfw_i \coloneqq (\bfxa_i, \bfz^2) \in \cW \coloneqq \cXa\times\cZ_2 $ for $ i=1,2 $.
\end{lemma}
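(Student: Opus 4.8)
The plan is to unfold both definitions and exploit the structural fact that $\bfw_1$ and $\bfw_2$ are \emph{not} built from independent copies of $\bfz^2$: by construction they share the \emph{very same} random variable $\bfz^2 = (\bfxb_1,\bfxb_2)$. First I would record the exact form of the joint law $P_{\bfw_1,\bfw_2}$. Writing $w_i = (a_i, c_i) \in \cW = \cXa\times\cZ_2$, a one-line computation from $\bfw_i = (\bfxa_i, \bfz^2)$ gives
\[
P_{\bfw_1,\bfw_2}(w_1,w_2) = \prob{\bfxa_1 = a_1,\, \bfxa_2 = a_2,\, \bfz^2 = c_1}\,\indicator{c_1 = c_2} = P_{\bfxa_1,\bfxa_2,\bfz^2}(a_1,a_2,c_1)\,\indicator{c_1 = c_2}.
\]
In particular $P_{\bfw_1,\bfw_2}$ is supported only on pairs whose $\cZ_2$-coordinates coincide; this identity carries essentially all the content of the lemma.

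Next I would evaluate $\asymm(P_{\bfw_1,\bfw_2}) = \max_{(w_1,w_2)} \abs{P_{\bfw_1,\bfw_2}(w_1,w_2) - P_{\bfw_1,\bfw_2}(w_2,w_1)}$ by splitting the maximization according to whether $c_1 = c_2$. When $c_1 \ne c_2$, both terms vanish by the indicator, so such pairs contribute $0$ and can be discarded. When $c_1 = c_2 =: c$, the difference equals $\abs{P_{\bfxa_1,\bfxa_2,\bfz^2}(a_1,a_2,c) - P_{\bfxa_1,\bfxa_2,\bfz^2}(a_2,a_1,c)}$, and maximizing this over $(a_1,a_2) \in \cXa^2$ and $c \in \cZ_2$ reproduces exactly $\asymma(P_{\bfxa_1,\bfxa_2,\bfz^2})$ as defined just above the lemma. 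Hence in fact $\asymm(P_{\bfw_1,\bfw_2}) = \asymma(P_{\bfxa_1,\bfxa_2,\bfz^2})$, which together with the hypothesis $\asymma(P_{\bfxa_1,\bfxa_2,\bfz^2}) > \alpha$ yields the claimed strict inequality (indeed with room to spare, since we obtain an equality, not merely a one-sided bound).

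I do not expect any real difficulty here: the single thing to be careful about --- the only ``obstacle'' --- is the bookkeeping point that $\bfw_1$ and $\bfw_2$ carry a common $\bfz^2$-component, so their $\cZ_2$-marginals are perfectly coupled rather than independent; this is precisely what annihilates the off-diagonal pairs and turns the reduction to $\asymma(\cdot)$ into an equality. This same identity is what makes the lemma useful downstream: it allows the argument of \Cref{sec:asymm-reduce-p2p} to apply \Cref{lem:komlos} to a suitable sequence of $\cW$-valued random variables whose pairwise joint types are within $\eta$ of $P_{\bfw_1,\bfw_2}$, and then conclude from $\asymm(P_{\bfw_1,\bfw_2}) > \alpha$ that $M_1' = |\cCa'|$ must be bounded by a constant.
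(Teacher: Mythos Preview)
Your proof is correct and follows essentially the same approach as the paper: both observe that for $w_1=(\xa_1,z^2)$ and $w_2=(\xa_2,z^2)$ with a common $z^2$-component, the difference $|P_{\bfw_1,\bfw_2}(w_1,w_2)-P_{\bfw_1,\bfw_2}(w_2,w_1)|$ equals $|P_{\bfxa_1,\bfxa_2,\bfz^2}(\xa_1,\xa_2,z^2)-P_{\bfxa_1,\bfxa_2,\bfz^2}(\xa_2,\xa_1,z^2)|$. Your version is marginally more complete in that you also dispose of the off-diagonal pairs $c_1\ne c_2$ explicitly to obtain the exact equality $\asymm(P_{\bfw_1,\bfw_2})=\asymma(P_{\bfxa_1,\bfxa_2,\bfz^2})$, whereas the paper only records the one-sided inequality needed for the lemma.
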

\begin{proof}
The lemma follows from the following simple (in)equalities:
\begin{align}
\abs{P_{\bfw_1, \bfw_2}(w_1,w_2 ) - P_{\bfw_1, \bfw_2}(w_2, w_1 )} =& \abs{P_{(\bfxa_1, \bfz^2), (\bfxa_2, \bfz^2)}((\xa_1,z^2),(\xa_2,z^2) ) - P_{(\bfxa_1, \bfz^2), (\bfxa_2, \bfz^2)}((\xa_2,z^2), (\xa_1,z^2) )} \notag \\
=& \abs{ P_{\bfxa_1, \bfxa_2, \bfz^2}(\xa_1, \xa_2, z^2) - P_{\bfxa_1, \bfxa_2, \bfz^2}(\xa_2, \xa_1, z^2) } > \alpha. \qedhere 
\end{align}
\end{proof}

Recall $ M_1'\coloneqq\card{\cCa'},M_2'\coloneqq\card{\cCb'} $. 
By equicoupledness, for any fixed $ 1\le j_1< j_2<M_2' $, we have 
$\distinf{\tau_{\vxa_{i_1}, \vxa_{i_2}, \vxb_{j_1}, \vxb_{j_2}}}{ \distraabb} \le \eta$ for all $ 1\le i_1< i_2 \le M_1' $. 
Identify the codewords $ \vxa_1,\cdots,\vxa_{M_1'} $ in $\cCa'$ together with $ \vxb_{j_1}, \vxb_{j_2}\in\cCb' $ with a sequence of random variables $ \bfchi_1,\cdots,\bfchi_{M_1'},\bfzeta^2 \in \cXa^{M_1'}\times\cZ_2 $. 
That is
$P_{\bfchi_1,\cdots,\bfchi_{M_1'},\bfzeta^2}
\coloneqq \tau_{\vxa_1,\cdots,\vxa_{M_1'}, (\vxb_{j_1}, \vxb_{j_2})}$. 
Arrange this sequence in the following way: $ \bfv_1, \cdots,\bfv_{M_1'} $ where $ \bfv_i = (\bfchi_i, \bfzeta^2)\in\cW \coloneqq \cXa\times\cZ_2 $. 
This sequence satisfies
$ \distinf{P_{\bfv_{i_1}, \bfv_{i_2}}}{ P_{\bfw_1,\bfw_2}} \le\eta $ for every $ 1\le i_1<i_2\le M_1' $. 
To see this, 
\begin{align}
\distinf{P_{\bfv_{i_1}, \bfv_{i_2}}}{ P_{\bfw_1,\bfw_2}} =& \distinf{P_{(\bfchi_{i_1}, \bfzeta^2), (\bfchi_{i_2}, \bfzeta^2)}}{ P_{(\bfxa_1, \bfz^2), (\bfxa_2, \bfz^2)}} \notag \\
=& \distinf{P_{\bfchi_{i_1}, \bfchi_{i_2}, \bfzeta^2}}{ P_{\bfxa_1, \bfxa_2, \bfz^2}} \notag \\
=& \distinf{\tau_{\vxa_{i_1}, \vxa_{i_2}, \vxb_{j_1}, \vxb_{j_2}}}{ \distraabb} \le \eta. \label[ineq]{eqn:asymm-second-assump}
\end{align}
\Cref{eqn:asymm-second-assump} is by the second assumption of \Cref{lem:joint-asymm}. 
Now by \Cref{lem:komlos} and \Cref{lem:asymmw12}, 
$\alpha < \asymm(P_{\bfw_1, \bfw_2}) \le 6/\sqrt{M_1'} + 4\sqrt{\eta} + 2\eta$, 
i.e., $ M_1'< 36/(\alpha-4\sqrt{\eta} - 2\eta)^2 $. 
This finishes the proof of this case. 



\subsubsection{Cases (7) \& (8) in \Cref{tab:asymm}}
\label{sec:asymm-78}

In both Cases (7) \& (8), it simultaneously holds that $ \asymma(\distraabb)>\alpha $ and $ \asymmb(\distraabb)>\alpha $. 
By the analysis of the previous case, we have $ M_1'<36/(\alpha - 4\sqrt{\eta} - 2\eta)^2 $ and $ M_2'<36/(\alpha - 4\sqrt{\eta} - 2\eta)^2 $.

\subsubsection{Cases (2)-(4) in \Cref{tab:asymm}}
\label{sec:asymm-234}

We apply the following lemma to handle Cases (2)-(4).

\begin{lemma}
\label{lem:asymm-reduction}
The following relations hold.
\begin{align}
\asymmab(\distraabb) \le& \asymma(\distraabb) + \asymmb(\distraabb), \label{eqn:ab-le-a-b} \\
\asymma(\distraabb) \le& \asymmab(\distraabb) + \asymmb(\distraabb), \label{eqn:a-le-ab-b} \\
\asymmb(\distraabb) \le& \asymmab(\distraabb) + \asymma(\distraabb). \label{eqn:b-le-ab-a}
\end{align}
\end{lemma}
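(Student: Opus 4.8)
The three claimed inequalities are all instances of the same idea: the operation ``transpose both the $\xa$-pair and the $\xb$-pair'' is the composition of the two commuting operations ``transpose the $\xa$-pair'' and ``transpose the $\xb$-pair''. So each of the three quantities $\asymmab,\asymma,\asymmb$ measures the size of one group element in the Klein four-group acting on coordinates, and the relation between any one and the other two is just the triangle inequality applied after inserting the appropriate intermediate term. I would simply fix an arbitrary quadruple $(\xa_1,\xa_2,\xb_1,\xb_2)$, chain two estimates, and then take the maximum over all quadruples on both sides.

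\textbf{Step 1 (the first relation).} For any $(\xa_1,\xa_2,\xb_1,\xb_2)\in\cXa^2\times\cXb^2$, insert the intermediate term $\distraabb(\xa_2,\xa_1,\xb_1,\xb_2)$ and use the triangle inequality:
\begin{align}
\abs{\distraabb(\xa_1,\xa_2,\xb_1,\xb_2) - \distraabb(\xa_2,\xa_1,\xb_2,\xb_1)}
\le& \abs{\distraabb(\xa_1,\xa_2,\xb_1,\xb_2) - \distraabb(\xa_2,\xa_1,\xb_1,\xb_2)} \notag\\
&+ \abs{\distraabb(\xa_2,\xa_1,\xb_1,\xb_2) - \distraabb(\xa_2,\xa_1,\xb_2,\xb_1)}. \notag
\end{align}
The first term on the right is at most $\asymma(\distraabb)$; the second is at most $\asymmb(\distraabb)$ (it is the $\xb$-pair swap evaluated at the point whose first two arguments are $\xa_2,\xa_1$, which is still a legitimate point of $\cXa^2\times\cXb^2$). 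Taking the maximum over all quadruples on the left yields \eqref{eqn:ab-le-a-b}.

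\textbf{Step 2 (the remaining two).} These follow by exactly the same routing, choosing the intermediate term so that one step is a full (both-pairs) transposition and the other is a single-pair transposition. For \eqref{eqn:a-le-ab-b} I would insert $\distraabb(\xa_2,\xa_1,\xb_2,\xb_1)$ between $\distraabb(\xa_1,\xa_2,\xb_1,\xb_2)$ and $\distraabb(\xa_2,\xa_1,\xb_1,\xb_2)$, bounding the first difference by $\asymmab$ and the second (an $\xb$-pair swap) by $\asymmb$. For \eqref{eqn:b-le-ab-a} I would insert $\distraabb(\xa_2,\xa_1,\xb_2,\xb_1)$ between $\distraabb(\xa_1,\xa_2,\xb_1,\xb_2)$ and $\distraabb(\xa_1,\xa_2,\xb_2,\xb_1)$, bounding the first difference by $\asymmab$ and the second (an $\xa$-pair swap) by $\asymma$. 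Then take maxima.

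\textbf{Main obstacle.} Honestly there is none of substance here: the only thing to get right is the bookkeeping of which intermediate point to insert and checking that, after a partial swap, the evaluation point is still of the form over which the relevant $\asymm$-quantity takes its maximum (which it always is, since each $\asymm$ is a maximum over \emph{all} of $\cXa^2\times\cXb^2$). A one-line sanity remark that $\asymma,\asymmb,\asymmab$ are each a norm-like functional invariant under the corresponding coordinate relabelling suffices to justify dropping the relabelled indices.
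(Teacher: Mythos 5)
Your proof is correct and follows essentially the same route as the paper: insert a single-pair-swap intermediate term and apply the triangle inequality, then take maxima. The only cosmetic difference is that for \eqref{eqn:ab-le-a-b} you route through $\distraabb(\xa_2,\xa_1,\xb_1,\xb_2)$ while the paper routes through $\distraabb(\xa_1,\xa_2,\xb_2,\xb_1)$; both intermediate choices work identically.
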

\begin{proof}
The lemma is a simple consequence of the triangle inequality.
We only prove \Cref{eqn:ab-le-a-b}. 
\Cref{eqn:a-le-ab-b,eqn:b-le-ab-a} follow similarly.
\begin{align}
\asymmab(\distraabb) =& \max_{(\xa_1, \xa_2)\in{\cXa}^{2}} \max_{(\xb_1, \xb_2)\in{\cXb}^{2}} \abs{ \distraabb(\xa_1, \xa_2, \xb_1, \xb_2) - \distraabb(\xa_2, \xa_1, \xb_2, \xb_1) } \notag \\
\le& \max_{(\xa_1, \xa_2)\in{\cXa}^{2}} \max_{(\xb_1, \xb_2)\in{\cXb}^{2}} \left(\abs{ \distraabb(\xa_1, \xa_2, \xb_1, \xb_2) - \distraabb(\xa_1, \xa_2, \xb_2, \xb_1) } \right. \notag \\
& + \left.\abs{ \distraabb(\xa_1, \xa_2, \xb_2, \xb_1) - \distraabb(\xa_2, \xa_1, \xb_2, \xb_1) }\right) \notag \\
\le& \max_{(\xa_1, \xa_2)\in{\cXa}^{2}} \max_{(\xb_1, \xb_2)\in{\cXb}^{2}} \abs{ \distraabb(\xa_1, \xa_2, \xb_1, \xb_2) - \distraabb(\xa_1, \xa_2, \xb_2, \xb_1) } \notag \\
& + \max_{(\xa_1, \xa_2)\in{\cXa}^{2}} \max_{(\xb_1, \xb_2)\in{\cXb}^{2}} \abs{ \distraabb(\xa_1, \xa_2, \xb_2, \xb_1) - \distraabb(\xa_2, \xa_1, \xb_2, \xb_1) } \notag \\
=& \asymma(\distraabb) + \asymmb(\distraabb). \qedhere
\end{align}
\end{proof}

By \Cref{lem:asymm-reduction}, we can reduce Cases (2)-(4) to the symmetric case (Case (1)) with $\alpha$ replaced by $2\alpha$.
Indeed, in Case (2), 
\begin{align}
\alpha< \asymmab(\distraabb) \le& \asymma(\distraabb) + \asymmb(\distraabb) \le 2\alpha. \notag 
\end{align}
Cases (3) and (4) are similar. 

\subsubsection{Case (1) in \Cref{tab:asymm}} Case (1) is treated in the next section.

\subsection{Symmetric case}
\label{sec:converse_symm_case}

In the previous section, we showed that $ \distraabb $ associated to the subcode pair $ (\cCa',\cCb') $ must be approximately symmetric (in the sense of \Cref{def:asymm}) for both $ |\cCa'| $ and $ |\cCb'| $ to be large, regardless of the channel structure. 
Therefore, in this section we focus on the case where 
\begin{align}
\asymm(\distraabb)\le\alpha . \label{eqn:symm-assump}
\end{align} 

Though we assume $ \good = \emptyset $ in \Cref{itm:conv-pos-pos} of \Cref{thm:converse}, the set $ \goodab\setminus\cKab $ may or may not be empty (see \Cref{fig:g12-minus-k12-nonempty}). 
We treat these two cases separately in the subsequent two subsections (\Cref{sec:symm-g12-minus-k12-empty,sec:symm-g12-minus-k12-nonempty}).

\subsubsection{The case where $ \goodab\setminus\cKab =\emptyset $}
\label{sec:symm-g12-minus-k12-empty}
In this subsection, we show that if $\goodab\setminus\cKab = \emptyset$, then both $ M_1' $ and $ M_2' $ are bounded from above by a constant. 
Therefore, any good code pair $ (\cCa,\cCb) $ has rates $ R_1 = 0 $ and $ R_2 = 0 $. 
The geometry of various sets of distributions that are involved in the following proof is depicted in \Cref{fig:geom-converse}. 
\begin{figure}[htbp]
	\centering
	\includegraphics[width=0.8\textwidth]{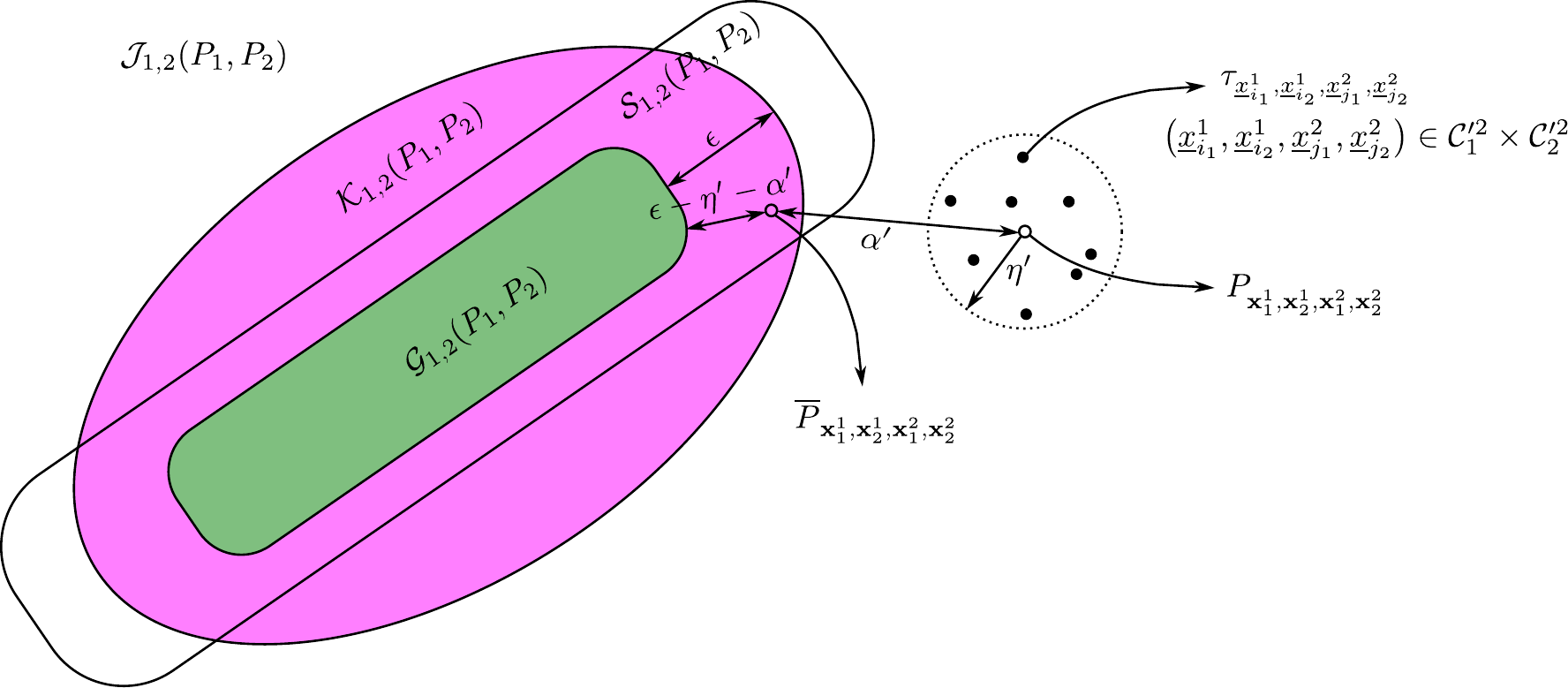}
	\caption{The geometry of various sets of distributions in the converse proof in \Cref{sec:symm-g12-minus-k12-empty}.  We assume $ \goodab\setminus\cKab = \emptyset $ and would like to show that any zero-error code pair has rate $ R_1=0 $ and $ R_2 = 0 $. In the above figure, the ambient space is the set $ \cJab $ of self-couplings equipped with $ \ell^1 $ metric. The set $ \goodab $ is a strict subset of $ \cKab $ such that they are $\eps$-separated (see \Cref{eqn:kab-gab-separation}). The joint types of the equicoupled subcode pair $ (\cCa',\cCb') $ are in an $ \eta' $-ball (see \Cref{eqn:equicoupled-lone}) around a distribution $ \distraabb $ which is assumed to be $ \alpha $-symmetric (see \Cref{eqn:symm-assump}). We then project $ \distraabb $ to obtain a symmetric distribution $ \oldistraabb $ defined in \Cref{eqn:define-oldistraabb}. (Note that $ \oldistraabb $ may be slight inside $ \cKab $.) It can be shown that $ \distraabb $ and $ \oldistraabb $ are $ \alpha' $-close (see \Cref{eqn:p12-olp12}). Since $ (\cCa',\cCb') $ attains zero error and all joint types are outside $ \cKab $, one can show that $ \oldistraabb $ is $ (\eps - \eta' - \alpha') $-far from $ \goodab $ (see \Cref{lem:oldistraabb-notin-goodab}). This allows us to proceed with the double counting argument. }
	\label{fig:geom-converse}
\end{figure}

We assume that $ \goodab $ is a \emph{proper} subset of $ \cKab $. 
Specifically, we assume that there exists a constant $ \eps>0 $ such that 
\begin{align}
\distone{\goodab}{\cJab\setminus\cKab} \ge\eps. 
\label{eqn:kab-gab-separation}
\end{align}
We first project $ \distraabb $ to $ \cSab $ and obtain an \emph{exactly} symmetric distribution $ \oldistraabb $, 
\begin{align}
\oldistraabb \coloneqq \frac{1}{4}\paren{\distraabb + P_{\bfxa_2, \bfxa_1, \bfxb_1, \bfxb_2} + P_{\bfxa_1, \bfxa_2, \bfxb_2, \bfxb_1} + P_{\bfxa_2, \bfxa_1, \bfxb_2, \bfxb_1}}. \label{eqn:define-oldistraabb} 
\end{align}
Since the four summands are all in $ \cJab $, $ \oldistraabb $ is also in $ \cJab $. 
Also, one can easily check that it is indeed symmetric in the sense of \Cref{def:sym_distr}. 
Furthermore, $ \oldistraabb $ and $ \distraabb $ are close to each other. 
\begin{align}
\distone{\oldistraabb}{\distraabb} =& \sum_{(\xa_1, \xa_2, \xb_1, \xb_2) \in \cXa^2\times\cXb^2} \abs{\oldistraabb(\xa_1, \xa_2, \xb_1, \xb_2) - \distraabb(\xa_1, \xa_2, \xb_1, \xb_2)} \notag \\
\le& \sum_{(\xa_1, \xa_2, \xb_1, \xb_2) \in \cXa^2\times\cXb^2} \frac{1}{4} \left( \abs{\distraabb(\xa_1, \xa_2, \xb_1, \xb_2) - P_{\bfxa_2, \bfxa_1, \bfxb_1, \bfxb_2}(\xa_1, \xa_2, \xb_1, \xb_2)} \right. \notag \\
&+ \abs{\distraabb(\xa_1, \xa_2, \xb_1, \xb_2) - P_{\bfxa_1, \bfxa_2, \bfxb_2, \bfxb_1}(\xa_1, \xa_2, \xb_1, \xb_2)} \notag \\
&+ \left. \abs{\distraabb(\xa_1, \xa_2, \xb_1, \xb_2) - P_{\bfxa_2, \bfxa_1, \bfxb_2, \bfxb_1}(\xa_1, \xa_2, \xb_1, \xb_2)} \right) \notag \\
\le& \frac{3}{4}|\cXa|^2|\cXb|^2\alpha \eqqcolon \alpha'. \label{eqn:p12-olp12} 
\end{align}
\Cref{eqn:p12-olp12} follows from the assumption given by \Cref{eqn:symm-assump}. 
Though we will not use it, the above bound can be slightly improved to $ \alpha' = \frac{1}{4}\paren{3|\cXa|^2|\cXb|^2 - |\cXa||\cXb|^2 - |\cXa|^2|\cXb| - |\cXa||\cXb|} $ by noting that some terms corresponding to $ \vxa_1 = \vxa_2 $ or $ \vxb_1 = \vxb_2 $ do not contributed to the sum.

\begin{claim}
\label{lem:oldistraabb-notin-goodab}
Under the assumptions of \Cref{sec:symm-g12-minus-k12-empty}, $ \oldistraabb $ is not in $ \goodab $:
\begin{align}
\distone{\oldistraabb}{\goodab} \ge& \eps - \eta' - \alpha', 
\label{eqn:olp12-notin-g12}
\end{align}
where $ \eta'\coloneqq\card{\cXa}^2\card{\cXb}^2\eta $ and $ \alpha' $ was defined in \Cref{eqn:p12-olp12}. 
\end{claim}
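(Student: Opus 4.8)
## Proof plan for Claim~\ref{lem:oldistraabb-notin-goodab}

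The plan is to establish \Cref{eqn:olp12-notin-g12} by a triangle-inequality argument that chains together three facts: (i) the equicoupled subcode pair $(\cCa',\cCb')$ attains zero error, so all of its joint types lie \emph{outside} $\cKab$; (ii) therefore the center $\distraabb$ of the small $\ell^\infty$-ball containing these types is $\ell^1$-close to $\cJab\setminus\cKab$; and (iii) the symmetrized distribution $\oldistraabb$ is $\ell^1$-close to $\distraabb$ by \Cref{eqn:p12-olp12}. Combining these with the separation hypothesis \Cref{eqn:kab-gab-separation} yields the bound.

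First I would make the equicoupledness quantitative in the $\ell^1$ metric. \Cref{lem:subcodepair_extraction} gives a distribution $\distraabb\in\cJab$ with $\distinf{\tau_{\vxa_{i_1},\vxa_{i_2},\vxb_{j_1},\vxb_{j_2}}}{\distraabb}\le\eta$ for all ordered pairs $i_1<i_2$, $j_1<j_2$. By \Cref{fact:distinf-distone}, this upgrades to $\distone{\tau_{\vxa_{i_1},\vxa_{i_2},\vxb_{j_1},\vxb_{j_2}}}{\distraabb}\le\card{\cXa}^2\card{\cXb}^2\eta=\eta'$; this is \Cref{eqn:equicoupled-lone}. Now pick any one such joint type, call it $\tau^\star=\tau_{\vxa_{1},\vxa_{2},\vxb_{1},\vxb_{2}}$ (assuming $M_1'\ge 2$ and $M_2'\ge 2$, else there is nothing to prove since the subcode is already constant-sized). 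Since $(\cCa,\cCb)$ — hence $(\cCa',\cCb')$ — is a good (zero-error) code pair, \Cref{claim:distributional-nonconf} tells us $\tau^\star\notin\cKab$, i.e.\ $\tau^\star\in\cJab\setminus\cKab$.

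The core chain of inequalities is then
\begin{align}
\distone{\oldistraabb}{\goodab}
&\ge \distone{\goodab}{\cJab\setminus\cKab} - \distone{\oldistraabb}{\tau^\star} \notag \\
&\ge \eps - \distone{\oldistraabb}{\distraabb} - \distone{\distraabb}{\tau^\star} \notag \\
&\ge \eps - \alpha' - \eta', \notag
\end{align}
where the first line is the triangle inequality together with $\tau^\star\in\cJab\setminus\cKab$ (so the distance from $\oldistraabb$ to $\goodab$ is at least the set-to-set separation minus how far $\oldistraabb$ drifts from the witness point $\tau^\star$), the second applies the triangle inequality again through $\distraabb$ and invokes \Cref{eqn:kab-gab-separation}, and the third uses \Cref{eqn:p12-olp12} ($\distone{\oldistraabb}{\distraabb}\le\alpha'$) and the $\ell^1$ equicoupledness bound ($\distone{\distraabb}{\tau^\star}\le\eta'$). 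This is exactly \Cref{eqn:olp12-notin-g12}. A small remark worth including: the inequality $\distone{\oldistraabb}{\goodab}\ge \eps-\eta'-\alpha'$ is only meaningful (i.e.\ certifies $\oldistraabb\notin\goodab$) when $\eta,\alpha$ are chosen small enough relative to $\eps$ that the right-hand side is strictly positive; I would state that $\eta$ and $\alpha$ will be fixed later to satisfy $\eta'+\alpha'<\eps$, consistent with how the surrounding converse argument picks its parameters.

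I do not expect a serious obstacle here — this is a bookkeeping lemma — but the one point requiring care is ensuring the set-to-set distance $\distone{\goodab}{\cJab\setminus\cKab}$ is used in the correct direction: we need that \emph{every} point of $\goodab$ is at $\ell^1$-distance $\ge\eps$ from \emph{every} point of $\cJab\setminus\cKab$, which is precisely the content of \Cref{eqn:kab-gab-separation}, so that $\tau^\star$ (a specific point of $\cJab\setminus\cKab$) is $\eps$-far from all of $\goodab$, and hence so is any point within $\alpha'+\eta'$ of $\tau^\star$. One should also double-check the edge case where $M_1'\le 1$ or $M_2'\le 1$: then the subcode already has a trivial side and the converse conclusion holds outright, so we may assume both are at least $2$ and a valid witness joint type $\tau^\star$ exists.
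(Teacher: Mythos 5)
Your proposal is correct and follows essentially the same triangle-inequality chain as the paper's proof: upgrade the $\ell^\infty$ equicoupledness to $\ell^1$ via \Cref{fact:distinf-distone}, use zero error to place a joint type $\tau^\star$ in $\cJab\setminus\cKab$ and hence $\eps$-far from $\goodab$ by \Cref{eqn:kab-gab-separation}, then peel off $\eta'$ and $\alpha'$ to reach $\oldistraabb$. The paper writes the chain as two separate displayed steps ($\tau^\star\to\distraabb$, then $\distraabb\to\oldistraabb$) while you compress them, and your edge-case remark about $M_i'\le 1$ is a sensible extra observation not spelled out in the paper, but the substance is identical.
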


\begin{proof}
To prove the claim, first recall that for any $ 1\le i_1<i_2\le M_1' $ and $ 1\le j_1<j_2\le M_2' $, we have (by \Cref{fact:distinf-distone}) 
\begin{align}
\distone{\tau_{\vxa_{i_1}, \vxa_{i_2}, \vxb_{j_1}, \vxb_{j_2}}}{\distraabb} \le& \card{\cXa}^2\card{\cXb}^2 \distinf{\tau_{\vxa_{i_1}, \vxa_{i_2}, \vxb_{j_1}, \vxb_{j_2}}}{\distraabb}\le \card{\cXa}^2\card{\cXb}^2\eta \eqqcolon \eta'. \label{eqn:equicoupled-lone} 
\end{align}
Since $ (\cCa,\cCb) $ is a good code pair, $ (\cCa',\cCb') $ is also good. 
Hence $ \tau_{\vxa_{i_1}, \vxa_{i_2}, \vxb_{j_1}, \vxb_{j_2}} $ is not confusable, i.e.,  
\begin{align}
\tau_{\vxa_{i_1}, \vxa_{i_2}, \vxb_{j_1}, \vxb_{j_2}}\in\cJab\setminus\cKab. 
\label{eqn:non-confusable}
\end{align}
We get that $ \tau_{\vxa_{i_1}, \vxa_{i_2}, \vxb_{j_1}, \vxb_{j_2}} $ is strictly bounded away from $ \goodab $. 
\begin{align}
\distone{\tau_{\vxa_{i_1}, \vxa_{i_2}, \vxb_{j_1}, \vxb_{j_2}}}{\goodab}
\ge \distone{\goodab}{\cJab\setminus\cKab}
\ge \eps. 
\label{eqn:tauaabb-gab} 
\end{align}
The first inequality is by \Cref{eqn:non-confusable} and the second one follows from the assumption given by \Cref{eqn:kab-gab-separation}. 
\Cref{eqn:tauaabb-gab,eqn:equicoupled-lone} imply that $ \distraabb $ is strictly outside $ \goodab $.
\begin{align}
\distone{\distraabb}{\goodab} \ge& \distone{\tau_{\vxa_{i_1}, \vxa_{i_2}, \vxb_{j_1}, \vxb_{j_2}}}{\goodab} - \distone{\tau_{\vxa_{i_1}, \vxa_{i_2}, \vxb_{j_1}, \vxb_{j_2}}}{\distraabb} 
\ge \eps - \eta'. \label{eqn:dist-p12-g12} 
\end{align}
Combining \Cref{eqn:dist-p12-g12,eqn:p12-olp12}, we further have
\begin{align}
\distone{\oldistraabb}{\goodab} \ge& \distone{\distraabb}{\goodab} - \distone{\distraabb}{\oldistraabb} \ge \eps - \eta' - \alpha'. 
\notag
\end{align}
This finishes the proof of \Cref{lem:oldistraabb-notin-goodab}. 
\end{proof}

Since $ \oldistraabb\notin\goodab $ by \Cref{eqn:olp12-notin-g12}, we can apply \Cref{thm:duality}.
There exists $ Q_{\bfxa_1, \bfxa_2, \bfxb_1, \bfxb_2} \in \cogoodab $ such that 
\begin{align}
\inprod{\oldistraabb }{Q_{\bfxa_1, \bfxa_2, \bfxb_1, \bfxb_2} } \le& -\eps' <0 
\label{eqn:witness}
\end{align}
for some constant $ \eps'>0 $. 

To prove upper bounds on $ M_1' $ and $ M_2' $, the trick is to upper and lower bound the following quantity
\begin{align}
\sum_{(i_1, i_2) \in [M_1']\times[M_2']} \sum_{(j_1, j_2) \in [M_2'] \times[M_2']} \inprod{\tau_{\vxa_{i_1}, \vxa_{i_2}, \vxb_{j_1}, \vxb_{j_2}}}{Q_{\bfxa_1, \bfxa_2, \bfxb_1, \bfxb_2}}. \label{eqn:double_count_object}
\end{align}
Contrasting the upper and lower bounds on \Cref{eqn:double_count_object} will give us an upper bound on $ \max\curbrkt{M_1',M_2'} $. 
We first prove an upper bound on \Cref{eqn:double_count_object}.

\begin{claim}
\label{claim:joint-dc-upper}
\Cref{eqn:double_count_object} is at most 
\begin{align}
& \sum_{(i_1, i_2) \in [M_1']\times[M_2']} \sum_{(j_1, j_2) \in [M_2'] \times[M_2']} \inprod{\tau_{\vxa_{i_1}, \vxa_{i_2}, \vxb_{j_1}, \vxb_{j_2}}}{Q_{\bfxa_1, \bfxa_2, \bfxb_1, \bfxb_2}} \notag \\
\le& M_1'(M_1' - 1)M_2'(M_2'-1)(\eta' + \alpha' - \eps') + M_1'^2 M_2' + M_1' M_2'^2 + M_1' M_2'. \label{eqn:dc-upper-bound}
\end{align}
\end{claim}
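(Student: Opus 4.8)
The plan is to split the double sum in \Cref{eqn:double_count_object} according to whether the indices are distinct, and to control the "diagonal" terms trivially and the "off-diagonal" terms via the witness inequality. Concretely, partition the index set $[M_1']\times[M_1']\times[M_2']\times[M_2']$ into: (i) the set where $i_1\ne i_2$ and $j_1\ne j_2$; (ii) the set where $i_1=i_2$ but $j_1\ne j_2$; (iii) the set where $i_1\ne i_2$ but $j_1=j_2$; and (iv) the set where $i_1=i_2$ and $j_1=j_2$. The sizes of these four sets are $M_1'(M_1'-1)M_2'(M_2'-1)$, $M_1'M_2'(M_2'-1)\le M_1'M_2'^2$, $M_1'(M_1'-1)M_2'\le M_1'^2M_2'$ and $M_1'M_2'$ respectively, which already accounts for the last three additive terms on the right-hand side of \Cref{eqn:dc-upper-bound} once we bound each corresponding summand $\inprod{\tau_{\vxa_{i_1}, \vxa_{i_2}, \vxb_{j_1}, \vxb_{j_2}}}{Q_{\bfxa_1, \bfxa_2, \bfxb_1, \bfxb_2}}$ by $1$. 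For this last bound I would note that $\tau_{\vxa_{i_1}, \vxa_{i_2}, \vxb_{j_1}, \vxb_{j_2}}$ is a probability distribution (so $\normone{\cdot}=1$) and $Q_{\bfxa_1, \bfxa_2, \bfxb_1, \bfxb_2}$ is a co-good tensor whose entries can be assumed to lie in $[-1,1]$ after normalization (co-good is a cone, so we may rescale the witness; if the paper's normalization needs care I would simply absorb a constant factor into $\eps'$ and the final constant $C$), giving $|\inprod{\cdot}{\cdot}|\le\normone{\tau}\cdot\norminf{Q}\le 1$.

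For the main block (i), the key step is: for each $i_1\ne i_2$, $j_1\ne j_2$ we have by equicoupledness (\Cref{eqn:equicoupled-lone}) that $\distone{\tau_{\vxa_{i_1}, \vxa_{i_2}, \vxb_{j_1}, \vxb_{j_2}}}{\distraabb}\le\eta'$, and by \Cref{eqn:p12-olp12} that $\distone{\distraabb}{\oldistraabb}\le\alpha'$, so $\distone{\tau_{\vxa_{i_1}, \vxa_{i_2}, \vxb_{j_1}, \vxb_{j_2}}}{\oldistraabb}\le\eta'+\alpha'$ by the triangle inequality. Therefore
\begin{align}
\inprod{\tau_{\vxa_{i_1}, \vxa_{i_2}, \vxb_{j_1}, \vxb_{j_2}}}{Q_{\bfxa_1, \bfxa_2, \bfxb_1, \bfxb_2}}
=& \inprod{\oldistraabb}{Q_{\bfxa_1, \bfxa_2, \bfxb_1, \bfxb_2}} + \inprod{\tau_{\vxa_{i_1}, \vxa_{i_2}, \vxb_{j_1}, \vxb_{j_2}} - \oldistraabb}{Q_{\bfxa_1, \bfxa_2, \bfxb_1, \bfxb_2}} \notag \\
\le& -\eps' + \distone{\tau_{\vxa_{i_1}, \vxa_{i_2}, \vxb_{j_1}, \vxb_{j_2}}}{\oldistraabb}\cdot\norminf{Q_{\bfxa_1, \bfxa_2, \bfxb_1, \bfxb_2}} \notag \\
\le& \eta' + \alpha' - \eps', \notag
\end{align}
using \Cref{eqn:witness} (Hölder's inequality, $\inprod{u}{v}\le\normone{u}\norminf{v}$, and the normalization $\norminf{Q}\le1$). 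Summing over the $M_1'(M_1'-1)M_2'(M_2'-1)$ tuples in block (i) gives the first term of \Cref{eqn:dc-upper-bound}. Adding the four blocks yields exactly the claimed bound.

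The one subtlety I expect to need care on is the normalization of the witness $Q_{\bfxa_1, \bfxa_2, \bfxb_1, \bfxb_2}$: since $\cogoodab$ is a cone, the witness from \Cref{thm:duality} is only determined up to positive scaling, so to make both the $-\eps'$ in \Cref{eqn:witness} meaningful and the $\le1$ bound on the diagonal inner products valid I would fix once and for all a witness normalized so that $\norminf{Q_{\bfxa_1, \bfxa_2, \bfxb_1, \bfxb_2}}\le1$, and let $\eps'>0$ be whatever negative value $\inprod{\oldistraabb}{Q_{\bfxa_1, \bfxa_2, \bfxb_1, \bfxb_2}}$ takes under that normalization (this is where we use that $\oldistraabb$ is strictly separated from $\goodab$, \Cref{eqn:olp12-notin-g12}, together with compactness of the relevant sets to conclude $\eps'$ is bounded below by a constant depending only on $\eps,\eta',\alpha'$ and the alphabets, not on $n$). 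Everything else is a routine split-and-triangle-inequality computation; the genuinely substantive step — the strict negativity coming from duality — has already been done in \Cref{lem:oldistraabb-notin-goodab} and \Cref{eqn:witness}, so this claim is mostly bookkeeping to set up the forthcoming double-counting contradiction with a matching lower bound on \Cref{eqn:double_count_object}.
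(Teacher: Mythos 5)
Your overall approach matches the paper's: same partition of $[M_1']^2\times[M_2']^2$ into four blocks, same trivial bound of $1$ on the three small blocks, and same decomposition $\tau = (\tau - \oldistraabb) + \oldistraabb$ on the main block with H\"older plus the witness inequality.

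There is, however, one genuine gap in your reasoning for block (i). You assert that for every $i_1\ne i_2$ and $j_1\ne j_2$ the equicoupledness bound $\distone{\tau_{\vxa_{i_1},\vxa_{i_2},\vxb_{j_1},\vxb_{j_2}}}{\distraabb}\le\eta'$ holds, citing \Cref{eqn:equicoupled-lone}. But \Cref{eqn:equicoupled-lone} and the Ramsey extraction in \Cref{lem:subcodepair_extraction} only give this for the \emph{ordered} pairs $i_1<i_2$, $j_1<j_2$. For the other orderings, what you actually know is e.g.\ $\distone{\tau_{\vxa_{i_2},\vxa_{i_1},\vxb_{j_2},\vxb_{j_1}}}{\distraabb}\le\eta'$ (with $i_2<i_1$, $j_2<j_1$), and $\tau_{\vxa_{i_1},\vxa_{i_2},\vxb_{j_1},\vxb_{j_2}}$ is the coordinate-transposed version of that type, which is $\eta'$-close to $P_{\bfxa_2,\bfxa_1,\bfxb_2,\bfxb_1}$, not to $\distraabb$ itself. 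Since $\distraabb$ is only $\alpha$-\emph{approximately} symmetric, your intermediate claim is too strong. The fix — and this is precisely why $\oldistraabb$ is introduced — is to bound $\distone{\tau_{\vxa_{i_1},\vxa_{i_2},\vxb_{j_1},\vxb_{j_2}}}{\oldistraabb}$ directly for the unordered pairs using the \emph{exact} symmetry of $\oldistraabb$: because $\oldistraabb = \ol P_{\bfxa_2,\bfxa_1,\bfxb_2,\bfxb_1}$ exactly, one has $\distone{\tau_{\vxa_{i_1},\vxa_{i_2},\vxb_{j_1},\vxb_{j_2}}}{\oldistraabb} = \distone{\tau_{\vxa_{i_2},\vxa_{i_1},\vxb_{j_2},\vxb_{j_1}}}{\oldistraabb}\le\eta'+\alpha'$, and similarly for the other mixed orderings. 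Your final displayed inequality is correct, but the justification as written leans on $\oldistraabb$ only as a small perturbation of $\distraabb$ and misses its role as the symmetrizer that lets you pass from ordered to unordered index pairs.

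On the normalization point you flag: this is already handled by the definitions, since $Q_{\bfxa_1,\bfxa_2,\bfxb_1,\bfxb_2}\in\cogoodab\subset\symab\subset\Dab$ and $\Dab$ imposes $\normone{\cdot}=1$, hence $\norminf{Q}\le\normone{Q}=1$ automatically; no rescaling is needed (though your instinct to check this is sound, given that co-good tensors are only defined up to the normalization built into $\Dab$).
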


\begin{proof}
Expanding the summation in \Cref{eqn:double_count_object}, we have 
\begin{align}
& \sum_{(i_1, i_2) \in [M_1']\times[M_1']} \sum_{(j_1, j_2) \in [M_2'] \times[M_2']} \inprod{\tau_{\vxa_{i_1}, \vxa_{i_2}, \vxb_{j_1}, \vxb_{j_2}}}{Q_{\bfxa_1, \bfxa_2, \bfxb_1, \bfxb_2}} \notag \\
=& \sum_{ \substack{(i_1, i_2, j_1, j_2) \in [M_1']^2\times[M_2']^2 \\ i_1 \ne i_2, j_1 \ne j_2} } + \sum_{\substack{(i_1, i_2, j_1, j_2) \in [M_1']^2\times[M_2']^2 \\ i_1 = i_2 \mathrm{\ or\ } j_1 = j_2}} \inprod{\tau_{\vxa_{i_1}, \vxa_{i_2}, \vxb_{j_1}, \vxb_{j_2}}}{Q_{\bfxa_1, \bfxa_2, \bfxb_1, \bfxb_2}} \notag \\
=& \sum_{ i_1 \ne i_2, j_1 \ne j_2 } + \sum_{i_1\ne i_2, j_1=j_2} + \sum_{i_1=i_2,j_1\ne j_2} + \sum_{i_1=i_2,j_1=j_2} \inprod{\tau_{\vxa_{i_1}, \vxa_{i_2}, \vxb_{j_1}, \vxb_{j_2}}}{Q_{\bfxa_1, \bfxa_2, \bfxb_1, \bfxb_2}} . \label{eqn:upper_bound_todo}
\end{align}
Note that 
\begin{align}
\inprod{\tau_{\vxa_{i_1}, \vxa_{i_2}, \vxb_{j_1}, \vxb_{j_2}}}{Q_{\bfxa_1, \bfxa_2, \bfxb_1, \bfxb_2}} \le& \normtwo{\tau_{\vxa_{i_1}, \vxa_{i_2}, \vxb_{j_1}, \vxb_{j_2}}} \normtwo{Q_{\bfxa_1, \bfxa_2, \bfxb_1, \bfxb_2}} \le \normone{\tau_{\vxa_{i_1}, \vxa_{i_2}, \vxb_{j_1}, \vxb_{j_2}}} \normone{Q_{\bfxa_1, \bfxa_2, \bfxb_1, \bfxb_2}} =1. \notag 
\end{align}
The last three terms in \Cref{eqn:upper_bound_todo} is at most
\begin{align}
M_1'^2 M_2' + M_1' M_2'^2 + M_1' M_2'. \label{eqn:last-three-terms-bound} 
\end{align}
The first term in \Cref{eqn:upper_bound_todo} can be bounded as follows. 
\begin{align}
& \sum_{ i_1 \ne i_2, j_1 \ne j_2 } \inprod{\tau_{\vxa_{i_1}, \vxa_{i_2}, \vxb_{j_1}, \vxb_{j_2}}}{Q_{\bfxa_1, \bfxa_2, \bfxb_1, \bfxb_2}} \notag \\
=& \sum_{ i_1 \ne i_2, j_1 \ne j_2 } \paren{\inprod{\tau_{\vxa_{i_1}, \vxa_{i_2}, \vxb_{j_1}, \vxb_{j_2}} - \oldistraabb}{Q_{\bfxa_1, \bfxa_2, \bfxb_1, \bfxb_2}} + \inprod{\oldistraabb}{Q_{\bfxa_1, \bfxa_2, \bfxb_1, \bfxb_2}}}. \label{eqn:upper_bound_distinct}
\end{align}
For any $ i_1\ne i_2 $ and $ j_1\ne j_2 $, the first term of the summand in \Cref{eqn:upper_bound_distinct} is at most
\begin{align}
\inprod{\tau_{\vxa_{i_1}, \vxa_{i_2}, \vxb_{j_1}, \vxb_{j_2}} - \oldistraabb}{Q_{\bfxa_1, \bfxa_2, \bfxb_1, \bfxb_2}} \le& \normone{\tau_{\vxa_{i_1}, \vxa_{i_2}, \vxb_{j_1}, \vxb_{j_2}} - \oldistraabb}\norminf{Q_{\bfxa_1, \bfxa_2, \bfxb_1, \bfxb_2}}  \label{eqn:apply-symm-12} \\
\le& \distone{\tau_{\vxa_{i_1}, \vxa_{i_2}, \vxb_{j_1}, \vxb_{j_2}} }{ \distraabb} + \distone{\distraabb}{\oldistraabb} \label{eqn:bound-q-linfty} \\
\le& \eta' + \alpha'. \label{eqn:first-term}
\end{align}
In \Cref{eqn:apply-symm-12},  we used the symmetry\footnote{The double counting argument crucially relies on the symmetry of $ \oldistraabb $ which is the reason why we treat the symmetric and asymmetric cases separately. } (as per \Cref{def:sym_distr}) of $ \oldistraabb $. 
Specifically, since $ \oldistraabb\in\cSab $, we have
\begin{align}
\distone{\tau_{\vxa_{i_2}, \vxa_{i_1}, \vxb_{j_2}, \vxb_{j_1}} }{ \oldistraabb} =& \distone{\tau_{\vxa_{i_1}, \vxa_{i_2}, \vxb_{j_1}, \vxb_{j_2}} }{ \ol{P}_{\bfxa_2,\bfxa_1,\bfxb_2,\bfxb_1} } = \distone{\tau_{\vxa_{i_1}, \vxa_{i_2}, \vxb_{j_1}, \vxb_{j_2}} }{ \oldistraabb}, \notag \\
\distone{\tau_{\vxa_{i_2}, \vxa_{i_1}, \vxb_{j_1}, \vxb_{j_2}} }{ \oldistraabb} =& \distone{\tau_{\vxa_{i_1}, \vxa_{i_2}, \vxb_{j_1}, \vxb_{j_2}} }{ \ol{P}_{\bfxa_2,\bfxa_1,\bfxb_1,\bfxb_2} } = \distone{\tau_{\vxa_{i_1}, \vxa_{i_2}, \vxb_{j_1}, \vxb_{j_2}} }{ \oldistraabb}, \notag \\
\distone{\tau_{\vxa_{i_1}, \vxa_{i_2}, \vxb_{j_2}, \vxb_{j_1}} }{ \oldistraabb} =& \distone{\tau_{\vxa_{i_1}, \vxa_{i_2}, \vxb_{j_1}, \vxb_{j_2}} }{ \ol{P}_{\bfxa_1,\bfxa_2,\bfxb_2,\bfxb_1} } = \distone{\tau_{\vxa_{i_1}, \vxa_{i_2}, \vxb_{j_1}, \vxb_{j_2}} }{ \oldistraabb}. \notag 
\end{align}
Hence, the bound in \Cref{eqn:first-term} holds for all $ i_1\ne i_2 $ and $ j_1\ne j_2 $ (not only for $ i_1<i_2 $ and $ j_1<j_2 $). 
In \Cref{eqn:bound-q-linfty}, we used the trivial bound $ \norminf{Q_{\bfxa_1, \bfxa_2, \bfxb_1, \bfxb_2}}\le1 $ since $ Q_{\bfxa_1, \bfxa_2, \bfxb_1, \bfxb_2} $ is a probability distribution. 
\Cref{eqn:first-term} is by \Cref{eqn:p12-olp12,eqn:equicoupled-lone}.
Combining \Cref{eqn:first-term,eqn:witness}, we get that the first term in \Cref{eqn:upper_bound_todo} is at most
\begin{align}
M_1'^2M_2'^2(\eta' + \alpha' - \eps'). \label{eqn:first-term-bound}
\end{align}
Overall, by \Cref{eqn:last-three-terms-bound,eqn:first-term-bound}, we get an upper bound on \Cref{eqn:double_count_object}:
\begin{align}
M_1'(M_1' - 1)M_2'(M_2'-1)(\eta' + \alpha' - \eps') + M_1'^2 M_2' + M_1' M_2'^2 + M_1' M_2', \notag
\end{align}
which completes the proof of \Cref{claim:joint-dc-upper}. 
\end{proof}

On the other hand, a lower bound on \Cref{eqn:double_count_object} follows from a direct calculation.

\begin{claim}
\label{claim:joint-dc-lower}
\Cref{eqn:double_count_object} is nonnegative, i.e., 
\begin{align}
\sum_{(i_1, i_2) \in [M_1']\times[M_2']} \sum_{(j_1, j_2) \in [M_2'] \times[M_2']} \inprod{\tau_{\vxa_{i_1}, \vxa_{i_2}, \vxb_{j_1}, \vxb_{j_2}}}{Q_{\bfxa_1, \bfxa_2, \bfxb_1, \bfxb_2}} \ge 0. 
\label{eqn:dc-lower-bound}
\end{align}
\end{claim}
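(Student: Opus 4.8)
The plan is to swap the order of summation and then invoke the conic definition of $\cogoodab$ directly, so this is the ``easy half'' of the Plotkin-type double counting. First I would use bilinearity of $\inprod{\cdot}{\cdot}$ to pull the summation inside the inner product, so that the left-hand side of \Cref{eqn:dc-lower-bound} becomes $\inprod{T}{Q_{\bfxa_1, \bfxa_2, \bfxb_1, \bfxb_2}}$, where $T$ is the aggregated tensor obtained by summing $\tau_{\vxa_{i_1}, \vxa_{i_2}, \vxb_{j_1}, \vxb_{j_2}}$ over all $(i_1,i_2)\in[M_1']^2$ and $(j_1,j_2)\in[M_2']^2$. It then suffices to show that $T$ lies in the cone generated by tensors of the form $P_{\bfxa}^{\ot2}\ot P_{\bfxb}^{\ot2}$ with $P_{\bfxa}\in\Delta(\cXa)$, $P_{\bfxb}\in\Delta(\cXb)$, since each such tensor pairs nonnegatively with $Q_{\bfxa_1, \bfxa_2, \bfxb_1, \bfxb_2}\in\cogoodab$ by \Cref{def:cogood-distr}.

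Next I would unfold the joint type coordinate by coordinate. For each $t\in[n]$, let $A_t\in\Delta(\cXa)$ and $B_t\in\Delta(\cXb)$ denote the empirical distributions of the $t$-th column of $\cCa'$ and $\cCb'$ respectively, i.e.\ $A_t(\xa) = \frac{1}{M_1'}\card{\curbrkt{i\colon \vxa_i(t)=\xa}}$ and similarly for $B_t$. Writing $\tau_{\vxa_{i_1}, \vxa_{i_2}, \vxb_{j_1}, \vxb_{j_2}}$ as a coordinate-wise average of product indicators and summing over $(i_1,i_2,j_1,j_2)$, the four index sums factorize within each coordinate $t$ (for instance $\sum_{i_1,i_2}\indicator{\vxa_{i_1}(t)=\xa_1}\indicator{\vxa_{i_2}(t)=\xa_2} = (M_1')^2 A_t(\xa_1)A_t(\xa_2)$), yielding
\[
T \;=\; \frac{(M_1')^2(M_2')^2}{n}\sum_{t=1}^n A_t^{\ot2}\ot B_t^{\ot2}.
\]
Finally, each summand $A_t^{\ot2}\ot B_t^{\ot2}$ is exactly of the form appearing in \Cref{def:cogood-distr}, so $\inprod{A_t^{\ot2}\ot B_t^{\ot2}}{Q_{\bfxa_1, \bfxa_2, \bfxb_1, \bfxb_2}}\ge 0$ for every $t$; summing over $t$ and multiplying by the positive scalar $(M_1')^2(M_2')^2/n$ gives $\inprod{T}{Q_{\bfxa_1, \bfxa_2, \bfxb_1, \bfxb_2}}\ge 0$, which is precisely \Cref{eqn:dc-lower-bound}.

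I do not expect a genuine obstacle in this step — it is the benign side of the double counting, in contrast to the upper bound of \Cref{claim:joint-dc-upper}, where the diagonal contributions $i_1=i_2$ or $j_1=j_2$ must be isolated and where the symmetry of $\oldistraabb$ is essential. The only point requiring mild care here is that the full (un-restricted) sum over $[M_1']^2\times[M_2']^2$ is used, so the diagonal terms are automatically absorbed into the factorization above and need no separate treatment; one should just make sure the column-distribution rewriting is carried out on the complete index set rather than on ordered pairs.
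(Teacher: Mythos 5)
Your proof is correct and follows essentially the same route as the paper: expanding the joint type coordinate-by-coordinate, factorizing the index sums into column empirical distributions, and recognizing each term $A_t^{\ot2}\ot B_t^{\ot2}$ as a tensor pairing nonnegatively with $Q\in\cogoodab$. The only cosmetic difference is that you invoke \Cref{def:cogood-distr} term-by-term, whereas the paper first assembles the mixture $\frac{1}{n}\sum_t A_t^{\ot2}\ot B_t^{\ot2}$ (noting it lies in $\goodab$ via constant composition) and then applies \Cref{thm:duality}; your version is marginally more direct since it does not even need the constant-composition fact for this step, but the computation is identical.
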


\begin{proof}
We compute \Cref{eqn:double_count_object} from the first principle and interchange the summations.  

\begin{align}
& \sum_{(i_1, i_2) \in [M_1']\times[M_2']} \sum_{(j_1, j_2) \in [M_2'] \times[M_2']} \inprod{\tau_{\vxa_{i_1}, \vxa_{i_2}, \vxb_{j_1}, \vxb_{j_2}}}{Q_{\bfxa_1, \bfxa_2, \bfxb_1, \bfxb_2}} \notag \\
=& \sum_{(i_1, i_2)\in[M_1']^2}\sum_{(j_1, j_2)\in[M_2']^2} 
\sum_{(\xa_1, \xa_2)\in \cXa^2} \sum_{(\xb_1, \xb_2)\in \cXb^2}
\tau_{\vxa_{i_1}, \vxa_{i_2}, \vxb_{j_1}, \vxb_{j_2}} (\xa_1, \xa_2, \xb_1, \xb_2) Q (\xa_1, \xa_2, \xb_1, \xb_2) \notag \\
=& \sum_{\substack{(\xa_1, \xa_2)\in \cXa^2 \\ (\xb_1, \xb_2)\in \cXb^2}}
\sum_{\substack{(i_1, i_2)\in[M_1']^2 \\ (j_1, j_2)\in[M_2']^2}}
\frac{1}{n}\sum_{k\in[n]} \indicator{\vxa_{i_1}(k) = \xa_1} 
\indicator{\vxa_{i_2}(k) = \xa_2} 
\indicator{\vxb_{j_1}(k) = \xb_1} 
\indicator{\vxb_{j_2}(k) = \xb_2} 
Q (\xa_1, \xa_2, \xb_1, \xb_2) \notag \\
=& \frac{1}{n} \sum_{\substack{(\xa_1, \xa_2)\in \cXa^2 \\ (\xb_1, \xb_2)\in \cXb^2}}
\sum_{k\in[n]}
\paren{\sum_{i_1\in[M_1']}\indicator{\vxa_{i_1}(k) = \xa_1} }
\paren{\sum_{i_2\in[M_1']}\indicator{\vxa_{i_2}(k) = \xa_2} } \notag \\
& \paren{\sum_{j_1\in[M_2']}\indicator{\vxb_{j_1}(k) = \xb_1} }
\paren{\sum_{j_2\in[M_2']}\indicator{\vxb_{j_2}(k) = \xb_2} }
Q (\xa_1, \xa_2, \xb_1, \xb_2) \notag \\
=& \frac{M_1'^2M_2'^2}{n} \sum_{k\in[n]} 
\sum_{\substack{(\xa_1, \xa_2)\in \cXa^2 \\ (\xb_1, \xb_2)\in \cXb^2}}
\ipdistra^{(k)}(\xa_1)\ipdistra^{(k)}(\xa_2)\ipdistrb^{(k)}(\xb_1)\ipdistrb^{(k)}(\xb_2)
Q (\xa_1, \xa_2, \xb_1, \xb_2) \label{eqn:define-col-distr} \\
=& M_1'^2M_2'^2\inprod{\frac{1}{n}\sum_{k\in[n]} \paren{\ipdistra^{(k)}}^{\ot2} \ot \paren{\ipdistrb^{(k)}}^{\ot2}}{Q_{\bfxa_1, \bfxa_2, \bfxb_1, \bfxb_2}} \notag \\
\ge& 0. \label{eqn:dc-nonnegative}
\end{align}
In \Cref{eqn:define-col-distr}, $ P_i^{(k)} $ denotes the empirical distribution of the $k$-th \emph{column} of the codebook $ \cC_i'\in\cX_i^{M_i'\times n} $, i.e., for any $ x^i\in\cX_i $, 
\begin{align}
P_i^{(k)}(x^i) = \frac{1}{M_i'}\card{\curbrkt{\ell\in[M_i']:\vx^i_\ell(k) = x^i}}. \label{eqn:def-col-distr} 
\end{align}
\Cref{eqn:dc-nonnegative} follows from \Cref{thm:duality} since $ \frac{1}{n}\sum_{k\in[n]} \paren{\ipdistra^{(k)}}^{\ot2} \ot \paren{\ipdistrb^{(k)}}^{\ot2} \in\goodab $ and $ Q_{\bfxa_1,\bfxa_2,\bfxb_1,\bfxb_2}\in\cogoodab $.

This finishes the proof of \Cref{claim:joint-dc-lower}.
\end{proof}

Finally, \Cref{eqn:dc-upper-bound,eqn:dc-lower-bound} yield
\begin{align}
& & 0\le& M_1'^2M_2'^2(\eta'+\alpha'-\eps') + M_1'^2M_2'+M_1'M_2'^2+M_1'M_2' \notag \\
&\implies & 0\le& M_1'M_2'(\eta'+\alpha'-\eps') + M_1'+M_2'+1 \notag \\
&\implies & 0\le& -\delta M'^2 + 2M' + 1 \label{eqn:define-delta} \\
&\implies & M'\le& \frac{1+\sqrt{1+\delta}}{\delta}
\label{eqn:bound-max-m1-m2}
\end{align}
In \Cref{eqn:define-delta}, we let $ M'\coloneqq\max\curbrkt{M_1',M_2'} $ and $ \delta \coloneqq \eps' - \eta'-\alpha'>0 $. 
\Cref{eqn:bound-max-m1-m2} gives us the desired bound $ \max\curbrkt{M_1',M_2'}\le C $ for some constant $ C>0 $ independent of $n$. 

\subsubsection{The case where $ \goodab\setminus\cKab \ne \emptyset $}
\label{sec:symm-g12-minus-k12-nonempty}

Intuitively, this case is impossible for the following reasons.
In the last subsection, we have shown that for any of $ |\cCa'| $ and $ |\cCb'| $ to be large, the distribution $ \oldistraabb $ should (approximately) belong to $ \goodab\setminus\cKab $. 
For one thing, since $ \oldistraabb\in\gooda $, by the second property in \Cref{prop:properties_good_distr}, $ \sqrbrkt{\oldistraabb}_{\bfxa_1,\bfxa_2,\bfxb_1} $ (approximately) belongs to $ \gooda $ and $ \sqrbrkt{\oldistraabb}_{\bfxa_1,\bfxb_1,\bfxb_2} $ (approximately) belongs to $ \goodb $. 
For another thing, since the code pair $ (\cCa',\cCb') $ is assumed to attain zero error in the first place, we have that $ \sqrbrkt{\oldistraabb}_{\bfxa_1,\bfxa_2,\bfxb_1} $ which is close to $ \tau_{\vxa_{i_1}, \vxa_{i_2}, \vxb_{j_1}} $ is (approximately) outside $ \cKa $ and $ \sqrbrkt{\oldistraabb}_{\bfxa_1,\bfxb_1,\bfxb_2} $ which is close to $ \tau_{\vxa_{i_1}, \vxb_{j_1}, \vxb_{j_2}} $ is (approximately) outside $ \cKb $. 
In summary, we found a distribution $ \oldistraab\in\goodab\setminus\cKab $ with $ \sqrbrkt{\oldistraabb}_{\bfxa_1,\bfxa_2,\bfxb_1}\in\gooda\setminus\cKa $ and $ \sqrbrkt{\oldistraabb}_{\bfxa_1,\bfxb_1,\bfxb_2}\in\goodb\setminus\cKb $. 
This, nevertheless, contradicts the assumption $ \good = \emptyset $ of \Cref{itm:conv-pos-pos} in \Cref{thm:converse}.

The above intuition can be formalized by taking a good care of various slack factors. 
We flesh out the details below. 

In the previous section, we showed that for any constant $ \gamma>0 $,
if the distribution $ \oldistraabb $ (which is the symmetrized version of $ \distraabb $, as defined in \Cref{eqn:define-oldistraabb}) associated to $ (\cCa',\cCb') $ is $ \gamma $-far (in $ \ell^1 $ distance) from $ \goodab $, then both $ M_1' $ and $ M_2' $ are at most a constant $ g(\gamma) $ for some function $ g(\gamma)\xrightarrow{\gamma\to0}0 $.\footnote{
In the previous section, $ \gamma = \eps- \eta' - \alpha' $ as we got in \Cref{eqn:olp12-notin-g12} and $ g(\gamma) = g(\eps,\eta',\alpha') = \frac{1+\sqrt{1+\eps' - \eta' - \alpha'}}{\eps' - \eta' - \alpha'} $ where $ \eps' = \eps'(\eps) $ as we obtained in \Cref{eqn:bound-max-m1-m2}. } 
In other words, for $ M_1' $ or $ M_2' $ to be sufficiently large, $ \oldistraabb $ has to be $ \gamma $-close (in $ \ell^1 $ distance) to $ \goodab $ for an arbitrarily small constant $ \gamma>0 $.
Note also that unlike $ \tau_{\vxa_{i_1}, \vxa_{i_2}, \vxb_{j_1}, \vxb_{j_2}} $, the distribution $\oldistraabb$ can be \emph{slightly} inside $ \cKab $. 
However, it cannot be \emph{significantly} inside $ \cKab $. 
Specifically, for any $ 1\le i_1<i_2\le M_1' $ and $ 1\le j_1<j_2\le M_2' $, 
\begin{align}
& \distone{\oldistraabb}{\cJab\setminus\cKab} \notag \\
\le& \distone{\oldistraabb}{\distraabb} + \distone{\distraabb}{\tau_{\vxa_{i_1}, \vxa_{i_2}, \vxb_{j_1}, \vxb_{j_2}}} + \distone{\tau_{\vxa_{i_1}, \vxa_{i_2}, \vxb_{j_1}, \vxb_{j_2}}}{\cJab\setminus\cKab} \label{eqn:last-term-zero} \\
\le& \alpha' + \eta'. \label{eqn:symm-pab-kab} 
\end{align}
In \Cref{eqn:symm-pab-kab}, we used \Cref{eqn:p12-olp12,eqn:equicoupled-lone}. 
Also, the last term in \Cref{eqn:last-term-zero} is zero due to \Cref{eqn:non-confusable}. 
Overall, we have that for any good code pair $ (\cCa',\cCa')\in\cXa^{M_1'\times n}\times \cXb^{M_2'\times n} $ extracted from \Cref{lem:subcodepair_extraction}, for either $ M_1' $ or $ M_2' $ to be sufficiently large, $ \oldistraabb $ has to be $(\eps - \eta' - \alpha')$-close to $ \goodab $ and $ (\alpha'+\eta') $-close to $ \cJab\setminus\cKab $ for arbitrarily small constants $ \eps,\eta',\alpha'>0 $. 

Therefore, we can without loss of rigor drop these slack factors and assume for convenience  
\begin{align}
\oldistraabb\in\goodab\setminus\cKab. 
\label{eqn:convenient-assump}
\end{align}
For this to be possible, in this subsection we consider the case where $ \goodab \setminus \cKab \ne\emptyset $.

Let 
\begin{align}
\oldistraab \coloneqq& \sqrbrkt{\oldistraabb}_{\bfxa_1,\bfxa_2,\bfxb_1} = \sqrbrkt{\oldistraabb}_{\bfxa_1,\bfxa_2,\bfxb_2}, \notag \\
\oldistrabb \coloneqq& \sqrbrkt{\oldistraabb}_{\bfxa_1,\bfxb_1,\bfxb_2} = \sqrbrkt{\oldistraabb}_{\bfxa_2,\bfxb_1,\bfxb_2}. \notag 
\end{align}
Since $ \oldistraabb\in\goodab $, the equality of the respective marginals above is by the second property of \Cref{prop:properties_good_distr}. 
Furthermore, by \Cref{eqn:symm-pab-kab} and \Cref{lem:marg-doesnot-increase-dist}, 
\begin{align}
\distone{\oldistraab}{\cJa\setminus\cKa} \le& \alpha' + \eta', \label{eqn:dist-oldistraab-ka} \\
\distone{\oldistrabb}{\cJb\setminus\cKb} \le& \alpha' + \eta'. \label{eqn:dist-oldistrabb-kb}
\end{align}

We further divide the analysis into two cases (as shown in \Cref{fig:g12-minus-k12-nonempty}). 

\begin{figure}[htbp]
	\centering
	\begin{subfigure}[t]{0.45\linewidth}
		\centering
		\includegraphics[width=0.95\linewidth]{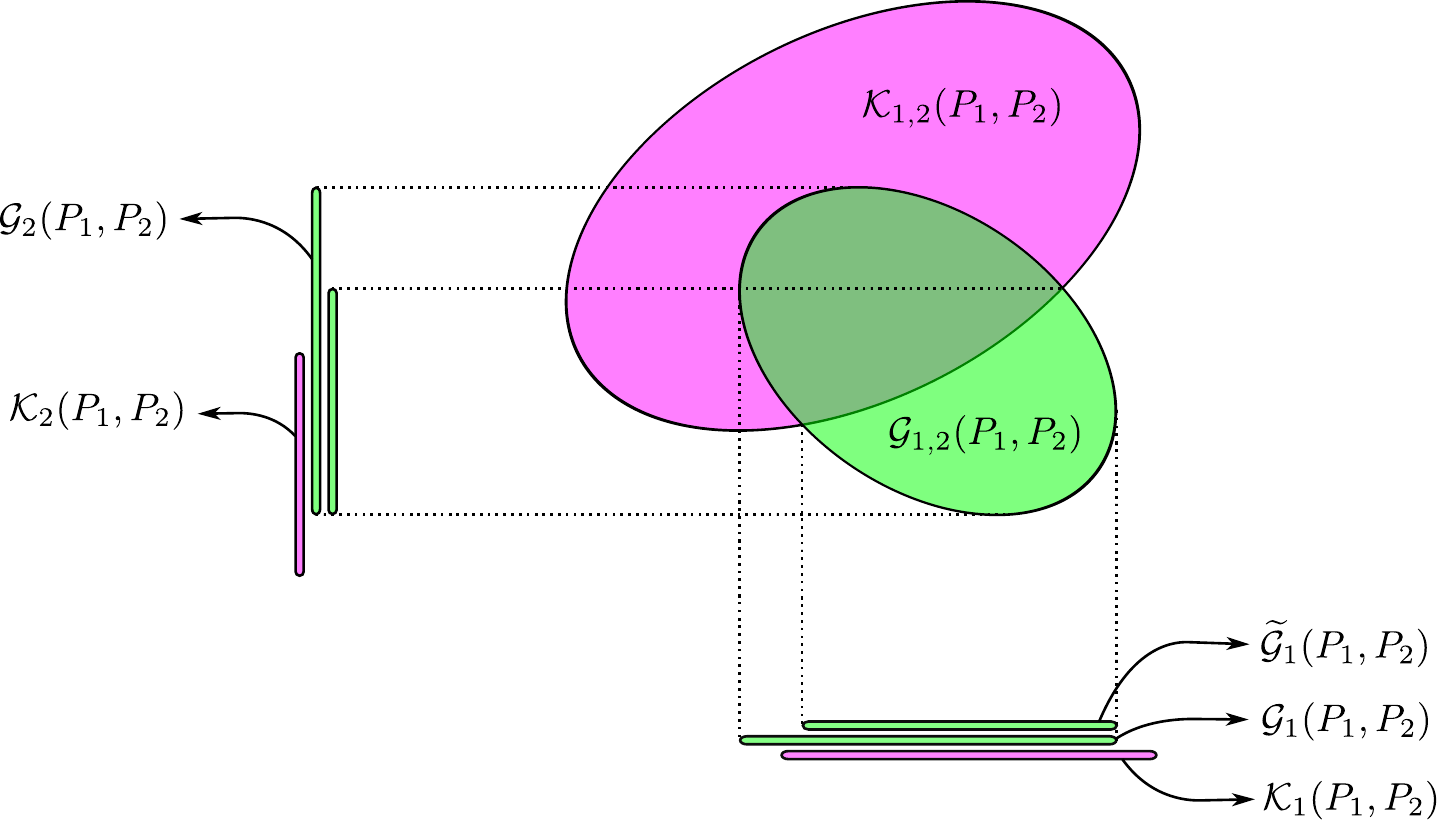}
		\caption{The case where $ \wtgooda\setminus\cKa = \emptyset $ where $ \wtgooda $ is defined in \Cref{eqn:def-wtgooda}. }
		\label{fig:joint-good_marg1-bad}
	\end{subfigure}
	\quad
	\begin{subfigure}[t]{0.45\linewidth}
		\centering
		\includegraphics[width=0.95\linewidth]{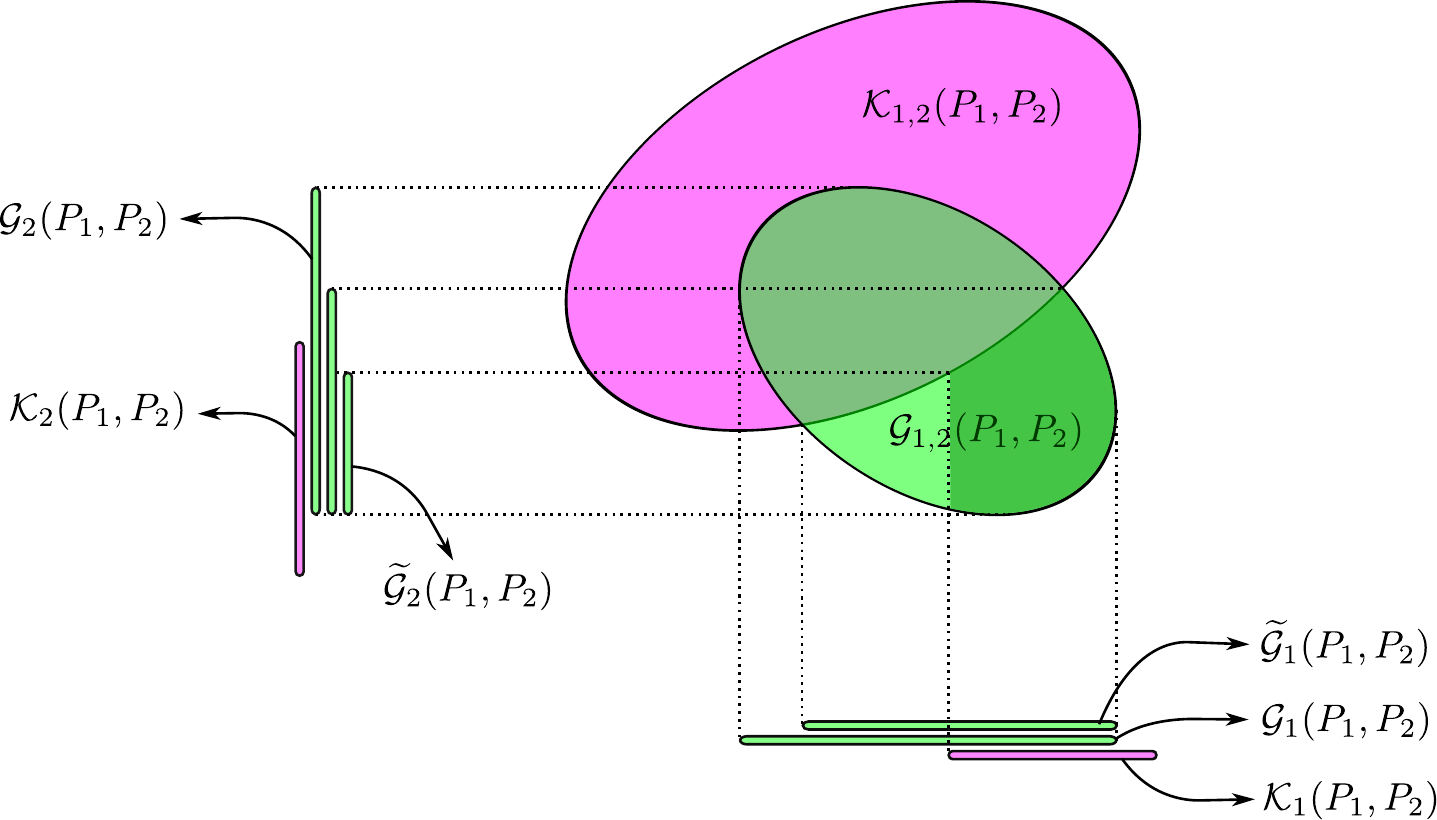}
		\caption{The case where $ \wtgooda\setminus\cKa \ne\emptyset $ while $ \wtgoodb\setminus\cKb = \emptyset $ where $ \wtgoodb $ is defined in \Cref{eqn:wtgoodb-alter-def}. }
		\label{fig:joint-good_marg1-good_marg2-bad}
	\end{subfigure}
	\caption{Under the assumptions $ \good = \emptyset $ and $ \goodab\setminus\cKab\ne\emptyset $, we further divide the converse analysis into two cases. The goal is to show that in these cases there do not exist zero-error code pairs of rates $ R_1>0 $ and $ R_2>0 $. In the above figures, {\color{magenta}pink} sets are confusability sets and {\color{forestgreen}green} sets are sets of good distributions. Two-dimensional sets are sets of joint distributions (e.g., $ \goodab,\cKab $) and one-dimensional sets are sets of marginal distributions (e.g., $ \gooda,\goodb,\cKa,\cKb $, etc.). }
	\label{fig:g12-minus-k12-nonempty}
\end{figure}

\begin{enumerate}
	\item Define
	\begin{align}
	\wtgooda \coloneqq& \curbrkt{\sqrbrkt{\distraabb}_{\bfxa_1,\bfxa_2,\bfxb_1}:\distraabb\in\goodab\setminus\cKab}\subseteq\gooda . \label{eqn:def-wtgooda} 
	\end{align}
	Note that by \Cref{eqn:convenient-assump}, 
	\begin{align}
	\oldistraab\in\wtgooda. 
	\label{eqn:oldistraab-in-wtgooda}
	\end{align}
	Since we assume $ \good = \emptyset $ in \Cref{itm:ach-1} of \Cref{thm:converse}, $ \wtgooda\setminus\cKa $ may or may not be empty. 
	We first handle the case where $ \wtgooda\setminus\cKa = \emptyset $. 
	In fact, let us assume 
	\begin{align}
	\distone{\wtgooda}{\cJa\setminus\cKa}\ge\eps_1 \label{eqn:wtgooda-assump}
	\end{align}
	for some $ \eps_1>0 $. 
	See \Cref{fig:joint-good_marg1-bad}. 
	However, \Cref{eqn:oldistraab-in-wtgooda,eqn:dist-oldistraab-ka} lead to a contradiction if $ \alpha' $ and $\eta'$ and sufficiently small so that $ \alpha' + \eta' < \eps_1 $. 
	Therefore, it is impossible for this case to happen. 

	\item 
	Now we assume 
	\begin{align}
	\wtgooda\setminus\cKa \ne \emptyset. 
	\label{eqn:wtgooda-assump-complement}
	\end{align}
	The analysis of the above case shows that $ \oldistraab\in\wtgooda $ has to be $ (\alpha' + \eta') $-close to $ \cJa\setminus\cKa $ for arbitrarily small $ \alpha' $ and $ \eta' $. 
	Similar to the assumption given by \Cref{eqn:convenient-assump}, in the present case we may as well assume for convenience
	\begin{align}
	\oldistraab\in\wtgooda\setminus\cKa. 
	\label{eqn:convenient-assump-1}
	\end{align}
	Now define
	\begin{align}
	\wtgoodb \coloneqq& \curbrkt{\sqrbrkt{\distraabb}_{\bfxa_1,\bfxb_1,\bfxb_2}:
	\begin{array}{rl}
	\distraabb\in& \goodab\setminus\cKab\\
	\sqrbrkt{\distraabb}_{\bfxa_1,\bfxa_2,\bfxb_1}\in& \gooda\setminus\cKa
	\end{array}
	} \notag \\
	=& \curbrkt{\sqrbrkt{\distraabb}_{\bfxa_1,\bfxb_1,\bfxb_2}:
	\begin{array}{rl}
	\distraabb\in& \goodab\setminus\cKab\\
	\sqrbrkt{\distraabb}_{\bfxa_1,\bfxa_2,\bfxb_1}\in& \wtgooda\setminus\cKa
	\end{array}
	}\subseteq\goodb . \label{eqn:wtgoodb-alter-def}
	\end{align}
	\Cref{eqn:wtgoodb-alter-def} is by \Cref{eqn:def-wtgooda}. 
	By the assumption given by \Cref{eqn:wtgooda-assump-complement}, $ \wtgoodb\ne\emptyset $. 
	Note that by \Cref{eqn:convenient-assump,eqn:convenient-assump-1}, 
	\begin{align}
	\oldistrabb\in\wtgoodb. 
	\label{eqn:oldistrabb-in-wtgoodb}
	\end{align}
	On the other hand, by the assumption $ \good = \emptyset $ and \Cref{eqn:wtgooda-assump-complement}, $ \wtgoodb\setminus\cKb $ must be empty. 
	In fact let us assume 
	\begin{align}
	\distone{\wtgoodb}{\cJb\setminus\cKb} \ge \eps_2 \label{eqn:wtgoodb-assump} 
	\end{align}
	for some $ \eps_2>0 $. 
	See \Cref{fig:joint-good_marg1-good_marg2-bad}. 
	Now by \Cref{eqn:dist-oldistrabb-kb,eqn:oldistrabb-in-wtgoodb}, we again reach a contradiction if $ \alpha+ \eta' < \eps_2 $. 
	Therefore, this case is also impossible to happen. 
\end{enumerate}

\section{Converse, \Cref{itm:conv-pos-zero,itm:conv-zero-pos} in \Cref{thm:converse}}
\label{sec:conv-pos-zero}

In this section, we only prove \Cref{itm:conv-pos-zero}. 
\Cref{itm:conv-zero-pos} follows by interchanging notation. 
We assume that $ \gooda\setminus\cKa = \emptyset $. 
More precisely, we assume 
\begin{align}
\distone{\gooda}{\cJa\setminus\cKa} \ge \eps \label{eqn:marg1-separation-assump}
\end{align}
for some $ \eps>0 $. 
Let $ (\cCa,\cCb) $ be any good code pair.
Suppose $ R_1>0 $. 
Our goal is to derive a contradiction. 

\subsection{Subcode extraction}
\label{sec:marginal-subcode-extraction}

\begin{theorem}[Ramsey's theorem \cite{ramsey}]
\label{thm:regular-ramsey}
Let $ \cK_M $ denote the (undirected) complete graph on $ M $ vertices. 
Let $ N\in\bZ_{\ge1}, D\in\bZ_{\ge2} $. 
Then there exists a constant $ K = K(N,D) $ such that for every $D$-coloring of the edges of $ \cK_M $ with $ M\ge K $, there is a monochromatic clique in $ \cK_M $ of size at least $N$. 
\end{theorem}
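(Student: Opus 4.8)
The plan is to prove the multicolor graph Ramsey theorem by first settling the two-color case and then bootstrapping to $D$ colors by an induction on the number of colors. For a $D$-tuple of positive integers $(n_1,\dots,n_D)$, let $R_D(n_1,\dots,n_D)$ denote the least $M$ (if it exists) such that every $D$-coloring of the edges of $\cK_M$ contains, for some $i\in[D]$, a clique on $n_i$ vertices all of whose edges have color $i$. The theorem is then exactly the assertion that $R_D(N,\dots,N)<\infty$, and we may take $K(N,D)\coloneqq R_D(N,\dots,N)$.

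First I would handle $D=2$. The base cases are $R_2(s,1)=R_2(1,t)=1$, since a single vertex is (vacuously) a monochromatic clique on one vertex in either color. For $s,t\ge2$ I would establish the recursion
\begin{align}
R_2(s,t)\le R_2(s-1,t)+R_2(s,t-1) \notag
\end{align}
by the standard argument: set $M\coloneqq R_2(s-1,t)+R_2(s,t-1)$, fix a $2$-coloring of $\cK_M$, and pick any vertex $v$; among the remaining $M-1$ vertices either at least $R_2(s-1,t)$ are joined to $v$ in color $1$ or at least $R_2(s,t-1)$ are joined to $v$ in color $2$. Restricting to such a neighborhood and recursing produces the desired monochromatic clique once $v$ is reattached. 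An induction on $s+t$ then yields the finite bound $R_2(s,t)\le\binom{s+t-2}{s-1}$.

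Next I would pass from $2$ to $D$ colors by induction on $D$, with $D=2$ as the base case just proved. Assuming $L\coloneqq R_{D-1}(N,\dots,N)<\infty$, I claim $R_D(N,\dots,N)\le R_2(N,L)$. Given a $D$-coloring of $\cK_M$ with $M\ge R_2(N,L)$, merge colors $2,\dots,D$ into a single ``non-$1$'' color to obtain a $2$-coloring; by the two-color case this contains either a color-$1$ clique on $N$ vertices (done) or a ``non-$1$'' clique on $L$ vertices. In the latter case the original $D$-coloring, restricted to that clique, uses only the $D-1$ colors $2,\dots,D$, so by the induction hypothesis it contains a monochromatic clique on $N$ vertices. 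This closes the induction and shows $K(N,D)\coloneqq R_D(N,\dots,N)$ is a finite constant depending only on $N$ and $D$.

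There is no substantive obstacle here --- this is Ramsey's classical argument, and the two-color estimate is the Erd\H{o}s--Szekeres bound --- so the only points needing care are the bookkeeping in the base cases of the two-color recursion (so that $R_2(s,1)$ and $R_2(1,t)$ are set to $1$, not $0$) and verifying that the color-merging step reduces the number of colors by exactly one, so the induction on $D$ is well-founded. Alternatively, one may simply observe that this graph statement is a special case of, and is proved by the same scheme as, the bipartite hypergraph Ramsey theorem recorded in \Cref{lem:hypergraph_ramsey}, and invoke the cited reference \cite{ramsey}.
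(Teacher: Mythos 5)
Your proof is correct and is the standard one: the Erd\H{o}s--Szekeres two-color recursion $R_2(s,t)\le R_2(s-1,t)+R_2(s,t-1)$ followed by an induction on the number of colors via merging colors $2,\dots,D$ into one. The paper itself does not prove this statement --- it is quoted as a classical black-box result with a citation to Ramsey's original paper --- so there is no proof in the paper to compare against; your argument is precisely the textbook proof that underlies that citation, and it is sound (including the base-case bookkeeping $R_2(s,1)=R_2(1,t)=1$ that you flag).
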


\begin{lemma}[Subcode extraction]
\label{thm:marginal-subcode-extraction}
Let $ (\cCa,\cCb)\subseteq\cXa^n\times\cXb^n $ be any $ (\ipdistra,\ipdistrb) $-constant composition code pair of sizes $ M_1,M_2 $, respectively. 
Let $ j\in[M_2] $. 
Then there exists a subcode $ \cCa'\subseteq\cC $ of size $ M_1'\ge f(\card{\cXa}, \card{\cXb}, \eta,  M_1)\xrightarrow{M_1\to\infty}\infty $ and a distribution $ \distraab\in\cJa $ such that for all $ 1\le i_1<i_1\le M_1' $, we have $ \distinf{\tau_{\vxa_{i_1}, \vxa_{i_2}, \vxb_j}}{\distraab}\le\eta $. 
\end{lemma}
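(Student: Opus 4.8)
The plan is to mirror the proof of \Cref{lem:subcodepair_extraction}, the only structural difference being that we now extract a subcode from a single codebook $\cCa$ while the codeword $\vxb_j\in\cCb$ is held fixed as a side parameter; consequently the ordinary graph Ramsey theorem \Cref{thm:regular-ramsey} takes the place of the bipartite hypergraph version \Cref{lem:hypergraph_ramsey}. First I would build the complete graph $\cK_{M_1}$ whose vertex set is $\cCa = \curbrkt{\vxa_1,\dots,\vxa_{M_1}}$, and fix an $\eta$-net $\cN$ of $(\cJa,\distinf{\cdot}{\cdot})$. By (the quantization argument behind) \Cref{lem:bound-net}, $D\coloneqq\card{\cN}$ can be bounded by a constant depending only on $\card{\cXa},\card{\cXb},\eta$ — in particular it is independent of $n$ and of $M_1$. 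We may assume $D\ge 2$, as the case $D=1$ is trivial.

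Next I would color the edges of $\cK_{M_1}$ by elements of $\cN$ as follows: for $1\le i_1<i_2\le M_1$, note that $\tau_{\vxa_{i_1},\vxa_{i_2},\vxb_j}\in\cJa$ (here we use that $\cCa$ is $\ipdistra$-constant composition and that $\vxb_j$ has type $\ipdistrb$), and color the edge $\curbrkt{\vxa_{i_1},\vxa_{i_2}}$ by some $\distraab\in\cN$ with $\distinf{\tau_{\vxa_{i_1},\vxa_{i_2},\vxb_j}}{\distraab}\le\eta$, which exists by the covering property of $\cN$. Now apply \Cref{thm:regular-ramsey}: for each target size $N\in\bZ_{\ge1}$ there is a finite Ramsey number $K(N,D)$ such that whenever $M_1\ge K(N,D)$ there is a monochromatic clique of size $N$ in this $D$-colored $\cK_{M_1}$. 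Define
\begin{align}
f(\card{\cXa},\card{\cXb},\eta,M_1) \coloneqq \max\curbrkt{N\in\bZ_{\ge1}\colon K(N,D(\card{\cXa},\card{\cXb},\eta))\le M_1}. \notag
\end{align}
Since $K(N,D)<\infty$ for every $N$, this is a well-defined nondecreasing function of $M_1$ satisfying $f(\card{\cXa},\card{\cXb},\eta,M_1)\xrightarrow{M_1\to\infty}\infty$. Taking $\cCa'$ to be the set of codewords forming a monochromatic clique of size $M_1'\coloneqq f(\card{\cXa},\card{\cXb},\eta,M_1)$, all edges inside $\cCa'$ share a common color $\distraab\in\cN\subseteq\cJa$; by the coloring rule this means $\distinf{\tau_{\vxa_{i_1},\vxa_{i_2},\vxb_j}}{\distraab}\le\eta$ for all $1\le i_1<i_2\le M_1'$, which is exactly the desired conclusion.

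I do not expect a real obstacle here: the argument is a routine specialization of \Cref{lem:subcodepair_extraction}. The only points that need care are (i) that $\vxb_j$ enters solely as a fixed parameter of the coloring — so the color of an edge genuinely depends only on the unordered pair of endpoints, which is what makes the plain (rather than hyper-) graph Ramsey theorem applicable; and (ii) checking that $f$ is well-defined and unbounded, which is immediate from the finiteness of the Ramsey numbers for fixed $D$. It is worth flagging, for the downstream use of this lemma in \Cref{sec:conv-pos-zero}, that the extracted $\distraab$ need \emph{not} be symmetric (in the sense of \Cref{def:sym_distr}); the split into an (approximately) symmetric case and an asymmetric case, and the subsequent symmetrization and double-counting, will be carried out exactly as in the joint setting of \Cref{sec:conv-pos-pos}.
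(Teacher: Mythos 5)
Your proposal is correct and follows essentially the same route as the paper's proof: build $\cK_{M_1}$ on the vertex set $\cCa$, color edges $\{\vxa_{i_1},\vxa_{i_2}\}$ (with the convention $i_1<i_2$) by elements of a fixed finite $\eta$-net of $\cJa$ via $\distinf{\tau_{\vxa_{i_1},\vxa_{i_2},\vxb_j}}{\cdot}\le\eta$, and extract a monochromatic clique using \Cref{thm:regular-ramsey}. Your extra remarks about the well-definedness of $f$ and about $\distraab$ not being guaranteed symmetric are accurate and consistent with how \Cref{sec:marginal-asymmetric-case,sec:marginal-symmetric-case} subsequently use the lemma.
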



\begin{proof}
The proof is similar to that of \Cref{lem:subcodepair_extraction} and follows readily from \Cref{thm:regular-ramsey}. 
We first build a complete graph $ \cK_{M_1} $ whose vertex set is $ \cCa $. 
We then color the edges of $ \cK_{M_1} $ using distributions in $ \cJa $.
Let $ \cN $ be an $ \eta $-net of $ \cJa $ of size at most $ |\cN|\le \paren{\frac{\card{\cXa}^2\times\card{\cXb}}{2\eta} + 1}^{\card{\cXa}^2\times\card{\cXb}} \eqqcolon D $ (by \Cref{lem:bound-net}). 
An edge $ (\vxa_{i_1}, \vxa_{i_2}) $ ($ 1\le i_1<i_2\le M_1 $) is colored by a distribution $ \distraab\in\cN $ if $ \distinf{\tau_{\vxa_{i_1}, \vxa_{i_2}, \vxb_j}}{\distraab}\le\eta $. 
Now by \Cref{thm:regular-ramsey}, there is a monochromatic subcode $ \cCa'\subseteq\cCa $ of size at least $ M_1'\ge f(\card{\cXa}, \card{\cXb},  \eta,  M_1) $, where $ f(\card{\cXa},\card{\cXb},\eta,M_1)\xrightarrow{M_1\to\infty}\infty $. 
According to the way we colored the edges, this means that for all $ 1\le i_1<i_2\le M_1' $, $ \distinf{\tau_{\vxa_{i_1}, \vxa_{i_2}, \vxb_j}}{\distraab}\le\eta $. 
\end{proof}

Fix any $ j\in[M_2] $. 
By \Cref{thm:marginal-subcode-extraction}, there is a subcode $ \cCa' \subseteq \cCa $ of size $ M_1'\xrightarrow{M_1\to\infty}\infty $ such that for some distribution $ \distraab\in\cJa $, we have 
\begin{align}
\distinf{\tau_{\vxa_{i_1}, \vxa_{i_2}, \vxb_{j}}}{\distraab}\le\eta \label{eqn:marg1-equicoupled}
\end{align}
for all $ 1\le i_1<i_2\le M_1' $.
\Cref{eqn:marg1-equicoupled} implies, by \Cref{fact:distinf-distone}, that
\begin{align}
\distone{\tau_{\vxa_{i_1}, \vxa_{i_2}, \vxb_{j}}}{\distraab}\le \card{\cXa}^2\card{\cXb} \eta \eqqcolon\eta'. \label{eqn:marg1-equicoupled-lone}
\end{align}
In the following two sections (\Cref{sec:marginal-asymmetric-case,sec:marginal-symmetric-case}) we treat the cases where $ \distraab $ is (noticeably) asymmetric and (approximately) symmetric (in the sense of \Cref{def:sym_distr}) separately.

\subsection{Asymmetric case}
\label{sec:marginal-asymmetric-case}

Reusing the proof for Cases (5) \& (6) of \Cref{lem:joint-asymm} with $ \bfz^2 $ being $ \bfxb $ (instead of $ (\bfxb_1,\bfxb_2) $ as in \Cref{sec:converse_asymm_case}) and $ \bfzeta^2 $ corresponding to $ \vxb_j $ (instead of $ (\vxb_{j_1},\vxb_{j_2}) $ as in \Cref{sec:converse_asymm_case}), we get that $ \asymm(\distraab)\le\alpha $ as long as $ M_1'\ge36/(\alpha-4\sqrt{\eta} - 2\eta)^2 $.

\subsection{Symmetric case}
\label{sec:marginal-symmetric-case}
As we saw in the last section, for $ M_1' $ to be sufficiently large, $ \asymm(\distraab)\le\alpha $. 
Under such an approximate symmetry condition, we then pass to an \emph{exactly} symmetric distribution $ \oldistraab\in\cSa $ defined as
\begin{align}
\oldistraab \coloneqq& \frac{1}{2}\paren{\distraab + P_{\bfxa_2,\bfxa_1,\bfxb}}. \notag 
\end{align}
Furthermore,
\begin{align}
\distone{\oldistraab}{\distraab} =& \sum_{(\xa_1,\xa_2,\xb)\in\cXa^2\times\cXb} \abs{\oldistraab(\xa_1,\xa_2,\xb) - \distraab(\xa_1,\xa_2,\xb)} \notag \\
\le& \frac{1}{2} \sum_{(\xa_1,\xa_2,\xb)\in\cXa^2\times\cXb} \abs{P_{\bfxa_2,\bfxa_1,\bfxb}(\xa_1,\xa_2,\xb) - \distraab(\xa_1,\xa_2,\xb)} \notag \\
\le& \frac{1}{2}\card{\cXa}^2\card{\cXb}\alpha \eqqcolon\alpha'. \label{eqn:p-olp-marg1} 
\end{align}

To apply the duality theorem (\Cref{thm:duality}), we argue that $ \oldistraab $ is not in $ \gooda $. 
\begin{align}
\distone{\oldistraab}{\gooda} \ge& \distone{\distraab}{\gooda} - \distone{\distraab}{\oldistraab} \notag \\
\ge& \distone{\tau_{\vxa_{i_1}, \vxa_{i_2}, \vxb_j}}{\gooda} - \distone{\tau_{\vxa_{i_1}, \vxa_{i_2}, \vxb_j}}{\distraab} - \alpha' \label{eqn:apply-p-olp-marg1} \\
\ge& \distone{\gooda}{\cJa\setminus\cKa} - \eta' - \alpha' \label{eqn:apply-marg1-equicoupled} \\
\ge& \eps - \eta'-\alpha'. \label{eqn:apply-marg-sep-assump} 
\end{align}
\Cref{eqn:apply-p-olp-marg1} is by \Cref{eqn:p-olp-marg1}. 
\Cref{eqn:apply-marg1-equicoupled} is by \Cref{eqn:marg1-equicoupled-lone} and the fact that $ \tau_{\vxa_{i_1}, \vxa_{i_2}, \vxb_j}\notin\cKa $. 
\Cref{eqn:apply-marg-sep-assump} follows from \Cref{eqn:marg1-separation-assump}. 
By \Cref{thm:duality}, there exists $ Q_{\bfxa_1,\bfxa_2,\bfxb}\in\cogooda $, such that
\begin{align}
\inprod{\oldistraab}{Q_{\bfxa_1,\bfxa_2,\bfxb}} \le -\eps'
\label{eqn:witness1}
\end{align}
for some constant $ \eps'>0 $.
The strategy is to bound
\begin{align}
\sum_{(i_1,i_2)\in[M_1']^2} \inprod{\tau_{\vxa_{i_1}, \vxa_{i_2}, \vxb_j}}{Q_{\bfxa_1,\bfxa_2,\bfxb}}.
\label{eqn:dc1}
\end{align}
For an upper bound,
\begin{align}
& \sum_{(i_1,i_2)\in[M_1']^2} \inprod{\tau_{\vxa_{i_1}, \vxa_{i_2}, \vxb_j}}{Q_{\bfxa_1,\bfxa_2,\bfxb}} \notag \\
=& \sum_{\substack{(i_1,i_2)\in[M_1']^2 \\ i_1\ne i_2}} \inprod{\tau_{\vxa_{i_1}, \vxa_{i_2}, \vxb_j}}{Q_{\bfxa_1,\bfxa_2,\bfxb}} 
+ \sum_{i\in[M_1']} \inprod{\tau_{\vxa_i,\vxa_i,\vxb_j}}{Q_{\bfxa_1,\bfxa_2,\bfxb}} \notag \\
=& \sum_{\substack{(i_1,i_2)\in[M_1']^2 \\ i_1\ne i_2}} 
\paren{\inprod{\tau_{\vxa_{i_1}, \vxa_{i_2}, \vxb_j} - \oldistraab}{ Q_{\bfxa_1,\bfxa_2,\bfxb} } - 
\inprod{\oldistraab}{Q_{\bfxa_1,\bfxa_2,\bfxb}} }
+ \sum_{i\in[M_1']} \inprod{\tau_{\vxa_i,\vxa_i,\vxb_j}}{Q_{\bfxa_1,\bfxa_2,\bfxb}} \notag \\
\le& M_1'^2(\eta' + \alpha' - \eps') + M_1'. \label[ineq]{eqn:ub1} 
\end{align}
In the above \Cref{eqn:ub1}, besides \Cref{eqn:marg1-equicoupled-lone,eqn:p-olp-marg1,eqn:witness1}, we also used the fact that $ \oldistraab\in\cSa $ and hence by \Cref{def:sym_distr}
\begin{align}
\distone{\tau_{\vxa_{i_2}, \vxa_{i_1}, \vxb_j}}{\oldistraab} = \distone{\tau_{\vxa_{i_1}, \vxa_{i_2}, \vxb_j}}{\ol{P}_{\bfxa_2,\bfxa_1,\bfxb}} = \distone{\tau_{\vxa_{i_1}, \vxa_{i_2}, \vxb_j}}{\oldistraab}. \notag 
\end{align}

For a lower bound,
\begin{align}
& \sum_{(i_1,i_2)\in[M_1']^2} \inprod{\tau_{\vxa_{i_1}, \vxa_{i_2}, \vxb_j}}{Q_{\bfxa_1,\bfxa_2,\bfxb}} \notag \\
=& \sum_{(i_1,i_2)\in[M_1']^2} \sum_{(\xa_1,\xa_2,\xb)\in\cXa^2\times\cXb} \tau_{\vxa_{i_1}, \vxa_{i_2}, \vxb_j}(\xa_1,\xa_2,\xb) Q_{\bfxa_1,\bfxa_2,\bfxb}(\xa_1,\xa_2,\xb) \notag \\
=& \sum_{(i_1,i_2)\in[M_1']^2}\sum_{(\xa_1,\xa_2,\xb)\in\cXa^2\times\cXb}
\frac{1}{n}\sum_{k\in[n]} \indicator{\vxa_{i_1}(k) = \xa_1,\vxa_{i_2}(k) = \xa_2,\vxb_j(k) = \xb}Q_{\bfxa_1,\bfxa_2,\bfxb}(\xa_1,\xa_2,\xb) \notag \\
=& M_1'^2 \sum_{(\xa_1,\xa_2,\xb)\in\cXa^2\times\cXb}\frac{1}{n} \sum_{k\in[n]} \ipdistra^{(k)}(\xa_1)\ipdistra^{(k)}(\xa_2)\ipdistrb^{(k)}(\xb)Q_{\bfxa_1,\bfxa_2,\bfxb}(\xa_1,\xa_2,\xb) \label{eqn:marginal-define-col-distr} \\
=& M_1'^2\inprod{\frac{1}{n}\sum_{k\in[n]}\paren{\ipdistra^{(k)}}^{\ot2}\ot\ipdistrb^{(k)}}{Q_{\bfxa_1,\bfxa_2,\bfxb}} \ge 0. \label[ineq]{eqn:lb1}
\end{align}
In \Cref{eqn:marginal-define-col-distr}, $ \ipdistra^{(k)} $ denotes the empirical distribution of the $k$-th column of $ \cCa' $ as defined in \Cref{eqn:def-col-distr} for $ i = 1 $; $ \ipdistrb^{(k)} $ is the indicator distribution $ \ipdistrb^{(k)}(\xb) \coloneqq \indicator{\vxb_j(k) = \xb} $ for all $ \xb\in\cXb $. 
\Cref{eqn:lb1} is by duality (\Cref{thm:duality}). 

\Cref{eqn:ub1,eqn:lb1} jointly yield
\begin{align}
M_1'^2(\eta' + \alpha' - \eps') + M_1' \ge 0, \notag 
\end{align}
i.e., 
\begin{align}
M_1'\le& \frac{1}{\eps' - \eta' - \alpha'} . \notag 
\end{align}

\begin{remark}
The marginal cases (\Cref{itm:conv-pos-zero,itm:conv-zero-pos}) of \Cref{thm:converse} proved in this section do \emph{not} directly follow from the point-to-point results by Wang et al. \cite{wbbj-2019-omni-avc} in a black-box manner. 
Unlike in the achievability proof (see proofs of \Cref{itm:prod-pos-zero,itm:prod-zero-pos} of \Cref{lem:achievability-prod}, proofs of \Cref{itm:ach-2,itm:ach-3} of \Cref{thm:achievability} and proofs of \Cref{itm:rate-prod-pos-zero,itm:rate-prod-zero-pos} of \Cref{lem:inner_bound_prod_distr}), we cannot assume in a converse argument that a zero-rate codebook only contains one codeword. 
Indeed, a rateless code may contain subexponentially many codewords. 
Consequently, the adversary may leverage his knowledge of this small code and jam the communication in a potentially more malicious way than as if he was not aware of the existence of the small code (in which case the problem reduces to the point-to-point setting). 
Incorporating such strength of the adversary requires a more tender care of the converse argument as we did in this section. 

Finally, we reiterate the nontriviality of the marginal cases of MACs even given the point-to-point results. 
Indeed, similar issues also arise in the study of AVMACs (where the adversary is oblivious) -- another adversarial model that received more attention than ours over the past years.
The corner cases where exactly one of the transmitters has zero capacity was left as a gap in Ahlswede and Cai's paper \cite{ahlswedecai-1999-obli-avmac-no-constr}, though the point-to-point results \cite{ahlswede-1978-avc-no-constr,csiszar-narayan-it1988-obliviousavc} were known for long by then. 
The gap was later noticed by Wiese and Boche \cite{wiese-2012-avmac-coop} and recently filled by Pereg and Steinberg \cite{pereg-steinberg-2019-avmac}, more than twenty years after \cite{ahlswedecai-1999-obli-avmac-no-constr}. 
\end{remark}





\section{Concluding remarks and open problems}
\label{sec:concl-rmk-open-prob}

In the following remarks we reflect on the results we obtained and the techniques we leveraged in this paper, and interleave them with several promising/interesting open questions. 

\begin{enumerate}
	\item \label{itm:open-error-criterion}
	Another highly related yet different model that is not considered in this paper is the adversarial MACs with \emph{average} probability of error. 
	As briefly discussed in \Cref{rk:error-criteria}, even for \emph{stochastic} MACs, the capacity region exhibits different behaviours under average error criterion than maximum error criterion. 
	Therefore, we do not believe that average error criterion behaves the same (at least under deterministic encoding) as the maximum one (which is equivalent to the zero error criterion under deterministic encoding) under our omniscient \emph{adversarial} MAC model. 
	Characterizing the capacity positivity and proving inner and outer bounds on the capacity region with \emph{average} probability of error are left for future research.
	In contrast, for point-to-point AVCs, the capacity remains the same under average probability of error (with deterministic encoding) and maximum probability of error (with stochastic encoding) \cite{csiszar-narayan-it1988-obliviousavc}.

	\item \label{itm:open-det-sto-mac}
	For technical simplicity, this paper only handles deterministic MACs. 
	For general (potentially stochastic) MACs, 
	maximum error criterion is \emph{not} equivalent to zero error criterion (though they are for deterministic MACs). 
	Techniques along the lines of \cite{csiszar-korner-1981} are of relevance for extending our results to general adversarial MACs.

	\item \label{itm:open-multiuser-mac}
	It is possible to generalize our results on capacity positivity to $t$-user MACs with $ t>2 $, though the case analysis may become baroque.

	\item \label{itm:open-better-inner-bounds}
	We believe that the capacity inner bounds obtained in \Cref{lem:inner_bound_prod_distr} can be improved. 
	In particular, the expurgation method we employed is crude -- we expurgated one codeword from each user's codebook for \emph{every pair} of confusable pairs $ ((\vbfxa_{i_1}, \vbfxb_{j_1}), (\vbfxa_{i_2}, \vbfxb_{j_2})) $. 
	Noting that a pair of codewords $ (\vbfxa_{i_1}, \vbfxb_{j_1}) $ participates in $ \Theta(M_1M_2) $ many pairs $((\vbfxa_{i_1}, \vbfxb_{j_1}), (\vbfxa_{i_2}, \vbfxb_{j_2}))$, we might have over-expurgated a more-than-desired number of codewords. 
	We believe that more careful expurgation strategy may lead to improved inner bounds. 
	For example, in \cite{gu-2018-zero-error-mac}, a nontrivial lower bound for $t$-user binary adder MACs\footnote{One caveat is that Gu \cite{gu-2018-zero-error-mac} was dealing with $t$-user MACs in which all transmitters use the \emph{same} codebook. Such codes are also known as $ B_t $ codes. } was obtained by only expurgating codewords with \emph{minimal violation} of the zero error criterion. 
	A naive expurgation as ours does \emph{not} yield such a bound.

	\item\label{itm:open-tensorization} 
	In classical zero-error information theory where channels under consideration are non-adversarial (or equivalently, unconstrainedly adversarial under our framework), there is a well-known $n$-letter expression for the capacity of a general DMC with zero error. 
	The expression involves the independence number of the $n$-fold strong product of the confusability graph associated to the channel. 
	Similarly, the non-stochastic information theory framework initiated by Nair \cite{nair-2011-nonstoc-conf,nair-2013-nonstoc-jrnl} also provides multi-letter expressions in terms of non-stochastic information measures. 
	In our opinion, the availability of such formulas heavily relies on the unconstrainedness of the channel. 
	That is, viewed as an adversarial channel, the noise sequence $ \vs $ can take any value in $ \cS^n $. 
	Consequently, ``good codes tensorize'' in the sense that if $ \cC\subseteq\cX^n $ attains zero error then $ \cC\times\cC\subseteq\cX^{2n} $ also attains zero error\footnote{Here we think of the tensor product $ \cC\times\cC $ as the set of concatenated codewords of length-$2n$ with both length-$n$ components from $ \cC $.}. 
	Unfortunately, such a tensorization property is not true for channels with state constraints. 
	It can be easily seen that the adversary can allocate his power on the long codeword in a nonuniform manner so as to confuse the decoder. 
	Codes for the adversarial bitflip channel is a concrete counterexample.\footnote{
	Consider a bitflip channel which can arbitrarily flip $p$ fraction of bits in the transmitted sequence. 
	Let $ \cC\in\zon $ be a good code for this channel. 
	That is, the minimum distance of $ \cC $ is at least $ 2np $. 
	Then $ \cC\times\cC $ still has distance $2np$ while its length doubles. 
	This means that it can only correct a $p/2$ fraction of errors, no longer attaining zero error for the original channel with noise level $p$. 
	} 
	The possibility of obtaining tight $n$-letter expressions for the capacity of omniscient adversarial channels using our framework is left for future investigations.

	\item \label{itm:open-cooperation}
	Recall that our main theorem asserts that for the sake of capacity positivity, it suffices to only consider distributions corresponding to mixtures of i.i.d. random variables. 
	Achievability-wise, one can achieve positive rates, whenever possible, by sampling random codes using mixtures of product self-couplings, i.e., ``good'' distributions as per \Cref{def:good_distr}.
	Conversely, if one could not achieve positive rates using good distributions, then she/he cannot achieve them using \emph{any other distributions}. 
	In the above sense, the set of good distributions we introduced plays a fundamental role in understanding capacity thresholds.
	This brings a natural question of whether there exist scenarios where correlated distributions help enlarge the region of positive rates and are hence also fundamentally ``good''. 
	One feasible way of physically instantiating correlation between input distributions is to allow \emph{cooperation}. 
	There is a recent line of works on \emph{oblivious} adversarial MACs (i.e., the classical AVMAC model) with cooperation \cite{wiese-2011-compound-mac-coop,wiese-2012-avmac-coop,boche-2016-avmac-list-dec-coop,huleihel-steinberg-2017-coop}. 
	That is, two encoders are allowed to communicate through a rate-limited channel\footnote{Note that if the channel between the two encoders is rate-unbounded, then the MAC problem reduces to a point-to-point problem. }. 
	It is an interesting problem to examine the behaviour of MACs with cooperations under the \emph{omniscient} model.

	\item \label{itm:open-list-dec}
	It is an intriguing question to extend our results to list decoding with constant list sizes. 
	The list decoding problem for both (oblivious) AVCs \cite{hughes-1997-list-avc,sarwate-gastpar-2012-list-dec-avc-state-constr,boche-2018-list-dec-obli,hosseinigoki-kosut-2018-oblivious-gaussian-avc-ld,zhang-2020-tight} and AVMACs \cite{boche-2016-avmac-list-dec-coop,nitinawarat-2013-avmac,cai-2016-list-dec-obli-avmac,zhang-2020-listdec_obli_avmac} is well-studied. 
	There are also papers on combinatorial list decoding for special MACs \cite{d-2019-separable-list-dec-mac,shchukin-2016-list-dec-mac}, not mentioning a huge body of work on list decoding for bitflip channels. 
	However, zero-error list decoding for \emph{general omniscient} adversarial channels remains relatively uncharted until recently \cite{zhang-2020-generalized-listdec-itcs}. 
	One of the major technical challenges for MACs that is absent in the point-to-point case has to do with list configurations. 
	A list for MAC can be represented by a bipartite graph \cite{cai-2016-list-dec-obli-avmac,zhang-2020-listdec_obli_avmac}. 
	For a target list size $ L\in\bZ_{\ge2} $, the bipartite graph with $L$ edges corresponding to an $L$-list may have different ``shapes''. 
	Such complications call for delicate analysis.

	\item \label{itm:open-other-multiuser-channels}
	It is plausible that our framework, built upon the prior work \cite{wbbj-2019-omni-avc}, is eligible for tackling the capacity threshold problem of other adversarial multiuser channels, e.g., broadcast channels, interference channels, relay channels, etc. 
	We leave this for further exploration. 
	The non-adversarial/unconstrained version of these problems has been considered by Devroye \cite{devroye-2016-zero-positive}.

	\item \label{itm:open-myopic}
	Motivated by the situation where the fundamental limit of oblivious MAC is well-understood \cite{pereg-steinberg-2019-avmac} while that of the omniscient counterpart is out of reach of the current techniques, it is tempting to study an intermediate model which interpolates between the oblivious and the omniscient models. 
	One model of this kind known as the \emph{myopic} channels was initiated by Sarwate \cite{sarwate-itw2010} and was advanced in a sequence of followup work \cite{dey-sufficiently-2015,budkuley-2020-symm-myop,zhang-2018-myop-isit}. 
	Despite the progress, even the capacity threshold of general point-to-point myopic channels is unknown.
	In the case of MAC, one natural definition of the myopic variant could be that the adversary gets to observe a noisy version of the transmitted sequence pair through a \emph{stochastic} (non-adversarial) MAC. 
	Such a model, as far as we know, remains unexplored.

	\item \label{itm:open-boundary-case}
	Strictly speaking, both our achievability and converse proofs rely on a \emph{strict} separation between the set of good distributions and the confusability set. 
	Specifically, we have to assume that the good set minus the confusability set has nonempty interior in the achievability proof; we have to assume that the good set is a proper subset of the confusability set in the converse proof. 
	The case where the good set \emph{kisses} the confusability set remains unsolved. 
	Such boundary cases are solved for some special channels including the (point-to-point) bitflip channel (see, e.g., \cite[Theorem 4.4.1]{essential-cod-thy-book}). 
	Similar subtleties also arise in the oblivious AVC/AVMAC setting where the boundary cases are in general open but are solved when the optimal jamming strategy is deterministic (which is the case, in particular, if the channel is deterministic) \cite{csiszar-narayan-it1988-obliviousavc,pereg-steinberg-2019-avmac}. 
	In all above solved cases, the capacity is zero at the boundary. 
	Inspired by these results, we conjecture that the capacity of our omniscient adversarial MACs is also zero in the boundary case. 
	That is, our converse can be (conjecturally) strengthened.

	\item \label{itm:open-large-alphabet}
	Our proof heavily relies on the assumption of finite alphabets. 
	It is unclear how to extend our proof to the case where the alphabet sizes grow with $n$. 
	In fact, we believe that the behaviour of the capacity (region) is significantly different in the large alphabet regime. 
	Indeed, for bitflip channels, there are algebraic constructions (notably the Reed--Solomon codes) attaining the capacity upper bound (the Singleton bound). 
	In other words, unlike in the small alphabet case, the first-order asymptotics of bitflip channels are known as long as the alphabet sizes are sufficiently large (in particular at least $ n$ suffices). 
	It remains an intriguing question to explore the behaviour of omniscient adversarial MACs in the large alphabet regime. 
	
	\item \label{itm:open-post-plotkin}
	Our converse results (\Cref{thm:converse}) give upper bounds on the size of codes when the channel does not admit positive rates. 
	For instance, if the set of good distributions is {``$\eps$-contained''} (as per \Cref{eqn:kab-gab-separation}) in the confusability set, then our proof gives $ \max\curbrkt{\card{\cCa},\card{\cCb}}\le f(1/\eps) $ which is independent of $n$. 
	However, the function $f(\cdot)$ involves Ramsey number and is therefore enormous. 
	We do not expect this bound to have an optimal dependence on $ 1/\eps $. 
	This type of question regarding the size of codes above the Plotkin bound was studied previously only for special channels. 
	For instance, for the (point-to-point) bitflip channels with noise level $p$, the optimal dependence is known to be $ \Theta(1/\eps) $ \cite{levenshtein-1961-hadamard-post-plotkin} where $ \eps = p - 1/4 $ is the gap between the Plotkin point and the noise level. 
	Optimal bounds are also known for list decoding over bitflip channels with odd\footnote{In \cite{abp-2018-listdec-above-plotkin}, the list size was parameterized by $ L-1 $ and optimal bounds were only shown for \emph{even} $L$, i.e., \emph{odd} list sizes.} list sizes \cite{abp-2018-listdec-above-plotkin}. 
	We are not aware of any result on codes above the Plotkin bound for adversarial MACs. 



\end{enumerate}



\section{Acknowledgement}
We thank Amitalok J. Budkuley and Sidharth Jaggi for many helpful discussions at the early stage of this work. 
We also thank Nir Ailon, Qi Cao and Chandra Nair for discussions on a related problem regarding zero-error binary adder MACs. 
This project has received funding from the European Union’s Horizon 2020 research and innovation programme under grant agreement No 682203-ERC-[Inf-Speed-Tradeoff]. 

\appendices

\section{Table of notation}
\label{app:table-of-notation}

Frequently used notation is listed in the following table (\Cref{tab:notation}).
\begin{center}
\begin{longtable}{lll} 
\hline
Notation & Meaning & Definition \\ 
\hline
$ \asymma(\cdot),\asymmb(\cdot),\asymmab(\cdot),\asymm(\cdot) $ & Asymmetry of a joint distribution & \Cref{def:asymm} \\
$ (\cCa,\cCb)\subseteq\cXa^n\times\cXb^n $ & Code pair & \Cref{def:codes} \\
$ \cogooda,\cogoodb,\cogoodab $ & Sets of co-good tensors with marginals $ (\ipdistra,\ipdistrb) $ & \Cref{def:cogood-distr} \\
$ \dec\colon\cY^n\to[M_1]\times[M_2] $ & Decoder of the receiver & \Cref{def:codes} \\
$ \enca\colon[M_1]\to\cXa^n ,\encb\colon[M_2]\to\cXb^n $ & Encoders of the transmitters & \Cref{def:codes} \\
$ \gooda,\goodb,\goodab $ & Sets of good distributions with marginals $ (\ipdistra,\ipdistrb) $ & \Cref{def:good_distr} \\
$ \good $ & Set of simultaneously good distributions with marginals $ (\ipdistra,\ipdistrb) $ & \Cref{def:good_distr} \\
$ \cJa,\cJb,\cJab $ & Sets of self-couplings with marginals $ (\ipdistra,\ipdistrb) $ & \Cref{def:self-coupling} \\
$ \jam\colon\cXa^n\times\cXb^n\to\cS^n $ & Jamming function of the adversary & \Cref{def:max-error} \\
$ \cKa,\cKb,\cKab $ & Confusability sets with marginals $ (\ipdistra,\ipdistrb) $ & \Cref{def:conf-set} \\

$ \mactwofull $ & Omniscient adversarial MAC & \Cref{def:omni-adv-mac} \\
$ (\ma,\mb)\in[M_1]\times[M_2] $ & Messages of the transmitters & \Cref{def:omni-adv-mac} \\
$ M_1 = |\cCa|,M_2 = |\cCb| $ & Sizes of codebooks & \Cref{def:codes} \\
$ \sqrbrkt{P_{\bfx,\bfy}}_{\bfx}\in\Delta(\cX) $ & Marginal distribution of $ P_{\bfx,\bfy}\in\Delta(\cX\times\cY) $ on the variable $ \bfx $ & \Cref{sec:notation} \\
$ (R_1,R_2) $ & Rate pair & \Cref{def:codes} \\
$ \vs\in\cS^n $ & Jamming sequence of the adversary & \Cref{def:omni-adv-mac} \\
$ \cS $ & Alphabet of the adversary & \Cref{def:omni-adv-mac} \\
$ \cSa,\cSb,\cSab $ & Sets of symmetric distributions with marginals $ (\ipdistra,\ipdistrb) $ & \Cref{def:sym_distr} \\
$ \syma,\symb,\symab $ & Sets of symmetric tensors with marginals $ (\ipdistra,\ipdistrb) $ & \Cref{def:sym_tensor} \\
$ W_{\bfy|\bfxa,\bfxb,\bfs} $ & Channel transition law & \Cref{def:omni-adv-mac} \\
$ (\vxa,\vxb)\in\cXa^n\times\cXb^n $ & Input sequences from the transmitters & \Cref{def:omni-adv-mac} \\
$ \cXa,\cXb $ & Alphabets of the transmitters & \Cref{def:omni-adv-mac} \\
$ \vy\in\cY^n $ & Output sequence to the receiver & \Cref{def:omni-adv-mac} \\
$ \cY $ & Alphabet of the receiver & \Cref{def:omni-adv-mac} \\

$ (\ipconstra,\ipconstrb)\subseteq\Delta(\cXa)\times\Delta(\cXb) $ & Input constraints & \Cref{def:omni-adv-mac} \\
$ \Delta(\cX) $ & Probability simplex on $ \cX $ & \Cref{sec:notation} \\  
$ \Da,\Db,\Dab $ & Sets of generalized self-couplings with marginals $ (\ipdistra,\ipdistrb) $ & \Cref{def:gen-self-coupling} \\ 
$ \Delta^{(n)}(\cX) $ & Sets of types of $ \cX^n $-valued vectors & \Cref{def:type} \\
$ \stconstr\subseteq\Delta(\cS) $ & State constraints & \Cref{def:omni-adv-mac} \\
$ \nu(P_\bfx,n) $ & -- & \Cref{eqn:def-poly} \\
$ \tau_{\vx}\in\Delta^{(n)}(\cX) $ & Type of $ \vx\in\cX^n $ & \Cref{def:type} \\
\hline
\caption{Table of frequently used notation.} 
\label{tab:notation}
\end{longtable}
\end{center}




\section{Proof of Plotkin bound for binary noisy $\XOR$ MACs (\Cref{thm:plotkin_binary_noisy_xor_mac})}
\label[app]{app:pf_plotkin_binary_noisy_xor_mac}
\begin{proof}[Proof of \Cref{thm:plotkin_binary_noisy_xor_mac}]
Suppose $ p = 1/4 + \eps $ for some constant $ \eps>0 $. 
Let $ \paren{\cCa, \cCb} $ be a code pair which attains zero error on the binary noisy $ \XOR $ MAC. 
Let $ M_1 \coloneqq \card{\cCa}, M_2 \coloneqq \card{\cCb} $.
We will show that $ M_1M_2\le \nicefrac{1}{4\eps} + 1 $. 
To this end, inspired the classical Plotkin bound in coding theory, we estimate the following quantity
\begin{align}
\sum_{\paren{\vxa_1, \vxa_2, \vxb_1, \vxb_2}\in\cCa^2\times\cCb^2} \disth{\vxa_1\oplus\vxb_1}{\vxa_2\oplus\vxb_2}. 
\label[term]{eqn:dc_bin_noisy_add}
\end{align}

One the one hand, by the goodness of $ (\cCa,\cCb) $, as long as $ (\vxa_1, \vxb_1) \ne (\vxa_2, \vxb_2) $, we have $ \disth{\vxa_1\oplus\vxb_1}{\vxa_2\oplus\vxb_2}>2np $. 
For $ (\vxa_1, \vxb_1) = (\vxa_2, \vxb_2) $, the summand is apparently zero. 
Therefore, \Cref{eqn:dc_bin_noisy_add} is larger than $ (M_1^2M_2^2 - M_1M_2)\cdot2np $. 

On the other hand, we can expand \Cref{eqn:dc_bin_noisy_add} as follows. 
\begin{align}
& \sum_{\paren{\vxa_1, \vxa_2, \vxb_1, \vxb_2}\in\cCa^2\times\cCb^2} \disth{\vxa_1\oplus\vxb_1}{\vxa_2\oplus\vxb_2} \notag \\
=& \sum_{\paren{\vxa_1, \vxa_2, \vxb_1, \vxb_2}\in\cCa^2\times\cCb^2} \wth{\vxa_1\oplus\vxb_1\oplus\vxa_2\oplus\vxb_2} \notag \\
=& \sum_{\paren{\vxa_1, \vxa_2, \vxb_1, \vxb_2}\in\cCa^2\times\cCb^2} \sum_{(a_1, b_1, a_2, b_2)\in\cM} \sum_{j = 1}^n
	\indicator{\vxa_1(j) = a_1}\indicator{\vxb_1(j) = b_1}\indicator{\vxa_2(j) = a_2}\indicator{\vxb_2(j) = b_2} 
	\label{eqn:odd_parity} \\
=& \sum_{j = 1}^n \sum_{(a_1, b_1, a_2, b_2)\in\cM} \paren{\sum_{\vxa_1\in\cCa}\indicator{\vxa_1(j) = a_1}}
	\paren{\sum_{\vxb_1\in\cCb}\indicator{\vxb_1(j) = b_1}} \paren{\sum_{\vxa_2\in\cCa}\indicator{\vxa_2(j) = a_2}}
	\paren{\sum_{\vxb_2\in\cCb}\indicator{\vxb_2(j) = b_2}} \notag \\
=& \sum_{j = 1}^n \big( (M_1 - S_j)(M_2 - T_j)(M_1 - S_j) T_j
	+ (M_1 - S_j)(M_2 - T_j)S_j(M_2 - T_j) \notag \\
	& + (M_1 - S_j)T_j(M_1 - S_j)(M_2 - T_j) 
	+ S_j(M_2 - T_j)(M_1 - S_j)(M_2 - T_j) \notag \\
	& + S_j T_j S_j (M_2 - T_j)
	+ S_j T_j (M_1 - S_j) T_j
	+ S_j (M_2 - T_j) S_j T_j
	+ (M_1 - S_j) T_j S_j T_j \big) \label{eqn:def_sj_tj} \\
=& M_1^2M_2^2\sum_{j = 1}^n \paren{\ol\alpha_j\ol\beta_j\ol\alpha_j \beta_j
	+ \ol\alpha_j\ol\beta_j\alpha_j\ol\beta_j 
	+ \ol\alpha_j\beta_j\ol\alpha_j\ol\beta_j 
	+ \alpha_j\ol\beta_j\ol\alpha_j\ol\beta_j 
	+ \alpha_j \beta_j \alpha_j \ol\beta_j
	+ \alpha_j \beta_j \ol\alpha_j \beta_j
	+ \alpha_j \ol\beta_j \alpha_j \beta_j
	+ \ol\alpha_j \beta_j \alpha_j \beta_j} \label{def:def_alphaj_beta_j} 
\end{align}
In \Cref{eqn:odd_parity}, we use $ \cM \coloneqq \curbrkt{0001,0010,0100,1000,1110,1101,1011,0111} $ to denote the set of length-4 binary sequences with odd parity. 
In \Cref{eqn:def_sj_tj}, we define $ S_j \coloneqq \sum_{\vxa\in\cCa}\indicator{\vxa(j) = 1} $ and $ T_j \coloneqq \sum_{\vxb\in\cCb}\indicator{\vxb(j) = 1} $ to be the number of 1's in the $j$-th column of $ \cCa \in \zo^{M_1\times n} $ and $ \cCb \in \zo^{M_2\times n} $ respectively.
In \Cref{def:def_alphaj_beta_j}, we further define $ \alpha_j \coloneqq S_j/M_1 $ and $ \beta_j \coloneqq T_j/M_2 $ to be the density of 1's in the $j$-th column of $\cCa$ and $ \cCb $ respectively; we also use the notation $ \ol a\coloneqq 1-a $ for $ a\in[0,1] $. 

For any $j\in[n]$, since $ \alpha_j, \beta_j\in[0,1] $ the summand of \Cref{def:def_alphaj_beta_j} is at most $ 1/2 $.
This can be verified by solving the following simple constrained (degree-4) polynomial optimization problem:
\begin{align}
\max_{(\alpha,\beta)\in[0,1]^2} {\ol\alpha\ol\beta\ol\alpha \beta
	+ \ol\alpha\ol\beta\alpha\ol\beta 
	+ \ol\alpha\beta\ol\alpha\ol\beta 
	+ \alpha\ol\beta\ol\alpha\ol\beta 
	+ \alpha \beta \alpha \ol\beta
	+ \alpha \beta \ol\alpha \beta
	+ \alpha \ol\beta \alpha \beta
	+ \ol\alpha \beta \alpha \beta}. \notag 
\end{align}
The maximum $1/2$ is attained at $ \alpha = 1/4,\beta = 1/2 $. 
Therefore, \Cref{eqn:dc_bin_noisy_add} is at most $ M_1^2M_2^2n/2 $. 

Putting the lower and upper bounds on \Cref{eqn:dc_bin_noisy_add} together, we have 
\begin{equation}
\begin{array}{rrl}
& \paren{M_1^2M_2^2 - {M_1M_2}}\cdot2np <& \frac{M_1^2M_2^2n}{2} \\
\iff& \paren{1 - \frac{1}{M_1M_2}}2\paren{\frac{1}{4} + \eps} <& \frac{1}{2} \\
\iff& M_1M_2 <& \frac{1}{4\eps} + 1, 
\end{array}\notag
\end{equation}
which finishes the proof of \Cref{thm:plotkin_binary_noisy_xor_mac}. 
\end{proof}


\bibliographystyle{alpha}
\bibliography{IEEEabrv,ref} 

\newcommand{\etalchar}[1]{$^{#1}$}
\begin{thebibliography}{AKKN17}

\bibitem[ABP18]{abp-2018-listdec-above-plotkin}
Noga Alon, Boris Bukh, and Yury Polyanskiy.
\newblock List-decodable zero-rate codes.
\newblock {\em IEEE Transactions on Information Theory}, 65(3):1657--1667,
  2018.

\bibitem[AC99]{ahlswedecai-1999-obli-avmac-no-constr}
Rudolf Ahlswede and Ning Cai.
\newblock Arbitrarily varying multiple-access channels. i. ericson's
  symmetrizability is adequate, gubner's conjecture is true.
\newblock {\em IEEE Transactions on Information Theory}, 45(2):742--749, 1999.

\bibitem[Ahl73]{ahlswede-1973-mac-cap}
Rudolf Ahlswede.
\newblock Multi-way communication channels.
\newblock In {\em Second International Symposium on Information Theory:
  Tsahkadsor, Armenia, USSR, Sept. 2-8, 1971}, 1973.

\bibitem[Ahl74]{ahlswede-1974-mac}
Rudolf Ahlswede.
\newblock The capacity region of a channel with two senders and two receivers.
\newblock {\em The annals of probability}, 2(5):805--814, 1974.

\bibitem[Ahl78]{ahlswede-1978-avc-no-constr}
R.~Ahlswede.
\newblock {Elimination of correlation in random codes for arbitrarily varying
  channels}.
\newblock {\em Z. Wahrscheinlichkeitstheorie Verv. Gebiete}, 44:181--193, 1978.

\bibitem[AKKN17]{austrin-2017-binary-adder-mac}
Per Austrin, Petteri Kaski, Mikko Koivisto, and Jesper Nederlof.
\newblock Sharper upper bounds for unbalanced uniquely decodable code pairs.
\newblock {\em IEEE Transactions on Information Theory}, 64(2):1368--1373,
  2017.

\bibitem[APBD18]{asadi-2018-relay-zero-error}
Meysam Asadi, Kenneth Palacio-Baus, and Natasha Devroye.
\newblock A relaying graph and special strong product for zero-error problems
  in primitive relay channels.
\newblock In {\em 2018 IEEE International Symposium on Information Theory
  (ISIT)}, pages 281--285. IEEE, 2018.

\bibitem[BDJ{\etalchar{+}}20]{budkuley-2020-symm-myop}
Amitalok~J Budkuley, Bikash~Kumar Dey, Sidharth Jaggi, Michael Langberg,
  Anand~D Sarwate, and Carol Wang.
\newblock Symmetrizability for myopic avcs.
\newblock In {\em 2020 IEEE International Symposium on Information Theory
  (ISIT)}, pages 2103--2107. IEEE, 2020.

\bibitem[BLA76]{bipartite-ramsey}
LW~BEINERE, BEINERE LW, and SCHWENK AJ.
\newblock On a bipartite form of the ramsey problem.
\newblock 1976.

\bibitem[BS16]{boche-2016-avmac-list-dec-coop}
Holger Boche and Rafael~F Schaefer.
\newblock Arbitrarily varying multiple access channels with conferencing
  encoders: List decoding and finite coordination resources.
\newblock {\em Advances in Mathematics of Communications}, 10(2):333--354,
  2016.

\bibitem[BSP18]{boche-2018-list-dec-obli}
Holger Boche, Rafael~F Schaefer, and H~Vincent Poor.
\newblock Analytical properties of shannon’s capacity of arbitrarily varying
  channels under list decoding: Super-additivity and discontinuity behavior.
\newblock {\em Problems of Information Transmission}, 54(3):199--228, 2018.

\bibitem[Cai16]{cai-2016-list-dec-obli-avmac}
Ning Cai.
\newblock List decoding for arbitrarily varying multiple access channel
  revisited: List configuration and symmetrizability.
\newblock {\em IEEE Transactions on Information Theory}, 62(11):6095--6110,
  2016.

\bibitem[CD15]{chen-2015-relay-zero-error-conf}
Yanying Chen and Natasha Devroye.
\newblock On the optimality of colour-and-forward relaying for a class of
  zero-error primitive relay channels.
\newblock In {\em 2015 IEEE International Symposium on Information Theory
  (ISIT)}, pages 1272--1276. IEEE, 2015.

\bibitem[CD17]{chen-2017-relay-zero-error-jrnl}
Yanying Chen and Natasha Devroye.
\newblock Zero-error relaying for primitive relay channels.
\newblock {\em IEEE Transactions on Information Theory}, 63(12):7708--7715,
  2017.

\bibitem[CK81]{csiszar-korner-1981}
I.~Csisz{\'a}r and J.~K{\"o}rner.
\newblock {On the capacity of the arbitrarily varying channel for maximum
  probability of error}.
\newblock {\em Z. Wahrscheinlichkeitstheorie Verv. Gebiete}, 57:87--101, 1981.

\bibitem[CK11]{csiszar-korner-book2011}
Imre Csisz{\'a}r and J{\'a}nos K{\"o}rner.
\newblock {\em {Information theory: coding theorems for discrete memoryless
  systems}}.
\newblock Cambridge University Press, 2011.

\bibitem[CN88a]{csiszar-narayan-1988-obliavc-constr-random}
Imre Csisz{\'a}r and Prakash Narayan.
\newblock {Arbitrarily varying channels with constrained inputs and states}.
\newblock {\em {IEEE} Trans. Inf. Theory}, 34:27--34, 1988.

\bibitem[CN88b]{csiszar-narayan-it1988-obliviousavc}
Imre Csisz{\'a}r and Prakash Narayan.
\newblock {The Capacity of the Arbitrarily Varying Channel Revisited :
  Positivity, Constraints}.
\newblock {\em {IEEE} Trans. Inf. Theory}, 34:181--193, 1988.

\bibitem[CN91]{csiszar-narayan-1991-gavc}
Imre Csisz{\'a}r and Prakash Narayan.
\newblock Capacity of the gaussian arbitrarily varying channel.
\newblock {\em IEEE Transactions on Information Theory}, 37(1):18--26, 1991.

\bibitem[Cov75]{cover-1975-gaussian-mac}
Thomas~M Cover.
\newblock Some advances in broadcast channels.
\newblock In {\em Advances in communication systems}, volume~4, pages 229--260.
  Elsevier, 1975.

\bibitem[CSD14]{chen-2014-relay-zero-error-1}
Yanying Chen, Sara Shahi, and Natasha Devroye.
\newblock Colour-and-forward: relaying “what the destination needs” in the
  zero-error primitive relay channel.
\newblock In {\em 2014 52nd Annual Allerton Conference on Communication,
  Control, and Computing (Allerton)}, pages 987--995. IEEE, 2014.

\bibitem[Csi98]{csiszar-1998-types}
Imre Csisz{\'a}r.
\newblock The method of types [information theory].
\newblock {\em IEEE Transactions on Information Theory}, 44(6):2505--2523,
  1998.

\bibitem[Dev16]{devroye-2016-zero-positive}
Natasha Devroye.
\newblock When is the zero-error capacity positive in the relay,
  multiple-access, broadcast and interference channels?
\newblock In {\em 2016 54th Annual Allerton Conference on Communication,
  Control, and Computing (Allerton)}, pages 672--678. IEEE, 2016.

\bibitem[DJL15]{dey-sufficiently-2015}
Bikash~Kumar Dey, Sidharth Jaggi, and Michael Langberg.
\newblock {Sufficiently myopic adversaries are blind}.
\newblock In {\em {Information Theory (ISIT), 2015 IEEE International Symposium
  on}}, pages 1164--1168, 2015.

\bibitem[DPSV19]{d-2019-separable-list-dec-mac}
Arkadii D’yachkov, Nikita Polyanskii, Vladislav Shchukin, and Ilya Vorobyev.
\newblock Separable codes for the symmetric multiple-access channel.
\newblock {\em IEEE Transactions on Information Theory}, 65(6):3738--3750,
  2019.

\bibitem[Due78]{dueck-1978-max-vs-avg-mac}
G~Dueck.
\newblock Maximal error capacity regions are smaller than average error
  capacity regions for multi-user channels.
\newblock 1978.

\bibitem[FN20]{farokhi-nair-2020-nonstoc-fn-eval}
Farhad Farokhi and Girish Nair.
\newblock Non-stochastic private function evaluation.
\newblock {\em arXiv preprint arXiv:2010.09968}, 2020.

\bibitem[GGLR]{lec-notes-mac}
L~Gyorfi, S{\'a}ndor Gyori, B{\'a}lint Laczay, and M~Ruszinko.
\newblock Lectures on multiple access channels.
\newblock {\em Web: http://www. szit. bme. hu/gyori/AFOSR}, 5.

\bibitem[GH95]{gubner-hughes-1995-nonconvex-avmac}
John~A Gubner and Brian~L Hughes.
\newblock Nonconvexity of the capacity region of the multiple-access
  arbitrarily varying channel subject to constraints.
\newblock {\em IEEE transactions on information theory}, 41(1):3--13, 1995.

\bibitem[GRS12]{essential-cod-thy-book}
Venkatesan Guruswami, Atri Rudra, and Madhu Sudan.
\newblock Essential coding theory.
\newblock {\em Draft available at http://www. cse. buffalo. edu/~
  atri/courses/coding-theory/book}, 2012.

\bibitem[GS19]{gu-shayevitz-2019-twoway-zeroerror}
Yujie Gu and Ofer Shayevitz.
\newblock On the non-adaptive zero-error capacity of the discrete memoryless
  two-way channel.
\newblock In {\em 2019 IEEE International Symposium on Information Theory
  (ISIT)}, pages 3107--3111. IEEE, 2019.

\bibitem[Gu18]{gu-2018-zero-error-mac}
Yuzhou Gu.
\newblock Zero-error communication over adder mac.
\newblock {\em arXiv preprint arXiv:1809.07364}, 2018.

\bibitem[HK19]{hosseinigoki-kosut-2018-oblivious-gaussian-avc-ld}
Fatemeh Hosseinigoki and Oliver Kosut.
\newblock List-decoding capacity of the gaussian arbitrarily-varying channel.
\newblock {\em Entropy}, 21(6):575, 2019.

\bibitem[HS17]{huleihel-steinberg-2017-coop}
Wasim Huleihel and Yossef Steinberg.
\newblock Channels with cooperation links that may be absent.
\newblock {\em IEEE Transactions on Information Theory}, 63(9):5886--5906,
  2017.

\bibitem[Hug97]{hughes-1997-list-avc}
Brian~L. Hughes.
\newblock The smallest list for the arbitrarily varying channel.
\newblock {\em IEEE Transactions on Information Theory}, 43(3):803--815, 1997.

\bibitem[Kol56]{kolmogorov-1956}
Andrey~Nikolaevich Kolmogorov.
\newblock Certain asymptotic characteristics of completely bounded metric
  spaces.
\newblock {\em Doklady Akademii Nauk SSSR}, 108(3):385--388, 1956.

\bibitem[Kom90]{komlos-1990-strange-pigeon-hole}
J{\'a}nos Koml{\'o}s.
\newblock A strange pigeon-hole principle.
\newblock {\em Order}, 7(2):107--113, 1990.

\bibitem[Kos20]{kosut-2020-second-mac}
Oliver Kosut.
\newblock A second-order converse bound for the multiple-access channel via
  wringing dependence.
\newblock {\em arXiv preprint arXiv:2007.15664}, 2020.

\bibitem[Lev61]{levenshtein-1961-hadamard-post-plotkin}
VI~Levenshtein.
\newblock Application of hadamard matrices on coding problem.
\newblock {\em Problems of Cybernetica}, 5:123--136, 1961.

\bibitem[LF17]{lim-franceschetti-2017-non-stoc-it}
Taehyung~J Lim and Massimo Franceschetti.
\newblock Information without rolling dice.
\newblock {\em IEEE Transactions on Information Theory}, 63(3):1349--1363,
  2017.

\bibitem[Lia72]{liao-1972-mac-cap-thesis}
HHJ Liao.
\newblock {\em Multiple Access Channels. Honolulu}.
\newblock PhD thesis, Ph. D. Dissertation, 1972.

\bibitem[Lov79]{lovasz-1979-shannon-cap}
L{\'a}szl{\'o} Lov{\'a}sz.
\newblock On the shannon capacity of a graph.
\newblock {\em IEEE Transactions on Information theory}, 25(1):1--7, 1979.

\bibitem[Nai11]{nair-2011-nonstoc-conf}
Girish~N Nair.
\newblock A non-stochastic information theory for communication and state
  estimation over erroneous channels.
\newblock In {\em 2011 9th IEEE International Conference on Control and
  Automation (ICCA)}, pages 159--164. IEEE, 2011.

\bibitem[Nai12]{nair-2012-nonstoc-feedback}
Girish~N Nair.
\newblock A nonstochastic information theory for feedback.
\newblock In {\em 2012 IEEE 51st IEEE Conference on Decision and Control
  (CDC)}, pages 1343--1348. IEEE, 2012.

\bibitem[Nai13]{nair-2013-nonstoc-jrnl}
Girish~N Nair.
\newblock A nonstochastic information theory for communication and state
  estimation.
\newblock {\em IEEE Transactions on automatic control}, 58(6):1497--1510, 2013.

\bibitem[Nit13]{nitinawarat-2013-avmac}
Sirin Nitinawarat.
\newblock On the deterministic code capacity region of an arbitrarily varying
  multiple-access channel under list decoding.
\newblock {\em IEEE transactions on information theory}, 59(5):2683--2693,
  2013.

\bibitem[NY20]{nair-2020-andor}
Chandra Nair and Mehdi Yazdanpanah.
\newblock On the and-or interference channel and the sandglass conjecture.
\newblock In {\em 2020 IEEE International Symposium on Information Theory
  (ISIT)}, pages 1540--1545. IEEE, 2020.

\bibitem[PPV10]{polyanskiy-poor-verdu-2010-finite-bl}
Yury Polyanskiy, H~Vincent Poor, and Sergio Verd{\'u}.
\newblock Channel coding rate in the finite blocklength regime.
\newblock {\em IEEE Transactions on Information Theory}, 56(5):2307--2359,
  2010.

\bibitem[PS19]{pereg-steinberg-2019-avmac}
Uzi Pereg and Yossef Steinberg.
\newblock The capacity region of the arbitrarily varying mac: with and without
  constraints.
\newblock In {\em 2019 IEEE International Symposium on Information Theory
  (ISIT)}, pages 445--449. IEEE, 2019.

\bibitem[PW14]{polyanskiy-wu-2014-lec-notes-it}
Yury Polyanskiy and Yihong Wu.
\newblock Lecture notes on information theory.
\newblock {\em Lecture Notes for ECE563 (UIUC) and}, 6(2012-2016):7, 2014.

\bibitem[RF19]{rangi-franceschetti-2019-non-stoc-it}
Anshuka Rangi and Massimo Franceschetti.
\newblock Towards a non-stochastic information theory.
\newblock In {\em 2019 IEEE International Symposium on Information Theory
  (ISIT)}, pages 997--1001. IEEE, 2019.

\bibitem[Sar10]{sarwate-itw2010}
Anand Sarwate.
\newblock {Coding against Myopic Adversaries}.
\newblock In {\em {Proc. {IEEE} Information Theory Workshop}}, Dublin, Ireland,
  2010.

\bibitem[SFN18]{saberi-2018-zeroerror-erasure}
Amir Saberi, Farhad Farokhi, and Girish Nair.
\newblock Estimation and control over a nonstochastic binary erasure channel.
\newblock {\em IFAC-PapersOnLine}, 51(23):265--270, 2018.

\bibitem[SFN19]{saberi-2019-zeroerror-memory-eras-symm}
Amir Saberi, Farhad Farokhi, and Girish~N Nair.
\newblock State estimation over worst-case erasure and symmetric channels with
  memory.
\newblock {\em arXiv preprint arXiv:1902.00726}, 2019.

\bibitem[SFN20a]{saberi-2020-zeroerror-state}
Amir Saberi, Farhad Farokhi, and Girish~N Nair.
\newblock Bounded state estimation over finite-state channels: Relating
  topological entropy and zero-error capacity.
\newblock {\em arXiv preprint arXiv:2003.11954}, 2020.

\bibitem[SFN20b]{saberi-2020-zeroerror-feedback-additive}
Amir Saberi, Farhad Farokhi, and Girish~N Nair.
\newblock An explicit formula for the zero-error feedback capacity of a class
  of finite-state additive noise channels.
\newblock {\em arXiv preprint arXiv:2006.00892}, 2020.

\bibitem[SG12]{sarwate-gastpar-2012-list-dec-avc-state-constr}
Anand~D Sarwate and Michael Gastpar.
\newblock List-decoding for the arbitrarily varying channel under state
  constraints.
\newblock {\em IEEE Transactions on Information Theory}, 58(3):1372--1384,
  2012.

\bibitem[Sha48]{shannon-1948}
Claude~E Shannon.
\newblock A mathematical theory of communication.
\newblock {\em The Bell system technical journal}, 27(3):379--423, 1948.

\bibitem[Sha56]{shannon-1956-zero-error}
Claude Shannon.
\newblock The zero error capacity of a noisy channel.
\newblock {\em IRE Transactions on Information Theory}, 2(3):8--19, 1956.

\bibitem[Sha61]{shannon-1961-twoway}
Claude~E Shannon.
\newblock Two-way communication channels.
\newblock In {\em Proceedings of the Fourth Berkeley Symposium on Mathematical
  Statistics and Probability, Volume 1: Contributions to the Theory of
  Statistics}. The Regents of the University of California, 1961.

\bibitem[Shc16]{shchukin-2016-list-dec-mac}
V~Yu Shchukin.
\newblock List decoding for a multiple access hyperchannel.
\newblock {\em Problems of Information Transmission}, 52(4):329--343, 2016.

\bibitem[SMiF14]{scarlett-2014-second-mac}
Jonathan Scarlett, Alfonso Martinez, and Albert~Guill{\'e}n i~F{\`a}bregas.
\newblock Second-order rate region of constant-composition codes for the
  multiple-access channel.
\newblock {\em IEEE Transactions on Information Theory}, 61(1):157--172, 2014.

\bibitem[SW73]{slepian-wolf-1973-mac}
David Slepian and Jack~Keil Wolf.
\newblock A coding theorem for multiple access channels with correlated
  sources.
\newblock {\em Bell System Technical Journal}, 52(7):1037--1076, 1973.

\bibitem[Tik93]{tikhomirov-kolmogorov-1993-eps-cap}
VM~Tikhomirov.
\newblock $\varepsilon$-entropy and $\varepsilon$-capacity of sets in
  functional spaces.
\newblock In {\em Selected works of AN Kolmogorov}, pages 86--170. Springer,
  1993.

\bibitem[TK13]{tan-kosut-2013-dispersion-three}
Vincent~YF Tan and Oliver Kosut.
\newblock On the dispersions of three network information theory problems.
\newblock {\em IEEE Transactions on Information Theory}, 60(2):881--903, 2013.

\bibitem[TT13]{tomamichel-tan-2013-third-dmc}
Marco Tomamichel and Vincent~YF Tan.
\newblock A tight upper bound for the third-order asymptotics for most discrete
  memoryless channels.
\newblock {\em IEEE Transactions on Information Theory}, 59(11):7041--7051,
  2013.

\bibitem[TT15]{tan-tomamichel-2015-third-awgn}
Vincent Yan~Fu Tan and Marco Tomamichel.
\newblock The third-order term in the normal approximation for the awgn
  channel.
\newblock {\em IEEE Transactions on Information Theory}, 61(5):2430--2438,
  2015.

\bibitem[WB12]{wiese-2012-avmac-coop}
Moritz Wiese and Holger Boche.
\newblock The arbitrarily varying multiple-access channel with conferencing
  encoders.
\newblock {\em IEEE transactions on information theory}, 59(3):1405--1416,
  2012.

\bibitem[WBBJ11]{wiese-2011-compound-mac-coop}
Moritz Wiese, Holger Boche, Igor Bjelakovic, and Volker Jungnickel.
\newblock The compound multiple access channel with partially cooperating
  encoders.
\newblock {\em IEEE transactions on information theory}, 57(5):3045--3066,
  2011.

\bibitem[WBBJ19]{wbbj-2019-omni-avc}
Xishi Wang, Amitalok~J Budkuley, Andrej Bogdanov, and Sidharth Jaggi.
\newblock When are large codes possible for avcs?
\newblock In {\em 2019 IEEE International Symposium on Information Theory
  (ISIT)}, pages 632--636. IEEE, 2019.

\bibitem[{Wik}21]{ramsey}
{Wikipedia contributors}.
\newblock Ramsey's theorem --- {Wikipedia}{,} the free encyclopedia, 2021.
\newblock [Online; accessed 10-January-2021].

\bibitem[Wyn74]{wyner-1974-gaussian-mac}
Aaron Wyner.
\newblock Recent results in the shannon theory.
\newblock {\em IEEE Transactions on information Theory}, 20(1):2--10, 1974.

\bibitem[YKE20]{yavas-2020-second-gaussian-mac}
Recep~Can Yavas, Victoria Kostina, and Michelle Effros.
\newblock Gaussian multiple and random access in the finite blocklength regime.
\newblock {\em arXiv preprint arXiv:2001.03867}, 2020.

\bibitem[ZBJ20]{zhang-2020-generalized-listdec-itcs}
Yihan Zhang, Amitalok~J Budkuley, and Sidharth Jaggi.
\newblock Generalized list decoding.
\newblock In {\em 11th Innovations in Theoretical Computer Science Conference
  (ITCS 2020)}. Schloss Dagstuhl-Leibniz-Zentrum f{\"u}r Informatik, 2020.

\bibitem[Zha20]{zhang-2020-listdec_obli_avmac}
Yihan Zhang.
\newblock {List Decoding for Oblivious Arbitrarily Varying MACs: Constrained
  and Gaussian}, 2020.

\bibitem[ZJB20]{zhang-2020-tight}
Yihan Zhang, Sidharth Jaggi, and Amitalok~J Budkuley.
\newblock Tight list-sizes for oblivious avcs under constraints.
\newblock {\em arXiv preprint arXiv:2009.03788}, 2020.

\bibitem[ZN20]{zafzouf-nair-2020-nonstoc-mac-multiuser}
Ghassen Zafzouf and Girish~N Nair.
\newblock Distributed state estimation with bounded errors over multiple access
  channels.
\newblock {\em arXiv preprint arXiv:2002.03294}, 2020.

\bibitem[ZNE19]{zafzouf-nair-evans-2019-nonstoc-mac}
Ghassen Zafzouf, Girish~N Nair, and Jamie~S Evans.
\newblock Zero-error capacity of multiple access channels via nonstochastic
  information.
\newblock In {\em 2019 IEEE Information Theory Workshop (ITW)}, pages 1--5.
  IEEE, 2019.

\bibitem[ZVJ20]{zhang-2020-twoway}
Yihan Zhang, Shashank Vatedka, and Sidharth Jaggi.
\newblock Quadratically constrained two-way adversarial channels.
\newblock {\em arXiv preprint arXiv:2001.02575}, 2020.

\bibitem[ZVJS18]{zhang-2018-myop-isit}
Yihan Zhang, Shashank Vatedka, Sidharth Jaggi, and Anand~D Sarwate.
\newblock Quadratically constrained myopic adversarial channels.
\newblock In {\em 2018 IEEE International Symposium on Information Theory
  (ISIT)}, pages 611--615. IEEE, 2018.

\end{thebibliography}

\end{document}